\newtheorem{mydef}{Definition}
\newtheorem{proposition}{Proposition}
\newtheorem{Lemma}{Lemma}
\newtheorem{assumption}{Assumption}
\newtheorem{remark}{Remark}
\newtheorem{theorem}{Theorem}
\newtheorem{corollary}{Corollary}
\begin{document}

\title{Asymptotically Optimal Delay-aware  Scheduling in Queueing Systems \footnote{ Part of this work has been presented at the IEEE International Symposium of Information Theory \cite{isit2019} }}


\author[1]{\small Saad Kriouile}\author[1]{\small Mohamad Assaad }\author[2]{\small Maialen Larranaga}
\affil[1]{TCL Chair on 5G, Laboratoire des Signaux et Syst\`emes CentraleSup\'elec,  91192 Gif sur Yvette, France}\affil[2]{ASML, P.O. Box 324, 5500 AH Veldhoven, The Netherlands}

\maketitle

\begin{abstract}
In this paper, we investigate a general delay-aware channel allocation problem where the number of channels is less than that of users. Due to the proliferation of delay sensitive applications, the objective of our problem is chosen to be the minimization of the total average backlog queues of the network in question. First, we show that our problem falls in the framework of Restless Bandit Problems (RBP), for which obtaining the optimal solution is known to be out of reach. To circumvent this difficulty, we tackle the problem by adopting a Whittle index approach. To that extent, we employ a Lagrangian relaxation of the original problem and prove it to be decomposable into multiple one-dimensional independent subproblems. Afterwards, we provide structural results on the optimal policy of each of the subproblems. More specifically, we prove that a threshold policy is able to achieve the optimal operating point of the considered subproblem. Armed with that, we show the indexability of the subproblems and characterize the Whittle’s indices which are the basis of our proposed heuristic. We then provide a rigorous mathematical proof that our policy is optimal in the infinitely many users regime. Finally, we provide  numerical results that showcase the remarkable good performance of our proposed policy and that corroborate the theoretical findings.
\end{abstract}
\section{Introduction}
This paper deals with user and channel scheduling, which has been widely recognized as a mean to improve the network performance and to meet the service demands of the users. This problem has been widely studied in the past and several allocation policies have been developed for various contexts (e.g. see\cite{deghel2018queueing,destounis2014traffic,destounis2015traffic,tassiulas1992stability,tassiulas1993dynamic,neely2007optimal,georgiadis2006resource}
and the references therein).   
In 5G networks, the problem of channel and user scheduling will be receiving particular interest due to the increase in the number of devices and users. Furthermore, the applications nowadays do not need high data rates only but they are more delay-sensitive, which implies that minimizing the delay is considered as a main design metric in future networks.

In this paper, we consider the problem of scheduling and channel allocation in a discrete time system composed of one central scheduler serving  multiple  users or queues. We consider that the traffic arriving to each queue is time varying, and that the number of users is  higher than the number of channels, which is a quite realistic assumption especially with the growth in density of users in nowadays networks.    At each timeslot, the central scheduler decides to allocate the channels to users, where a channel can be seen as a server in wired networks or a frequency bandwidth in wireless networks. Throughout this paper, we will use the terms "channel" and "server" interchangeably to designate a resource to allocate to users.  Furthermore, we assume that the number of channels is limited and each channel can only be allocated to one user at a time. The objective in this case is to find an allocation policy that minimizes the long-run  average queue length of the users, as a mean to minimize the average delay in the network. Although it is a quite standard scheduling, we provide in this paper  a  rigorous mathematical analysis, leading to a novel scheduling algorithm of which we prove optimality in the many users regime. In fact,   we show in this paper that the considered scheduling problem can be cast as a Restless Bandit Problem (RBP), which is a particular Markov Decision Processes (MDP). However, RBPs are PSPACE-Hard (see Papadimitriou et al. \cite{papadimitriou1999complexity}), and hence their optimal solution is out of reach. One should therefore propose sub-optimal policies when dealing with such problems. In this paper, we approach the considered RBP problem using the Lagrangian relaxation technique, which consists of relaxing the constraint on the available resources. In other words, instead of having the constraint on the number of available channels satisfied in every time slot, we consider that it has to be satisfied on average. This allows us to decompose the large relaxed optimization problem into much simpler one-dimensional problems. Based on the optimal solution of the individual relaxed problems, we develop a heuristic for the original (i.e. non-relaxed) optimization problem. This heuristic is known as the Whittle's index policy (WIP) and we will show that for our particular model, an explicit expression of the Whittle's index can be found. WIP has been proposed as a suboptimal policy for many problems in the literature, see for instance \cite{liu2010indexability,ansell2003Whittle}. It has also been shown to perform near optimally in many scenarios and in the particular case of multiclass M/M/1 queues, WIP which simplifies to the $c\mu$-rule is optimal, see Buyukkoc et al. \cite{buyukkoc1985c}, and Larranaga \cite{larranaga2015dynamic}. In this paper, we will prove that the developed WIP is asymptotically optimal in the many users regime. To that extent, we summarize in the following the key
contributions of this paper:
\begin{itemize}
	\item We provide an analysis of the relaxed optimization problem, which let us obtain the structure of the optimal solution of its dual problem. The optimal solution is shown to be a threshold-based policy by (i) proving that the latter problem is decomposable and (ii) proving that the value function of the Bellman equation that resolves each individual dual problem satisfies both the R-convexity and increasing properties. This part of the analysis is far from trivial and constitutes one the main contributions in this paper. 
	\item We resolve the full balance equations verified by the stationary distribution of the user’s states under a general threshold policy $n$. This step is very crucial and requires a lot of analysis and computations. In fact, unlike the other previous works where the full balance equations give an easy general recurrent relation between the stationary distribution at state $i$ and state $i+1$ under a threshold policy (e.g \cite{ouyang2016downlink}), in our paper the term of the stationary distribution at any given state is linked to a set of terms of the stationary distribution at different states. Moreover, this relation depends on the value of the threshold $n$ as we will see in Section \ref{sec:steady_state}.
	\item We reformulate the individual dual problem of the relaxed problem using the steady state distribution. Afterwards, we provide a general algorithm that allows us to obtain the Whittle index. To reduce even further the complexity, we provide a rigorous proof of the indexability of the classes, along with several lemmas and definitions that allow us to derive simple expressions of the Whittle index. While in previous works the derivation of whittle index policy can be obtained using a standard approach, obtaining Whittle index expressions in our case is much more complex and requires several derivations and lemmas.  
	\item Unlike the previous works, in this paper we provide further characterization of the threshold-based optimal solution of the relaxed optimization problem. The structure of this solution helps us to prove the local asymptotic optimality of our proposed policy as we just need to compare the average cost under the Whittle's Index policy with the optimal cost of the relaxed problem. The reason behind that is the fact that the latter is always less than the optimal cost of the original problem.
	\item We show that the Whittle's Index policy is asymptotically optimal in the infinitely many users regime, that is, when the number of users in the system as well as the available channels grow large.
	\item Finally, we provide numerical performance results of the Whittle's Index policy that corroborate our claims. 
\end{itemize} 
\subsection{Related Work}
The problem of resource allocation and scheduling in wireless networks has been widely studied in the literature. In \cite{deghel2018queueing,destounis2014traffic,destounis2015traffic,tassiulas1992stability,tassiulas1993dynamic}, throughput optimal schedulers have been derived for single channel, multi-channel and multi user MIMO contexts. The aforementioned set of work focuses on developing strategies that stabilize the queues of the users using the max weight rule. The classical max weight rule is however known to be not delay optimal. To overcome this issue, many works have been developed in the past to take into account the average delay of the traffic of the users (e.g. see \cite{cui2012survey} and the references therein). Most of the existing works use Markov Decision Process (MDP) frameworks and develop allocation strategies using Bellman equation (e.g. by using value iteration, policy iteration, etc.). However, MDP frameworks and Bellman equation suffer from the curse of dimensionality, which leads to complex resource allocation strategies. In \cite{bettesh2006optimal}\cite{wang2013delay}, the authors try to minimize the average delay of the users' queues using Markov Decision Process (MDP) and stochastic learning tools. The complexity of the developed solutions is however much higher than the Whittle index policy. Stochastic learning is also used in \cite{cui2010distributive} to deal with the problem of power allocation in an OFDM (Orthgonal Frequency Division Multiplexing) system with the goal being to minimize the average delay of the users' packets in the queues. The developed solution requires high memory and computational complexity as compared to the Whittle index policy. 

Whittle index based policies have also been used/developed in wireless networks to deal with the problem of pilot allocation over Markovian channel models. If a pilot is allocated to a user, its CSI can be estimated correctly and the user can hence transmit at a given rate. In \cite{liu2010indexability}\cite{ouyang2016downlink}, a Gilbert-Elliot channel model is considered and the Whittle index is derived. It has been shown in \cite{ouyang2016downlink}\cite{ouyang2011exploiting} that a policy based on Whittle index is asymptotically optimal for their specific problem. The authors in \cite{larranaga2017asymptotically} extended the problem of pilot allocation to the case where the channel evolves according to a Markovian process between K states instead of two states as in the Gilbert-Elliot model. In the aforementioned papers, the queues of the users were not considered. In fact, the focus was on the channel allocation such that the long term total throughput (or equivalent objective function) is maximized without taking into account the dynamic traffic of the users. In this paper, we consider that the traffic arrival is bursty and that the objective of the user/channel allocation is to minimize the long term average queues of the users.  

In \cite{ansell2003Whittle}, a derivation of the Whittle index values for a simple multiclass M/M/1 model has been considered (where only one user can be served). However, the optimality of the obtained Whittle index policy has not been proved in \cite{ansell2003Whittle} and the time was assumed to be continuous in their model. The authors in \cite{weber1990index} considered the problem of project/job scheduling in which an effort is allocated to a fixed number of projects. The performance of a Whittle index based policy was analyzed under a continuous time model. In contrast to these two papers, we consider that the time is slotted and that several users can be scheduled at a given time slot and not only one user. We provide an explicit characterization of the Whittle indices, develop a Whittle index channel allocation policy for our problem and prove the asymptotic optimality of the developed policy in the many users regime.    

The remainder of the paper is organized as follows: In Section~\ref{sec:systemmodel}, we formulate the problem under investigation and we introduce the Lagrangian relaxation. In Section~\ref{sec:threshold policy}, we prove the optimality of threshold/monotone policies for the relaxed problem. In Section~\ref{sec:steady_state}, we compute the steady-state distribution of the system under a general threshold policy. In Section~\ref{sec:WI}, we characterize the Whittle indices explicitly and we lay out our proposed Whittle index based policy. Section~\ref{sec:relaxed} provides further characterization of the optimal solution of the relaxed problem. In Sections \ref{sec:optimality} and \ref{sec:globalOptimality}, we prove the local and global asymptotic optimality of our proposed scheme respectively. In Section~\ref{sec:numerics}, we evaluate the performance of the Whittle index policy numerically. Lastly, the mathematical proofs are provided in the appendices. 

\section{System Model and Problem Formulation}\label{sec:systemmodel}
\subsection{System model description}\label{sec:modelDes}
We consider a time-slotted system with one central scheduler, $N$ users/queues and $M$ uncorrelated channels (or servers) with ($N > M$). The terms "server" and "channel" will be used interchangeably throughout this paper, as well as the terms "user" and "queue".  A channel can be allocated to at most one user, hence only $M$ users will be able to transmit (i.e. send packets) at time slot $t$. We consider $K$ different classes of users and we assume that each user in class-$k$, if scheduled, transmits at most $R_k$ packets per time slot. We will refer to $R_k$ as the maximum transmission rate for every user in class $k$ and we assume that $\min_{k} \{ R_k\}\geq 2$. We denote by $\gamma_k$ the proportion of class-$k$ users in the system.  We further denote by $A_{i}^k(t)\in\{0,\ldots,R_{k}-1\}$ the number of packets that arrive to queue $i$ in class $k$ at time slot $t$. We also let $q^{k,\phi}_{i}(t)$ denote the number of packets in queue $i$ in class $k$. Furthermore, $s^{k,\phi}_{i}(\textbf{q}^\phi(t))$ will denote the transmission action under a decision policy $\phi$ and $\textbf{q}^\phi(t)$ the vector of all queue lengths $ (q^{1,\phi}_1(t),\ldots,q^{1,\phi}_{N \gamma_1}(t),\ldots, q^{K,\phi}_1(t),\ldots,q^{K,\phi}_{N \gamma_K}(t))$. For the sake of clarity, we define $s^{k,\phi}_{i}(t):=s^{k,\phi}_{i}(\textbf{q}^\phi(t))$. If policy $\phi$ prescribes to schedule user $i$ in class $k$ at time $t$, then $s^{k,\phi}_{i}(t)=1$, and $s^{k,\phi}_{i}(t)=0$ otherwise. We denote by $L$ the buffer capacity, which is considered to be the same for all queues and can be very high. The general system model is presented in Figure \ref{fig:system_model}. 


Based on our system model, the number of packets in queue $i$ of class $k$ evolves as follows: 
\begin{equation}\label{eq: qeue_evolution}
q^{k,\phi}_{i}(t+1)=\min\{(q^{k,\phi}_{i}(t)-R_ks^{k,\phi}_{i}(t))^{+}+A^k_i(t),L\},
\end{equation}
where $(x)^+ = \max\{x,0\}$.
\begin{figure}
\centering
\includegraphics[width=0.5\textwidth]{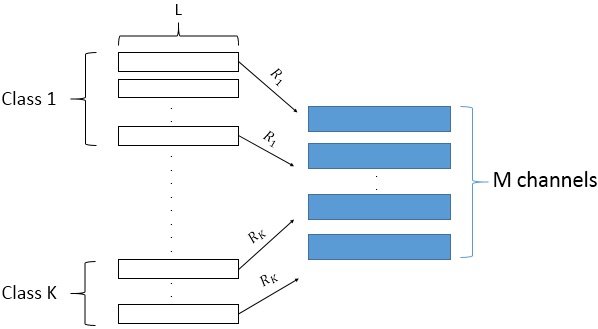}
\caption{System Model}
\end{figure}

The objective of the present work is to find a scheduling policy $\phi$ that minimizes the average queue length of the users which results, according to Little Law, in the minimization of the average delay. 
\subsection{Problem formulation}
 The cost incurred by user $i$ in class $k$, at time $t$ is equal to $a_kq^{k,\phi}_{i}(t)$ for all $i\in\{1,\ldots,\gamma_kN\}$ where $a_k$ is a predefined weight. One can see that the model described in Section~\ref{sec:modelDes} belongs to the family of Restless Bandit Problems (RBP)  \cite{Whittle1988restless}. 
We consider the broad class $\Phi$ of scheduling policies in which a scheduling decision depends on the history of observed queue states and scheduling actions. Our user and channel allocation problem therefore consists of identifying the policy $\phi$ $\in$ $\Phi$ that minimizes
the infinite horizon expected average queues,
subject to the constraint on the number of users selected
at each time slot. Given the initial state $\textbf{q}_0=(q^1_1(0),\ldots,q^1_{N\gamma_1}(0),...,q_{1}^K(0),\ldots,q^K_{N\gamma_K}(0))$, the
problem can be formulated as follows:
\begin{align}
& \underset{\phi \in \Phi}{\text{min}} \limsup\limits_{T\rightarrow\infty}\frac{1}{T} \mathbb{E}\left[ \sum_{t=0}^{T-1}  \sum_{k=1}^{K} \sum_{i=1}^{{\gamma_k}N} a_kq_i^{k,\phi}(t) \mid \textbf{q}_0\right],\label{obj_function} \\
& \text{s.t.}  \sum_{k=1}^{K} \sum_{i=1}^{{\gamma_k}N} s_i^{k,\phi}(t)\leq {\alpha}N, \hbox{ for all } t,\label{eq:constraint_original}
\end{align}
where $\alpha=M/N$ is the fraction of users that can be scheduled.
\section{Relaxed Problem and Threshold-based Policy}\label{sec:threshold policy}
As it has been discussed in the introduction of this paper, RBPs are PSPACE-Hard (see Papadimitriou et al. \cite{papadimitriou1999complexity}) and therefore one should develop well performing sub-optimal policies to solve these problems. In this paper, the development of our policy is done through several steps. First, we consider a Lagrangian relaxation of our problem and show that it can be decomposed into several one-dimensional problems. We then prove that the optimal solution to each of these relaxed problems is a threshold-based policy. We then compute the stationary distribution of the states of the system under the aforementioned threshold policy. This allows us to obtain a closed form expression of the Whittle index values of the relaxed problem and develop a Whittle index-based scheduling policy for the original RBP. 

In this section, we first formulate the relaxed problem and prove that its optimal policy is a threshold-based one. 
\subsection{Relaxed Problem and Dual Problem}
The Lagrangian relaxation consists of relaxing the constraint on the available resources. Namely, we consider that the constraint in Equation~\eqref{eq:constraint_original}, has to be satisfied on average and not in every decision epoch, that is,
\begin{equation}\label{eq:constraint_relaxed}
\limsup\limits_{T\rightarrow\infty}\frac{1}{T} \mathbb{E}\left[\sum_{k=1}^{K} \sum_{i=1}^{{\gamma_k}N} s_i^{k,\phi}(t)\right] \leq {\alpha}N.
\end{equation}
Note that, contrary to the strict constraint in Equation~\eqref{eq:constraint_original}, the relaxed constraint allows the activation of more than $\alpha$ fraction of users at each time slot.
If we note $W$ the Lagrangian multiplier for the constrained problem, then the Lagrange function equals to:
\begin{equation}
f(W,\phi)=\limsup\limits_{T\rightarrow\infty}\frac{1}{T} \mathbb{E}\left[ \sum_{t=0}^{T-1}  \sum_{k=1}^{K} \sum_{i=1}^{{\gamma_k}N} (a_kq_i^{k,\phi}(t)+Ws_i^{k,\phi}(t)) \mid \textbf{q}_0\right]-W{\alpha}N,
\end{equation}
where $W$ can be seen as a subsidy for not transmitting.
Therefore, the dual problem for a given $W$ is
\begin{equation}
\underset{\phi \in \Phi}{\text{min}} \ f(W,\phi).
\end{equation}
\subsection{Problem Decomposition and Threshold-based Policy}\label{sec:ProbDecomp}
In this section, we show that the relaxed problem can be decomposed into $N$ one-dimensional subproblems, for which the optimal solution is a threshold-based policy. To do that, we first get rid of the constants that do not depend on $\phi$ and reformulate the problem as follows,
\begin{equation}\label{eq:relaxed}
\underset{\phi \in \Phi}{\text{min}}\limsup\limits_{T\rightarrow\infty}\frac{1}{T} \mathbb{E}\left[ \sum_{t=0}^{T-1}  \sum_{k=1}^{K} \sum_{i=1}^{{\gamma_k}N} (a_kq_i^{k,\phi}(t)+Ws_i^{k,\phi}(t)) \mid \textbf{q}_0\right].
\end{equation}
One can see that the solution of this problem can be deduced from the well known Bellman equation (see Ross~\cite{ross2014introduction}). More specifically: 
\begin{equation}\label{eq:Bellman}
\bar{\textbf{V}}(\textbf{q})+ \theta= \underset{\textbf{s}}{\text{min}} \{ \sum_{k=1}^{K} \sum_{i=1}^{{\gamma_k}N} C_k(q_i^k,s_i^k)+ \sum_{\textbf{q}'} Pr(\textbf{q}'|\textbf{q},\textbf{s}) \bar{V}(\textbf{q'})\}, 
\end{equation}
 for all  $\textbf{q} = (q_1^1,\ldots,q_{\gamma_1N}^1,\ldots, q_{1}^K,\ldots,q_{\gamma_kN}^K)$, with $q_{i}^k\in\{1,\ldots,L\}$ being the queue length of class-$k$ user $i$, 
and $\textbf{s} = (s_1^1,\ldots,s_{\gamma_1N}^1,\ldots, s_{1}^K,\ldots,s_{\gamma_kN}^K)$, with $s_i^{k}\in\{0,1\}$ being the action taken with respect to user $i$ in class $k$. In equation~\eqref{eq:Bellman}, $V(\cdot)$ represents the Value Function, $\theta$ is the optimal average cost and $C_k(q_i^k,s_i^k)$ is the holding cost $a_kq_i^k+Ws_i^k$. The optimal decision for each state $\textbf{q}$ can be obtained by minimizing the right hand side of Equation~\eqref{eq:Bellman}. We now show that the problem can be decomposed into $N$ independent subproblems by decomposing  $\bar{V}(\cdot)$ into separate Value Functions for each user $i$ in class $k$, i.e., $V_i^k(\cdot)$. In other words, the optimal decision $\textbf{s}$ to problem~\eqref{eq:Bellman} is a vector composed of elements $s_i^k$, where each $s_i^k$ is nothing but the optimal decision that solves the individual Bellman equations.
\begin{equation} \label{eq:individual}
V_i^k(q_i^k)+ \theta_i^k= \underset{s_i^k}{\text{min}} \{ C_k(q_i^k,s_i^k)+ \sum_{q_i^{'k}} Pr(q_i^{'k}|q_i^k,s_i^k)V_i^k(q_i^{'k})\}.
\end{equation}
\begin{proposition}\label{prop:value_function_decomposition}
Let $V_i^k(\cdot)$ be the optimal value function that solves Equation~\eqref{eq:individual}, and let $\bar{\textbf{V}}(\cdot)$ be the optimal value function that solves Equation~\eqref{eq:Bellman} then:
\begin{align}
\bar{\textbf{V}}(\cdot)=\sum_{k=1}^K\sum_{i=1}^{\gamma_kN} V_i^k(\cdot).
\end{align}
\end{proposition}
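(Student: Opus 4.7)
The plan is to verify directly that the candidate function $\tilde{V}(\mathbf{q}) := \sum_{k=1}^K \sum_{i=1}^{\gamma_k N} V_i^k(q_i^k)$ satisfies the coupled Bellman equation~\eqref{eq:Bellman} with optimal average cost $\theta = \sum_{k,i} \theta_i^k$, and then invoke uniqueness of the solution to the average-cost Bellman equation (up to an additive constant) to conclude that $\tilde{V}$ and $\bar{V}$ coincide.

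The key structural facts I would exploit are separability of the instantaneous cost and independence of the transition kernels. The cost in~\eqref{eq:Bellman} is already written as a sum $\sum_{k,i} C_k(q_i^k, s_i^k)$. For the transitions, since the queue dynamics in~\eqref{eq: qeue_evolution} depend only on $q_i^k$, $s_i^k$ and the per-user arrival $A_i^k(t)$, and the arrival processes across users are mutually independent, the joint transition kernel factorizes as $\Pr(\mathbf{q}'\mid \mathbf{q},\mathbf{s}) = \prod_{k,i} \Pr(q_i^{\prime k}\mid q_i^k, s_i^k)$. Plugging $\tilde{V}$ into the expectation term therefore yields
\begin{equation*}
\sum_{\mathbf{q}'} \Pr(\mathbf{q}'\mid \mathbf{q},\mathbf{s})\, \tilde{V}(\mathbf{q}') = \sum_{k=1}^{K}\sum_{i=1}^{\gamma_k N} \sum_{q_i^{\prime k}} \Pr(q_i^{\prime k}\mid q_i^k, s_i^k)\, V_i^k(q_i^{\prime k}),
\end{equation*}
because in each marginal sum the transition kernels of the other users sum to one.

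Once both the cost and the expected value-to-go are separated into sums indexed by $(k,i)$, the joint minimization over $\mathbf{s}$ decouples: each coordinate $s_i^k$ appears only in its own summand, so
\begin{equation*}
\min_{\mathbf{s}}\Bigl\{\sum_{k,i} C_k(q_i^k,s_i^k) + \sum_{k,i}\sum_{q_i^{\prime k}} \Pr(q_i^{\prime k}\mid q_i^k,s_i^k) V_i^k(q_i^{\prime k})\Bigr\} = \sum_{k,i}\min_{s_i^k}\Bigl\{C_k(q_i^k,s_i^k) + \sum_{q_i^{\prime k}}\Pr(q_i^{\prime k}\mid q_i^k,s_i^k)V_i^k(q_i^{\prime k})\Bigr\}.
\end{equation*}
By~\eqref{eq:individual} each inner minimum equals $V_i^k(q_i^k) + \theta_i^k$, so the right-hand side becomes $\tilde{V}(\mathbf{q}) + \sum_{k,i}\theta_i^k$, which is exactly the form of~\eqref{eq:Bellman} with average cost $\sum_{k,i}\theta_i^k$.

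To finish, I would appeal to the standard uniqueness result for the average-cost Bellman equation on a finite state space (the buffer capacity $L$ ensures finiteness and, under mild irreducibility, a unique $\theta$ and a value function determined up to an additive constant; see Ross~\cite{ross2014introduction}). Since $\tilde V$ is a valid solution and $\bar V$ is the value function obtained from the joint problem, the two must agree up to a constant, which we may absorb into the normalization of each $V_i^k$, yielding $\bar{V}(\mathbf{q}) = \sum_{k,i} V_i^k(q_i^k)$. The only delicate step is justifying the factorization $\Pr(\mathbf{q}'\mid \mathbf{q},\mathbf{s}) = \prod_{k,i} \Pr(q_i^{\prime k}\mid q_i^k,s_i^k)$, which in turn rests on the (implicit) assumption that arrivals across users are independent; everything else is a direct computation followed by invocation of uniqueness.
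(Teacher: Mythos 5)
Your proof is correct and mirrors the paper's argument: both establish separability of the one-step cost and of the expected value-to-go (the paper marginalizes $\Pr(q_i^{\prime k}\mid q_i^k,s_i^k)$ out of $\Pr(\mathbf{q}'\mid\mathbf{q},\mathbf{s})$ where you use the equivalent product factorization from independence), decouple the minimization over $\mathbf{s}$ coordinate-wise, and conclude by uniqueness of the average-cost Bellman solution citing Ross. The only cosmetic difference is that you are a bit more careful about the additive-constant freedom in the uniqueness step, which the paper glosses over.
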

\begin{proof}
See appendix \ref{app:value_function_decomposition}.
\end{proof}
In this section, we show that the solution to each individual problem (for each user $i$) follows the structure of a threshold policy. For ease of notation, we drop the indices $k$ and $i$ and consider that $V(\cdot)$ is the value function for a given user. We first provide the definition of threshold policies.
\begin{mydef} 
A threshold policy is a policy $\phi \in \Phi$ for which there exists  $n \in \{-1,0,\cdots,L\}$ such that when the queue of user $i$ is in state $ q \leq n$, the prescribed action is $s^- \in \{0,1\}$, and when the queue $ q > n$, the prescribed action is $s^+ \in \{0,1\}$ while baring in mind that $s^- \neq s^+$.\\
Since we only have two possible actions, a policy is of the form threshold policy if and only if it is monotone in $q$.
\end{mydef}
The solution of the bellman equation~\eqref{eq:individual} $V(\cdot)$ can be obtained by the well known Value iteration algorithm, which consists of updating $V_t(\cdot)$ using the following equation:
\begin{equation} \label{value_iteration}
V_{t+1}(q)= \underset{s}{\text{min}} \{ C(q,s)+ \sum_{q'} Pr(q'|q,s)V_t(q')\} - \theta
\end{equation}
We consider that the initial value function $V_0$ is equal to $0$ for any $q$, (i.e. for all $q$, $V_0(q)=0$). After many iterations, $V_t(\cdot)$ will converge to the unique fixed point of the equation ~\eqref{eq:individual} called $V(\cdot)$. However, the value iteration algorithm is known to have high complexity and can take a long time to converge. Therefore, we will give some structural properties of the value function $V_t(\cdot)$ for any $t$ and conclude that the optimal policy is a threshold-based one.\\   
For that, we consider the operator $TO$ such that for each $(q,s) \in \{0,\ldots,L\}  \times \{0,1\} $ \begin{equation}(TO(V))(q,s)= C(q,s)+ \sum_{q'} Pr(q'|q,s)V(q')-\theta
\end{equation}  
We first provide some useful definitions and preliminary results before proving the desired results.
\begin{mydef}
We say that function $f$ is R-convex in $X=\{0,\ldots,L\}$, if for any $x$ and $y$ in $X$ such that $x < y$, we have: \begin{equation}f(y+R)-f(x+R) \geq f(y)-f(x)
\end{equation} 
\end{mydef}
\begin{Lemma} \label{telescopique}
If for a given function $f$, there exists $R$ such that for any $x \in \{0,\ldots,L-1\}$, $f(x+1+R)-f(x+R) \geq f(x+1)-f(x)$, then $f$ is R-convex  
\end{Lemma}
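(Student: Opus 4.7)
The plan is to prove the lemma by a straightforward telescoping argument, which is what the name "telescopique" suggests. The goal is to upgrade the one-step inequality $f(x+1+R)-f(x+R) \geq f(x+1)-f(x)$ (valid for every $x \in \{0,\ldots,L-1\}$) into the $m$-step inequality $f(y+R)-f(x+R) \geq f(y)-f(x)$ for arbitrary $x<y$ in $\{0,\ldots,L\}$.

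First, I would fix arbitrary $x,y \in \{0,\ldots,L\}$ with $x<y$ and write $y = x+m$ with $m \geq 1$ an integer. I would then express both sides of the desired inequality as telescoping sums:
\begin{equation*}
f(y+R)-f(x+R) \;=\; \sum_{j=0}^{m-1}\bigl(f(x+j+1+R)-f(x+j+R)\bigr),
\end{equation*}
\begin{equation*}
f(y)-f(x) \;=\; \sum_{j=0}^{m-1}\bigl(f(x+j+1)-f(x+j)\bigr).
\end{equation*}
Since $x+j \in \{0,\ldots,L-1\}$ for every $j \in \{0,\ldots,m-1\}$ (because $x+m = y \leq L$), the hypothesis applies to each individual index $x+j$, giving the termwise inequality $f(x+j+1+R)-f(x+j+R) \geq f(x+j+1)-f(x+j)$. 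Summing these $m$ inequalities yields the desired R-convexity inequality.

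There is essentially no obstacle: the only small point to be careful about is the domain, namely checking that the one-step hypothesis is legitimately applicable at each intermediate index $x+j$, which follows immediately from $x+j \leq y-1 \leq L-1$. Nothing else is needed; the lemma is really just the observation that a monotonicity property of differences, once known locally, lifts to arbitrary increments by induction on $m$ (or equivalently by telescoping).
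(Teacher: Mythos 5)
Your proof is correct and is essentially identical to the paper's: both decompose $f(y+R)-f(x+R)$ and $f(y)-f(x)$ as telescoping sums of one-step differences, apply the hypothesis termwise, and sum. You also make the (worthwhile) domain check explicit, which the paper leaves implicit.
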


\begin{proof}
Considering $y$ and $x$ in $\{0,\ldots,L-1\}$, with $y > x$, we have:
\begin{align}
f(y+R)-f(x+R)&=\sum_{k=x}^{y-1} [f(k+1+R)-f(k+R)]\\
&\geq \sum_{k=x}^{y-1} [f(k+1)-f(k)]\\
&=f(y)-f(x)
\end{align}
which concludes the proof.  
\end{proof}
\begin{mydef}
Let $g(q,s)$ be a real valued function defined on $X \times S$, with $S=\{0,1\}$, and $X=\{0,\ldots,L\}$. We say that $g$ is submodular if $g(q+1,1)-g(q+1,0) \leq g(q,1)-g(q,0)$ for all $q$ on $X$.
\end{mydef}
\begin{theorem}
$\underset{s}{\text{min}} TO(\cdot)(\cdot,s)$ conserves the $R$-convexity and increasing properties. In other words, if the input of the operator TO, i.e. a given $V(\cdot)$, is R-convex and increasing function in $q$ then $\underset{s}{\text{min}} TO(V)(\cdot,s)$ is R-convex and increasing function in $q$.  
\end{theorem}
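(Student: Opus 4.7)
The plan is to split the verification into two stages. Setting $f_s(q) := C(q,s) + \sum_{q'} Pr(q' \mid q, s) V(q') - \theta$, one has $\min_s TO(V)(\cdot, s) = \min(f_0, f_1)$. Stage~1 shows that each $f_s$ individually inherits R-convexity and monotonicity from $V$; Stage~2 shows that the pointwise minimum does as well, exploiting a threshold structure that $f_0$ and $f_1$ automatically exhibit.

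\textbf{Stage 1 (each $f_s$ inherits the properties).} Fix $s \in \{0,1\}$. The per-stage cost $C(q,s) = aq + Ws$ is affine in $q$, hence trivially non-decreasing and R-convex. The continuation term $\mathbb{E}_A[V(\min\{(q-Rs)^+ + A, L\})]$ is non-decreasing in $q$ because, for every realization of the arrival $A$, the map $q \mapsto \min\{(q-Rs)^+ + A, L\}$ is non-decreasing and $V$ is non-decreasing by assumption. For R-convexity, by Lemma~\ref{telescopique} it is enough to verify the one-step inequality $f_s(q+1+R) - f_s(q+R) \geq f_s(q+1) - f_s(q)$; coupling on $A$ and cancelling the affine part reduces this to the one-step R-convexity of $V$. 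The remaining work is to handle the boundary subcases in which the upper cap at $L$, or (when $s=1$) the lower clip $(\cdot)^+$ at $0$, activates at some but not all of the four arguments; each such subcase closes by invoking monotonicity of $V$.

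\textbf{Stage 2 (taking the minimum).} Let $b(q) := \mathbb{E}_A[V(\min\{q+A,L\})]$. By Stage~1, $b$ is R-convex and non-decreasing, and the two curves read $f_0(q) = aq + b(q) - \theta$ and $f_1(q) = aq + W + b((q-R)^+) - \theta$. From R-convexity of $b$, the quantity $b(q) - b((q-R)^+)$ is non-decreasing in $q$, so $f_1 - f_0 = W - [b(q) - b((q-R)^+)]$ is non-increasing. Hence there exists a threshold $n^* \in \{-1, 0, \ldots, L\}$ such that $U(q) := \min(f_0,f_1)(q)$ equals $f_0(q)$ for $q \leq n^*$ and $f_1(q)$ for $q > n^*$; monotonicity of $U$ follows at once. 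For R-convexity, apply Lemma~\ref{telescopique} and split on the location of $n^*$ relative to the four points $q, q+1, q+R, q+R+1$: when all four lie on the same side of $n^*$, the desired one-step inequality reduces to that for $f_0$ or $f_1$ (noting that for $q \geq R$, $f_1$ is a shifted copy of $f_0$); in the mixed cases, after the $aq$ and $\theta$ contributions cancel, the inequality collapses to one of the two defining inequalities of the threshold itself, namely $b(n^*) \leq W + b((n^*-R)^+)$ or $b(n^*+1) \geq W + b((n^*+1-R)^+)$, with the interior positions of $n^*$ producing equality.

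\textbf{Main obstacle.} The substantive difficulty lies in Stage~1, where R-convexity of $V$ must be propagated through the doubly clipped dynamics $q \mapsto \min\{(q-Rs)^+ + A, L\}$. The coupling on $A$ is immediate in the interior regime where neither cap activates inhomogeneously across the four arguments of the telescope; the delicate part is enumerating the boundary subcases, e.g. when $q+R+A > L$ on the upper end of the telescope while $q+A \leq L$ on the lower end, or when $s=1$ and only some of the arguments straddle the threshold $R$ where $(\cdot)^+$ kicks in, and verifying each using monotonicity of $V$. Once Stage~1 is in hand, Stage~2 reduces to algebraic bookkeeping on the threshold structure induced by submodularity.
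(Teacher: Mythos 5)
Your decomposition mirrors the paper's quite closely, repackaged. You prove in Stage~1 that each $f_s = TO(V)(\cdot,s)$ is \emph{individually} R-convex and non-decreasing; the paper instead proves submodularity of $TO(V)(q,s)$ (Lemma~\ref{submodilarity}) and never asserts R-convexity of $f_0$ or $f_1$ on their own. In the paper's case analysis, every required inequality is produced by combining the shift identity $TO(V)(q+R,1)=TO(V)(q,0)+aR+W$ (valid for $q\ge 0$) with a single application of submodularity; given that identity, one-step submodularity at $q+R$ is precisely one-step R-convexity of $f_0$ at $q$ (the paper's last case is this chain), and the paper's first case is precisely one-step R-convexity of $f_1$. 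So your Stage~1 and the paper's Lemma~\ref{submodilarity} carry equivalent information, and your Stage~2 is the paper's case analysis with the bookkeeping relocated. This is a reorganization, not a genuinely different route, though your version states the invariance more transparently.

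There is, however, one claim in Stage~1 that does not hold up as written: that the upper-cap boundary subcase "closes by invoking monotonicity of $V$." Take the regime you yourself single out — $q+R+A > L$ on the upper end of the telescope while $q+A \le L$ on the lower end, which occurs for $q\in[L-2R+1,\,L-R-1]$ (nonempty since $R\ge2$) and $A\ge L-q-R$. Writing $b(q):=\mathbb{E}_A[V(\min\{q+A,L\})]$, for such $A$ the upper telescoped increment is $V(\min\{q+R+1+A,L\})-V(\min\{q+R+A,L\}) = V(L)-V(L)=0$, while the lower increment $V(q+1+A)-V(q+A)$ is non-negative \emph{by} monotonicity. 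So the per-realization one-step inequality points the wrong way; monotonicity hurts rather than helps, and no pointwise coupling in $A$ closes the case. A complete argument would have to show that the favorable interior-$A$ contributions dominate in expectation, or exploit extra structure near $L$ of the $V_t$ that value iteration actually produces so that the offending lower increments vanish; neither is addressed. (For comparison, the paper's Appendix~\ref{app:submodilarity} proof of Case~2 rewrites the conditional expectations in the uncapped form $\sum_{q'=q}\Pr(A=q'-q)V(q'+1)$ with $q'+1$ allowed to reach $q+R>L$, which elides the same boundary terms; your write-up surfaces the issue explicitly but then misdiagnoses how it is resolved.)
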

\begin{proof}
Let us consider that the input of $TO(V)(\cdot,s)$), i.e. a given, $V(\cdot)$ is R-convex and increasing in $q$. 
For the increasing property of $\underset{s}{\text{min}} TO(V)(\cdot,s)$,  we have by definition that $C(\cdot,s)$ is increasing in $q$. We also have that $V(\cdot)$ is increasing in $q$ and the number of queue states is finite, then $\sum_{q'} Pr(q'|\cdot,s)V(q')$ is an increasing function in $q$ (see Puterman~\cite{puterman2014markov}). Since $\theta$ is a constant, $TO(V)(\cdot,s)$ is increasing in $q$ and therefore $\underset{s}{\text{min}} TO(V)(\cdot,s)$ is increasing in $q$.\\
For R-convexity, we should first prove the following lemma. 
\begin{Lemma}\label{submodilarity}
If $V(\cdot)$ is R-convex and increasing in $q$, $C(q,s)$ and $\sum_{q'} Pr(q'|q,s)V(q')$ are submodular functions.
\end{Lemma}
\begin{proof}
\renewcommand{\qedsymbol}{$\blacksquare$}
The proof is given in appendix \ref{app:submodilarity}.
\end{proof}
This demonstrates that the function $TO(V)(\cdot,\cdot)$ is submodular since it is the sum of two submodular functions. Let us now show that $\underset{s}{\text{min}} TO(V)(\cdot,s)$ is R-convex. For that, we consider the function $\Delta TO(V)(q)=TO(V)(q,1)-TO(V)(q,0)$ which is decreasing in $q$ since $TO(V)(\cdot,\cdot)$ is submodular. Therefore, there exists $r \in \mathbb{R}\cup \{+ \infty\}$ such that for $q \leq r$, \ $\Delta TO(V)(q) \geq 0$ and for $q\geq r$, \  $\Delta TO(V)(q)  \leq 0$. In the remainder of the proof, we consider all possible cases of $q$ and $r$.\\
If $q+R+1,q+R,q,q+1 \geq r$:\\
\begin{align}
\underset{s}{\text{min}} TO(V)(q+1+R,s)-\underset{s}{\text{min}} TO(V)(q+R,s)=&TO(V)(q+1+R,1)-TO(V)(q+R,1)&\\=&TO(V)(q+1,0)-TO(V)(q,0)\\
\geq& TO(V)(q+1,1)-TO(V)(q,1)&\\=&\underset{s}{\text{min}} TO(V)(q+1,s)-\underset{s}{\text{min}} TO(V)(q,s)
\end{align}
where the inequality is due to the sub-modularity of $TO(V)(\cdot,\cdot)$. \\ If $q \leq r \leq q+1,q+R,q+1+R$:\\
\begin{align}
\underset{s}{\text{min}} TO(V)(q+1+R,s)-\underset{s}{\text{min}} TO(V)(q+R,s)=&TO(V)(q+1+R,1)-TO(V)(q+R,1)\\
=&TO(V)(q+1,0)-TO(V)(q,0)\\
\geq& TO(V)(q+1,1)-TO(V)(q,0)\\
=&\underset{s}{\text{min}} TO(V)(q+1,s)-\underset{s}{\text{min}} TO(V)(q,s)
\end{align}
if $q,q+1 \leq r \leq q+R,q+R+1$:\\
\begin{align}
\underset{s}{\text{min}} TO(V)(q+1+R,s)-\underset{s}{\text{min}} TO(V)(q+R,s)=&TO(V)(q+1+R,1)-TO(V)(q+R,1)\\
=&TO(V)(q+1,0)-TO(V)(q,0)\\
=&\underset{s}{\text{min}} TO(V)(q+1,s)-\underset{s}{\text{min}} TO(V)(q,s)
\end{align}
if $q,q+1,q+R \leq r \leq q+R+1$:\\
\begin{align}
\underset{s}{\text{min}} TO(V)(q+1+R,s)-\underset{s}{\text{min}} TO(V)(q+R,s)=&TO(V)(q+1+R,1)-TO(V)(q+R,0)\\
\geq& TO(V)(q+1+R,1)-TO(V)(q+R,1)\\
=&TO(V)(q+1,0)-TO(V)(q,0)\\
=&\underset{s}{\text{min}} TO(V)(q+1,s)-\underset{s}{\text{min}} TO(V)(q,s)
\end{align}
If $q,q+1,q+R;q+R+1 \leq r$:\\
\begin{align}
\underset{s}{\text{min}} TO(V)(q+1+R,s)-\underset{s}{\text{min}} TO(V)(q+R,s)=&TO(V)(q+1+R,0)-TO(V)(q+R,0)\\
\geq& TO(V)(q+1+R,1)-TO(V)(q+R,1)\\
=&TO(V)(q+1,0)-TO(V)(q,0)\\
=&\underset{s}{\text{min}} TO(V)(q+1,s)-\underset{s}{\text{min}} TO(V)(q,s)
\end{align}
Using lemma \ref{telescopique}, $\underset{s}{\text{min}} TO(V)(\cdot,s)$ is R-convex in $q$, i.e., we can conclude the R-convexity conservation.\\

\end{proof}

\begin{remark}
Theorem 1 means that if the value function $V_t$ is increasing and R-convex, then the value function $V_{t+1}$ in equation ~\eqref{value_iteration}, which is computed with the operator $TO$, is increasing and R-convex.\\
Thus, as $V_0$ is increasing and R-convex, all $V_t$ are increasing and R-convex and therefore we can conclude that the value function $V$ will be also R-convex and increasing in $q$.
\end{remark}
\begin{corollary}\label{col:threshold_policy}
The optimal policy $\phi^*$ of each one-dimensional relaxed subproblem is a threshold-based policy. 
\end{corollary}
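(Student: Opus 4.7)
The plan is to derive the corollary as an almost immediate consequence of the structural results just established, namely Theorem 1 and Lemma 2. All the difficult work, namely propagating R-convexity and monotonicity through the Bellman operator, has already been carried out, so what remains is to extract the threshold structure from submodularity.

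First, I would invoke the remark following Theorem 1 to assert that the fixed-point value function $V(\cdot)$ of the individual Bellman equation \eqref{eq:individual} is increasing and R-convex in $q$. The justification is that $V_0 \equiv 0$ is trivially increasing and R-convex, that Theorem 1 shows $TO$ preserves both properties at each step of the value iteration \eqref{value_iteration}, and that both properties are inherited by pointwise limits on a finite state space. Next, I would apply Lemma 2 with this $V$ to conclude that $C(q,s)$ and $\sum_{q'} Pr(q'|q,s) V(q')$ are submodular in $(q,s)$. Since the sum of two submodular functions is submodular and $\theta$ is a constant that does not depend on $(q,s)$, the full operator $TO(V)(q,s)$ is itself submodular.

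From submodularity, the standard argument is to look at the action gap $\Delta TO(V)(q) := TO(V)(q,1) - TO(V)(q,0)$. Submodularity is precisely the statement that $\Delta TO(V)(q)$ is non-increasing in $q$, so there exists a real number $n \in \{-1,0,\ldots,L\}$ (taking the convention $n=-1$ if the optimal action is always to transmit, and $n=L$ if it is always to idle) such that $\Delta TO(V)(q) \ge 0$ for $q \le n$ and $\Delta TO(V)(q) \le 0$ for $q > n$. Consequently, the optimal action selector $s^*(q) \in \arg\min_s TO(V)(q,s)$ can be chosen equal to $0$ for $q \le n$ and equal to $1$ for $q > n$, which is exactly the definition of a threshold policy. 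The policy $\phi^*$ obtained by applying $s^*(\cdot)$ in every decision epoch is therefore threshold-based and, being a minimizer of the Bellman right-hand side, is optimal for the one-dimensional relaxed subproblem.

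I do not anticipate any real obstacle here: the only minor care point is the boundary bookkeeping of the threshold $n$ at the extremes $q=0$ and $q=L$ (and tie-breaking when $\Delta TO(V)(q) = 0$), but these are handled by the convention $n \in \{-1,\ldots,L\}$ already built into the definition of a threshold policy. The genuinely hard step was Theorem 1 together with Lemma 2, whose combined content is exactly what makes the above one-line extraction of a threshold policy possible.
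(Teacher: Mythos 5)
Your proposal is correct and follows essentially the same route as the paper: both rely on Remark 1 to get that the fixed-point $V$ is increasing and R-convex, invoke Lemma \ref{submodilarity} to obtain submodularity of $TO(V)$, and then extract the threshold from the monotone action gap. The only cosmetic difference is that the paper phrases the last step as proving $\phi^*$ is monotone in $q$ (and then appeals to Definition 1 for the equivalence with a threshold policy), whereas you locate the threshold $n$ directly from the sign change of $\Delta TO(V)(q)$; these are the same argument.
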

\begin{proof}
As explained in Definition $1$, it is sufficient to prove that the optimal policy $\phi^*$ is monotone in $q$. \\ We consider $q_1 \leq q_2$. According to Remark $1$, $V(.)$ is increasing and R-convex, then using lemma \ref{submodilarity}, $TO(V)$ is submodular. Therefore, we have:
\begin{equation}
(TO(V))(q_1,1)-(TO(V))(q_1,0) \geq (TO(V))(q_2,1)-(TO(V))(q_2,0)
\end{equation}
If $\phi^*(q_2)=\underset{s}{\text{argmin}} (TO(V))(q_2,s)=0$\\
Hence,
\begin{equation}
(TO(V))(q_1,1)-(TO(V))(q_1,0) \geq (TO(V))(q_2,1)-(TO(V))(q_2,0)
\end{equation}
Given that $(TO(V))(q_2,1)-(TO(V))(q_2,0) \geq 0$, then:
\begin{equation}
(TO(V))(q_1,1)-(TO(V))(q_1,0) \geq 0
\end{equation}
Which leads to: 
\begin{equation}
\underset{s}{\text{argmin}} (TO(V))(q_1,s)=0
\end{equation}i.e. 
\begin{equation}
\phi^*(q_1) \leq \phi^*(q_2) 
\end{equation}
If $\phi^*(q_2)=\underset{s}{\text{argmin}} (TO(V))(q_2,s)=1$ , obviously we have that:
\begin{equation}
\phi^*(q_1) \leq \phi^*(q_2)
\end{equation}
Therefore, we can conclude that the optimal solution is monotone and increasing in $q$, which implies that it is a threshold policy.
\end{proof}
\section{Stationary distribution}\label{sec:steady_state}
We have seen previously that the optimal solution of problem ~\eqref{eq:relaxed} is a threshold-based policy. Let us define $n_k$ as the threshold for users in class $k$, i.e. if the queue state of user $i$ in class $k$ is $q_i^k$ such that $q_i^k \leq n_k$ then the user will not be scheduled, and else, the user will be selected for transmission. The objective of this section is to derive the stationary distribution of the users' states. This will be useful in the subsequent section in the derivation of a closed form expression of the Whittle index values. We assume here that at each queue $i$ in class~$k$, packets arrive according to a discrete uniform distribution, that is, $\mathbb{P}(A_{i}^k(t)=x)=\rho_k$ for all $0\leq x\leq R_k-1$ and $0$ otherwise, where $\rho_k=1/R_k$.\\
For ease of notation, we again drop the indices $k$ and $i$  (e.g. we denote the threshold by $n$ and the queue length by $q$). We denote by $p^n(i,j)$ the transition probability from state $i$ to $j$, by $u$ the stationary distribution under the threshold policy $n$, and by $R$ the maximum rate ($\rho=1/R$). One can notice that $u$ verifies the full balance equation, i.e.:
\begin{equation} u(i)= \sum_{j=0}^L p^n(j,i)u(j) = \sum_{j=0}^n p^n(j,i)u(j)+\sum_{j=n+1}^L p^n(j,i)u(j)\end{equation} 
\begin{mydef}\label{pi}
We define $\pi_i$ as:
\begin{equation}
 \pi_i=\left\{
    \begin{array}{ll}
        \rho & if \ 0\leq i \leq R-1 \\
        0 & else
    \end{array}
\right.
\end{equation}
\end{mydef}
\begin{proposition}\label{prop:general_passage}
The expressions of $p^n(j,i)$ are given by:\\
if $0 \leq i < L$ and $j \leq n$\\
\begin{equation}
p^n(j,i) = \pi_{i-j}=\left\{
    \begin{array}{ll}
        \rho & if \ 0\leq i-j \leq R-1 \\
        0 & else
    \end{array}
\right.
\end{equation}
if $0 \leq i < L$ and $n<j\leq L$\\
\begin{equation}
p^n(j,i) = \pi_{i-(j-R)^+}=\left\{
    \begin{array}{ll}
        \rho & if \ 0\leq i-(j-R)^+ \leq R-1 \\
        0 & else
    \end{array}
\right.
\end{equation}
if $i=L$ and $j \leq n$\\

\begin{equation}
p^n(j,L)=(R-L+j)\pi_{L-j}=\left\{
    \begin{array}{ll}
        (R-L+j)\rho & if \ 0\leq L-j \leq R-1 \\
        0 & else
    \end{array}
\right.
\end{equation}

if $i=L$ and $n<j\leq L$\\
\begin{equation}
p^n(j,L) = (R-L+(j-R)^+)\pi_{L-(j-R)^+}=\left\{
    \begin{array}{ll}
        (R-L+(j-R)^+)\rho & if \ 0\leq L-(j-R)^+ \leq R-1 \\
        0 & else
    \end{array}
\right.
\end{equation}
\end{proposition}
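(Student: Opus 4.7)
The plan is to substitute the threshold action into the queue evolution~\eqref{eq: qeue_evolution} and then read off the one-step transition law by summing the uniform arrival distribution over the appropriate range. Under the threshold policy with parameter $n$, the scheduling action in state $j$ is $s=0$ if $j\le n$ and $s=1$ if $j>n$, so the next queue length equals
\[
\min\bigl\{j'+A,\,L\bigr\}, \qquad j':=\begin{cases} j, & j\le n,\\ (j-R)^+, & j>n,\end{cases}
\]
where $A$ is drawn uniformly on $\{0,\ldots,R-1\}$ with mass $\rho=1/R$ on each value. Hence all four claims reduce to evaluating $\Pr(\min\{j'+A,L\}=i)$ in the cases $i<L$ and $i=L$, and then resubstituting the two definitions of $j'$.

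For $i<L$ the cap at $L$ is inactive, so $\min\{j'+A,L\}=i$ is equivalent to $A=i-j'$. This event has probability $\rho$ precisely when $i-j'$ lies in the arrival alphabet, i.e.\ when $0\le i-j'\le R-1$, and probability $0$ otherwise. Plugging in $j'=j$ (the case $j\le n$) yields the first formula, and $j'=(j-R)^+$ (the case $j>n$) yields the second.

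For $i=L$ the target is absorbing: $\min\{j'+A,L\}=L$ iff $A\ge L-j'$. Counting the qualifying arrivals in $\{0,\ldots,R-1\}$ gives $R-(L-j')=R-L+j'$ outcomes, which is positive precisely when $0\le L-j'\le R-1$ (the bound $L-j'\ge 0$ being automatic since $j'\le L$ in our state space). Multiplying by $\rho$ produces the coefficient $(R-L+j')\rho$, and the two substitutions for $j'$ give the third and fourth formulas.

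There is no deep obstacle here; the only care required is the clean separation between the non-saturating case $i<L$, where exactly one arrival value realizes each target, and the saturating case $i=L$, where a contiguous block of the largest arrivals collapses onto $L$. As an internal consistency check one can note that for either value of $j'$ the probabilities across $i\in\{0,\ldots,L\}$ sum to $\rho\cdot R=1$, since the non-saturating targets contribute $(L-j')\rho$ (when saturation is possible) or $R\rho$ (otherwise), with the saturation mass $(R-L+j')^+\rho$ exactly filling the deficit.
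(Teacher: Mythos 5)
Your proof is correct and follows essentially the same reasoning as the paper's: substitute the threshold action into the queue-evolution equation, note that the non-saturating target $i<L$ is reached by exactly one arrival value while $i=L$ absorbs a contiguous block of large arrivals, and count the uniform arrival mass. The only difference is cosmetic — you unify the passive/active cases via the post-service length $j'$, whereas the paper treats $j\le n$ and $j>n$ separately; the added normalization check is a nice sanity verification but not needed for the argument.
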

\begin{proof}
See appendix \ref{app:general_passage}.       
\end{proof}

\begin{proposition}\label{eq:stat.dist}

The expressions of the stationary distribution is:
\begin{itemize}
\item $ L < R$:\\
1) $-1 \leq n \leq L-1$:
\begin{equation}
u(i)=\left\{
    \begin{array}{ll}
        \rho_k(1-\rho_k)^{n-i} & if  \ 0 \leq i \leq n\\
        \rho_k & if  \ n+1 \leq i \leq L-1 \\
        (1-\rho_k)^{n+1}-(L-n-1)\rho_k & if \ i=L
    \end{array}
\right.
\end{equation} 
2) $n=L$:\\
\begin{equation}
u(i)=\left\{
    \begin{array}{ll}
        0 & if  \ 0 \leq i \leq L-1\\
        1 & if  \ i=L
    \end{array}
\right.
\end{equation}

\item $R \leq L < 2R$:\\
1)$-1 \leq n \leq L-R-1$:
\begin{equation}
u(i)=\left\{
    \begin{array}{ll}
        \rho - (n-i)\rho^2 & if  \ 0 \leq i \leq n\\
        \rho & if  \ n+1 \leq i \leq R-1 \\
        \rho-(i-n)\rho^2 & if \ R \leq i \leq n+R
    \end{array}
\right.
\end{equation}
2)$L-R \leq n < R$:
\begin{equation}
u(i)=\left\{
    \begin{array}{ll}
        \rho-\rho^2(n-i) & if  \ 0 \leq i \leq L-R-1\\
        (1-\rho)^{n-i}\rho & if  \ L-R \leq i \leq n\\
         \rho & if \ n+1 \leq i \leq R-1\\
         \rho-\rho^2(i-n) & if  \ R \leq i \leq L-1\\
         (1-\rho)^{n-L+R+1}-\rho(L-1-n) & if \ i=L
    \end{array}
\right.
\end{equation}
3) $R \leq n \leq L-1$:
\begin{equation}
u(i)=\left\{
    \begin{array}{ll}
        \rho-\rho^2(n-i) & if  \ n-R+1 \leq i \leq L-R-1\\
        (1-\rho)^{n-i}\rho & if  \ L-R \leq i \leq n\\
         \rho - \rho^2(i-n) & if \ n+1 \leq i \leq L-1\\
         (1-\rho)^{n-L+R+1}-\rho(L-1-n) & if \ i=L
    \end{array}
\right.
\end{equation}
4) $n=L$
\begin{equation}
u(i)=\left\{
    \begin{array}{ll}
        0 & if  \ 0 \leq i \leq L-1\\
        1 & if  \ i=L
    \end{array}
\right.
\end{equation}
  
\item $L \geq 2R$ 

1) $-1 \leq n < R$:
\begin{equation}
u(i)=\left\{
    \begin{array}{ll}
        \rho - (n-i)\rho^2 & if  \ 0 \leq i \leq n\\
        \rho & if  \ n+1 \leq i \leq R-1 \\
        \rho-(i-n)\rho^2 & if \ R \leq i \leq n+R
    \end{array}
\right.
\end{equation}

2) $R \leq n < L-R$:
\begin{equation}
u(i)=\left\{
    \begin{array}{ll}
        \rho - (n-i)\rho^2 & if  \ n-R+1 \leq i \leq n\\
        \rho-(i-n)\rho^2 & if  \ n \leq i \leq n+R-1 
    \end{array}
\right.
\end{equation}

3) $L-R \leq n < L$:
\begin{equation}
u(i)=\left\{
    \begin{array}{ll}
        \rho-\rho^2(n-i) & if  \ n-R+1 \leq i \leq L-R-1\\
        (1-\rho)^{n-i}\rho & if  \ L-R \leq i \leq n\\
         \rho - \rho^2(i-n) & if \ n+1 \leq i \leq L-1\\
         (1-\rho)^{n-L+R+1}-\rho(L-1-n) & if \ i=L
    \end{array}
\right.
\end{equation}

4) $n=L$:\\ 
\begin{equation}
u(i)=\left\{
    \begin{array}{ll}
        0 & if  \ 0 \leq i \leq L-1\\
        1 & if  \ i=L
    \end{array}
\right.
\end{equation}
\end{itemize}
\end{proposition}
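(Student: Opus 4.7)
Since each queue, under a fixed threshold $n$, is an irreducible Markov chain on the finite state space $\{0,\ldots,L\}$, a unique stationary distribution exists and is characterized by the balance equations $u(i)=\sum_{j=0}^{L}p^n(j,i)u(j)$ together with $\sum_{i=0}^{L}u(i)=1$. My plan is therefore verificational: for each subcase listed in the statement, I would substitute the candidate closed form into the balance equations and check that they hold, then check normalization. The transition kernel $p^n$ has already been made explicit in Proposition~\ref{prop:general_passage}, so the whole task reduces to a careful, case-by-case bookkeeping exercise.

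The structural reason behind the various cases is transparent from Proposition~\ref{prop:general_passage}. If $j\leq n$ the user is not scheduled, so state $j$ feeds into states $j,j+1,\ldots,j+R-1$ (with any excess collapsed onto $L$ when $j+R-1\geq L$). If $j>n$ the user is scheduled, so state $j$ feeds into states $(j-R)^+,\ldots,(j-R)^{+}+R-1$. Consequently, the support of $u$ is the union of the interval reached from $\{0,\ldots,n\}$ by upward arrivals and from $\{n+1,\ldots,L\}$ after a service drop. Whether these two sources overlap, are disjoint, or hit the buffer boundary $L$ depends on the relative sizes of $R$, $L$, and $n$, which explains the splits into the regimes $L<R$, $R\leq L<2R$, $L\geq 2R$, and the further subcases according to where $n$ sits.

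For a representative interior case, say $L\geq 2R$ and $R\leq n<L-R$, I would postulate the symmetric triangular form $u(i)=\rho-(n-i)\rho^2$ for $n-R+1\leq i\leq n$ and $u(i)=\rho-(i-n)\rho^2$ for $n\leq i\leq n+R-1$, reflecting the intuition that each $u(i)$ close to the threshold is an average of $R$ contributions of mass $\rho$. Plugging in and using $R\rho=1$, the balance sums telescope and the equations collapse. The cases where $n$ approaches $0$, $L-R$, or $L$ introduce additional boundary terms at $i=0$, $i=L-R$, and $i=L$; the geometric factor $(1-\rho)^{n-i}$ appears exactly when a range of low states is fed only from below, giving a one-step recursion $u(i)=(1-\rho)u(i+1)$ that iterates to the stated exponential. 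The sub-case $n=L$ is trivial: the user is never served, so the chain is absorbed at $L$ and $u(L)=1$.

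The main difficulty is the boundary at $i=L$. There, several source states feed into $L$ and the weights $R-L+j$ and $R-L+(j-R)^+$ replace the uniform factor $\rho$, so the balance equation at $L$ does not follow from the interior pattern; instead it pins down the value $u(L)$, giving the expressions $u(L)=(1-\rho)^{n+1}-(L-n-1)\rho$ and $u(L)=(1-\rho)^{n-L+R+1}-\rho(L-1-n)$ in the relevant subcases. I would dispatch this by writing the equation at $i=L$ explicitly using Proposition~\ref{prop:general_passage}, substituting the already verified interior expression for $u(j)$, and simplifying. A final normalization check $\sum_i u(i)=1$, straightforward in the interior cases by arithmetic-series summation, closes each subcase and completes the proof.
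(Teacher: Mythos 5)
Your plan is a verification argument: posit the closed forms and check that they satisfy the balance equations plus normalization, relying on uniqueness of the stationary measure. The paper instead takes a constructive route: starting from the balance equations it first shows by induction that $u(i)=0$ outside a band around the threshold, then proves a key pairing identity $u(n-k)+u(n+R-k)=\rho$ (Lemmas~\ref{ro_constante}, \ref{ro_constante2}, \ref{ro}), which turns the balance relation into a one-step recursion $u(n-(k+1))=u(n-k)-\rho^2$ (resp.\ $(1-\rho)u(n-k)$ near the buffer edge) that is solved explicitly; the value $u(L)$ is then obtained by direct substitution into the balance equation at $L$. Both routes are sound: yours is shorter once the formulas are in hand and cleanly isolates the normalization check, while the paper's is self-contained and actually produces the formulas, at the cost of the auxiliary symmetry lemmas. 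Two small imprecisions in your sketch are worth noting: (i) the chain is not irreducible on $\{0,\dots,L\}$ for $n<L$ (states outside the band are transient), so what you actually invoke is that a finite chain with a single recurrent class has a unique stationary distribution — true, but not the same as irreducibility; (ii) at $i=L$ you describe ``pinning down'' $u(L)$, which is a derivation step, not a verification step — it is still legitimate because the balance equation at $L$ is redundant given the other $L$ equations and $\sum_i u(i)=1$, but you should say so explicitly. Your intuition about the triangular/linear profile in the interior and the geometric factor $(1-\rho)^{n-i}$ near the boundary matches exactly the structure the paper extracts from its pairing lemma.
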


\begin{proof}
See appendix \ref{app:eq:stat.dist}.
\end{proof}

\section{Whittle's index}\label{sec:WI} 
In this section, we provide the derivation of the Whittle indices, which are values that depend on the queue state of the user and its maximum rate. Although this derivation is made using the relaxed problem, it allows us to develop a heuristic for the original problem. It is worth mentioning that the Whittle's index at given state, say $n$, represents the Lagrange multiplier for which the optimal decision of the individual dual relaxed problem at this state is indifferent (passive and active decision are both optimal). However, the Whittle index is well defined only if the property of indexability is satisfied. 
This property requires to establish that as the Lagrange multiplier (or equivalently the subsidy for passivity W) increases, 
the collection of states in which the optimal action is passive increases.
In this section, we work on a given class $k$, and we consider its maximum transmission rate is $R$ with $\rho=1/R$. All the obtained results here can be applied for any class. 
We start the derivation by first reformulating the dual of the relaxed problem using the stationary distribution derived in the previous section. Since the solution of the dual of the relaxed problem~\eqref{eq:relaxed} (given a constant $W$) is a threshold-based policy, we can reformulate the problem as follows:
\begin{equation}\label{eq:new_prob_form_lagr_function_steady_state}
 \underset{n \in [0,L]}{\text{min}} \mathbb{E} [aq^n+Ws^n]=\underset{n \in [0,L]}{\text{min}} \{\sum_{q=0}^L au^n(q)q \ - \ W \sum_{q=0}^n u^n(q)\} 
\end{equation}
with $n$ and $u^n$ being the threshold and the stationary distribution under the threshold policy $n$.\\
The new formulation of the problem turns out to be useful to derive the Whittle indices since, for any $W$, we can find the minimizer of the expression in equation~\eqref{eq:new_prob_form_lagr_function_steady_state}.\\
We first give the expression of the mean cost in equation~\eqref{eq:new_prob_form_lagr_function_steady_state} given threshold $n$ (for all possible values of $n$ and $L$).\\

\begin{itemize}

\item $L < R$:\\
if $-1 \leq n \leq L-1$: \begin{equation} \sum_{q=0}^L au^n(q)q=a[(L+R)(1-\rho)^{n+1}+n-R+1+\frac{(L-1-n)(n-L)}{2R}]\end{equation} 
if $n=L$: \begin{equation} \sum_{q=0}^L au^n(q)q=aL\end{equation} 
\item $R \leq L < 2R$:\\
if $-1 \leq n \leq L-R-1$:  \begin{equation} \sum_{q=0}^L au^n(q)q=a[\frac{R-1}{2} + \frac{n(n+1)}{2R}]\end{equation} 
if $L-R \leq n \leq R-1$: \begin{equation} \sum_{q=0}^L au^n(q)q=2aR(1-\rho)^{n-L+R+1}-a[\frac{n(n+1)}{2R}+\frac{R-1}{2}+\frac{L}{R}(L-2n-1)]\end{equation}     
if $R \leq n \leq L-1$:  \begin{equation} \sum_{q=0}^L au^n(q)q=a[n+1-R+ 2R(1-\rho)^{n-L+R+1}+\rho(L-1-n)(n-L)]\end{equation} 
if $n=L$:  \begin{equation} \sum_{q=0}^L au^n(q)q=aL\end{equation} 
\item $L \geq 2R$:\\
if $-1 \leq n \leq R-1$:  \begin{equation} \sum_{q=0}^L au^n(q)q=a[\frac{R-1}{2} + \frac{n(n+1)}{2R}]\end{equation} 
if $R \leq n \leq L-R$: \begin{equation} \sum_{q=0}^L au^n(q)q=an\end{equation}     
if $L-R+1 \leq n \leq L-1$:  
\begin{align}
\sum_{q=0}^L au^n(q)q= a[n+1-R+ 2R(1-\rho)^{n-L+R+1}+\rho(L-1-n)(n-L)]
\end{align} 
if $n=L$:  \begin{equation} \sum_{q=0}^L au^n(q)q=aL\end{equation} 
\end{itemize}
Second, we provide the expression of the passive decision's average time in equation~\eqref{eq:new_prob_form_lagr_function_steady_state} given a threshold $n$:\\

\begin{itemize}
\item $L < R$:\\
if $-1 \leq n \leq L-1$: \begin{equation} \sum_{q=0}^n u^n(q)=1-(1-\rho)^{n+1}\end{equation} 
if $n=L$: \begin{equation} \sum_{q=0}^n u^n(q)=1\end{equation} 
\item $R \leq L < 2R$:\\
if $-1 \leq n \leq L-R-1$:  \begin{equation} \sum_{q=0}^n u^n(q)=(1-\frac{n}{2R})(\frac{n+1}{R})\end{equation} 
if $L-R \leq n \leq R-1$: \begin{equation} \sum_{q=0}^n u^n(q)=L\frac{\rho^2}{2}(L-1-2n)(L-n) + \frac{1+\rho}{2}+\rho n-(1-\rho)^{n-L+R+1}\end{equation}     
if $R \leq n \leq L-1$:  \begin{equation} \sum_{q=0}^n u^n(q)= \frac{\rho^2}{2}(L-1-n)(L-n) + 1 -(1-\rho)^{n-L+R+1}\end{equation} 
if $n=L$: \begin{equation} \sum_{q=0}^n u^n(q)=1\end{equation}  
\item $L \geq 2R$:\\
if $-1 \leq n \leq R-1$:  \begin{equation} \sum_{q=0}^n u^n(q)=(1-\frac{n}{2R})(\frac{n+1}{R})
\end{equation} 
if $R \leq n \leq L-R$: \begin{equation} \sum_{q=0}^n u^n(q)=\frac{1}{2}+\frac{1}{2R}
\end{equation}     
if $L-R+1 \leq n \leq L-1$:  
\begin{equation}
\sum_{q=0}^n u^n(q)= \frac{\rho^2}{2}(L-1-n)(L-n)+ 1 -(1-\rho)^{n-L+R+1}
\end{equation} 
if $n=L$: \begin{equation} \sum_{q=0}^n u^n(q)=1
\end{equation}  
\end{itemize}
\subsection{Computation of the Whittle index values}
We first formalize the indexability and the Whittle's index in the following definitions.
\begin{mydef}
Considering problem~\eqref{eq:new_prob_form_lagr_function_steady_state} for a given $W$, we define $D(W)$ as the set of states in which the optimal action (with respect to the optimal solution of Problem~\eqref{eq:new_prob_form_lagr_function_steady_state}) is the passive one. In other words, $n \in D(W)$ if and only if the optimal action at state $n$ is the passive one.
\end{mydef}
$D(W)$ is well defined as the optimal solution of Problem~\eqref{eq:new_prob_form_lagr_function_steady_state} is a stationary policy, more precisely, a threshold based policy.
\begin{mydef}
A class is indexable if the set of states in which the passive action is the optimal action increases in $W$, that is, $W' < W \Rightarrow D(W') \subseteq D(W)$.
When the class is indexable, the Whittle's index in state $n$ is defined as: \begin{equation} W(n)=\min \{W |n \in D(W)\}\end{equation} 
\end{mydef}
In the literature, several works have been conducted to find the Whittle index values. For example, an interesting iterative algorithm has been provided in  \cite{larranaga2015dynamic}. Even though the context of our work here is different from the one considered in \cite{larranaga2015dynamic}, we will prove in the sequel  that the proposed  algorithm  in \cite{larranaga2015dynamic} can be adapted to our case up to some modifications (e.g. in our case we have a maximum buffer state L, etc.). In addition, further analysis will be provided here to derive a closed form expression of the Whittle index values. We will first provide this modified algorithm and then prove that it allows the computation of the Whittle's index values for our problem.
\\
%
\begin{algorithm}
\caption{Whittle Index Computation}\label{euclid}
\begin{algorithmic}[1]
\State \textbf{Init.} Let $j$ be initialized to $0$
\State \textbf{Find} $W_0=\underset{n \in \mathbb{N}}{\text{inf}}\frac{\sum_{q=0}^L au^n(q)q-\sum_{q=0}^{L} au^{-1}(q)q}{\sum_{q=0}^n u^n(q)}$
\State \textbf{Define} $n_0$ as the largest minimizer of the above expression
\State \textbf{Let} $W(k)=W_0$ for all $k \leq n_0$
\While {$n_j\neq L$}
\State $j=j+1$
\State \textbf{Define} $M_j$ the set $\{n:\sum_{q=0}^n u^n(q)=\sum_{q=0}^{n_{j-1}} u^{n_{j-1}}(q)\}\cup \{0,\cdots,n_{j-1}\}$
\State \textbf{Find} $W_j=\underset{n \in \mathbb{N}\setminus M_j} {\text{inf}}\frac{\sum_{q=0}^L au^n(q)q-\sum_{q=0}^{L} au^{n_{j-1}}(q)}{\sum_{q=0}^n u^n(q)-\sum_{q=0}^{n_{j-1}} u^{n_{j-1}}(q)}$
\State \textbf{Define} $n_j$ as the largest minimizer of the above expression
\State \textbf{Let} $W(k)=W_j$ for all $n_{j-1}<k \leq n_j$
\EndWhile
\State \textbf{Output} The Whittle index of state $k$ which is given by $W(k)$
\end{algorithmic}
\end{algorithm}
\begin{proposition}\label{prop_wi}
Assuming that the optimal solution is a threshold policy, and that $\sum_{q=0}^n u^n(q)$ is increasing, then the class is indexable. Moreover, if $\sum_{q=0}^L au^n(q)q$ is increasing in $n$ and for all $i$ and $j$ such that $i<j$ \ $\sum_{q=0}^i u^i(q)=\sum_{q=0}^j u^j(q) \Longrightarrow \sum_{q=0}^L au^i(q)q < \sum_{q=0}^L au^j(q)q$, then the Whittle’s index values are computed by applying Algorithm 1.\\
\end{proposition}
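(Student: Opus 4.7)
The plan is to split the proof into two parts: (i) indexability under the hypothesis that $g(n) := \sum_{q=0}^{n} u^{n}(q)$ is increasing, and (ii) correctness of Algorithm 1 under the additional hypothesis on $f(n) := \sum_{q=0}^{L} a u^{n}(q)q$.

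For indexability, I would use a standard interchange argument. Since the optimal policy for the one-dimensional relaxed subproblem is a threshold policy by Corollary \ref{col:threshold_policy}, for each $W$ one can pick an optimal threshold $n^{*}(W)$ and note that $D(W) = \{0,1,\ldots,n^{*}(W)\}$. The reformulated subproblem \eqref{eq:new_prob_form_lagr_function_steady_state} reads $\min_{n}\{f(n) - W g(n)\}$. For $W_{1} < W_{2}$ with optimal thresholds $n_{i} = n^{*}(W_{i})$, adding the two optimality inequalities
\begin{align*}
f(n_{1}) - W_{1} g(n_{1}) &\leq f(n_{2}) - W_{1} g(n_{2}), \\
f(n_{2}) - W_{2} g(n_{2}) &\leq f(n_{1}) - W_{2} g(n_{1}),
\end{align*}
yields $(W_{2} - W_{1})(g(n_{2}) - g(n_{1})) \geq 0$, so $g(n_{2}) \geq g(n_{1})$. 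The increasing property of $g$ then forces $n_{2} \geq n_{1}$ (taking the largest optimal threshold when ties occur), which gives $D(W_{1}) \subseteq D(W_{2})$ and proves indexability.

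For correctness of Algorithm 1, I would proceed by induction on $j$, showing that the pair $(W_{j}, n_{j})$ returned at iteration $j$ coincides with the $j$-th upward jump of the optimal-threshold trajectory $W \mapsto n^{*}(W)$, and hence that every state $k$ with $n_{j-1} < k \leq n_{j}$ first enters $D(W)$ exactly at $W = W_{j}$. Starting from $n_{-1} := -1$ (always active, $g(-1) = 0$), for any candidate $n > n_{j-1}$ with $g(n) > g(n_{j-1})$, threshold $n$ becomes strictly preferred over $n_{j-1}$ precisely when $W > (f(n) - f(n_{j-1}))/(g(n) - g(n_{j-1}))$. Hence the smallest $W$ at which some new threshold overtakes $n_{j-1}$ is the infimum of these ratios, which is exactly $W_{j}$ in the algorithm, and choosing $n_{j}$ to be the \emph{largest} minimizer means the optimal threshold jumps all the way to $n_{j}$ at $W_{j}$. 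By the definition of the Whittle index, $W(k) = W_{j}$ for every $n_{j-1} < k \leq n_{j}$.

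The subtle point, and the main obstacle, is handling candidate thresholds with $g(n) = g(n_{j-1})$, for which the ratio is undefined and a naive infimum could be $-\infty$ or misleading. This is precisely where the strict dominance hypothesis on $f$ enters: if $n > n_{j-1}$ and $g(n) = g(n_{j-1})$, then $f(n) > f(n_{j-1})$, so threshold $n$ is strictly worse than $n_{j-1}$ at every $W \geq 0$ and can be safely excluded from the minimization via the set $M_{j}$. Combining this with the indexability argument applied step-by-step, the sequences $(W_{j})$ and $(n_{j})$ are strictly increasing, the algorithm terminates once $n_{j} = L$, and the intervals $(n_{j-1}, n_{j}]$ partition $\{0,1,\ldots,L\}$, so every state is assigned its correct Whittle index. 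Secondary checks that I expect to be routine but must be verified are finite termination (ensured by $n_{j} < n_{j+1} \leq L$), well-definedness of $W_{0}$ (monotonicity of $f$ and $g$ guarantees the infimum is attained), and consistency of the "largest minimizer" tie-breaking with the indexability argument.
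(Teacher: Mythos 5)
Your indexability argument matches the paper's proof in Appendix J exactly: the interchange of the two optimality inequalities yields $(W_2-W_1)(b_{n_2}-b_{n_1})\geq 0$, and monotonicity of $b_n=\sum_{q=0}^n u^n(q)$ gives $n_1\leq n_2$, hence nested passive sets.

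For the Algorithm-1 part the overall shape of your induction is right and you correctly flag the role of the strict-dominance hypothesis on $a_n=\sum_q au^n(q)q$ when $b_n=b_{n_{j-1}}$; this is indeed how the algorithm's exclusion set $M_j$ is used. However, there is a genuine gap at the central step. You argue that $W_j$ is the first subsidy at which some candidate $n>n_{j-1}$ overtakes $n_{j-1}$, and then assert that "choosing $n_j$ to be the largest minimizer means the optimal threshold jumps all the way to $n_j$ at $W_j$." That inference compares $n_j$ only against $n_{j-1}$ and the other candidates $n>n_{j-1}$. What must also be shown is that at $W=W_j$ the threshold $n_j$ dominates \emph{every} state $k$ with $k<n_j$, including those $k<n_{j-1}$ for which no slope $\frac{a_k-a_{n_{j-1}}}{b_k-b_{n_{j-1}}}$ appears in the infimum defining $W_j$ at all. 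The paper handles precisely this by an explicit induction on $j$, using a dedicated slope-comparison lemma (Lemma~\ref{inequalities}) and the fact that the largest minimizer is the smallest index with its $b$-value (Lemma~\ref{n_j_min}) to chain the inequalities $\frac{a_{n_j}-a_k}{b_{n_j}-b_k}\leq W_j$ down from $k=n_{j-1}$ to all $k<n_j$. Your phrase "indexability applied step-by-step" gestures at a plausible alternative — once a smaller threshold is dominated by $n_{j-1}$ at some $W'$ it stays dominated for $W\geq W'$ because $b$ is increasing — but as written this monotonicity argument is never spelled out, and without either it or the paper's chaining lemma the claim that $n_j$ is globally optimal at $W_j$ is unsupported. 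Filling in one of these two arguments would complete the proof.
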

\begin{proof}
For the proof, see appendix \ref{app:prop_wi}.        
\end{proof}

\begin{remark}
In order to simplify the notation in the sequel, we denote $\sum_{q=0}^L au^n(q)q$ by $a_n$ and $\sum_{q=0}^{n} u^{n}(q)$ by $b_n$. 
\end{remark}
In order to apply Algorithm 1 that allows to obtain the Whittle's index for each state in our case, we need to prove that the conditions given in Proposition \ref{prop_wi} are satisfied.
We focus only on the third case of $L$ ($L \geq 2R$) since it is more realistic as the maximum buffer length $L$ is often much higher than the transmission rate $R_k$. Nevertheless, the analysis in this paper can be easily extended to the case where $L <2R$.
To that end, we will be limited to give only the Whittle index expressions when $L<2R$ as well as a concise proof in the end of this section.
\begin{theorem} \label{indexability}
For each $k$, the class-k is indexable.
\end{theorem}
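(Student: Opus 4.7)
My plan is to invoke Proposition \ref{prop_wi} to reduce indexability of a class to the conjunction of two facts: (i) the optimal solution of the one-dimensional dual subproblem is of threshold form, and (ii) the total passive mass $b_n := \sum_{q=0}^n u^n(q)$ is non-decreasing in the threshold $n$. Fact (i) is already guaranteed by Corollary \ref{col:threshold_policy}, so the entire proof reduces to establishing monotonicity of $b_n$ in $n$, which I would do using the explicit stationary distributions computed in Proposition \ref{eq:stat.dist}.

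I would focus on the regime $L \geq 2R$, which the paper singles out as the case of interest. Here Proposition \ref{eq:stat.dist} gives four piecewise expressions for $u^n$, from which $b_n$ is easily evaluated on the four intervals $-1 \leq n \leq R-1$, $R \leq n \leq L-R$, $L-R+1 \leq n \leq L-1$, and $n=L$. The strategy is then essentially mechanical: check $b_{n+1} - b_n \geq 0$ within each interval, and verify the three boundary transitions between consecutive intervals.

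Within the first interval, a direct computation of the forward difference yields $b_{n+1} - b_n = (R-1-n)/R^2 \geq 0$ for $n \leq R-1$. The second interval is trivial since $b_n \equiv \tfrac{1}{2} + \tfrac{1}{2R}$ is constant. For the third interval the forward difference takes the form
\begin{equation}
b_{n+1} - b_n = -\rho^2(L-1-n) + \rho(1-\rho)^{n-L+R+1},
\end{equation}
where the two summands pull in opposite directions: the quadratic part decreases in $n$ while the exponential part increases. After reparameterizing with $m = n-L+R \in \{1,\ldots,R-2\}$, the desired inequality reduces to $(1-\rho)^{m+1} \geq 1 - (m+1)\rho$, which is exactly Bernoulli's inequality. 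This is the main obstacle of the proof, since the two opposing terms must be compared sharply rather than bounded separately.

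Finally, I would evaluate $b_n$ at the three boundaries. A short algebraic check shows $b_{R-1} = b_R = b_{L-R} = b_{L-R+1} = \tfrac{1}{2} + \tfrac{1}{2R}$, so monotonicity is preserved across the first two transitions, and $b_L - b_{L-1} = (1-\rho)^R \geq 0$ handles the last. The remaining regimes $L < R$ and $R \leq L < 2R$ of Proposition \ref{eq:stat.dist} can be treated by the identical piecewise-plus-boundary template, and, as announced just before the theorem statement, only the resulting Whittle index expressions with a concise argument would need to be recorded.
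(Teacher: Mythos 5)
Your reduction is exactly the one the paper uses: invoke Proposition~\ref{prop_wi}, cite Corollary~\ref{col:threshold_policy} for the threshold structure, and then establish monotonicity of $b_n=\sum_{q=0}^n u^n(q)$ piecewise using the explicit stationary distributions, with the intervals $[-1,R-1]$, $[R,L-R]$, $[L-R+1,L-1]$, and the endpoint $n=L$. Your first-interval computation reproduces the paper's Lemma~\ref{temps_pass_str_croiss_less_R} (the forward difference $(R-1-n)/R^2$) and your boundary checks agree with the paper's.

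Where you genuinely diverge is in the treatment of the third interval $[L-R+1,L-1]$, i.e.\ the content of the paper's Lemma~\ref{temps_pass_str_croiss}. The paper treats $n$ as a continuous variable, writes $h(n)=\frac{\rho^2}{2}(L-1-n)(L-n)+1-(1-\rho)^{n-L+R+1}$, proves an auxiliary logarithmic inequality $x+\ln(1-x)(1-x)>0$, shows $h''>0$, evaluates $h'(L-R+1)>0$ via a second auxiliary function, and concludes via convexity that $h'>0$. You instead compute the exact discrete forward difference $b_{n+1}-b_n=-\rho^2(L-1-n)+\rho(1-\rho)^{n-L+R+1}$ directly, substitute $m=n-L+R$, and recognize that nonnegativity is precisely Bernoulli's inequality $(1-\rho)^{m+1}\geq 1-(m+1)\rho$ (using $\rho(R-1-m)=1-(m+1)\rho$). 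This is correct and appreciably more elementary: it avoids two auxiliary lemmas, the differentiation of a continuous surrogate, and the convexity-plus-endpoint argument, and it also makes the sharpness transparent (Bernoulli is strict for $m+1\geq 2$ and $\rho\neq 0$, so strict monotonicity falls out for free). The paper's approach buys nothing extra here; your Bernoulli route could serve as a cleaner drop-in replacement for Lemma~\ref{temps_pass_str_croiss} and its Appendix~\ref{app:temps_pass_str_croiss} proof. Both you and the paper restrict attention to $L\geq 2R$ and defer the other two regimes to an analogous (and easier) piecewise check, so no gap there.
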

\begin{proof}
According to Proposition \ref{prop_wi}, we just need to prove that $\sum_{q=0}^n u^n(q)$ is increasing $n$. The proof is based on the two following two lemmas.
\begin{Lemma}\label{temps_pass_str_croiss_less_R}
$\sum_{q=0}^n u^n(q)$ is strictly increasing in $[-1,R-1]$
\end{Lemma}
 \begin{proof}
\renewcommand{\qedsymbol}{$\blacksquare$}
See appendix \ref{app:temps_pass_str_croiss_less_R}.
\end{proof}
\begin{Lemma} \label{temps_pass_str_croiss}
$\sum_{q=0}^n u^n(q)$ is strictly increasing in $n \in [L-R+1,L-1]$
\end{Lemma}
\begin{proof}
\renewcommand{\qedsymbol}{$\blacksquare$}
See appendix \ref{app:temps_pass_str_croiss}
\end{proof}
We have that for any $n \in [R,L-R]$:
\begin{equation}
\sum_{q=0}^{n} u^{n}(q)=\sum_{q=0}^{R-1} u^{R-1}(q) =\sum_{q=0}^{L-R+1} u^{L-R+1}(q)=\frac{1}{2}+\frac{1}{2R}
\end{equation}
Therefore: 
\begin{equation}
\sum_{q=0}^{R-1} u^{R-1}(q) \leq \sum_{q=0}^{n} u^{n}(q) \leq \sum_{q=0}^{L-R+1} u^{L-R+1}(q)
\end{equation}
Moreover:
\begin{equation}\sum_{q=0}^{L} u^{L}(q)=1 > 1- (1-\rho)^R=\sum_{q=0}^{L-1} u^{L-1}(q)
\end{equation}\\
Consequently, by combining Lemma \ref{temps_pass_str_croiss_less_R} and Lemma \ref{temps_pass_str_croiss}, we can conclude the indexability of the class as $\sum_{q=0}^{n} u^{n}(q)$ is shown to be increasing in $[0,L]$.
\end{proof}
We prove the two others conditions of Proposition \ref{prop_wi} which are the increasing property of $\sum_{q=0}^L au^n(q)q$ in $n$, and that for all $i$ and $j$ such that $i<j$ $\sum_{q=0}^i u^i(q)=\sum_{q=0}^j u^j(q) \Longrightarrow \sum_{q=0}^L au^i(q)q < \sum_{q=0}^L au^j(q)q$. From the expression of $a_n$ when $n \in [-1,R-1]$, $a_n$ is clearly increasing in $n$. For $n \in [R,L-R-1]$, $a_n$ is strictly increasing and $a_{R-1}=a(R-1) < aR=a_R$, which implies that $a_n$ is increasing in $[-1,L-R-1]$. For $n \in [L-R,L-1]$, we provide the following lemma  
\begin{Lemma}\label{average_cost_strict_increasing_n_greater_L_moins_R}
$\sum_{q=0}^L au^n(q)q$ is strictly increasing in $[L-R,L-1]$.
\end{Lemma}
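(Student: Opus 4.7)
The plan is to show $a_{n+1} - a_n > 0$ for every $n \in \{L-R, L-R+1, \ldots, L-2\}$, working directly from the closed-form expressions of $a_n$ provided after Proposition 3 in the regime $L \geq 2R$.

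The increment at the boundary $n = L-R$ is a priori delicate because $a_{L-R}$ is given by the linear formula $a_n = an$, while $a_{L-R+1}$ is given by the more involved expression $a[n+1-R + 2R(1-\rho)^{n-L+R+1} + \rho(L-1-n)(n-L)]$. Substituting $n = L-R+1$ yields exponent $2$ in the $(1-\rho)^{n-L+R+1}$ factor, and expanding $(1-\rho)^2 = 1 - 2\rho + \rho^2$ together with the identity $R\rho = 1$ makes all the nonlinear contributions cancel, so that $a_{L-R+1}$ collapses to $a(L-R+1)$. This immediately gives $a_{L-R+1} - a_{L-R} = a > 0$.

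For the interior values $n \in \{L-R+1,\ldots,L-2\}$, we would introduce the shifted index $m := n - L + R + 1 \in \{2,\ldots,R-1\}$. A routine subtraction, using $R\rho = 1$ to simplify $2R\rho = 2$, together with the elementary identity $(L-2-n)(n+1-L) - (L-1-n)(n-L) = 2(L-n-1) = 2(R-m)$, leads to the compact expression
\begin{equation*}
\frac{a_{n+1} - a_n}{a} = 1 - 2(1-\rho)^m + 2\rho(R-m).
\end{equation*}

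The remaining step, which we expect to be the only non-bookkeeping one, is to show that this quantity is strictly positive for $m \in \{2,\ldots,R-1\}$. We plan to use the standard quadratic upper bound $(1-\rho)^m \leq 1 - m\rho + \binom{m}{2}\rho^2$, valid for every integer $m \geq 2$ and $\rho \in [0,1]$ (this follows, e.g., from the fact that $g(\rho) := 1 - m\rho + \binom{m}{2}\rho^2 - (1-\rho)^m$ satisfies $g(0) = g'(0) = 0$ and $g''(\rho) = m(m-1)[1 - (1-\rho)^{m-2}] \geq 0$ on $[0,1]$). Substituting this bound and invoking $R\rho = 1$ once more yields
\begin{equation*}
\frac{a_{n+1} - a_n}{a} \geq 1 - m(m-1)\rho^2 \geq 1 - \frac{(R-1)(R-2)}{R^2} = \frac{3R - 2}{R^2} > 0,
\end{equation*}
which finishes the proof (the case $R = 2$ is covered for free, as the interior range is then empty and only the boundary increment $a_{L-1} - a_{L-2} = a$ remains). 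The principal obstacle is really the algebraic bookkeeping at the boundary $n = L-R$: once one recognizes that $R\rho = 1$ forces the exponential expression for $a_{L-R+1}$ to collapse to $a(L-R+1)$, the rest of the argument is a one-line convexity estimate on $(1-\rho)^m$.
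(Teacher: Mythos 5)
Your proof is correct, and it takes a genuinely different route from the paper's. Both proofs start from the same increment $p(n) := (a_{n+1}-a_n)/a = 1 - 2(1-\rho)^{n-L+R+1} + 2\rho(L-n-1)$ (equivalently, in your notation with $m=n-L+R+1$, $p = 1 - 2(1-\rho)^m + 2\rho(R-m)$), but then diverge completely. The paper treats $p$ as a smooth function of a continuous variable $n$, shows $p'' \leq 0$ so $p$ is concave, and reduces positivity to checking the two endpoints: $p(L-R)=1$ and $p(L-1)=1-2(1-\rho)^R$, the latter requiring the auxiliary estimate $(1-\rho)^R \leq e^{-1}$. Your argument is instead discrete and works pointwise in $m$: you note that the case $m=1$ (the transition $n=L-R\to L-R+1$) collapses exactly to $a$ via $R\rho = 1$, and for $m\geq 2$ you substitute the Taylor-type bound $(1-\rho)^m \leq 1 - m\rho + \binom{m}{2}\rho^2$, after which the linear terms cancel (again via $R\rho=1$) and you are left with $p \geq 1 - m(m-1)\rho^2 \geq (3R-2)/R^2 > 0$. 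What the paper's approach buys is a short endpoints-only check at the cost of the $e^{-1}$ estimate; what yours buys is an entirely algebraic bound with no extremal/concavity reasoning on a continuous interpolant and no appeal to $(1-\rho)^R\leq e^{-1}$. Either closes the gap; yours is arguably easier to audit step by step, while the paper's is more compact. One stylistic note: you could have observed that the third closed-form expression for $a_n$ also evaluates correctly at $n=L-R$ (yielding $a(L-R)$), which the paper implicitly exploits to run a single uniform formula on $n\in[L-R,L-2]$; this would have folded your boundary case into the same computation, though as you remark the $m=1$ computation is immediate anyway.
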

\begin{proof}
See appendix \ref{app:average_cost_strict_increasing_n_greater_L_moins_R}.  
\end{proof} 
We have that $\sum_{q=0}^L au^{L-R}(q)q=a(L-R) > a (L-R-1)=\sum_{q=0}^L au^{L-R-1}(q)q$, and $\sum_{q=0}^L au^{L-1}(q)q=a L-aR(1-2(1-\rho)^R)< aL=\sum_{q=0}^L au^{L}(q)q$ (because $1-2(1-\rho)^R \geq 1-2 \exp(-1) \geq 0$), then we can conclude that $a_n$ is increasing in $[-1,L]$.  
\\
For the second condition (for all $i$ and $j$ such that $i<j$ $\sum_{q=0}^i u^i(q)=\sum_{q=0}^j u^j(q) \Longrightarrow \sum_{q=0}^L au^i(q)q < \sum_{q=0}^L au^j(q)q$), the only case when $\sum_{q=0}^i u^i(q)$ is equal to $\sum_{q=0}^j u^j(q)$ is when $i$ and $j$ are in the set $[R-1,L-R+1]$. In this set, we have shown that $\sum_{q=0}^L au^n(q)q$ is strictly increasing, then for $i<j$ and $(i,j) \in [R-1,L-R+1]^2$, $\sum_{q=0}^L au^i(q)q < \sum_{q=0}^L au^j(q)q$, hence the two conditions are satisfied. 
\\\\ 
As the indexability is satisfied and the two conditions of Proposition \ref{prop_wi} are verified, then we can apply Algorithm 1 to get the Whittle's index for each state. However, the complexity of this algorithm is $L^2$, where $L$ is the maximum buffer length which could be large in practice. In order to overcome this complexity issue, we will provide further analysis and derive simple expressions of the Whittle indices.\\ 
We first proceed by laying out the following definitions and lemmas.
\begin{mydef}
For any given increasing threshold policy $n$, we define $y^n$ as a function of the subsidy $W$, such that $y^n(W)=\sum_{q=0}^L au^n(q)q \ - \ W \sum_{q=0}^n u^n(q)=a_n - W b_n$.  
\end{mydef}
\begin{Lemma}\label{intersection_point_Whittle_index}
For any state $(i,j) \in [-1,L]^2$, the intersection point's abscess between $y^i(W)$ and $y^j(W)$ denoted by $x_{i,j}$ is:
\begin{equation} \frac{\sum_{q=0}^L au^i(q)q-\sum_{q=0}^{L} au^{j}(q)}{\sum_{q=0}^i u^i(q)-\sum_{q=0}^{j} u^{j}(q)}
\end{equation}     
\end{Lemma}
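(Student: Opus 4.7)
The plan is to observe that $y^i$ and $y^j$ are both affine functions of the scalar variable $W$, so their graphs (lines in the $(W,y)$-plane) intersect in at most one point, and the abscissa of that intersection is found by a single linear solve. Using the shorthand $a_n := \sum_{q=0}^L a u^n(q)q$ and $b_n := \sum_{q=0}^n u^n(q)$ introduced just before the lemma, I write $y^n(W) = a_n - W b_n$, so the two lines are $y^i(W) = a_i - W b_i$ and $y^j(W) = a_j - W b_j$.

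Setting $y^i(W) = y^j(W)$ I obtain $a_i - a_j = W(b_i - b_j)$, and whenever $b_i \neq b_j$ this yields the unique solution
\begin{equation}
x_{i,j} \;=\; \frac{a_i - a_j}{b_i - b_j} \;=\; \frac{\sum_{q=0}^L a u^i(q)q - \sum_{q=0}^L a u^j(q)q}{\sum_{q=0}^i u^i(q) - \sum_{q=0}^j u^j(q)},
\end{equation}
which is exactly the expression claimed in the statement.

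The only subtlety, and the one I would flag as the ``main obstacle,'' is checking that the denominator $b_i - b_j$ does not vanish for the pairs $(i,j)$ we care about, so that the division is legitimate and the intersection abscissa is actually well-defined. For $(i,j) \in [-1,L]^2$ with $i \neq j$ this is handled by the monotonicity of $b_n = \sum_{q=0}^n u^n(q)$ in $n$: Lemmas~\ref{temps_pass_str_croiss_less_R} and~\ref{temps_pass_str_croiss} together with the plateau analysis carried out in the proof of Theorem~\ref{indexability} show that $b_n$ is non-decreasing on $[-1,L]$ and strictly increasing outside the plateau region $[R-1,L-R+1]$. Thus whenever $b_i = b_j$ with $i<j$, both indices lie on the plateau and the two lines $y^i$ and $y^j$ are parallel and distinct (since $a_n$ is strictly increasing there, by the same argument used after Lemma~\ref{average_cost_strict_increasing_n_greater_L_moins_R}); there is then no intersection point, and the formula is correspondingly understood as the generic case $b_i \neq b_j$. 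In all other cases the single-line algebra above gives the claim directly.
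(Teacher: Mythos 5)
Your proof is correct and follows essentially the same route as the paper's: set $y^i(W)=y^j(W)$, use that both are affine in $W$, and solve the resulting linear equation for $W$ to obtain $(a_i-a_j)/(b_i-b_j)$. The only addition is your remark about the denominator vanishing on the plateau region; the paper's proof omits this caveat, but Algorithm~1 explicitly excludes the pairs with $b_i=b_j$ via the set $M_j$, so your observation is consistent with how the lemma is actually used.
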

\begin{proof}
See appendix \ref{app:intersection_point_Whittle_index}. 
\end{proof}
\begin{mydef}
We define for $0 \leq n \leq R$, $w_n=x_{n,n-1}=\frac{\sum_{q=0}^L au^n(q)q-\sum_{q=0}^{L} au^{n-1}(q)}{\sum_{q=0}^n u^n(q)-\sum_{q=0}^{n-1} u^{n-1}(q)}=\frac{a_n-a_{n-1}}{b_n-b_{n-1}}=\frac{aRn}{R-n}$
(by replacing $a_n$ and $b_n$ by their expressions when $0 \leq n \leq R$).
\end{mydef}
\begin{mydef}
We define a function $f$, such that for each $n \in [0,R]$, $f(n)=w_n [\sum_{q=0}^L u^L(q)-\sum_{q=0}^n u^n(q)]+\sum_{q=0}^L au^n(q)q=w_n [1-(1-\frac{n}{2R})\frac{n+1}{R}]+a(\frac{R-1}{2}+\frac{n(n+1)}{2R})$, for $n=R$, $f(R)=+\infty$, and for $n=-1$, $f(-1)=0$. In other words, $f(n)/a$ can be interpreted as the value of $L$ such that $w_n=x_{L,n}$.  
\end{mydef}
\begin{Lemma}\label{f_strictly_increasing}
$f$ is strictly increasing in $n$, for $n \in [0,R]$.
\end{Lemma}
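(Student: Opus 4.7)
The plan is to prove the lemma by computing the finite difference $f(n+1) - f(n)$ and showing it factors as an explicitly positive product. The crucial observation, available from the closed forms of $a_n$ and $b_n$ established just before the lemma, is that $w_{n+1} = (a_{n+1}-a_n)/(b_{n+1}-b_n)$ holds for $n \in \{0,\ldots,R-2\}$, which is equivalent to the identity $a_{n+1}-a_n = w_{n+1}(b_{n+1}-b_n)$.

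Using this identity, I would expand
\[
f(n+1) - f(n) = w_{n+1}(1-b_{n+1}) - w_n(1-b_n) + (a_{n+1}-a_n),
\]
substitute $a_{n+1}-a_n = w_{n+1}(b_{n+1}-b_n)$, and observe that the terms involving $w_{n+1}$ combine into $w_{n+1}(1 - b_n)$, after which the whole expression telescopes to the clean form
\[
f(n+1) - f(n) = (w_{n+1} - w_n)(1 - b_n).
\]

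Positivity is then routine. The map $n \mapsto aRn/(R-n)$ is strictly increasing on $[0,R)$ (its derivative is $aR^2/(R-n)^2 > 0$), so $w_{n+1} > w_n$. The factor $1 - b_n$ is strictly positive on $\{0,\ldots,R-1\}$ because $b_n = (2R-n)(n+1)/(2R^2)$ is maximized at $n = R-1$ with value $(R+1)/(2R)$, which is strictly below $1$ under the standing assumption $R \geq 2$.

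The two endpoints are handled separately. At the left endpoint, $f(-1) = 0 < a(R-1)/2 = f(0)$, which follows directly from $w_0 = 0$, $b_0 = 1/R$, and $R \geq 2$. At the right endpoint, $f(R) = +\infty > f(R-1)$ by the convention used in the definition; this is in fact consistent with the telescoping identity in the limit, since $b_R = b_{R-1} = (R+1)/(2R)$ while $a_R - a_{R-1} = a \neq 0$ forces $w_R = +\infty$. The only non-routine step is spotting the telescoping collapse; once the product $(w_{n+1}-w_n)(1-b_n)$ is exhibited, the argument reduces to two sign checks and two boundary comparisons.
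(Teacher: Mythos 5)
Your proof is correct, and it takes a genuinely different route from the paper's. The paper treats $n$ as a continuous variable, differentiates $f$, splits $f'(n)$ into three terms, observes that the first term $(w_n)'[1-b_n]$ is nonnegative via the monotonicity of $w_n$, and then shows by direct algebraic manipulation that the sum of the remaining two terms collapses to $\tfrac{a}{2R(R-n)} > 0$. You instead work with the discrete increment $f(n+1)-f(n)$, exploit the defining relation $a_{n+1}-a_n = w_{n+1}(b_{n+1}-b_n)$ to telescope the whole increment into the product $(w_{n+1}-w_n)(1-b_n)$, and then read positivity directly off the two factors ($w_n$ strictly increasing on $[0,R-1]$, and $b_n \le b_{R-1} = \tfrac{R+1}{2R} < 1$ for $R\ge 2$). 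Both arguments rely on the same ingredient for the first factor (the strict increase of $w_n = aRn/(R-n)$, which is the paper's Lemma~\ref{increasiness_Whittle_index_less_R}), and both handle the boundary cases $n=-1$ and $n=R$ separately by direct inspection. Your route is more elementary --- pure algebra, no calculus --- and the factorization $(w_{n+1}-w_n)(1-b_n)$ is more structurally transparent, since it isolates the two sources of monotonicity (rising Whittle index, positive remaining mass above the threshold) instead of burying them inside a derivative computation that only simplifies at the very end.
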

\begin{proof}
See appendix \ref{app:f_strictly_increasing}.
\end{proof}

\begin{Lemma}\label{lem:d_existence}
Assuming that $L \geq 2R$, then there exists an integer $d \in [0,R-1]$ such that $\frac{f(d)}{a} < L \leq \frac{f(d+1)}{a}$
\end{Lemma}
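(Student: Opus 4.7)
The plan is to obtain $d$ as the largest index in $\{0,\ldots,R-1\}$ for which $f(d)/a < L$, and then use the strict monotonicity of $f$ provided by Lemma \ref{f_strictly_increasing} together with the boundary behaviour $f(R)=+\infty$ to conclude that $f(d+1)/a \geq L$. So the argument reduces to two ingredients: (i) checking that the set $S := \{n \in \{0,1,\ldots,R-1\} : f(n)/a < L\}$ is non-empty, and (ii) leveraging $f(R)=+\infty$ to handle the upper end.

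First I would evaluate $f$ at the left endpoint. Since $w_0 = aR\cdot 0/(R-0) = 0$, the definition of $f$ gives
\begin{equation}
f(0) = 0 \cdot \Bigl[1 - \tfrac{1}{R}\Bigr] + a\Bigl(\tfrac{R-1}{2} + 0\Bigr) = a\,\tfrac{R-1}{2}.
\end{equation}
Combined with the standing hypothesis $R \geq 2$ and $L \geq 2R$, this yields $f(0)/a = (R-1)/2 < 2R \leq L$, so $0 \in S$ and hence $S \neq \emptyset$. Since $S$ is bounded above by $R-1$, it admits a maximum; let $d := \max S$, so that $d \in \{0,\ldots,R-1\}$ and $f(d)/a < L$ by construction.

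It remains to establish $f(d+1)/a \geq L$. Two cases arise. If $d = R-1$ then $d+1 = R$ and $f(R)/a = +\infty \geq L$ holds trivially by the definition of $f$ at $R$. Otherwise $d+1 \in \{0,\ldots,R-1\}$, and by maximality of $d$ we have $d+1 \notin S$, i.e.\ $f(d+1)/a \geq L$. Either way, $f(d)/a < L \leq f(d+1)/a$, which gives the required $d$.

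I do not anticipate a real obstacle here: the whole statement is essentially a discrete intermediate-value argument, and all non-trivial work has already been done in Lemma \ref{f_strictly_increasing} (strict monotonicity of $f$ on $[0,R]$) and in the stipulated convention $f(R)=+\infty$. The only thing to double-check is the explicit value $f(0)= a(R-1)/2$ and the inequality $(R-1)/2 < 2R$, which are immediate from $R \geq 2$. Strict monotonicity is not, strictly speaking, even needed to show existence of $d$, but it guarantees uniqueness of the $d$ so defined, which is implicitly used later when the Whittle indices are read off from the intervals $\bigl(f(d)/a,\,f(d+1)/a\bigr]$.
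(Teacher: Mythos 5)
Your proof is correct and takes essentially the same route as the paper: both rely on the computed value $f(0)/a=(R-1)/2<2R\le L$, the convention $f(R)/a=+\infty$, and the strict monotonicity of $f$ from Lemma~\ref{f_strictly_increasing}; you merely phrase the crossover step as taking the maximum of the finite nonempty set $\{n:f(n)/a<L\}$, which is a slightly more explicit way of packaging the same discrete intermediate-value argument the paper uses.
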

\begin{proof}
We have $f(0)/a=\frac{R-1}{2}$, and $f(R)/a=+\infty$. Hence, as $f(.)$ is strictly increasing in $n$, and $f(0)/a=\frac{R-1}{2} < 2R \leq L \leq f(R)/a=+\infty$, there exists one and only one $d \in [0,R-1]$ that satisfies $\frac{f(d)}{a} < L \leq \frac{f(d+1)}{a}$.
That completes the proof. 
\end{proof}

Therefore, according to the definition of $f$, $d$ satisfies $x_{d,d-1} \leq x_{L,d}$ and $x_{L,d+1} \leq x_{d+1,d}$.
\begin{theorem}\label{w.i}
The Whittle's index expressions are:\\
for $0 \leq n \leq d$: $W(n)=w_n=x_{n,n-1}=\frac{aRn}{R-n}$\\
for $d < n \leq L$: $W(n)=x_{L,d}=\frac{a[L-(\frac{R-1}{2}) + \frac{d(d+1)}{2R}]}{1-(1-\frac{d}{2R})(\frac{d+1}{R})}$
\end{theorem}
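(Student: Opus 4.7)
The plan is to verify that Algorithm~\ref{euclid} produces the claimed Whittle index values, relying on Theorem~\ref{indexability} and Proposition~\ref{prop_wi} to justify using the algorithm. The argument proceeds by induction on the iteration index $j$ and splits into a step-by-step phase for $j = 0, \ldots, d$ and a terminal jump at iteration $d+1$.

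\textbf{Phase 1.} I will show by induction that at iteration $j \in \{0, \ldots, d\}$ we have $n_j = j$ and $W_j = w_j$. The base case $j = 0$ is immediate since $(a_0 - a_{-1})/b_0 = 0 = w_0$ and any other ratio is strictly positive. For the inductive step, assume $n_{j-1} = j - 1$ for some $j \leq d$; I then need to prove that the infimum in Algorithm~\ref{euclid} equals $w_j$ and that $n = j$ is the unique largest minimizer. This reduces to bounding $(a_n - a_{j-1})/(b_n - b_{j-1}) \geq w_j$ for every $n \notin M_j$. For $n \in [j, R-1]$ the ratio is a weighted mean of consecutive $w_k$'s, which is at least $w_j$ by strict monotonicity of $w_k = aRk/(R-k)$ on $[0, R-1]$. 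For $n \in [R, L-R]$ the values $b_n$ coincide with $b_{R-1}$ while $a_n = an$ grows strictly, producing a slope strictly above the one at $n = R-1$ and hence above $w_j$. For $n \in [L-R+1, L]$ a three-point decomposition expresses the slope as a convex combination of $w_j$ and $(a_n - a_j)/(b_n - b_j)$; the extremal case $n = L$ is handled by Lemma~\ref{lem:d_existence} (which gives $x_{L, j} > w_j$ for all $j \leq d$, hence the convex combination stays above $w_j$), while the intermediate values require an explicit computation using the closed-form expressions of $a_n, b_n$ from Section~\ref{sec:steady_state}.

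\textbf{Phase 2.} At iteration $j = d+1$ I will show $n_{d+1} = L$ and $W_{d+1} = x_{L, d}$, so that the algorithm terminates and the final line of Algorithm~\ref{euclid} assigns $W(k) = x_{L, d}$ to every $d < k \leq L$. The two defining inequalities of $d$, namely $w_d < x_{L, d}$ and $x_{L, d+1} \leq w_{d+1}$, are the key inputs. The slope from $(b_d, a_d)$ to $(b_L, a_L)$ is $x_{L, d}$, and a three-point averaging argument together with $x_{L, d+1} \leq w_{d+1}$ shows this is no larger than the slope to $(b_{d+1}, a_{d+1})$; the remaining candidates $n \in [d+2, L-1]$ are controlled by the same case-analysis (by region) as in Phase~1, with the role of $w_j$ replaced by $x_{L,d}$ and using that $x_{L, d} \leq w_{d+1} \leq w_n$ for $n \in [d+1, R-1]$.

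\textbf{Main obstacle.} The subtlest step is the verification in the high-threshold regime $n \in [L-R+1, L-1]$, where both $a_n$ and $b_n$ contain the geometric term $(1-\rho)^{n-L+R+1}$. Proving that the corresponding points $(b_n, a_n)$ lie on or above the line through $(b_d, a_d)$ and $(b_L, a_L)$---the geometric reformulation of the needed slope inequality---requires a careful manipulation of these closed-form expressions, combined with the defining inequalities of $d$ and the strict monotonicity of $f$ established in Lemma~\ref{f_strictly_increasing}. Once this convex-position property is in hand, Phases~1 and~2 reduce to routine bookkeeping of Algorithm~\ref{euclid}.
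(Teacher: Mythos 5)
Your plan mirrors the paper's own two-phase verification of Algorithm~\ref{euclid}: first show $n_j=j$ and $W_j=w_j$ by induction for $j=0,\ldots,d$ (splitting candidates $n$ by the region in which the closed form of $(a_n,b_n)$ changes), then show $n_{d+1}=L$ and $W_{d+1}=x_{L,d}$, using the defining inequalities of $d$. The ``three-point'' convex-combination step you invoke is exactly Lemma~\ref{inequalities}/Lemma~\ref{th.triangulaire} in the paper, and your remark that $x_{L,d}\le x_{d+1,d}$ follows from $x_{L,d+1}\le w_{d+1}$ is precisely Lemma~\ref{L_d_less_W_d_plus_1}. So the structure is sound.

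However, there is a genuine gap at the step you yourself flag as the ``main obstacle.'' To dispatch the regime $n\in[L-R+2,L-1]$ (for both phases), the paper needs the inequality $x_{L,d}\le x_{n,d}$, i.e.\ that the point $(b_n,a_n)$ lies on or above the line through $(b_d,a_d)$ and $(b_L,a_L)$. This is Lemma~\ref{L_d_less_n_d}, which in turn rests on Lemma~\ref{x_L_R_less_x_n_R}: $x_{L,R}\le x_{n,R}$ for $n\in[L-R+2,L-1]$. The actual proof of that lemma has nothing to do with the strict monotonicity of $f$ (Lemma~\ref{f_strictly_increasing}) that you cite as the needed tool — $f$ is only defined and used on $n\in[-1,R]$ and governs the location of $d$, not the behavior of $(b_n,a_n)$ in the high-threshold regime. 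The paper instead writes out $x_{n,R}-x_{L,R}$ explicitly, observes that the numerator $p(n)$ is concave in $n$ on $[L-R+1,L-1]$, and checks nonnegativity at the two endpoints, crucially invoking $L\ge 2R$ (e.g.\ $p(L-R+1)=(L-2R+1)(R-1)\ge 0$ and a bound $(1-\rho)^R\ge 1/4$ for $p(L-1)$). Without this quasi-concavity/endpoint argument, the slope comparison you need cannot be concluded, so the proposal as written does not close. There is also a small inaccuracy: you attribute $x_{L,j}>w_j$ for all $j\le d$ to Lemma~\ref{lem:d_existence} alone, but that lemma gives only $w_d<x_{L,d}$; propagating the strict inequality to all $j\le d$ does require Lemma~\ref{f_strictly_increasing} (via $f(j)<f(d)<aL$), which you should cite explicitly at that point rather than for the high-threshold obstacle.
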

\begin{proof}    
To prove Theorem \ref{w.i}, according to Proposition \ref{prop_wi}, we have to prove that, from $0 \leq j \leq d$, the largest minimizer at step $j$ is $j$ and at step $d+1$ is $L$. In other words, for all $0 \leq j \leq d$, we have that $\frac{a_j-a_{j-1}}{b_j-b_{j-1}}<\frac{a_n-a_{j-1}}{b_n-b_{j-1}}$ for all $n > n_{j-1}+1=j$ such that $b_n \neq b_{j-1}$ and $\frac{a_L-a_{d}}{b_L-b_{d}} \leq \frac{a_n-a_{d}}{b_n-b_{d}}$ for all $n \geq n_{d}+1=d+1$ such that $b_n \neq b_d$ , with $n_j$ being the largest minimizer at step $j$.\\
To that extent, it turns out to be relevant to demonstrate that $x_{L,R} \leq x_{n,R}$ for $L-R+1 < n \leq L-1$. For the detailed proof, see Appendix \ref{app:w.i}. 
\end{proof}

\subsection{$L<R$}
The indexability property can be easily set up by observing that $\sum_{i=0}^n u^n(i)$ is strictly increasing in $n$. Furthermore, we have that $\sum_{i=0}^L au^n(i)i$ is increasing  in $n$. Thereby we can apply the algorithm 1 to compute the Whittle index expressions. According to \cite[Corollary~2.1]{larranaga2015dynamic}, if $x_{n,n-1}$ is increasing in $n$, then the Whittle index of state $n$ is $x_{n,n-1}$. Effectively, for $L<R$, $x_{n,n-1}$ is increasing in $n$ and we have the following theorem.
\begin{theorem}
Denoting $1-L^2\rho/2-\rho/2L+L\rho-1/\rho$ by $b$.\\
The Whittle index of state $n$:\\
For $n \in [0,L-1]$: $W(n)=x_{n,n-1}=\frac{-\rho(L+R)(1-\rho)^n+(-2n+1)\rho/2+1+L\rho-\rho/2}{\rho(1-\rho)^n}$\\
For $n=L$: $W(L)=x_{L,L-1}=\frac{L-[(L+R)(1-\rho)^L-(L-1)^2\rho/2+(L-1)(1+L\rho-\rho/2)+b]}{(1-\rho)^L}$
\end{theorem}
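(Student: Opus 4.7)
The plan is to specialize the general framework already developed in the paper to the regime $L<R$. The structure mirrors the proof of Theorem \ref{w.i}: (i) verify the hypotheses of Proposition \ref{prop_wi} so that Algorithm \ref{euclid} applies, (ii) show that $x_{n,n-1}$ is monotonically increasing in $n$, (iii) invoke \cite[Corollary~2.1]{larranaga2015dynamic}, which under the monotonicity of $x_{n,n-1}$ reduces the algorithm's output to $W(n)=x_{n,n-1}$, and (iv) obtain the announced closed-form expressions by a direct algebraic simplification of $x_{n,n-1}$.

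For step (i), I would substitute the $L<R$ entry of Proposition \ref{eq:stat.dist} into $b_n:=\sum_{q=0}^n u^n(q)$ and read off $b_n=1-(1-\rho)^{n+1}$ on $[-1,L-1]$ with $b_L=1$, so $b_n$ is strictly increasing. The mean cost $a_n=\sum_{q=0}^L au^n(q)q$ is listed explicitly in the excerpt and is easily seen to be increasing on $[-1,L]$; strict monotonicity of $b_n$ in turn makes the implication ``$b_i=b_j\Rightarrow a_i<a_j$'' vacuously true, so both hypotheses of Proposition \ref{prop_wi} hold and the class is indexable with its Whittle indices produced by Algorithm \ref{euclid}.

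For step (iv), direct computation gives $b_n-b_{n-1}=\rho(1-\rho)^n$ for $n\in[0,L-1]$. In the numerator, the quadratic block $(L-1-n)(n-L)/(2R)$ and its shift telescope to $\rho(L-n)$ after a short simplification, while the geometric term contributes $-\rho(L+R)(1-\rho)^n$. Dividing by $\rho(1-\rho)^n$ and collecting constants yields the stated formula for $W(n)$ when $n\leq L-1$. The endpoint $n=L$ must be handled separately because $b_L-b_{L-1}=(1-\rho)^L$ instead of $\rho(1-\rho)^L$; substituting $a_L=aL$ and expanding $a_{L-1}$ via the same telescoping trick produces the announced closed form, with the auxiliary constant $b$ absorbing the lower-order terms of $a_{L-1}$.

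The main obstacle is step (ii). The identity $x_{n,n-1}/a=-\rho(L+R)+(1+L\rho-n\rho)(1-\rho)^{-n}$ combines a linearly decreasing coefficient with an exponentially growing factor, so the sign of the forward difference $x_{n+1,n}-x_{n,n-1}$ is not immediate. I would handle this by multiplying through by $(1-\rho)^{n+1}$ and reducing the inequality to a polynomial in $n$ and $\rho$ whose positivity can be verified using the constraints $L<R$ and $n\leq L-1$, equivalently $n\rho<L\rho<1$. The jump from $n=L-1$ to $n=L$ must be checked separately since the form of the denominator changes there; this reduces to comparing the explicit formulas of $W(L)$ and $W(L-1)$ computed in step (iv). Once monotonicity of $x_{n,n-1}$ is established, \cite[Corollary~2.1]{larranaga2015dynamic} immediately yields $W(n)=x_{n,n-1}$ for every $n$, and the explicit expressions from step (iv) complete the proof.
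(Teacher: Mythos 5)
Your proposal is correct and follows essentially the same route as the paper: verify the hypotheses of Proposition~\ref{prop_wi}, observe that $b_n=1-(1-\rho)^{n+1}$ is strictly increasing (so indexability holds and Algorithm~1 applies), show $x_{n,n-1}$ is increasing in $n$, invoke \cite[Corollary~2.1]{larranaga2015dynamic} to conclude $W(n)=x_{n,n-1}$, and simplify algebraically. The paper itself only sketches this (it asserts the monotonicity of $x_{n,n-1}$ without verification), so your added detail is welcome; note that the monotonicity you flag as the ``main obstacle'' is actually simple, since the forward difference $x_{n+1,n}-x_{n,n-1}$ collapses to a positive multiple of $(L-n)\rho^2$, and also that your intermediate identity has a misplaced factor of $\rho$ (it should read $x_{n,n-1}/a=-(L+R)+\bigl(1+(L-n)\rho\bigr)/\bigl(\rho(1-\rho)^{n}\bigr)$), though this does not affect the soundness of the plan.
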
    
\subsubsection{$R \leq L < 2R$}
Regarding the case where $R \leq L < 2R$, the class is indexable since $\sum_{i=0}^n u^n(i)$ is strictly increasing in $n$. Similar to the other cases, the algorithm 1 can be applied to obtain the expression of the Whittle index for different states. Following the same methodology in appendix \ref{app:w.i}, we obtain the Whittle index expression as follows: 
\begin{theorem}
It exists $d$ such that $x_{d,d-1} \leq x_{L,d} \leq x_{d+1,d}$ and $d < L-R$, where the Whittle index expressions are given by:
for $0 \leq n \leq d$: $W(n)=w_n=x_{n,n-1}=\frac{aRn}{R-n}$\\
for $d < n \leq L$: $W(n)=x_{L,d}=\frac{a[L-(\frac{R-1}{2}) + \frac{d(d+1)}{2R}]}{1-(1-\frac{d}{2R}(\frac{d+1}{R})}$
\end{theorem}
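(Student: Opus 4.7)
The plan is to mirror the proof strategy used for Theorem~\ref{w.i} (the $L\ge 2R$ case), adapting every computation to the piecewise formulas of Proposition~\ref{eq:stat.dist} that apply in the regime $R\le L<2R$. Throughout, I continue using the shorthand $a_n=\sum_{q=0}^L au^n(q)q$ and $b_n=\sum_{q=0}^n u^n(q)$, so that intersection points read $x_{i,j}=(a_i-a_j)/(b_i-b_j)$ and $w_n=x_{n,n-1}$. The goal is to invoke Proposition~\ref{prop_wi} (hence Algorithm~1) and then show that, run step by step, it outputs exactly the expressions stated.

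First, I would check the three hypotheses of Proposition~\ref{prop_wi} in this regime. Indexability is already noted in the excerpt: $b_n$ is strictly increasing on $[-1,L-R-1]$ and on $[L-R,L]$ by the same telescoping computation used to establish Lemmas~\ref{temps_pass_str_croiss_less_R} and \ref{temps_pass_str_croiss}, and the two ranges connect monotonically since $b_{L-R-1}=(1-\tfrac{L-R-1}{2R})\tfrac{L-R}{R}\le \tfrac{1}{2}+\tfrac{\rho}{2}\le b_{L-R}$. Monotonicity of $a_n$ follows from its explicit four-case expression already displayed above Proposition~\ref{prop_wi} (each piece is increasing in $n$ and the values match at the boundaries, exactly as in the $L\ge 2R$ proof). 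Finally, the implication ``$b_i=b_j\Rightarrow a_i<a_j$'' is vacuous outside the interval where $b_n$ is flat; and since in the present regime $b_n$ is in fact strictly increasing everywhere on $[0,L]$, this condition holds trivially.

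Next, I would define $f$ exactly as before ($f(n)=w_n[\,1-b_n]+a_n$ on $[0,R]$, with $f(R)=+\infty$ and $f(-1)=0$), re-verify that Lemma~\ref{f_strictly_increasing} goes through with the same derivative computation since the formulas for $a_n$ and $b_n$ on $[-1,R-1]$ coincide with those of the $L\ge 2R$ case, and then extract $d\in[0,R-1]$ with $f(d)/a<L\le f(d+1)/a$ (this is the analog of Lemma~\ref{lem:d_existence}; it uses only $L\ge R>f(0)/a$ and $f(R)/a=+\infty$). By definition of $f$, this $d$ satisfies $x_{d,d-1}\le x_{L,d}\le x_{d+1,d}$. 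For the additional claim $d<L-R$, which is new to this regime, I would argue by contradiction: if $d\ge L-R$ then $d\ge L-R\ge 1$, but a direct estimate of $f(L-R)/a$ using the explicit expressions of $a_{L-R}$ and $b_{L-R}$ (formulas valid in the middle piece) shows $f(L-R)/a\ge L$, contradicting $f(d)/a<L$.

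The heart of the argument is then to unroll Algorithm~1 and confirm that the largest minimizer at step $j\in\{1,\dots,d\}$ is $n_j=j$, and that at step $d+1$ it jumps to $n_{d+1}=L$. For $j\le d$ this reduces to showing $x_{j,j-1}<x_{n,j-1}$ for every $n>j$ with $b_n\ne b_{j-1}$; by the three-point lemma on slopes of $(b_n,a_n)$ this is equivalent to $x_{j,j-1}<x_{n,j}$, and one pushes the argument through the three candidate ranges for $n$ ($n\in[j+1,R-1]$, $n\in[L-R,R-1]$ where the formula changes piece, and $n\in[R,L]$) by direct insertion of the expressions of $a_n$ and $b_n$. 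For step $d+1$ one needs $x_{L,d}\le x_{n,d}$ for every $n\in[d+1,L]$ with $b_n\ne b_d$; again a three-point-lemma manipulation converts this to showing $x_{L,d+1}\le x_{n,d+1}\le\cdots\le x_{L,n-1}$ along the chain, whose decisive inequality is $x_{L,R-1}\le x_{n,R-1}$ for $n\in[R,L-1]$, the exact analogue of the pivot inequality $x_{L,R}\le x_{n,R}$ highlighted in Appendix~\ref{app:w.i}.

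The main obstacle I anticipate is precisely this last chain: in the present regime the stationary distribution has four distinct pieces rather than the three relevant pieces for $L\ge 2R$, and the middle region $[L-R,R-1]$ mixes geometric and linear contributions to both $a_n$ and $b_n$. Consequently the pivot inequality must be verified separately on $[R,L-R-1]$ (linear piece, straightforward algebra) and on $[L-R,L-1]$ (involving the $(1-\rho)^{n-L+R+1}$ terms, where one has to bound $1-2(1-\rho)^R$ away from $0$ via $R\ge 2$, just as was done after Lemma~\ref{average_cost_strict_increasing_n_greater_L_moins_R}). Once this is verified, the output $W(k)=W_j$ of Algorithm~1 gives the claimed formulas: $W(n)=w_n=aRn/(R-n)$ on $[0,d]$ by the step-$j$ output for $j\le d$, and $W(n)=x_{L,d}$ on $(d,L]$ by the step-$(d{+}1)$ output, matching the stated expression after substituting the explicit $a_d$ and $b_d$ from the first piece of the stationary distribution.
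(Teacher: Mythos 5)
Your overall approach mirrors the paper's directive to ``follow the same methodology as Appendix \ref{app:w.i}'', but the new step establishing $d < L - R$ has a genuine gap — and in fact that bound is false. You propose to argue by contradiction that $f(L-R)/a \geq L$. Writing $L = R+m$ and using $f(n)/a = \frac{R^2+Rn-n^2-R}{2(R-n)}$ (the polynomial expression used to define $f$ on $[0,R-1]$), the condition $f(m)/a \geq R+m$ simplifies, after clearing the positive denominator, to $L^2 - LR - R(R+1) \geq 0$, i.e.\ $L \geq \tfrac{R+\sqrt{5R^2+4R}}{2}$. This threshold lies strictly inside $(R,2R)$ for every $R \geq 2$, so for $L$ in the lower part of $[R,2R)$ the inequality $f(L-R)/a \geq L$ fails and your contradiction never materializes. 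Concretely with $R=6,L=10$ one computes $f(4)/a = 9.5 < 10 \leq 17.5 = f(5)/a$, so $d = 4 = L-R$; with $R=6,L=7$ one gets $d=3 > 1 = L-R$. The asserted bound is therefore not merely unproved by your argument but false for these parameters.

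This matters because the stated Whittle index formulas implicitly require the polynomial expressions $a_n = a[\tfrac{R-1}{2}+\tfrac{n(n+1)}{2R}]$ and $b_n = (1-\tfrac{n}{2R})\tfrac{n+1}{R}$ to equal the actual $\sum_q a u^n(q)q$ and $\sum_{q\leq n}u^n(q)$ at $n \in \{d-1,d,d+1\}$. In the regime $R\leq L<2R$ the steady-state on $[L-R,n]$ is $(1-\rho)^{n-i}\rho$ rather than $\rho-(n-i)\rho^2$, and these agree only when the exponent $n-i\in\{0,1\}$, i.e.\ for $n\leq L-R+1$. So your assertion that ``the formulas for $a_n$ and $b_n$ on $[-1,R-1]$ coincide with those of the $L\geq 2R$ case'' is incorrect, and when $d$ exceeds $L-R+1$ (e.g.\ $R=6,L=7$, where $d=3>L-R+1=2$) the expressions $w_d=aRd/(R-d)$ and $x_{L,d}$ as written no longer equal the true intersection points computed from the actual stationary distribution. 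A correct treatment would need either an additional size hypothesis on $L$ within $[R,2R)$ guaranteeing $d\leq L-R+1$, or a separate computation of $a_n,b_n$ on $[L-R,R-1]$ with the geometric terms retained; neither your proposal nor the paper's one-line remark identifies this.
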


\subsection{Whittle index policy for the original problem}
We now consider the original optimization problem~\eqref{eq:constraint_original} and propose a simple Whittle index policy. This policy consists of simply allocating the channels to the $M$ users that have the highest Whittle index at time $t$, denoted by $WI$, and computed using the simple expressions in Theorem \ref{w.i}.\\

\begin{figure}
\centering
\includegraphics[width=0.5\textwidth]{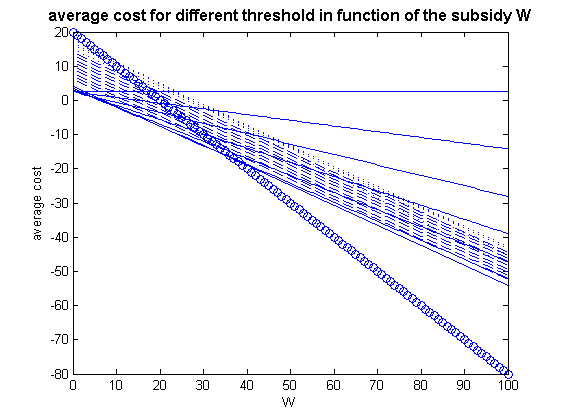}
\caption{Illustration of the function $a_n-Wb_n$ for different value of $n$}
\label{Whittle-figure}
\end{figure}
In Figure \ref{Whittle-figure}, we consider $L>2R$. The straight lines are for $n \leq R-1$, the dashed ones are for $R \leq n \leq L-R$, the doted ones are for $L-R+1 \leq n \leq L-1$, and the line with rounds is for $n=L$. As one can see, the slope of this latter line is very high if we compare it with the other curves. This means that all the intersection points between the round line and straight lines are surely smaller than all the intersection points between the doted and the straight ones, which confirms our Whittle index expressions. From now on, we consider that $L>2R$, furthermore, we suppose the following assumptions.

\begin{assumption}\label{assump:buffer_size_2}
The buffer length $L$ satisfies:
\begin{equation}L > \underset{(i,j) \in [1,K]^2}{\text{max}} \{\frac{a_j}{a_i}\} \underset{k \in [1,K]}{\text{max}}\{\frac{(R_k-1)^2}{2}\}\end{equation}
\end{assumption}
\begin{assumption}\label{assump:condition_on_alpha}
The proportion of queues scheduled at each time, $\alpha=M/N$ satisfies:
\begin{equation}\alpha \geq \frac{1}{2}-\sum_{k=1}^K \frac{\gamma_k}{2R_k}\end{equation}
\end{assumption} 
We justify in the next sections the reasons behind introducing these two assumptions.
\section{Further analysis of the optimal solution of the relaxed problem}\label{sec:relaxed}   
In this section, we provide further analysis and give the structure of the optimal solution for the relaxed problem, which will be useful for the proof of optimality of the Whittle's Index policy.
As we have seen in section \ref{sec:threshold policy}, for any given $W$, the optimal solution for the dual relaxed problem~\eqref{eq:relaxed} is a threshold-based policy for each user. By using the Whittle index expressions defined in  section \ref{sec:WI}, we will provide a derivation of the optimal threshold for each class as function of the Lagrange parameter $W$. 
In this section, we denote by $W^k_i$ the Whittle index at state $i$ in class $k$ (the user and class indices cannot be dropped here as in the previous sections). We denote by $l=(l_1,l_2,\cdots ,l_K)$ the vector which represents the set of thresholds for each class $k$.
As $f(R_k-1)/a=\frac{(R_k-1)^2}{2}$, then considering the assumption \ref{assump:buffer_size_2}, we have that for all $k$, $L > \underset{(i,j) \in [1,K]}{\text{max}} \{\frac{a_j}{a_i}\} \underset{k}{\text{max}} \frac{f(R_k-1)}{a} \geq \underset{k}{\text{max}} \frac{f(R_k-1)}{a}$. That means for each class $k$, the integer $d_k$ (which depends on the maximum rate $R_k$), defined in Lemma \ref{lem:d_existence}, is equal to $R_k-1$. This allows us to obtain a general expression of the Whittle index for all class $k$.  
We denote by $u^n_k$, the stationary distribution for class $k$ under threshold policy $n$.
\begin{proposition}\label{prop:optimal_solution_for_dual_relaxed_problem} 
For a given $W$, the optimal threshold vector $l=(l_1(W),l_2(W),\cdots ,l_K(W))$ for the dual problem satisfies:\\
For each $k$:
\begin{equation}
l_k(W)= \underset{i}{\text{max}} \{\arg \underset{i}{\text{max}} \{W^k_i | W^k_i \leq W \}\}
\end{equation} 
or 
\begin{equation}
l_k(W)= \underset{i}{\text{max}} \{\arg \underset{i}{\text{max}} \{W^k_i | W^k_i < W \}\}
\end{equation}
In other words, $l_k$ is the biggest index among the ones that give the biggest Whittle index less than $W$, or strictly less than $W$. We note that the solution can also be a linear combination between the threshold policies  $\underset{i}{\text{max}} \{\arg \underset{i}{\text{max}} \{W^k_i | W^k_i \leq W \}\}$ and $\underset{i}{\text{max}} \{\arg \underset{i}{\text{max}} \{W^k_i | W^k_i < W \}\}$. 

\end{proposition}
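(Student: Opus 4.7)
The plan is to leverage the decomposition from Proposition~\ref{prop:value_function_decomposition} together with Corollary~\ref{col:threshold_policy} so that, for a given subsidy $W$, the dual problem reduces to choosing, independently for every user within each class $k$, a single threshold $n \in \{-1,0,\ldots,L\}$. Since all users in class $k$ solve identical one-dimensional subproblems, the optimal threshold vector has the form $l=(l_1(W),\ldots,l_K(W))$, and it suffices to characterize $l_k(W)$ as a minimizer of
\begin{equation*}
g_k(n;W) \;=\; a_n^{(k)} - W\, b_n^{(k)},
\end{equation*}
where $a_n^{(k)}=\sum_{q=0}^{L} a_k u_k^n(q)\, q$ and $b_n^{(k)}=\sum_{q=0}^{n} u_k^n(q)$, using the reformulation~\eqref{eq:new_prob_form_lagr_function_steady_state}.

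For every fixed $n$, the map $W \mapsto g_k(n;W)$ is affine with slope $-b_n^{(k)}$, hence $W \mapsto \min_n g_k(n;W)$ is the lower envelope of a finite family of affine functions, which is concave and piecewise linear. I would identify the breakpoints of this envelope explicitly: by Lemma~\ref{intersection_point_Whittle_index}, any two candidate thresholds $n_1 < n_2$ both minimize $g_k(\cdot;W)$ precisely at $W = x_{n_2,n_1}$, and the indexability proved in Theorem~\ref{indexability}, combined with the construction of Algorithm~1, shows that the successive breakpoints of the envelope are exactly the Whittle indices $W^k_i$. Assumption~\ref{assump:buffer_size_2} further ensures $d_k=R_k-1$ so that the closed forms of Theorem~\ref{w.i} apply uniformly to every class.

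Given this envelope picture, the proof reduces to reading off which threshold is optimal at a given $W$. By the definition of the Whittle index and indexability, state $i$ belongs to the passive set iff $W \geq W^k_i$, and because the optimal policy is threshold-based the passive set is a prefix $\{0,1,\ldots,l_k(W)\}$. Consequently $l_k(W)$ is the largest $i$ such that $W^k_i \leq W$, which is precisely $\max\{\arg\max_i\{W^k_i : W^k_i \leq W\}\}$; the outer maximum is needed because, as Theorem~\ref{w.i} shows, the Whittle indices are constant on $d_k < i \leq L$, so ties arise in exactly this regime. The strict-inequality variant is obtained in the same way but, at a breakpoint $W=W^k_{i_0}$, associates the threshold immediately below.

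The main obstacle I expect is the tie-breaking case where $W$ coincides with a breakpoint. At such a $W$, two consecutive thresholds $n^-$ and $n^+$ achieve the same value of $g_k(\cdot;W)$, so the time-averaged cost under any randomization between the two corresponding stationary policies is the same affine combination of equal numbers and hence remains minimal. Because the relaxed constraint~\eqref{eq:constraint_relaxed} is averaged over time, randomized stationary policies are admissible, which justifies the linear-combination statement of the proposition. Modulo this handling of ties, the characterization is a direct consequence of the envelope description of $\min_n g_k(n;W)$ together with the closed-form Whittle indices obtained in Theorem~\ref{w.i}.
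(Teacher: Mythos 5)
Your proposal is correct and follows essentially the same route as the paper's own proof: both arguments hinge on the equivalence (granted indexability and the definition of the Whittle index) that state $i$ is passive at subsidy $W$ if and only if $W \geq W_i^k$, together with monotonicity of $W_i^k$ in $i$, and then read off the threshold and handle randomization at tie points; the envelope picture is a pleasant reframing but carries the same logical content as the paper's case analysis on whether $W$ coincides with some $W_j^k$. One small imprecision: at the breakpoint $W = x_{L,d}$ the two competing thresholds are $d$ and $L$ (not consecutive integers), which is exactly why the proposition's statement uses $\max\{\arg\max\{W_i^k : W_i^k \leq W\}\}$ rather than $i_0 - 1$ and $i_0$ — worth correcting in the write-up even though it does not affect the validity of the argument.
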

\begin{proof}
See appendix \ref{app:optimal_solution_for_dual_relaxed_problem}.     
\end{proof}
Now, we give the structure of the optimal solution of the constrained relaxed problem.
\begin{proposition}\label{prop:optimal_solution_for_constrained_relaxed_problem}
The solution of the constrained relaxed problem is of type threshold policy $l(W^*)$, with $l$ being the function vector defined in Proposition \ref{prop:optimal_solution_for_dual_relaxed_problem} and  $W^*$ satisfies $\alpha =\sum_{k=1}^K \gamma_k \sum_{i=l_k(W^*)+1}^L u_k^{l_k(W^*)}(i)$.    
\end{proposition}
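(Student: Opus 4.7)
The plan is to argue via Lagrangian duality, leveraging Proposition \ref{prop:optimal_solution_for_dual_relaxed_problem} to characterize the minimizer of the dual for every fixed $W$, and then using complementary slackness to pin down the value $W^*$ at which primal and dual optima coincide. First I would recall that the constrained relaxed problem is to minimize the long-run average queue length subject to \eqref{eq:constraint_relaxed}, and that the Lagrangian $f(W,\phi)$ satisfies, for every feasible $\phi$ and every $W\geq 0$, the inequality $f(W,\phi)\leq$ (primal objective at $\phi$) $-W\alpha N$. Consequently, for each $W\geq 0$, the quantity $\min_{\phi\in\Phi} f(W,\phi) + W\alpha N$ is a lower bound on the constrained optimum. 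Since the problem is linear in the occupation measures and the state space of each user is finite, strong duality holds and this lower bound is tight for some $W=W^*$.

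Second, I would use Proposition \ref{prop:value_function_decomposition} to decompose the dual into the per-user subproblems and then invoke Proposition \ref{prop:optimal_solution_for_dual_relaxed_problem}, which guarantees that for every $W$ there exists a threshold vector $l(W) = (l_1(W),\ldots,l_K(W))$ minimizing the Lagrangian. Under the threshold policy $l_k(W)$, the long-run fraction of time a class-$k$ user is active is $\sum_{i=l_k(W)+1}^{L} u_k^{l_k(W)}(i)$ by the definition of the stationary distribution computed in Section \ref{sec:steady_state}. Hence, the aggregate activation rate under $l(W)$ is
\begin{equation}
g(W) \;=\; \sum_{k=1}^{K} \gamma_k \sum_{i=l_k(W)+1}^{L} u_k^{l_k(W)}(i).
\end{equation}
I would show that $g$ is non-increasing in $W$: larger subsidy for passivity makes the optimal threshold of each class weakly larger (by the indexability established in Theorem \ref{indexability}), which in turn decreases the activation probability.

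Third, I would exhibit $W^*$ such that $g(W^*)=\alpha$. Because $g$ is a piecewise-constant step function with downward jumps exactly at the Whittle indices $\{W_i^k\}$, there may be no single $W$ for which $g(W)=\alpha$ exactly; in that case, Proposition \ref{prop:optimal_solution_for_dual_relaxed_problem} explicitly allows the optimum of the dual to be realised by a randomisation between the two threshold policies $\max\{\arg\max_i\{W_i^k\mid W_i^k\leq W\}\}$ and $\max\{\arg\max_i\{W_i^k\mid W_i^k<W\}\}$, and by varying the mixing weights continuously, $g$ sweeps out the entire interval between the left and right limits at the jump. Assumption \ref{assump:condition_on_alpha} is exactly what guarantees that the target value $\alpha$ lies inside the reachable range of $g$, so a valid $W^*$ (possibly together with a mixing weight) always exists.

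Finally, at this $W^*$ the policy $l(W^*)$ is feasible for \eqref{eq:constraint_relaxed} with equality, and it minimises $f(W^*,\cdot)$. For any other feasible $\phi$, the primal cost of $\phi$ equals $f(W^*,\phi)+W^*\alpha N \geq f(W^*,l(W^*))+W^*\alpha N$, which in turn equals the primal cost of $l(W^*)$ because the constraint is tight. Therefore $l(W^*)$ is optimal for the constrained relaxed problem, which is exactly the statement of the proposition. The main obstacle I anticipate is the bookkeeping at the discontinuities of $g$: formally justifying that the mixed policy attains the dual minimum, that its activation rate interpolates linearly between those of the two pure threshold policies, and that Assumption \ref{assump:condition_on_alpha} is sufficient to place $\alpha$ in the reachable range; everything else is a standard saddle-point argument.
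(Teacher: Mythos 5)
Your argument is correct and follows essentially the same route as the paper: weak duality gives that $f(W,\phi)+W\alpha N$ lower-bounds the constrained optimum, and at a $W^*$ where the constraint is tight the Lagrangian value under $l(W^*)$ coincides with the primal objective of a feasible policy, forcing optimality. The only slight divergence is that the paper does not appeal to an abstract strong-duality theorem and does not discuss existence of $W^*$ or randomization here (that is the content of Proposition~\ref{prop:optimal_sol_relax_prob_charchterisation}); those additions in your proposal are harmless but not needed for this proposition.
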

\begin{proof}
See appendix \ref{app:optimal_solution_for_constrained_relaxed_problem}.
\end{proof}      
However, $W^*$ that satisfies the above constraint may not exist since $\alpha$ is a real number that can take any value in $[0,1]$, and $\sum_{k=1}^K \gamma_k \sum_{i=l_{k+1}(W)}^L u^{l_k(W)}_k(i)$ is discrete, since the vector $l(W)$ can only take discrete values in $[0,L]^K$. To deal with this issue, we   
use the fact that for some values of $W$, the optimal solution of the dual problem can be a linear combination or more precisely a randomized policy between two threshold policies for a given class as it has been mentioned in Proposition \ref{prop:optimal_solution_for_dual_relaxed_problem}. To that extent, our task is to find among these values of $W$, the one for which there exists a randomized parameter $\theta$ such that the constraint is satisfied with equality.       
To that end, we introduce this following proposition.
\begin{proposition}\label{prop:optimal_sol_relax_prob_charchterisation}
Under assumption \ref{assump:buffer_size_2} and \ref{assump:condition_on_alpha}, there exists a class $m$, state $p$, and a randomization parameter $\theta$ such that the optimal solution of the dual problem when the langrangian parameter $W=W^m_p$ is characterized by:
\begin{itemize}
\item For $k \neq m$, the optimal threshold is $l_k(W_p^m)= \underset{i}{\text{max}} \{\arg \underset{i}{\text{max}} \{W^k_i | W^k_i \leq W^m_p \}\}$
\item For $k=m$, the optimal solution is randomized policy between two threshold policies $l_m(W_p^m)= \underset{i}{\text{max}} \{\arg \underset{i}{\text{max}} \{W^m_i | W^m_i \leq W_p^m \}\}$ and $l_m(W_p^m)-1= \underset{i}{\text{max}} \{\arg \underset{i}{\text{max}} \{W^m_i | W^m_i < W_p^m \}\}$, where the factor of randomization $\theta$ is the probability of adopting the policy $l_k(W_p^m)$ and $1-\theta$ the probability of adopting the policy $l_k(W_p^m)-1$.
\item The constraint \eqref{eq:constraint_relaxed} is satisfied with equality, i.e. 
\begin{align*}
\alpha= \sum_{k\neq m} \sum_{i=l_k(W_p^m)+1}^L \gamma_k u^{l_k(W_p^m)}_k(i)+\sum_{i=l_m(W_p^m)+1}^L \gamma_m u^*_m(i)+(1-\theta) \gamma_m u^{l_m(W_p^m)-1}_m(l_m(W_p^m))
\end{align*}
Where $u^*_m=\theta u^{l_m(W_p^m)}+(1-\theta) u^{l_m(W_p^m)-1}$.
\item For all $k$, $l_k(W_p^m) < R_k$.
\end{itemize}
\end{proposition}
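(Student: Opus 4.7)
The plan is to track, as the Lagrangian parameter $W$ varies, the total fraction of users scheduled under the deterministic threshold policy of Proposition~\ref{prop:optimal_solution_for_dual_relaxed_problem}, and single out the Whittle index $W_p^m$ at which a randomization on one class's threshold pins this fraction exactly to $\alpha$. I introduce
\begin{equation*}
h(W) \;:=\; \sum_{k=1}^K \gamma_k \sum_{i=l_k(W)+1}^L u_k^{l_k(W)}(i),
\end{equation*}
with $l_k(W)=\max\{i : W_i^k\leq W\}$. Since the Whittle indices $W_i^k$ are non-decreasing in $i$ by Theorem~\ref{w.i}, $h$ is a non-increasing, right-continuous step function whose discontinuities occur exactly at the Whittle-index values, equal to $1$ for $W$ below every Whittle index and to $0$ once $W$ exceeds all of them.

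My second step is to classify these discontinuities. Under Assumption~\ref{assump:buffer_size_2}, the integer $d_k$ of Lemma~\ref{lem:d_existence} equals $R_k-1$ for every class, so each class's Whittle indices split into the strictly increasing ``small-jump'' block $W_p^k=a_k R_k p/(R_k-p)$ for $p\in[0,R_k-1]$ and a single ``degenerate'' value $x_{L,R_k-1}^k$ shared by all states $p\in[R_k,L]$. A small-jump index moves the class-$k$ threshold by exactly one (from $p-1$ to $p$), which is the very randomization family permitted in the conclusion of Proposition~\ref{prop:optimal_solution_for_dual_relaxed_problem}; a degenerate value, by contrast, would make the threshold leap from $R_k-1$ directly to $L$ and so violate the required $l_k(W_p^m)<R_k$.

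Third, I would use Assumption~\ref{assump:condition_on_alpha} to show that the crossover
\begin{equation*}
W^* \;:=\; \inf\{W : h(W)\leq\alpha\}
\end{equation*}
sits strictly below every degenerate Whittle index. The key computation is that, at threshold $n=R_k-1$, the stationary-distribution formulas of Proposition~\ref{eq:stat.dist} yield $\sum_{i\geq R_k} u_k^{R_k-1}(i)=\tfrac12-\tfrac{1}{2R_k}$, so just below $\min_k x_{L,R_k-1}^k$ the function $h$ equals exactly $\tfrac12-\sum_k \gamma_k/(2R_k)$. By Assumption~\ref{assump:condition_on_alpha} this value is $\leq\alpha$; since $h$ equals $1$ far to the left and is constant between successive Whittle indices, $W^*$ must coincide with some small-jump Whittle index $W_p^m$ with $p\in[0,R_m-1]$, and automatically $l_k(W^*)<R_k$ for every $k\neq m$ as well.

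Finally, with $W_p^m$ identified, I would produce the randomization parameter. Fixing each class $k\neq m$ at threshold $l_k(W_p^m)$ and mixing class $m$ between thresholds $p-1$ (weight $1-\theta$) and $p$ (weight $\theta$) gives a scheduled fraction that is affine in $\theta$, equal to $h(W_p^m-)\geq\alpha$ at $\theta=0$ and to $h(W_p^m)\leq\alpha$ at $\theta=1$; a unique $\theta\in[0,1]$ then achieves equality. The hard part, and the very reason Assumption~\ref{assump:condition_on_alpha} appears in the statement, is controlling the degenerate Whittle-index block at states $\geq R_k$: without the lower bound $\alpha\geq\tfrac12-\sum_k\gamma_k/(2R_k)$, the crossover could land on one of the pathological values $x_{L,R_k-1}^k$, at which the only available randomization is between thresholds $R_k-1$ and $L$ and so would violate $l_m(W_p^m)<R_m$. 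Matching this structural degeneracy to the arithmetic of the budget bound is where I expect the bulk of the technical work to sit.
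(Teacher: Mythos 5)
Your plan matches the paper's proof almost step for step: both arguments track the scheduled fraction $h(W)$ as a non-increasing step function (the paper phrases this through the monotonicity of $l(W)$ in $W$, Lemma~\ref{decreasing_active_time}, and the resulting strict decrease of the active-time sum in $l$), both locate the crossover index $W_p^m$ where $h$ straddles $\alpha$, both invoke Assumption~\ref{assump:condition_on_alpha} through the computation $\sum_{i\geq R_k}u_k^{R_k-1}(i)=\tfrac12-\tfrac1{2R_k}$ to force the crossover below the ``degenerate'' indices, and both close with an affine-in-$\theta$ continuity argument. The one step you gloss is the cross-class ordering: to assert that ``just below $\min_{k'}x^{k'}_{L,R_{k'}-1}$ the function $h$ equals exactly $\tfrac12-\sum_k\gamma_k/(2R_k)$'' you need $W^k_{R_k-1}<x^{k'}_{L,R_{k'}-1}$ for \emph{all pairs} $(k,k')$, not just the within-class inequality that $d_k=R_k-1$ gives you. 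That is precisely what Assumption~\ref{assump:buffer_size_2} is doing beyond pinning down $d_k$ — the paper uses it explicitly to construct a $W_0$ with $l(W_0)=(R_1-1,\ldots,R_K-1)$ — and you should make that cross-class comparison explicit rather than attributing the consequence solely to the value of $d_k$. With that detail added, your sketch is the paper's proof in a cleaner organizational form.
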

\begin{proof}
See appendix \ref{app:prop:optimal_sol_relax_prob_charchterisation}
\end{proof}
The solution of the dual problem described in Proposition \ref{prop:optimal_sol_relax_prob_charchterisation} satisfies the constraint \eqref{eq:constraint_relaxed} with equality, then according to Proposition \ref{prop:optimal_solution_for_constrained_relaxed_problem}, this solution is indeed the optimal solution of the constrained problem. 
In that regard, the optimal cost $C^{RP,N}$ is expressed as following: 
\begin{equation}
C^{RP,N}= \sum_{k\neq m} \sum_{i=0}^L N\gamma_k a_k u^{l_k(W_p^m)}_k(i)i+\sum_{i=0}^L N\gamma_m a_m u^*_m(i)i
\end{equation}
\section{Local optimality}\label{sec:optimality}
In this section, we will show that the performance of the Whittle's Index policy is asymptotically locally optimal. The asymptotic optimality means that for a large number of users $N$ and a large number of channels $M$ ($\alpha=\frac{M}{N}$ is a constant value), the Whittle's Index policy is optimal.
For that we will compare the average cost obtained by the Whittle's Index policy WI with the one obtained for the relaxed problem RP.
Explicitly, denoting by $C_T^N(\textbf{x})$ the average cost obtained over the time duration $0 \leq t \leq T$ under Whittle's Index policy conditioned on the initial state $\textbf{x}$ ,we show that $C^N_T(\textbf{x})$ tends to $C^{RP,N}$ when $N$ scales. The reason behind comparing $C^{RP,N}$ and $C^N_T(\textbf{x})$ is that $C^{RP,N}$ is a lower bound of all expected average cost obtained by any policy that resolves the original Problem \eqref{eq:constraint_original}. This means that it is sufficient to prove that $C^N_T(\textbf{x})$ converges to $C^{RP,N}$ when $T$ and $N$ scale in order to establish the asymptotic optimality of Whittle's Index policy. 
For that, we will be in need of the optimal cost expression of the relaxed problem $C^{RP,N}$ derived in Section \ref{sec:relaxed}.\\ 
First, we denote by $Z_i^{k,N}$ the proportion of queues at state $i$ in class $k$ over all the queues of the system. In other words, it denotes the number of queues at state $i$ in class $k$ over the number of all users which is $N$. We have that $\textbf{Z}^N=(\textbf{Z}^{1,N},.....,\textbf{Z}^{K,N})$ with $\textbf{Z}^{k,N}=(Z_1^{k,N},......,Z^{k,N}_L)$ and $\sum_{i=0}^L Z_i^{k,N}=\gamma_k$ for each class $k$.\\
The expression of $C_T^N(\textbf{x})$ in function of $\textbf{Z}^N$ is $\frac{1}{T} \mathbb{E}\left[ \sum_{t=0}^{T-1}  \sum_{k=1}^{K} \sum_{i=1}^{L} a_kZ_i^{k,N}(t)iN \mid \textbf{Z}^N(0)=\textbf{x}\right]$, where $\textbf{Z}^N(t)$ evolves under Whittle's Index policy.
Denoting by $\textbf{z}^*$ the optimal proportion of the the relaxed problem, we say that the Whittle's Index policy is asymptotically locally optimal if there exists $\delta>0$ such that the initial proportion vector $\textbf{Z}^N(0)$ is within $\Omega_{\delta}(\textbf{z}^*)$ (i.e. $||\textbf{Z}^N(0)-\textbf{z}^*||<\delta$), then $C_T^{N}(\textbf{x})$ converges to $C^{RP,N}$ when $T$ and $N$ scale.\\ 
In order to prove that, we use the fluid limit technique that consists of analyzing the evolution of the expectation of $\textbf{Z}^N(t)$ under the Whittle's Index policy. For that, we define the vector $\textbf{z}(t)$ as follows:
\begin{equation}
\textbf{z}(t+1)-\textbf{z}(t)|_{\textbf{z}(t)=\textbf{z}}=\mathbb{E}\left[\textbf{Z}^N(t+1)-\textbf{Z}^N(t)|\textbf{Z}^N(t)=\textbf{z}\right]
\end{equation}
If we denote by $w_j^h$ the Whittle index for class $h$ at state $j$ and by $p_i^k(\textbf{z})$ the probability that a user is selected randomly among $z_i^k$ to transmit, one can easily show that \cite{weber1990index}:
\begin{equation}
p_i^k(\textbf{z})=\min \{ z_i^k,\max(0,\alpha-\sum_{w_j^h > w_i^k} z_j^h) \} /z_i^k
\end{equation}
We denote by $q_{i,j}^{k,0}$ and $q_{i,j}^{k,1}$ the probability to transition from state $i$ to state $j$ in a class $k$ queue if the queue is not scheduled or is scheduled for transmission respectively.

Then, the probability to transition from state $i$ to state $j$ in class $k$ is:
\begin{equation}
q_{i,j}^k(\textbf{z})=p_i^k(\textbf{z})q_{i,j}^{k,1}+(1-p_i^k(\textbf{z}))q_{i,j}^{k,0}
\end{equation}
Let $w^*$ be the Lagrangian parameter that gives the optimal solution of the relaxed problem. Then, according to Proposition \ref{prop:optimal_sol_relax_prob_charchterisation}, there exists a given class $m$ such that $w_{l_m}^m=w^*$ where the corresponding optimal solution of the relaxed problem is of type threshold policy for class $k \neq m$ denoted $l_k$, and a randomized policy between two threshold policies $l_m$ and $l_m-1$ for class $m$. Moreover, $l_k < R_{k}$ for all $k$.\\ 
We define $\jmath_{w^*}$ as the set of states such that at any system state $\textbf{z} \in \jmath_{w^*}$, if we use the Whittle's Index policy, all users with the Whittle index value higher than $w^*$ are scheduled, the
users with Whittle index value smaller than $w^*$ stay idle and
the users with index value $w^*$ are scheduled with a certain randomization. Specifically, $\jmath_{w^*}=\{ \textbf{z}: \ \sum_{w_i^k>w^*} z_i^k<\alpha, \sum_{w_i^k \geq w^*} z_i^k \geq \alpha \}$.\\
If we start with $\textbf{z}(0)$ in $\jmath_{w^*}$, then:
\begin{equation}\label{eq:relation_z_t+1_t}
z_i^k(t+1)-z_i^k(t)=\sum_{j \neq i} q_{j,i}^k(\textbf{z}(t))z_j^k(t)-\sum_{i\neq j}q_{i,j}^k(\textbf{z}(t))z_i^k(t)
\end{equation} 
Moreover, we have the following equality for all $k$ and  $t$: \begin{equation} \sum_{j=0}^L z_j^k(t)=\gamma_k\end{equation} 
and as $\textbf{z}(t) \in \jmath_{w^*}$, we can show the following: \\
1) $k \neq m$: \begin{equation} z_i^k(t+1)=\sum_{j=0}^{l_k-1}(q_{j,i}^{k,0}-q_{l_k,i}^{k,0})z_j^k(t) +\sum_{j=l_k+1}^{L}(q_{j,i}^{k,1}-q_{l_k,i}^{k,0})z_j^k(t)+ \gamma_k q_{l_k,i}^{k,0}\end{equation} 
2) $k=m$
\begin{align}
z_i^m(t+1)=&\sum_{j=0}^{l_m-1}(q_{j,i}^{m,0}-q_{l_m,i}^{m,0})z_j^m(t) +\sum_{j=l_m+1}^{L}(q_{j,i}^{m,1}-q_{l_m,i}^{m,1})z_j^m(t)+(1-\alpha) q_{l_m,i}^{m,0}+ \alpha q_{l_m,i}^{m,1} \nonumber \\
&-(\displaystyle\sum_{\substack{w_j^h>w_{l_m}^m \\ h \neq m, j \neq l_h \phantom{-}}} z_j^h(t)) q_{l_m,i}^{m,1} - (\displaystyle\sum_{\substack{w_j^h \leq w_{l_m}^m \\ h \neq m, j \neq l_h \phantom{-}}} z_j^h(t)) q_{l_m,i}^{m,0}+ (\sum_{\substack{h=1 \\ h \neq m}}^K \sum_{\substack{j=0 \\ j \neq l_h}}^L \mathds{1}_{\{w_{l_h}^h>w_{l_m}^m\}} z_j^h(t))q_{l_m,i}^{m,1} \nonumber\\
&+(\sum_{\substack{h=1 \\ h \neq m}}^K \sum_{\substack{j=0 \\ j \neq l_h}}^L \mathds{1}_{\{w_{l_h}^h \leq w_{l_m}^m\}} z_j^h(t))q_{l_m,i}^{m,0} -\sum_{\substack{h=1 \\ h \neq m}}^K \gamma_h (\mathds{1}_{\{w_{l_h}^h>w_{l_m}^m\}}q_{l_m,i}^{m,1}+\mathds{1}_{\{w_{l_h}^h \leq w_{l_m}^m\}}q_{l_m,i}^{m,0}) 
\end{align}

Let  $g_i^m=\sum_{\substack{h=1 \\ h \neq m}}^K \gamma_h (\mathds{1}_{\{w_{l_h}^h>w_{l_m}^m\}}q_{l_m,i}^{m,1}+\mathds{1}_{\{w_{l_h}^h \leq w_{l_m}^m\}}q_{l_m,i}^{m,0})$ $\forall$ $i \in [0,L]$, and  $C=(c^1,\cdots,c^K)$ such that $c^k= (\gamma_k q_{l_k,0}^{k,0},\cdots,\gamma_k q_{l_k,L}^{k,0})$ and $c^m=((1-\alpha) q_{l_m,0}^{m,0}+ \alpha q_{l_m,0}^{m,1}-g_0^m,\cdots,(1-\alpha) q_{l_m,L}^{m,0}+ \alpha q_{l_m,L}^{m,1}-g_L^m)$ for each $k \neq m$.

Then, by replacing in the equation above for all $k$ $z_{l_k}^k(t)$ with $\gamma_k-\sum_{j=0, j\neq {l_k}}^L z_j^k(t)$, we obtain the following linear relation in $\jmath_{w^*}$ between $\tilde{\textbf{z}}(t+1)$ and $\tilde{\textbf{z}}(t)$ where $\tilde{\textbf{z}}$ is the proportion vector in which the elements $z^k_{l_k}$ for different $k$ are eliminated. 
\begin{equation}
\tilde{\textbf{z}}(t+1)=\textbf{Q}\tilde{\textbf{z}}(t)+\textbf{C}
\end{equation} 
The expression of matrix $\textbf{Q}$ is given in Appendix \ref{app:spectral_radius_Q_less_1}. The vector solution of the relaxed problem, denoted by $\tilde{\textbf{z}}^{*}$, is the fixed point of the aforementioned linear equation. Moreover, as $\tilde{\textbf{z}}^* \in \jmath_{w^*}$, and if $\tilde{\textbf{z}}(0)=\tilde{\textbf{z}}^*+\textbf{e}$, then we obtain:
\begin{equation}
\tilde{\textbf{z}}(t)-\tilde{\textbf{z}}^*=\textbf{Q}^t\textbf{e}
\end{equation}
The analysis of the above linear system is therefore important to prove the local optimality. We first provide the following lemma. 
\begin{Lemma}\label{lem:lambda_inf_1}
If for all eigenvalues $\lambda$ of $\textbf{Q}$, $|\lambda|<1$, then there exists a neighborhood $\Omega_{\sigma}(\tilde{\textbf{z}}^*) \subseteq \jmath_{w^*}$ such that if $\tilde{\textbf{z}}(0) \in \Omega_{\sigma}(\tilde{\textbf{z}}^*)$, we have the following: \\
1) For all $t \geq 0$, $||\tilde{\textbf{z}}(t)-\tilde{\textbf{z}}^*||<\sigma$ ($\tilde{\textbf{z}}(t) \in \jmath_{w^*}$). \\
2) $\tilde{\textbf{z}}(t)$ converges to $\tilde{\textbf{z}}^*$.
\end{Lemma}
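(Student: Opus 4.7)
The plan is to convert the spectral-radius hypothesis into a uniform exponential bound on the iteration of $\textbf{Q}$ and then to choose the radius $\sigma$ small enough that the trajectory never leaves $\jmath_{w^{*}}$, so that the affine recursion $\tilde{\textbf{z}}(t+1)=\textbf{Q}\tilde{\textbf{z}}(t)+\textbf{C}$ remains valid for all $t$. Concretely, since $\rho(\textbf{Q})<1$, for any $r$ with $\rho(\textbf{Q})<r<1$ there is an induced matrix norm $\|\cdot\|_{*}$ (obtained for instance from a Jordan basis of $\textbf{Q}$) with $\|\textbf{Q}\|_{*}\leq r$, so $\|\textbf{Q}^{t}\|_{*}\leq r^{t}$. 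Equivalence of norms on $\mathbb{R}^{n}$ then yields a constant $C>0$ such that $\|\textbf{Q}^{t}\textbf{e}\|\leq Cr^{t}\|\textbf{e}\|$ in the Euclidean norm, for every vector $\textbf{e}$ and every $t\geq 0$.

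Next I would argue that $\tilde{\textbf{z}}^{*}$ is an interior point of $\jmath_{w^{*}}$ in the ambient space of the reduced variables. The strict inequality $\sum_{w_{i}^{k}>w^{*}}z_{i}^{k}<\alpha$ persists under small perturbations by continuity, and Proposition~\ref{prop:optimal_sol_relax_prob_charchterisation} supplies a randomization with parameter $\theta\in(0,1)$ on a coordinate whose index equals $w^{*}$, which creates strict slack on both sides of the second constraint $\sum_{w_{i}^{k}\geq w^{*}}z_{i}^{k}\geq\alpha$ once $z_{l_k}^{k}$ is eliminated through the mass-conservation identity used to construct $\textbf{Q}$. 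This furnishes some $\sigma_{0}>0$ with $\Omega_{\sigma_{0}}(\tilde{\textbf{z}}^{*})\subseteq\jmath_{w^{*}}$.

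The two assertions then follow by a short induction. Given any $\sigma\leq\sigma_{0}$, choose $\tilde{\textbf{z}}(0)$ with $\|\textbf{e}\|:=\|\tilde{\textbf{z}}(0)-\tilde{\textbf{z}}^{*}\|<\sigma/C$. As long as $\tilde{\textbf{z}}(s)\in\jmath_{w^{*}}$ for all $s\leq t$ the affine relation holds throughout, giving $\tilde{\textbf{z}}(t)-\tilde{\textbf{z}}^{*}=\textbf{Q}^{t}\textbf{e}$ and therefore
\[
\|\tilde{\textbf{z}}(t)-\tilde{\textbf{z}}^{*}\|\;\leq\;Cr^{t}\|\textbf{e}\|\;<\;\sigma\;\leq\;\sigma_{0},
\]
so $\tilde{\textbf{z}}(t)$ remains in $\Omega_{\sigma_{0}}(\tilde{\textbf{z}}^{*})\subseteq\jmath_{w^{*}}$ and the induction continues. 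This proves point~(1), and point~(2) follows at once from $r^{t}\to 0$. The main obstacle I expect is the openness step: the active constraint $\sum_{w_{i}^{k}\geq w^{*}}z_{i}^{k}\geq\alpha$ is non-strict at $\tilde{\textbf{z}}^{*}$, so verifying that a full Euclidean ball still lies in $\jmath_{w^{*}}$ genuinely hinges on the strict randomization $\theta\in(0,1)$ guaranteed by Proposition~\ref{prop:optimal_sol_relax_prob_charchterisation} together with the coordinate reduction used to define $\textbf{Q}$; the spectral contraction estimate itself is routine.
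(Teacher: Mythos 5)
Your proof is correct, and it is genuinely more informative than the paper's, which dismisses the lemma with a single sentence (``the proof follows from the convergence of the linear system''). You correctly decompose the argument into the two pieces that do the work: (i) the spectral-radius hypothesis converts into a uniform exponential bound $\|\textbf{Q}^{t}\textbf{e}\|\leq Cr^{t}\|\textbf{e}\|$ via an adapted (Jordan-type) norm and norm equivalence, and (ii) the affine recursion $\tilde{\textbf{z}}(t+1)=\textbf{Q}\tilde{\textbf{z}}(t)+\textbf{C}$ is only guaranteed to hold while $\tilde{\textbf{z}}(t)$ remains in $\jmath_{w^{*}}$, which forces the inductive argument. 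You also put your finger on the one substantive subtlety that the paper silently assumes: that $\tilde{\textbf{z}}^{*}$ lies in the \emph{interior} of $\jmath_{w^{*}}$. Your computation is right: at $\tilde{\textbf{z}}^{*}$, $\sum_{w_i^k>w^*}z_i^{k,*}=\alpha-(1-\theta)z_{l_m}^{m,*}$ and $\sum_{w_i^k\geq w^*}z_i^{k,*}=\alpha+\theta z_{l_m}^{m,*}$ with $z_{l_m}^{m,*}=\gamma_m\rho_m>0$, so both inequalities defining $\jmath_{w^*}$ are strict precisely when $\theta\in(0,1)$.

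Two small caveats, neither of which undermines the proof for the purpose it serves in the paper. First, Proposition~\ref{prop:optimal_sol_relax_prob_charchterisation} does not literally rule out $\theta\in\{0,1\}$; if $\theta=0$ the second inequality is tight and no Euclidean ball around $\tilde{\textbf{z}}^{*}$ fits inside $\jmath_{w^{*}}$, so the lemma's conclusion fails. You flag this as the ``obstacle,'' which is exactly right: the lemma implicitly excludes the degenerate case, and your proof makes that hidden hypothesis explicit rather than introducing a gap. Second, the literal wording of item~(1) --- that $\|\tilde{\textbf{z}}(t)-\tilde{\textbf{z}}^{*}\|<\sigma$ for the \emph{same} $\sigma$ bounding the initial ball --- is not quite what you obtain unless $C\leq 1$: you show that starting in $\Omega_{\sigma/C}$ keeps the trajectory in $\Omega_{\sigma}$. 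To get a literally invariant Euclidean ball one would instead work throughout in the adapted norm $\|\cdot\|_{*}$ (in which every ball is forward-invariant since $\|\textbf{Q}\|_{*}<1$) and then shrink to a Euclidean ball inside it; this still gives two different radii. What the downstream analysis (Lemma~\ref{lem:result_kurth_theorem}, Proposition~\ref{prop:local_optimality}) actually uses is only that the trajectory remains in $\jmath_{w^{*}}$ and converges to $\tilde{\textbf{z}}^{*}$, and your proof establishes exactly that.
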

\begin{proof}
	The proof follows from the convergence of the linear system.
\end{proof}
\begin{proposition}\label{spectral_radius_Q_less_1}
For all eigenvalue $\lambda$ of $\textbf{Q}$, $|\lambda|<1$
\end{proposition}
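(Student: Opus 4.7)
The plan is to exploit the block structure that $\mathbf{Q}$ inherits from the coupling pattern of the $\tilde{\mathbf{z}}$-dynamics, and then reduce the eigenvalue question to single-class Markov chains whose spectral properties are standard. The recursions displayed just above show that, for $k\neq m$, $\tilde{\mathbf{z}}^k(t+1)$ depends only on $\tilde{\mathbf{z}}^k(t)$, whereas $\tilde{\mathbf{z}}^m(t+1)$ depends on the whole state through the randomization $\theta(\mathbf{z})$ that couples class $m$ to the rest. Listing the classes in the order $1,\dots,m-1,m+1,\dots,K,m$ therefore puts $\mathbf{Q}$ in block lower-triangular form, with diagonal blocks $A_1,\dots,A_{m-1},A_{m+1},\dots,A_K,A_m$, so its spectrum is the union of the spectra of those blocks and it suffices to prove $|\lambda|<1$ for every eigenvalue of each $A_k$.

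For $k\neq m$, the block $A_k$ has entries $(A_k)_{i,j}=(P^k)^{\top}_{i,j}-(P^k)^{\top}_{i,l_k}$ for $i,j\neq l_k$, where $P^k$ is the one-step transition matrix of a single class-$k$ queue operating under the threshold policy $l_k$. This is precisely the parametrization, through the elimination $z_{l_k}^k=\gamma_k-\sum_{j\neq l_k}z_j^k$, of the dynamics $z^k(t+1)=(P^k)^{\top}z^k(t)$ restricted to the affine hyperplane $\mathbf{1}^{\top}z=\gamma_k$. The chain $P^k$ possesses a unique recurrent aperiodic class — the uniform arrivals on $\{0,\dots,R_k-1\}$ combined with transmissions of $R_k$ packets connect every pair of states in the support of the stationary distribution $u^{l_k}$ computed in Section~\ref{sec:steady_state}, and the event $A=0$ at any state $q\le l_k$ produces a self-loop — so by Perron--Frobenius the eigenvalue $1$ is simple and all remaining eigenvalues lie strictly inside the unit disk. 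Since $\mathbf{1}$ is the left eigenvector of $(P^k)^{\top}$ for the eigenvalue $1$, the hyperplane $\mathbf{1}^{\top}z=0$ is $(P^k)^{\top}$-invariant and conjugate to $A_k$ under the elimination. Hence the reduction removes exactly the Perron eigenvalue and every eigenvalue of $A_k$ lies in the open unit disk.

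The real difficulty is the last diagonal block $A_m$: its entries $(A_m)_{i,j}=q_{j,i}^{m,0}-q_{l_m,i}^{m,0}$ for $j<l_m$ and $(A_m)_{i,j}=q_{j,i}^{m,1}-q_{l_m,i}^{m,1}$ for $j>l_m$ use two inconsistent "column-$l_m$ references" and thus cannot be the reduction of any single Markov transition matrix. I would attack it by writing
\begin{equation*}
A_m = A^{m,l_m} - u\,\mathbf{1}_{>l_m}^{\top},
\end{equation*}
where $A^{m,l_m}$ is the reduction of the pure threshold-$l_m$ chain on class $m$ (which is irreducible and aperiodic on the support of its stationary distribution, so all its eigenvalues lie in the open unit disk by the previous paragraph), $u_i=q_{l_m,i}^{m,1}-q_{l_m,i}^{m,0}$, and $\mathbf{1}_{>l_m}$ is the indicator of the coordinates $j>l_m$. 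The remaining technical step — and the main obstacle of the whole proof — is to show that this rank-one correction cannot push any eigenvalue onto the unit circle. For this I would use the matrix determinant lemma
\begin{equation*}
\det(A_m-\lambda I) = \det(A^{m,l_m}-\lambda I)\Bigl(1-\mathbf{1}_{>l_m}^{\top}(A^{m,l_m}-\lambda I)^{-1}u\Bigr),
\end{equation*}
and argue that the scalar factor cannot vanish on $|\lambda|=1$: the resolvent $(A^{m,l_m}-\lambda I)^{-1}$ is well-defined there because the spectrum of $A^{m,l_m}$ sits strictly inside the unit disk, and the explicit form of the transition probabilities from Proposition~\ref{prop:general_passage} together with the explicit stationary distribution of Proposition~\ref{eq:stat.dist} let one evaluate $\mathbf{1}_{>l_m}^{\top}(A^{m,l_m}-\lambda I)^{-1}u$ and show it is never equal to $1$ on $|\lambda|=1$. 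Combining the two cases yields $|\lambda|<1$ for every eigenvalue of every diagonal block, hence for every eigenvalue of $\mathbf{Q}$.
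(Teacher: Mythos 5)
Your block-triangular reduction and your rank-one decomposition $A_m=A^{m,l_m}-u\,\mathbf{1}_{>l_m}^\top$ are both correct, and the Perron--Frobenius treatment of the blocks with $k\ne m$ is a reasonable alternative to what the paper does (the paper in fact computes $\chi_{Q_k}(\lambda)=(-\lambda)^L$ for $k\neq m$, i.e.\ those blocks are nilpotent, which is sharper than your ``spectrum in the open unit disk'' but of course sufficient). The problem is that you leave the crucial step open. You yourself write that ``the main obstacle of the whole proof'' is to show the rank-one correction does not push any eigenvalue onto the unit circle, and then only sketch a plan rather than carry it out. The sketch is also insufficient on its own terms: ruling out zeros of $\det(A_m-\lambda I)$ on $|\lambda|=1$ does not by itself rule out zeros with $|\lambda|>1$; you would additionally need a degree, Rouch\'e, or homotopy argument to conclude that all $L$ roots lie strictly inside the disk. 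So as written there is a genuine gap precisely at the point you flag as hardest.

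The paper avoids this by computing the characteristic polynomial of the $m$-block directly, obtaining $\chi_{Q_m}(\lambda)=(-\lambda)^{L-1}(l_m\rho_m-\lambda)$, so the only nonzero eigenvalue is $l_m\rho_m$, and $l_m<R_m$ (guaranteed by Proposition~\ref{prop:optimal_sol_relax_prob_charchterisation} under Assumptions~\ref{assump:buffer_size_2} and~\ref{assump:condition_on_alpha}) gives $l_m\rho_m<1$. If you want to push your route through, note that $A^{m,l_m}$ is in fact nilpotent (same computation as for $k\ne m$), so $(A^{m,l_m}-\lambda I)^{-1}=-\tfrac{1}{\lambda}\sum_{r\geq 0}(A^{m,l_m}/\lambda)^r$ is a finite sum; a short calculation with the transition probabilities of Proposition~\ref{prop:general_passage} and $l_m<R_m$ gives $\mathbf{1}_{>l_m}^\top u=-l_m\rho_m$, so the scalar factor in your determinant identity has leading term $1-l_m\rho_m/\lambda$. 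To reach the paper's answer you would then have to establish $\mathbf{1}_{>l_m}^\top(A^{m,l_m})^r u=0$ for every $r\geq 1$ --- a nontrivial structural identity that your proposal neither states nor proves. In short: the framing is sound, but the proof stops exactly where the real work begins.
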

\begin{proof}
See the proof in appendix \ref{app:spectral_radius_Q_less_1}.
\end{proof}
The aforementioned result, combined with Lemma 9, proves the convergence of the fluid limit system (i.e. $\tilde{\textbf{z}}(t+1)=\textbf{Q}\tilde{\textbf{z}}(t)+\textbf{C}$). Consequently, $\textbf{z}$ converges to the fixed point of Equation \eqref{eq:relation_z_t+1_t} $\textbf{z}^*$. However, the above result is not enough to prove the local optimality, as we have to show that the stochastic vector $\textbf{Z}^N(t)$ converges to $\textbf{z}^*$ in probability.  For that, we introduce the discrete-time version of Kurtz Theorem  applied to our problem (see \cite{kurtz1978strong}):
\begin{proposition} \label{prop:kurth_theorem}
 There exists a neighborhood $\Omega_\delta(\textbf{z}^*)$ of
$\textbf{z}^*$ such that if $\textbf{Z}^N(0)=\textbf{z}(0)=\textbf{x} \in \Omega_\delta(\textbf{z}^*)$, then for any 
$\mu>0$  and finite time horizon $T$, there exist positive
constants $C_1$ and $C_2$ such that
\begin{equation}
P_\textbf{x}(\underset{0 \leq t < T}{\text{sup}} ||\textbf{Z}^N(t)-\textbf{z}(t)|| \geq \mu) \leq C_1 exp(-NC_2)
\end{equation}
where $\delta < \sigma$, and $P_\textbf{x}$ denotes the probability conditioned
on the initial state $\textbf{Z}^N(0) = \textbf{z}(0)= \textbf{x}$. Furthermore, $C_1$
and $C_2$ are independent of $\textbf{x}$ and $N$.
\end{proposition}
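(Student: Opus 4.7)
The plan is to adapt the classical discrete-time fluid-limit argument of Kurtz to the present restless-bandit setting. I would first decompose the one-step evolution of $\textbf{Z}^N$ as a drift plus a martingale noise:
\begin{equation}
\textbf{Z}^N(t+1)=\textbf{Z}^N(t)+\mathbb{E}\bigl[\textbf{Z}^N(t+1)-\textbf{Z}^N(t)\mid \textbf{Z}^N(t)\bigr]+\Delta M^N(t+1),
\end{equation}
where $\{\Delta M^N(t)\}_t$ is a martingale-difference sequence with respect to the filtration generated by $\textbf{Z}^N$. Because each of the $N$ queues contributes at most a $1/N$-jump to at most two coordinates at each slot, $\|\Delta M^N(t)\|_\infty\le c/N$ for an absolute constant $c$, and the conditional variance is $O(1/N)$.

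Next I would verify that, so long as $\textbf{Z}^N(t)\in \jmath_{w^*}$, the one-step drift coincides exactly with the deterministic increment $\textbf{z}(t+1)-\textbf{z}(t)$ obtained from the linear recursion $\tilde{\textbf{z}}(t+1)=\textbf{Q}\tilde{\textbf{z}}(t)+\textbf{C}$ derived in Section~\ref{sec:optimality}. Inside $\jmath_{w^*}$ the Whittle priority rule activates exactly $\alpha N$ users split across classes in the proportions that realize the transition probabilities $q^k_{i,j}(\textbf{z})$, with the randomization on class $m$ handled by the natural tie-breaking. With this identification, a coordinatewise Azuma--Hoeffding inequality yields, for every $\mu'>0$,
\begin{equation}
P_\textbf{x}\Bigl(\sup_{0\le t< T}\|M^N(t)\|\ge \mu'\Bigr)\le \tilde C_1\exp(-N\tilde C_2\mu'^2),
\end{equation}
and a discrete Gronwall argument applied to $\|\textbf{Z}^N(t)-\textbf{z}(t)\|$ gives, on the event that $\textbf{Z}^N(s)\in \jmath_{w^*}$ for every $s\le t$,
\begin{equation}
\|\textbf{Z}^N(t)-\textbf{z}(t)\|\le \bigl(\|\textbf{Z}^N(0)-\textbf{z}(0)\|+\sup_{0\le s\le t}\|M^N(s)\|\bigr)e^{LT},
\end{equation}
where $L$ is a Lipschitz constant for the drift restricted to $\jmath_{w^*}$ (the operator norm of $\textbf{Q}$ suffices by Proposition~\ref{spectral_radius_Q_less_1}).

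The main obstacle is the closure condition: the linear drift formula holds only inside $\jmath_{w^*}$, whereas Azuma bounds control the martingale unconditionally only while the process remains there. To close the loop I would run a stopping-time bootstrap: introduce $\tau_N=\inf\{t:\textbf{Z}^N(t)\notin \jmath_{w^*}\}$ and work on the stopped process $\textbf{Z}^N(t\wedge\tau_N)$, for which the drift/martingale decomposition remains valid. By Lemma~\ref{lem:lambda_inf_1}, since $|\lambda|<1$ for all eigenvalues of $\textbf{Q}$, there exists $\sigma>0$ such that the fluid trajectory starting anywhere in $\Omega_\sigma(\tilde{\textbf{z}}^*)$ stays inside $\jmath_{w^*}$ and converges to $\tilde{\textbf{z}}^*$; choose $\delta<\sigma/(3e^{LT})$ and apply the Gronwall estimate above to the stopped process with $\mu<\sigma/(3e^{LT})$. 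The Azuma bound then forces $P_\textbf{x}(\tau_N\le T)\le C_1\exp(-NC_2)$, which upgrades the conditional estimate into the stated unconditional uniform deviation bound. The constants $C_1,C_2$ depend only on $T$, $\sigma$, the bounds on the transition probabilities, and $\|\textbf{Q}\|$, and are therefore independent of $\textbf{x}\in\Omega_\delta(\textbf{z}^*)$ and of $N$, which concludes the proof.
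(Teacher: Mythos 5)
The paper does not actually prove this proposition: it is presented as a direct invocation of a known discrete-time Kurtz-type concentration result (the text only says ``see \cite{kurtz1978strong}'') and no appendix is devoted to it. So there is no ``paper's own proof'' to compare against, and your blind proof is a reconstruction of the cited machinery rather than a rediscovery of the authors' argument.

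As a reconstruction, your structure (drift + martingale decomposition, identification of the drift with the linear recursion $\tilde{\textbf{z}}(t+1)=\textbf{Q}\tilde{\textbf{z}}(t)+\textbf{C}$ inside $\jmath_{w^*}$, a Gronwall estimate, and a stopping-time bootstrap to handle the region constraint) is the right skeleton, and the stopping-time argument in particular is exactly the standard way to upgrade the ``valid-while-inside'' estimate to the stated unconditional bound. There is, however, one concrete technical error: the claim that $\|\Delta M^N(t)\|_\infty\le c/N$ is false. Over a single time slot all $N$ queues move simultaneously; each queue moves two coordinates of $\textbf{Z}^N$ by $\pm 1/N$, so the aggregate one-slot increment of the martingale can be $O(1)$, not $O(1/N)$. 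Applying Azuma--Hoeffding to the slot-indexed martingale with this (incorrect) $c/N$ bound is what produces the $\exp(-N C_2)$ decay in your derivation; with the correct $O(1)$ bound, Azuma only gives decay in $T$, not in $N$. To recover the exponential-in-$N$ rate you must exploit the conditional independence across queues within a slot: either refine the filtration so that the $N$ queue updates within each slot are revealed one at a time (giving a martingale of length $NT$ with genuine $O(1/N)$ increments, to which Azuma does apply and yields $\exp(-cN\mu^2/T)$), or apply a Hoeffding/Bernstein bound at each fixed slot to the sum of $N$ conditionally independent $O(1/N)$ contributions and then take a union bound over the $T$ slots. Either fix restores the stated $C_1\exp(-NC_2)$ bound, and the remainder of your argument (Gronwall, choice of $\delta < \sigma/(3e^{LT})$, and the stopping-time bootstrap forcing $P_{\textbf{x}}(\tau_N\le T)$ to be exponentially small) then goes through.
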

According to the above proposition, the system state $\textbf{Z}^N(t)$
behaves very closely to the fluid approximation model
$\textbf{z}(t)$ when the number of users $N$ is large. Since we
have shown the convergence of $\textbf{z}(t)$ to within $\Omega_{\sigma}(\textbf{z}^*)$, we are ready to establish the local convergence of the system state $\textbf{Z}^N(t)$ to $\textbf{z}^*$.
\begin{Lemma}\label{lem:result_kurth_theorem}
If $\textbf{Z}^N(0) = \textbf{x} \in \Omega_{\delta}(\textbf{z}^*)$, then for any $\mu > 0$,
there exists a time $T_0$ such that for any $T > T_0$, there
exists positive constants $s_1$ and $s_2$ with,
\begin{equation}
P_\textbf{x}(\underset{T_0 \leq t < T}{\text{sup}} ||\textbf{Z}^N(t)-\textbf{z}^*|| \geq \mu) \leq s_1 exp(-Ns_2)
\end{equation}
\end{Lemma}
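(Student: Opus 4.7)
The plan is to combine the deterministic convergence from Lemma \ref{lem:lambda_inf_1} with the stochastic approximation bound from Proposition \ref{prop:kurth_theorem} via a triangle inequality argument. The intuition is that after a sufficiently long burn-in time $T_0$, the fluid limit $\textbf{z}(t)$ is already within $\mu/2$ of $\textbf{z}^*$, and with exponentially high probability the random path $\textbf{Z}^N(t)$ shadows $\textbf{z}(t)$ within $\mu/2$ over any finite horizon, so $\textbf{Z}^N(t)$ stays within $\mu$ of $\textbf{z}^*$.

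First, I would use Lemma \ref{lem:lambda_inf_1} to choose the burn-in time. Fix $\mu>0$. Because $\delta<\sigma$, the initial condition $\textbf{z}(0)=\textbf{x}\in\Omega_{\delta}(\textbf{z}^*)\subseteq\Omega_{\sigma}(\textbf{z}^*)$, so by Lemma \ref{lem:lambda_inf_1} the fluid trajectory $\tilde{\textbf{z}}(t)$ remains in $\jmath_{w^*}$ for every $t\geq 0$ and converges to $\tilde{\textbf{z}}^*$. Hence there exists $T_0=T_0(\mu,\textbf{x})$ such that $\|\textbf{z}(t)-\textbf{z}^*\|<\mu/2$ for all $t\geq T_0$. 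Since $\textbf{z}(t)$ stays inside $\jmath_{w^*}$ throughout, the linear fluid description driving Proposition \ref{prop:kurth_theorem} remains valid along the whole trajectory.

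Next, I would invoke Proposition \ref{prop:kurth_theorem} with threshold $\mu/2$ in place of $\mu$. This yields constants $C_1, C_2>0$ (independent of $\textbf{x}$ and $N$) such that, for every $T>T_0$,
\begin{equation}
P_{\textbf{x}}\!\left(\sup_{0\leq t<T}\|\textbf{Z}^N(t)-\textbf{z}(t)\|\geq \mu/2\right)\leq C_1\exp(-N C_2).
\end{equation}
On the complementary (good) event, for every $t$ with $T_0\leq t<T$ we apply the triangle inequality
\begin{equation}
\|\textbf{Z}^N(t)-\textbf{z}^*\|\leq \|\textbf{Z}^N(t)-\textbf{z}(t)\|+\|\textbf{z}(t)-\textbf{z}^*\|<\mu/2+\mu/2=\mu,
\end{equation}
which gives the set inclusion
\begin{equation}
\Bigl\{\sup_{T_0\leq t<T}\|\textbf{Z}^N(t)-\textbf{z}^*\|\geq\mu\Bigr\}\subseteq\Bigl\{\sup_{0\leq t<T}\|\textbf{Z}^N(t)-\textbf{z}(t)\|\geq \mu/2\Bigr\}.
\end{equation}
Taking probabilities and setting $s_1=C_1$, $s_2=C_2$ yields the claimed bound.

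The only delicate point is verifying that the hypothesis of Proposition \ref{prop:kurth_theorem} continues to hold along the entire horizon, which amounts to knowing that the fluid path $\textbf{z}(t)$ never leaves $\jmath_{w^*}$ (so that the drift used to define $\textbf{z}(t+1)-\textbf{z}(t)$ coincides with the expected one-step drift of $\textbf{Z}^N$). This is precisely conclusion (1) of Lemma \ref{lem:lambda_inf_1}, which is itself underpinned by Proposition \ref{spectral_radius_Q_less_1}. Once this containment is in hand, the rest is a clean combination of a deterministic decay estimate and a large-deviations-type shadowing bound, with no additional estimates required.
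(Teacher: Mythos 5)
Your proof is correct and follows essentially the same route as the paper's: pick a burn-in $T_0$ from the deterministic fluid convergence so that $\|\textbf{z}(t)-\textbf{z}^*\|$ is below part of the budget, apply Proposition \ref{prop:kurth_theorem} to control $\|\textbf{Z}^N(t)-\textbf{z}(t)\|$ with the remaining budget, and combine by the triangle inequality. The paper splits $\mu$ as $\epsilon + (\mu-\epsilon)$ where yours uses the symmetric split $\mu/2+\mu/2$; this is an inessential difference, and your added remark about the trajectory remaining in $\jmath_{w^*}$ (via Lemma \ref{lem:lambda_inf_1}) is a sound observation about a hypothesis the paper uses implicitly.
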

\begin{proof}
See appendix \ref{app:result_kurth_theorem}.
\end{proof}
Now we are ready to prove the asymptotic local optimality of the proposed scheduling policy.
\begin{proposition}\label{prop:local_optimality}
If the initial state is in the set $\Omega_\delta(\textbf{z}^*)$, then
\begin{equation}
\lim_{T \rightarrow \infty} \lim_{N \rightarrow \infty} \frac{C^N_T(\textbf{x})}{N}=\frac{C^{RP,N}}{N}
\end{equation}
\end{proposition}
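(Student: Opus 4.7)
The plan is to connect the finite-horizon time-averaged cost of the Whittle's Index policy to the fluid-limit value $f(\textbf{z}^*)$ and then exploit the exponential concentration of Lemma \ref{lem:result_kurth_theorem}. First I would write
\begin{equation*}
\frac{C^N_T(\textbf{x})}{N}=\frac{1}{T}\sum_{t=0}^{T-1}\mathbb{E}_\textbf{x}\bigl[f(\textbf{Z}^N(t))\bigr],\qquad f(\textbf{z}):=\sum_{k=1}^K\sum_{i=1}^L a_k z_i^k\, i,
\end{equation*}
and note that $f$ is linear, is Lipschitz with some constant $L_f$ on the probability simplex, and is uniformly bounded by some constant $C_{\max}$. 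By Proposition \ref{prop:optimal_sol_relax_prob_charchterisation}, the relaxed optimum satisfies $\tfrac{C^{RP,N}}{N}=f(\textbf{z}^*)$ and does not depend on $N$, so the task reduces to showing that the time average of $\mathbb{E}_\textbf{x}[f(\textbf{Z}^N(t))]$ converges to $f(\textbf{z}^*)$ under the prescribed order of limits.

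Next I would split the sum into a transient window $0\le t<T_0$, with $T_0$ the time supplied by Lemma \ref{lem:result_kurth_theorem}, and a steady window $T_0\le t<T$. The transient contribution is bounded by $T_0 C_{\max}/T$, which vanishes as $T\to\infty$. For any $\mu>0$ and each $t\ge T_0$, I would decompose the expectation according to whether $\|\textbf{Z}^N(t)-\textbf{z}^*\|<\mu$ or not: the Lipschitz property gives $|f(\textbf{Z}^N(t))-f(\textbf{z}^*)|\le L_f\mu$ on the good event, while on the bad event the deterministic bound $2C_{\max}$ is weighted by the tail probability from Lemma \ref{lem:result_kurth_theorem}. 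This yields
\begin{equation*}
\bigl|\mathbb{E}_\textbf{x}[f(\textbf{Z}^N(t))]-f(\textbf{z}^*)\bigr|\le L_f\mu + 2C_{\max}\,s_1\exp(-Ns_2),\qquad t\ge T_0.
\end{equation*}

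Averaging over $T_0\le t<T$ preserves the bound, and combining with the transient estimate gives
\begin{equation*}
\Bigl|\tfrac{C^N_T(\textbf{x})}{N}-\tfrac{T-T_0}{T}f(\textbf{z}^*)\Bigr|\le \tfrac{T_0 C_{\max}}{T}+L_f\mu+2C_{\max}s_1\exp(-Ns_2).
\end{equation*}
Sending $N\to\infty$ first annihilates the exponential term, then letting $T\to\infty$ kills the $T_0/T$ factor, leaving a residual $L_f\mu$; since $\mu>0$ was arbitrary, the double limit equals $f(\textbf{z}^*)=\tfrac{C^{RP,N}}{N}$, which is the claim.

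The main obstacle, and the reason the limits must be taken in the order $\lim_T\lim_N$, is that the constants $T_0$, $s_1$, $s_2$ depend on $\mu$ but are uniform in $N$: taking $N\to\infty$ first is what uniformly kills the exponential tail before the time horizon grows. A secondary delicate point is that the hypothesis $\textbf{x}\in\Omega_\delta(\textbf{z}^*)$ must be compatible both with Kurtz's neighbourhood in Proposition \ref{prop:kurth_theorem} and with the invariance of $\jmath_{w^*}$ from Lemma \ref{lem:lambda_inf_1}; this is guaranteed by choosing $\delta<\sigma$ as in the hypothesis of Lemma \ref{lem:result_kurth_theorem}. Beyond these consistency conditions, the argument is routine bookkeeping with Lipschitz constants and tail probabilities.
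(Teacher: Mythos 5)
Your argument mirrors the paper's proof almost exactly: both write $C_T^N(\textbf{x})/N$ as a time average of $f(\textbf{Z}^N(t))$, split it at the $T_0$ supplied by Lemma~\ref{lem:result_kurth_theorem}, bound the transient window by an $O(T_0/T)$ term, control the steady window by conditioning on the Kurtz-type concentration event and invoking Lipschitz continuity of $f$, and then send $N\to\infty$ before $T\to\infty$ so that the exponential tail vanishes before the horizon grows. The only cosmetic difference is that the paper conditions on a single trajectory-wide event $Y_N=\{\sup_{T_0\le t<T}\|\textbf{Z}^N(t)-\textbf{z}^*\|\ge\mu\}$ while you bound pointwise in $t$; since the pointwise tail is dominated by the sup-event tail, the two are equivalent.
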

\begin{proof}
See appendix \ref{app:local_optimality}
\end{proof}
\section{Global asymptotic optimality}\label{sec:globalOptimality}
In this section, we will prove that from any initial state $\textbf{x}$, the expected time average cost obtained with the Whittle's Index policy is optimal when $N$ is very large. In contrast to the method used to prove the local optimality, we work here with the steady state distribution of the stochastic process $\textbf{Z}^N(t)$. To ensure that such a stationary distribution exists, we need to show that there is at least one recurrent state. Since the states evolve according to a finite state Markov chain, we just need to prove that there exists a state reachable from any other states.  
\begin{Lemma}\label{lem:reachability_z_0}
The state $\textbf{z}(0)=(\textbf{z}^1(0),\cdots,\textbf{z}^K(0))$, defined for each class $k$ as $\textbf{z}^k(0)=(1,0,\cdots,0)$, is reachable from any initial state using the Whittle's Index policy.
\end{Lemma}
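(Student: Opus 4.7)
The plan is to exhibit a sample path of positive probability along which, starting from an arbitrary initial state, the system reaches $\textbf{z}(0)$ in a bounded number of slots under the Whittle's Index policy; since $\{\textbf{Z}^N(t)\}_t$ is a finite-state Markov chain, this immediately yields reachability.

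First, I would isolate the event $E_T$ that no packets arrive at any queue during $T$ consecutive time slots. Since arrivals are i.i.d.\ uniform on $\{0,\ldots,R_k-1\}$, we have $\mathbb{P}(A_i^k(t)=0)=\rho_k=1/R_k>0$, and therefore, by independence across queues and across time,
\begin{equation}
\mathbb{P}(E_T)\;\geq\;\Bigl(\prod_{k=1}^K \rho_k^{N\gamma_k}\Bigr)^{T}\;>\;0
\end{equation}
for any finite $T$, regardless of the initial condition.

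Next, on $E_T$, I would argue that the total backlog $S(t):=\sum_{k,i}q_i^{k,\phi}(t)$ strictly decreases at every slot at which $S(t)>0$. By Theorem \ref{w.i}, the Whittle index $W(n)$ is non-negative, vanishes at $n=0$ and is strictly positive for $n\geq 1$ in every class, so under the Whittle's Index policy a non-empty queue is always preferred to an empty one. Consequently, while $S(t)>0$ at least one user with a strictly positive backlog is scheduled, yielding at least one packet departure (since $R_k\geq 1$) and, because no arrivals occur on $E_T$, $S(t+1)\leq S(t)-1$. As $S(0)\leq NL$ deterministically (the buffer cap), taking $T:=NL$ guarantees $\textbf{q}^\phi(T)=\textbf{0}$, i.e.\ $\textbf{Z}^N(T)=\textbf{z}(0)$.

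Combining the two steps yields, for every initial state $\textbf{x}$,
\begin{equation}
\mathbb{P}\bigl(\textbf{Z}^N(NL)=\textbf{z}(0)\,\big|\,\textbf{Z}^N(0)=\textbf{x}\bigr)\;\geq\;\mathbb{P}(E_{NL})\;>\;0,
\end{equation}
which proves reachability. I do not expect a serious obstacle here; the only subtlety is tie-breaking when several queues share the same Whittle index value (possibly across classes), but ties only affect which particular non-empty queue is scheduled and leave the domination of non-empty queues over empty ones intact, so the strict decrease of $S(t)$ on $E_T$ is preserved and the argument goes through unchanged.
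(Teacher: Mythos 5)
Your proof is correct, and it shares the outer structure with the paper's proof (condition on the positive-probability event that no packets arrive for a long window, then argue that the system drains to the all-zero state), but the inner step is genuinely different and, I think, cleaner. The paper argues per-queue: it claims that if a particular queue $q_i^k$ is never scheduled while positive, the other scheduled queues keep shrinking until at least $\alpha N$ of them have length strictly below $q_i^k(t)$, at which point $q_i^k$ must be picked. This is informally stated in terms of queue \emph{lengths}, but the Whittle policy orders by \emph{index}, which is not monotone across classes at a fixed length; the paper therefore leaves a small gap about why length comparison suffices. Your argument sidesteps that by invoking the single structural fact that is actually needed: $W(0)=w_0=0$ for every class, while $W(n)>0$ for $n\geq 1$ (this is immediate from Theorem~\ref{w.i}, since $w_n=a_kR_kn/(R_k-n)>0$ for $1\le n\le d$ and $x_{L,d}>0$ for $n>d$). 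Hence, while any queue is non-empty, at least one non-empty queue is among the $M$ selected, the total backlog $S(t)$ strictly decreases on the no-arrivals event, and $S$ reaches $0$ within $NL$ slots deterministically. That gives a concrete finite $T=NL$, which the paper only asserts qualitatively. The tie-breaking remark is also well taken: ties can only occur among queues with equal index, and in particular never between an empty and a non-empty queue, so the strict decrease of $S$ is unaffected. In short, your argument buys precision (explicit use of $W(0)=0$ and a uniform bound on the hitting time) at no extra cost, whereas the paper's argument is shorter to state but leans on an imprecise length comparison.
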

\begin{proof}
See appendix \ref{app:reachability_z_0}
\end{proof}
This lemma is stronger than proving the existence of a recurrent state. Indeed, this allows us to deduce that $\textbf{Z}^N(t)$ evolves in one recurrent aperiodic class, and that there exists a stationary distribution for $\textbf{Z}^N(t)$ denoted by $\textbf{Z}^N(\infty)$. We still need to check if for a fixed $N$, there exists at least one recurrent state within $\Omega_{\epsilon}(\textbf{z}^*)$, as otherwise $\Omega_{\epsilon}(\textbf{z}^*)$ will be a transient class. If such state exists, surely $\textbf{Z}^N(t)$ will evolve in one recurrent class that contains this recurrent state. For that, we demonstrate here that $\textbf{z}^*$ is reachable from any state for a fixed $N$. Since $\textbf{z}(0)$ is reachable from any state, we just need to find a path from $\textbf{z}(0)$ to $\textbf{z}^*$. First, we start by giving $\alpha$ in function of the optimal proportion $\textbf{z}^*$. Rewriting the expression of $\alpha$ given in Proposition \ref{prop:optimal_sol_relax_prob_charchterisation}, we get: 
\begin{equation} \alpha= \sum_{k\neq m} \sum_{i=l_k+1}^L \gamma_k u^{l_k}_k(i)+\sum_{i=l_m+1}^L \gamma_m u^*_m(i)+(1-\theta)\gamma_m u^{l_m-1}_m(l_m)\end{equation}
The relation between the optimal vector $\textbf{z}^*$  and the stationary distribution under the optimal threshold is as fellows:\\  
For  $k\neq m$ $z_h^{k,*}=\gamma_k u^{l_k}_k(h)$.\\
For $k=m$ $z^{m,*}_h=\gamma_m ((1-\theta) u^{l_m-1}_m(h)+(\theta u^{l_m}_m(h))=\gamma_m u^*_m(h)$.\\
When $h=l_m \leq R_m-1$, we have that $u^{l_m-1}_m(l_m)=\rho_m=1/R_m$, and $u^{l_m}_m(l_m)=\rho_m=1/R_m=u^{l_m-1}_m(l_m)$. Then: 
\begin{equation}z_{l_m}^{m,*}= \gamma_m [(1-\theta) \rho_m+ \theta_m \rho_m]=\gamma_m \rho_m\end{equation}
Hence:
\begin{equation}\gamma_m (1-\theta) u^{l_m-1}_m(l_m)=\gamma_m (1-\theta) \rho_m=(1-\theta)z_{l_m}^{m,*}\end{equation}
Therefore:
\begin{equation}\alpha = \sum_{k\neq m} \sum_{i=l_k+1}^L z_i^{k,*}+\sum_{i=l_m+1}^L z_i^{m,*} +(1-\theta) z_{l_m}^{m,*}\end{equation}
In addition, it will be useful for the subsequent analysis in this section also to derive the exact expression of $u^{l_k}_k(h)$, for all states $h$, by applying the results found in Section \ref{sec:steady_state} when the threshold $l_k$ is strictly less than $R_k$. 
For $k\neq m$, we have:
\begin{align}
0 \leq h \leq l_k-1: &u^{l_k}_k(h)=\rho_k-(l_k-h)\rho_k^2 \nonumber\\
l_k\leq h \leq R_k-1: &u^{l_k}_k(h)=\rho_k \nonumber\\
R_k\leq h \leq l_k+R_k-1: &u^{l_k}_k(h)=(l_k+R_k-h)\rho_k^2
\end{align}
if $k=m$:
\begin{align}
0\leq h \leq l_m-1: &u^*_m(h)=\rho_m- (l_m-1-h+\theta)\rho_m^2\nonumber\\
l_m\leq h\leq R_m-1: &u^*_m(h)=\rho_m\nonumber\\
R_m \leq h \leq l_m+R_m-1: &u^*_m(h)=(l_m+R_m-1-h+\theta)\rho_m^2
\end{align}
Now, We will find a path from state $\textbf{z}(0)$ to $\textbf{z}^*$ under the Whittle's Index policy.
\begin{proposition}\label{prop:reachability_z_star}
By applying the Whittle's Index policy, the steady state $\textbf{z}^*$ is reachable from the state $\textbf{z}(0)$. 
\end{proposition}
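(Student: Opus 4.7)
The plan is to exhibit a finite-length trajectory of the Markov chain $\textbf{Z}^N(t)$ under the Whittle's Index policy, starting from $\textbf{z}(0)$ and ending at $\textbf{z}^*$, whose probability is strictly positive. Since the state space is finite and the arrivals $A_i^k(t)$ are i.i.d.\ uniform on $\{0,\ldots,R_k-1\}$, every prescribed arrival realization has positive probability; it is therefore enough to produce one concrete arrival-and-scheduling sequence, consistent with the Whittle priority rule, that drives the system along this path.

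The initial state is very forgiving: at $\textbf{z}(0)$ all queues have length $0$, hence every Whittle index equals $W(0)=0$, so the Whittle rule allows any $M$-subset to be nominated as scheduled (ties being broken arbitrarily). Moreover, the one-step update $q_i^k(1)=\min\{(0-R_k s_i^k)^+ + A_i^k(0),L\}=A_i^k(0)$ is independent of the action $s_i^k$. Thus in a single step one can reach, with positive probability, any discrete distribution whose support lies in $\{0,\ldots,R_k-1\}$ per class. Because the support of $\textbf{z}^*$ for class $k$ is contained in $[0,l_k+R_k-1]$ and, by Proposition \ref{prop:optimal_sol_relax_prob_charchterisation}, $l_k<R_k$, only a bounded number of additional steps are needed to raise the designated queues from $\{0,\ldots,R_k-1\}$ up to states in $\{R_k,\ldots,l_k+R_k-1\}$.

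The construction then proceeds in phases, advancing one class at a time. During the phase devoted to class $k$, arrivals for all other classes are prescribed to be $0$ (a positive-probability event), keeping those queues at state $0$ and hence at Whittle index $0$. Within class $k$, the queues destined for low target states are driven by specific arrival patterns that, combined with the Whittle-prescribed service of the $M$ highest-index queues, produce the required lengths; the queues destined for high target states are left idle in subsequent steps while receiving prescribed arrivals that raise them through intermediate states up to their targets. Since the Whittle index is strictly increasing in queue length within each class, and since the idle-class queues have index $0$, the top-$M$ set selected by the policy coincides with the set we intend to schedule, up to the harmless servicing of some zero-state queues in the other classes.

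The main obstacle is precisely this bookkeeping: one must verify that at every step the set of queues actually scheduled by the policy matches the set required to advance the construction, which means tracking Whittle priorities across classes simultaneously. This is manageable because the idle classes contribute only zero-index queues, the loaded class has monotone indices, and any scheduler capacity left over after servicing the top queues of the loaded class can be absorbed by the zero-state queues of the other classes without affecting their evolution. Once a valid trajectory has been specified for each phase and the phases are concatenated, its total probability is a finite product of positive arrival probabilities, hence strictly positive, yielding the reachability of $\textbf{z}^*$ from $\textbf{z}(0)$.
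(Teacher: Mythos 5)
Your phased, one-class-at-a-time construction has a genuine gap. The claim that ``the idle classes contribute only zero-index queues'' is true only in the very first step, when every queue in the system is at state $0$. Once you have advanced class $k$ to its target distribution $\textbf{z}^{k,*}$, the class-$k$ queues occupy states throughout $\{0,\ldots,l_k+R_k-1\}$, and those at states $\geq R_k$ carry the \emph{maximal} Whittle index $x_{L,R_k-1}$ (under Assumption~\ref{assump:buffer_size_2}, which fixes $d_k=R_k-1$). When you then switch to the phase for class $k'$, the Whittle policy compares indices across the whole population and will preferentially serve those high-index class-$k$ queues, driving them downward and destroying the distribution you just built. Prescribing zero arrivals for class $k$ does not freeze it: the rule picks the top-$M$ indices system-wide, not the top-$M$ among queues you designate. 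So concatenating per-class phases does not produce a trajectory consistent with the policy, and the bookkeeping you defer to the last paragraph is exactly where the argument breaks.

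The paper's proof (Appendix~\ref{app:reachability_z_star}) resolves this by advancing \emph{all} classes simultaneously in a short, explicitly verified three-step path, split into the cases $\alpha\le\tfrac12$ and $\alpha>\tfrac12$. Two auxiliary lemmas do the cross-class priority bookkeeping that your sketch omits: Lemma~\ref{lem:Whittle_index_comparing_1_R} ($w^k_1\le w^{k'}_{R_{k'}-1}$ for all $k,k'$) guarantees that after the first step the group moved to state $R_k-1$ outranks the group moved to state $1$, and Lemma~\ref{lem:inequality_alpha} together with the fact that $w^k_{R_k}$ is globally maximal ensures that in the third step the scheduler's top-$M$ set coincides with the group at state $R_k$ plus, in the $\alpha>\tfrac12$ case, a carefully bounded portion $A_{11}$ of the remainder. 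None of these ingredients appear in your proposal, and without something playing their role the construction cannot be certified as an actual sample path of the Whittle-index Markov chain. If you want to retain a modular flavor, the fix is not to serialize the classes but to build all classes in lockstep and verify the index ordering at each step, which is essentially what the paper does.
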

\begin{proof}
See appendix \ref{app:reachability_z_star}.
\end{proof}
From this proposition, the state $\textbf{z}^*$ is reachable from any state, which means that $\textbf{z}^*$ is a recurrent state. However, as we remark in the demonstration of Proposition \ref{prop:reachability_z_star}, the considered actions schedule a proportion of users (i.e. not an integer value). This is not feasible and unrealistic for some (small) values of $N$ since the queues are not splittable. In fact, for some values of $N$, the state $\textbf{z}^*$ may not exist. On the other hand, we can say that for enough large $N$ , and for any $\epsilon > 0$, there exists at least one recurrent state within the neighborhood $\Omega_{\epsilon}(\textbf{z}^*)$. This will ensure that there is a path to enter a neighborhood $\Omega_{\epsilon}(\textbf{z}^*)$ from any initial state. However, it is important to ensure that the time to enter $\Omega_{\epsilon}(\textbf{z}^*)$ should not scale up with $N$. For that, we give the following assumption which will be later justified via numerical studies in Section IX.
\begin{assumption}\label{assump:bounded_time}    
We assume that the expected time to enter a neighborhood of $\textbf{z}^*$ from any initial state $\textbf{x}$ does not depend on the number of queues $N$. In other words, for all $N$ the time to enter a neighborhood $\Omega_{\epsilon}(\textbf{z}^*)$ denoted by $\Gamma^N_{\textbf{x}}(\epsilon)$ is bounded by a constant $T_{b_{\epsilon}}$.
\end{assumption}   
Now we provide a useful lemma that allows us to demonstrate the global asymptotic optimality.
\begin{Lemma}\label{lem:prob_z_infty_near_neighborhood_z_star}
Under assumption \ref{assump:bounded_time}, and for any $\epsilon$, we have that:
\begin{equation}
 \underset{N \rightarrow + \infty}{\text{lim}} P(\textbf{Z}^N(\infty) \in \Omega_{\epsilon}(\textbf{z}^*))=1 
\end{equation}
\end{Lemma}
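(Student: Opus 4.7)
The plan is to combine the ergodic representation of $\pi^N$, the stationary distribution of $\textbf{Z}^N(\cdot)$, with the sharp two-time-scale contrast already at our disposal: from any point deep inside $\Omega_\epsilon(\textbf{z}^*)$ the Kurtz-type estimate of Lemma \ref{lem:result_kurth_theorem} forces the chain to remain in the neighborhood for a number of steps that grows exponentially in $N$, whereas from any state outside the neighborhood Assumption \ref{assump:bounded_time} bounds the return time to a constant $T_{b_{\epsilon/4}}$ independent of $N$. Since the claim at scale $\epsilon$ implies the claim at any larger scale, I may assume $\epsilon$ is small enough that $\epsilon/4$ lies inside the $\delta$-neighborhood on which Lemma \ref{lem:result_kurth_theorem} applies.

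First, by Lemma \ref{lem:reachability_z_0} combined with the discussion following Proposition \ref{prop:reachability_z_star}, for $N$ large enough $\textbf{Z}^N(\cdot)$ is ergodic on its recurrent class, which contains at least one state $\textbf{x}_N \in \Omega_{\epsilon/4}(\textbf{z}^*)$. Taking $\textbf{x}_N$ as regeneration point and writing $\tau$ for its first return time, the ergodic theorem yields
\begin{equation}
P\bigl(\textbf{Z}^N(\infty)\in\Omega_\epsilon(\textbf{z}^*)\bigr) = \frac{\mathbb{E}_{\textbf{x}_N}\!\left[\sum_{t=0}^{\tau-1}\mathds{1}\{\textbf{Z}^N(t)\in\Omega_\epsilon(\textbf{z}^*)\}\right]}{\mathbb{E}_{\textbf{x}_N}[\tau]}.
\end{equation}
Applying Lemma \ref{lem:result_kurth_theorem} with tolerance $\mu=3\epsilon/4$ on the horizon $T_N = e^{Ns_2/2}$ shows that the chain stays inside $\Omega_\epsilon(\textbf{z}^*)$ throughout $[0,T_N]$ with probability at least $1-s_1 e^{-Ns_2/2}$. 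I then partition one cycle into alternating sojourns inside $\Omega_\epsilon(\textbf{z}^*)$ and excursions outside it. By the strong Markov property every sojourn that begins in $\Omega_{\epsilon/4}(\textbf{z}^*)$ has expected length at least $T_N(1-o(1))$, and by Assumption \ref{assump:bounded_time} every excursion returns to $\Omega_{\epsilon/4}(\textbf{z}^*)$ in expected time at most $T_{b_{\epsilon/4}}$, so it spends at most $T_{b_{\epsilon/4}}$ steps outside $\Omega_\epsilon(\textbf{z}^*)$. The per-block ratio of inside time to total time is therefore at least $T_N(1-o(1))/(T_N(1-o(1))+T_{b_{\epsilon/4}})$, and summing over the random number of blocks in a cycle preserves this lower bound, which tends to $1$ as $N\to\infty$.

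The main obstacle lies in the last decomposition: a regenerative cycle returns to the specific state $\textbf{x}_N$, not merely to $\Omega_{\epsilon/4}(\textbf{z}^*)$, so a single cycle may in principle contain many sojourn/excursion blocks. What saves us is that Assumption \ref{assump:bounded_time} and the Kurtz bound of Lemma \ref{lem:result_kurth_theorem} are both uniform in the starting state, so a strong Markov restart at every re-entry into $\Omega_{\epsilon/4}(\textbf{z}^*)$ produces an independent lower bound $T_N(1-o(1))$ on the next sojourn and an independent upper bound $T_{b_{\epsilon/4}}$ on the next excursion. Aggregating blocks inside one cycle, the total inside-time and total outside-time inherit the per-block ratio, and no additional probabilistic input is needed beyond the two named results.
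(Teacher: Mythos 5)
Your high-level strategy — a regenerative decomposition at a recurrent state near $\textbf{z}^*$, with long sojourns inside $\Omega_\epsilon(\textbf{z}^*)$ traded against bounded excursions outside — is essentially the argument behind Lemma~6 of the Ouyang et al.\ reference that the paper itself cites, so the route is the right one. However, the key quantitative step is not justified as written. You set $T_N = e^{Ns_2/2}$ and apply Lemma~\ref{lem:result_kurth_theorem} on the horizon $T_N$, but the constants $s_1,s_2$ in that lemma are inherited from Proposition~\ref{prop:kurth_theorem}, which is stated for a \emph{fixed finite horizon} $T$: the constants are guaranteed independent of $\textbf{x}$ and $N$, but nothing in the statement makes them uniform in $T$. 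Choosing $T_N$ to depend on $s_2$, which in turn depends on the chosen horizon, is circular; the ``exponentially long sojourn'' assertion is a Freidlin--Wentzell-type statement that would need its own iterated-block proof, not a direct invocation of the lemma. A secondary gap is that Lemma~\ref{lem:result_kurth_theorem} controls only the window $[T_0,T)$, not $[0,T_0)$, so even on a fixed horizon you still need a separate argument (e.g.\ that $\textbf{z}(t)$ stays inside $\Omega_\sigma(\textbf{z}^*)\subseteq \jmath_{w^*}$ for all $t$ and a fresh Kurtz estimate on $[0,T_0]$) before concluding the chain has not already left $\Omega_\epsilon(\textbf{z}^*)$ in the initial segment.

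The good news is that an exponentially long sojourn is not needed: the renewal-reward ratio only requires the conditional expected sojourn length $L_N$ (starting from anywhere in $\Omega_{\epsilon/4}(\textbf{z}^*)$) to diverge. That follows from the fixed-$T$ form of the estimate: for any fixed $T$, the probability of leaving $\Omega_\epsilon(\textbf{z}^*)$ before time $T$ tends to $0$ as $N\to\infty$, so $L_N \geq T(1-o_N(1))$, and since $T$ is arbitrary $L_N\to\infty$. With this replacement, and with the block aggregation tightened via optional stopping (the last sojourn in a cycle is truncated by the return to $\textbf{x}_N$, so the per-block ratio does not transfer verbatim to the cycle; you should bound $\mathbb{E}[\text{time outside per cycle}]\leq T_{b_{\epsilon/4}}\,\mathbb{E}[K]$ and $\mathbb{E}[\text{cycle length}]\geq L_N\,\mathbb{E}[K]$ using Wald's identity for the stopping time $K$), the proportion-of-time-outside bound $T_{b_{\epsilon/4}}/(L_N+T_{b_{\epsilon/4}})\to 0$ closes the proof.
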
   
\begin{proof}
See lemma $6$ in \cite{ouyang2016downlink}.
\end{proof}
Since we have found a stationary distribution of $\textbf{Z}^N(t)$ under the Whittle's Index policy, the expected average cost under Whittle's Index policy for a fixed $N$ can be written as follows: 
\begin{equation}
 \lim_{T \rightarrow \infty} \frac{C_T^{N}(\textbf{x})}{N}=\sum_{k=1}^{K} \sum_{i=0}^{L} a_k \mathbb{E}\left[Z_{i}^{k,N}(\infty)\right]iN\end{equation}
\begin{theorem}\label{prop:global_optimality}
Under assumption \ref{assump:bounded_time}, and for any initial state, we have that:
\begin{equation} 
\underset{N \rightarrow +\infty}{\text{lim}} \lim_{T \rightarrow \infty} \frac{C_T^{N}(\textbf{x})}{N}=\frac{C^{RP,N}}{N}
\end{equation}  
\end{theorem}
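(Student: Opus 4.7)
The plan is to exploit the stationary distribution of the Whittle-index-driven chain $\textbf{Z}^{N}(t)$, whose existence is guaranteed by Lemma~\ref{lem:reachability_z_0} (all states communicate with the reachable state $\textbf{z}(0)$ on a finite aperiodic state space), and to show that this stationary distribution concentrates on $\textbf{z}^{*}$ as $N\rightarrow\infty$. The crucial external inputs will be Lemma~\ref{lem:prob_z_infty_near_neighborhood_z_star}, the characterisation of $C^{RP,N}$ from Section~\ref{sec:relaxed}, and Assumption~\ref{assump:bounded_time}.

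First I would rewrite the normalized long-run average cost as a single expectation,
$$\lim_{T\rightarrow\infty}\frac{C_T^{N}(\textbf{x})}{N} \;=\; \mathbb{E}\bigl[f(\textbf{Z}^{N}(\infty))\bigr], \qquad f(\textbf{z}):=\sum_{k=1}^{K}\sum_{i=0}^{L} a_{k}\,i\,z_{i}^{k},$$
which is independent of the initial state $\textbf{x}$ by irreducibility/aperiodicity on the reachable recurrent class. The function $f$ is linear, therefore Lipschitz on the simplex-like set in which $\textbf{Z}^{N}(\infty)$ lives, with a constant $L_f$ depending only on $\{a_k\}$, $K$, $L$, and is uniformly bounded by some $B$ independent of $N$ since $0\le z_{i}^{k}\le\gamma_{k}$ and $i\le L$. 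A direct comparison with the formula of Proposition~\ref{prop:optimal_sol_relax_prob_charchterisation} shows that $f(\textbf{z}^{*})=C^{RP,N}/N$, so it suffices to establish $\mathbb{E}[f(\textbf{Z}^{N}(\infty))]\rightarrow f(\textbf{z}^{*})$.

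Next, for any fixed $\epsilon>0$, I would split
$$\mathbb{E}\bigl[f(\textbf{Z}^{N}(\infty))\bigr] \;=\; \mathbb{E}\bigl[f(\textbf{Z}^{N}(\infty))\mathds{1}_{A_{\epsilon}^{N}}\bigr] + \mathbb{E}\bigl[f(\textbf{Z}^{N}(\infty))\mathds{1}_{(A_{\epsilon}^{N})^{c}}\bigr], \quad A_{\epsilon}^{N}:=\{\textbf{Z}^{N}(\infty)\in\Omega_{\epsilon}(\textbf{z}^{*})\}.$$
On $A_{\epsilon}^{N}$ the Lipschitz bound gives $|f(\textbf{Z}^{N}(\infty))-f(\textbf{z}^{*})|\le L_{f}\epsilon$, so the first summand lies within $f(\textbf{z}^{*})\pm L_{f}\epsilon$ up to the factor $P(A_{\epsilon}^{N})\le 1$. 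For the second summand, Lemma~\ref{lem:prob_z_infty_near_neighborhood_z_star} — itself a consequence of Assumption~\ref{assump:bounded_time} — yields $P((A_{\epsilon}^{N})^{c})\rightarrow 0$, which combined with the uniform bound $|f|\le B$ forces the second summand to vanish. Passing first $N\rightarrow\infty$ and then $\epsilon\rightarrow 0$ gives $\limsup_{N}\lim_{T}C_{T}^{N}(\textbf{x})/N \le f(\textbf{z}^{*}) = C^{RP,N}/N$. The matching lower bound is automatic because the relaxation only enlarges the feasible set, so $C^{RP,N}$ lower-bounds the cost of every admissible policy for \eqref{eq:constraint_original}, in particular that of the Whittle index policy.

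The main obstacle is precisely the bridge between the fluid-limit analysis of Section~\ref{sec:optimality} — which through Kurtz-type estimates controls the chain only on bounded horizons and only for initial conditions near $\textbf{z}^{*}$ (Proposition~\ref{prop:local_optimality}) — and the global stationary regime starting from an arbitrary $\textbf{x}$. Assumption~\ref{assump:bounded_time} is what closes this gap: by forcing the hitting time of $\Omega_{\epsilon}(\textbf{z}^{*})$ to remain bounded uniformly in $N$, it prevents mass from escaping to transient excursions as $N\rightarrow\infty$ and thereby keeps Lemma~\ref{lem:prob_z_infty_near_neighborhood_z_star} non-trivial in the many-users limit; without it, the second expectation above could in principle fail to vanish.
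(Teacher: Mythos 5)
Your proposal is correct and follows essentially the same route as the paper's proof: express $\lim_{T}C_T^N(\textbf{x})/N$ as $\mathbb{E}[f(\textbf{Z}^N(\infty))]$ via the stationary distribution, use the Lipschitz continuity of the linear cost $f$ together with Lemma~\ref{lem:prob_z_infty_near_neighborhood_z_star} to split the expectation on a concentration event around $\textbf{z}^*$, and pass $N\to\infty$ then $\epsilon\to 0$. The only cosmetic difference is that you invoke the separate "relaxation lower-bounds any feasible policy" argument for the matching lower bound, whereas the paper bounds $|\mathbb{E}[f(\textbf{Z}^N(\infty))]-f(\textbf{z}^*)|$ directly so that the symmetric Lipschitz estimate yields both directions at once; both are valid.
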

\begin{proof}
See appendix \ref{app:global_optimality}
\end{proof}
\section{Numerical Results}\label{sec:numerics}
In this section, we provide numerical results that confirm the asymptotic optimality of the developed Whittle index policy. To that extent, we consider $2$ classes having a respective rate of $R_1=5$ and $R_2=10$. Moreover, we suppose that $\alpha=1/2$, $L=50$, $\gamma_1 =\gamma_2 = 1/2$, and $a_1=a_2=a=1$. We also consider two initial states $x$ and $y$ such that all the queues are equal to $0$ and $L$ respectively.
\subsection{Verification of Assumption \ref{assump:bounded_time}}
We plot in Figure \ref{fig:hitting_time}, the evolution of the time needed to enter a neighborhood $\Omega_{\epsilon}(\textbf{z}^*)$ (i.e. hitting time of $\Omega_{\epsilon}(\textbf{z}^*)$) with respect to $N$, given that $\epsilon$ is small enough.
\begin{figure}
\includegraphics[width=0.4\textwidth]{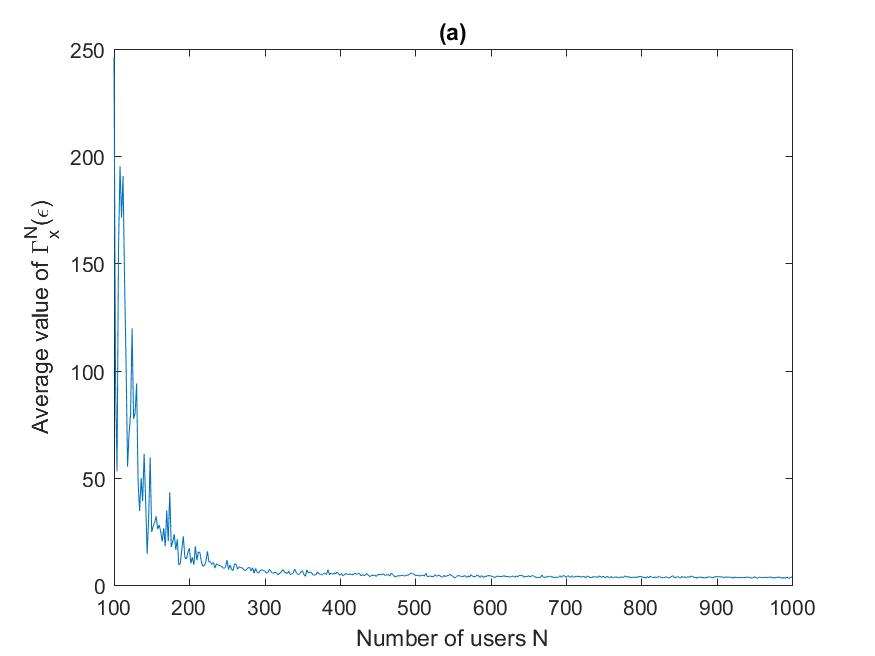}
\includegraphics[width=0.4\textwidth]{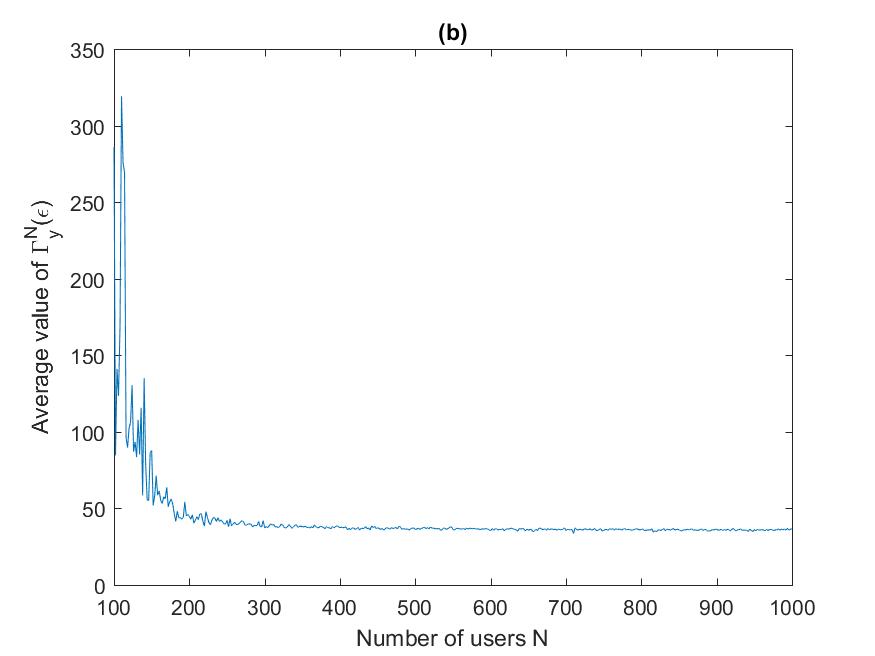}
\caption{Hitting Time of $\Omega_{\epsilon}(\textbf{z}^*)$ in function of $N$: (a) $\textbf{Z}^N(0)=\textbf{x}$, (b) $\textbf{Z}^N(0)=\textbf{y}$}
\label{fig:hitting_time}
\end{figure}
One can see that for large values of $N$, the hitting time can be considered as a constant and does not diverge, and this is true for both initial states $\textbf{x}$ and $\textbf{y}$. This implies that the hitting time is bounded for large values of $N$ which consolidates Assumption \ref{assump:bounded_time}.
\subsection{Performance of the Whittle's Index policy}
In this section, we compare the long run expected average cost per user under the Whittle's Index policy, i.e. \linebreak $\lim_{T \rightarrow \infty} C_T^N(\textbf{x})$,  with the one obtained by applying the Max-Weight policy $MW$. The latter schedules, at each time $t$, the $M$ weighted longest queues (equivalently the $M$ highest $a_kQ^k_i(t)$). We also compare the performance of these two policies with the optimal cost per user obtained  by using the optimal solution of the relaxed problem, i.e. $C^{RP}/N$. The results are plotted in Figures (\ref{fig:average_cost_evoluton}.a) and (\ref{fig:average_cost_evoluton}.b) respectively for the initial states $\textbf{x}$ and $\textbf{y}$ (defined above).
\begin{figure}
\includegraphics[width=0.4\textwidth]{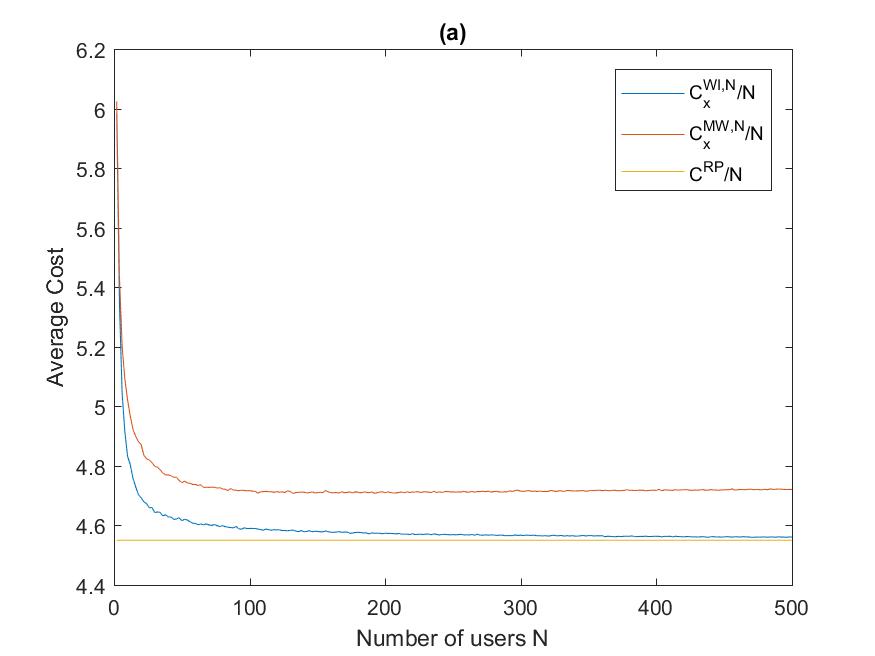}
\includegraphics[width=0.4\textwidth]{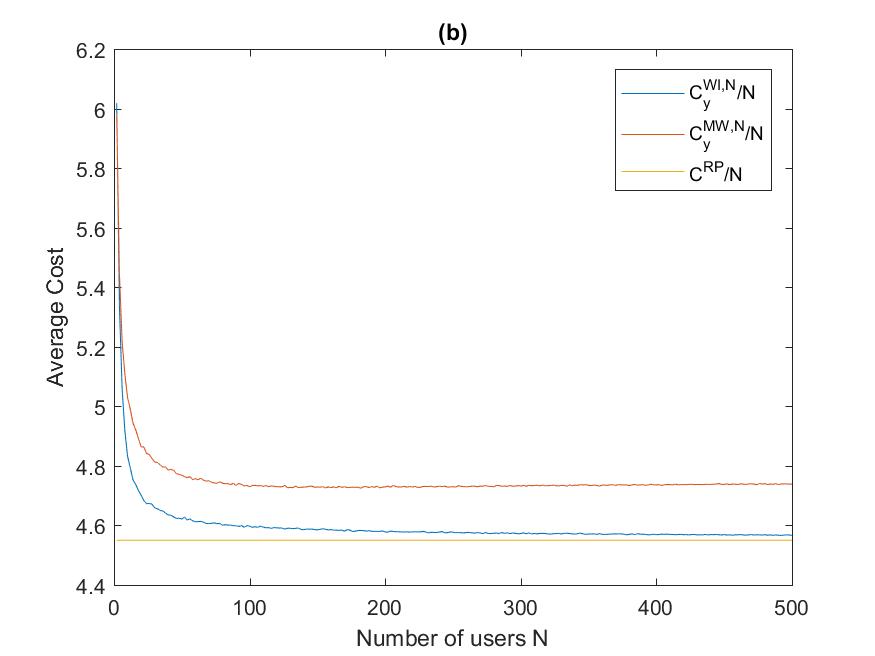}
\caption{Performance evaluation of Whittle's Index policy (a) $\textbf{Z}^N(0)=\textbf{x}$, (b) $\textbf{Z}^N(0)=\textbf{y}$}
\label{fig:average_cost_evoluton}
\end{figure}
One can see that for large $N$, regardless of the initial state, the cost incurred by adopting the Whittle's Index policy tends to the optimal cost of the relaxed problem, which proves that it asymptotically converges to the optimal solution of the original problem.
One can also remark that the optimal cost of the relaxed problem per user is constant and does not depend on $N$ (see section \ref{sec:relaxed}).
Lastly, we remark that the solution given by $MW$ is suboptimal and lacks behind our proposed scheduling scheme. \\
\subsection{Fairness among users}
In order to improve the fairness among the users in the network, one can use the developed Whittle index policy in this paper up to some modifications. For example, we introduce in this section the following a new policy $\Theta$ which works as follows: at each time slot $t$, we schedule the users with the highest $W_k(q_i^k(t))\overline{D_k}(q_i^k(t))$, where $q_i^k(t)$ is the queue state of user $i$ in class $k$, $W_k$ is the Whittle index of state $q_i^k(t)$ when the transmission rate is $R_k$ and $\overline{D_k}(q_i^k(t))=\frac{\sum_{u=1}^t a_k q_i^k(u)}{t}$. To evaluate numerically the performance of this policy, we consider the case of two classes of users. To that extent, we consider the following two costs $C_1(N)$ and $C_2(N)$ incurred respectively by users of class $1$ and users of class 2, specifically $C_1(N)=\lim_{T \rightarrow \infty} \frac{1}{T} \mathbb{E}\left[ \sum_{t=0}^{T-1} \sum_{i=1}^{{\gamma_1}N} a_1 q_i^1(t) \mid \textbf{x}\right]$  and $C_2(N)= \lim_{T \rightarrow \infty} \frac{1}{T} \mathbb{E}\left[ \sum_{t=0}^{T-1} \sum_{i=1}^{{\gamma_2}N} a_2 q_i^2(t) \mid \textbf{x}\right]$. We plot these quantities with respect to $N$ in Figure $5$. In Figure (\ref{fig:fairness_evaluation}.a), the costs are obtained by applying the new policy $\Theta$ while in Figure (\ref{fig:fairness_evaluation}.b) the standard Whittle index policy is applied.   
\begin{figure}
\includegraphics[width=0.4\textwidth]{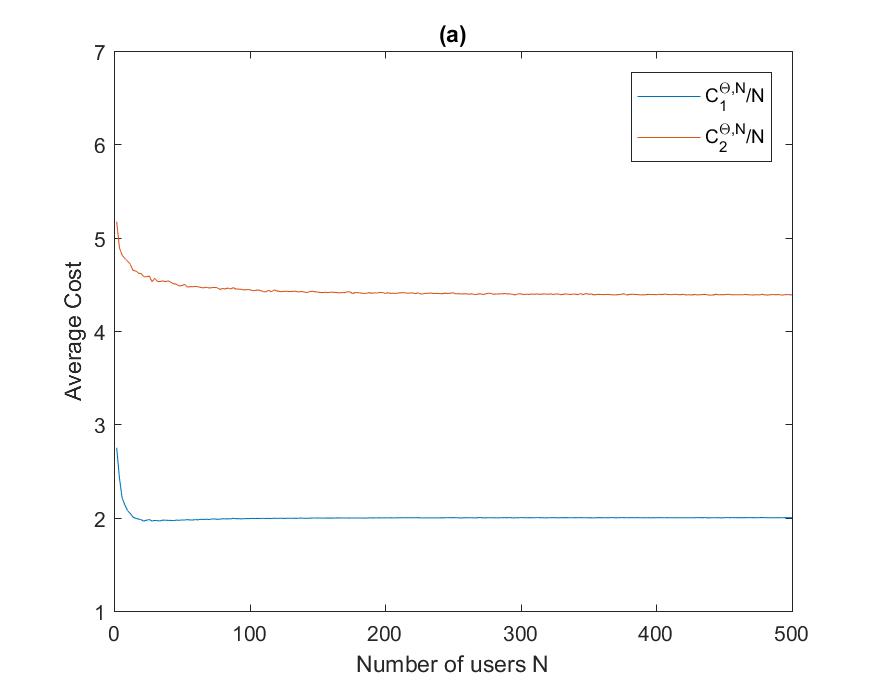}
\includegraphics[width=0.4\textwidth]{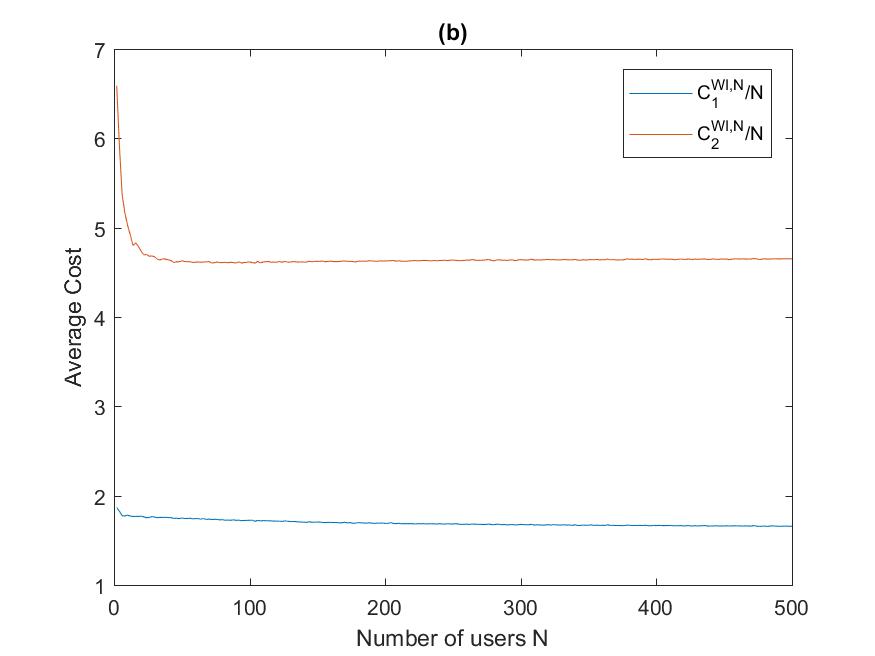}
\caption{Evaluation of $C_1$ and $C_2$ in function of $N$: (a) Policy $\Theta$, (b) Whittle's Index policy}
\label{fig:fairness_evaluation}
\end{figure}
We conclude that the new policy gives a better performance in terms of  fairness, since it reduces the gap between the costs of the two classes of users.

\subsection{Performance of Whittle Index when $L<R$}
To get more comprehensive results, we also evaluate the performance of Whittle index policy when $L<R$ by considering $L=10$, $R_1=20$ and $R_2=30$. We let $\alpha=1/2$ and $a_1=a_2=a=1$. To that end, we compare the long run expected average cost per user under the Whittle index policy, with the one obtained by applying the Max-Weight policy $MW$. We see in Figure \ref{fig:performance_L_less_R}, that Whittle index policy still asymptotically optimal even when $L < R$. Hence, we can presume that Whittle index policy is asymptotically optimal regardless of the value of $L$. This in fact has been analytically proved throughout this whole paper when $L \geq 2R$.
\begin{figure}
\centering
\includegraphics[width=0.4\textwidth]{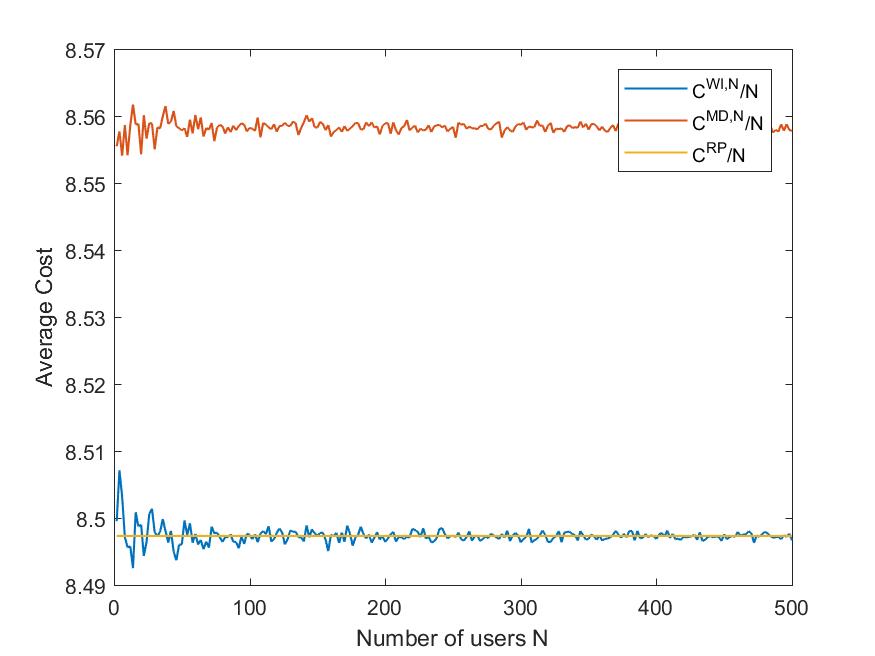}
\caption{Performance evaluation of Whittle's Index policy: $L<R$}
\label{fig:performance_L_less_R}
\end{figure}      
\section{Conclusion}
In this paper, we have studied the problem of users and channels scheduling under bursty traffic arrivals. At each time slot, only $M$ channels can be allocated to the users knowing that a user can be allocated one channel at most.  We formulated a Lagrangian relaxation of the optimization problem and provided a characterization of the optimal solution of this relaxed problem. We then developed a simple Whittle index policy to allocate the channels to the users and proved its asymptotic local and global optimality when the numbers of users and channels are large enough. This result is of interest as the developed  Whittle Index Policy has a low complexity and is near optimal for large number of users. We then provided numerical results that corroborate our claims.

\nocite{ansell2003Whittle}
\nocite{buyukkoc1985c}
\nocite{gittins2011multi}
\nocite{kurtz1978strong}
\nocite{larranaga2015dynamic}
\nocite{ouyang2016downlink}
\nocite{papadaki2002exploiting}
\nocite{puterman2014markov}
\nocite{ross2014introduction}
\nocite{ruan2017delay}
\nocite{weber1990index}
\nocite{Whittle1988restless}
\nocite{papadimitriou1999complexity}
\nocite{liu2010indexability}
\bibliographystyle{IEEEtran}
\bibliography{bil}

\begin{appendices} 

\renewcommand\thesection{\AlphAlph{\value{section}}}

\section{proof of Proposition \ref{prop:value_function_decomposition}}\label{app:value_function_decomposition}
We consider the Bellman Equation~\eqref{eq:individual}. By summing the RHS and the LHS of Equation~\eqref{eq:individual}, for all $k$ and $i$ we obtain:
\begin{align}
\sum_{k=1}^{K} \sum_{i=1}^{{\gamma_k}N} [V_i^k(q_i^k)+ \theta_i^k]=&\sum_{k=1}^{K} \sum_{i=1}^{{\gamma_k}N} \underset{s_i^k}{\text{min}} \{ C_k(q_i^k,s_i^k)+ \sum_{q_i^{'k}} Pr(q_i^{'k}|q_i^k,s_i^k)V_i^k(q_i^{'k})\} \\
=& \underset{\textbf{s}}{\text{min}} \{ \sum_{k=1}^{K} \sum_{i=1}^{{\gamma_k}N} [C_k(q_i^k,s_i^k)+ \sum_{q_i^{'k}} Pr(q_i^{'k}|q_i^k,s_i^k)V_i^k(q_i^{'k})]\},
\end{align}
where $\textbf{s}=(s_1^1,\ldots,s_{\gamma_1N}^1,\ldots, s_{1}^K,\ldots,s_{\gamma_kN}^K)$. 
We also have that:
\begin{align}
Pr(\textbf{q}'|\textbf{q},\textbf{s})=\sum_{q_i^{'k}} Pr(\textbf{q}'|\textbf{q} ,\textbf{s} ,q_i^{'k})Pr(q_i^{'k}|\textbf{q},\textbf{s})=\sum_{q_i^{'k}} Pr(\textbf{q}'|\textbf{q},\textbf{s},q_i^{'k})Pr(q_i^{'k}|q_k^i,s_i^k),
\end{align}
for all $\textbf{q}=(q_1^1,\ldots,q_{\gamma_1N}^1,\ldots,q_{1}^K,\ldots,q_{\gamma_KN}^K)$ and $\textbf{q}^{'}=(q_1^{'1},\ldots,q_{\gamma_1N}^{'1},\ldots,q_{1}^{'K},\ldots,q_{\gamma_KN}^{'K})$.
Since $Pr(q_i^k|\textbf{q},\textbf{s})$ only depends on the decision taken with respect to user $i$ in class $k$, we obtain:
\begin{align}
\sum_{k=1}^{K} \sum_{i=1}^{{\gamma_k}N} \sum_{q_i^{'k}} Pr(q_i^{'k}|q_i^k,s_i^k)V_i^k(q_i^{'k})&=\sum_{k=1}^{K} \sum_{i=1}^{{\gamma_k}N} \sum_{\textbf{q}'} \sum_{q_i^{'k}}  Pr(\textbf{q}'|\textbf{q},\textbf{s},q_i^{'k}) Pr(q_i^{'k}|q_i^k,s_i^k)V_i^k(q_i^{'k})\\
&=\sum_{\textbf{q}'} Pr(\textbf{q}'|\textbf{q},\textbf{s}) \sum_{k=1}^{K} \sum_{i=1}^{{\gamma_k}N} V_i^k(q_i^{'k})
\end{align}
From the previous equations we obtain:
\begin{align}
\sum_{k=1}^{K} \sum_{i=1}^{{\gamma_k}N} V_i^k(q_i^k)+ \sum_{k=1}^{K} \sum_{i=1}^{{\gamma_k}N}\theta_i^k=& \underset{\textbf{s}}{\text{min}} \ [\sum_{k=1}^{K} \sum_{i=1}^{{\gamma_k}N} C_k(q_i^k,s_i^k)+ \sum_{k=1}^{K} \sum_{i=1}^{{\gamma_k}N}\sum_{q_i^{'k}} Pr(q_i^{'k}|q_i^k,s_i^k)V_i^k(q_i^{'k})]\\
=& \underset{\textbf{s}}{\text{min}} \ [\sum_{k=1}^{K} \sum_{i=1}^{{\gamma_k}N} C(q_i^k,s_i^k)+ \sum_{\textbf{q}'} Pr(\textbf{q}'|\textbf{q},\textbf{s}) \sum_{k=1}^{K} \sum_{i=1}^{{\gamma_k}N} V_i^k(q_i^{'k})]
\end{align}
According to Theorem $2.1$ Chapter $2$, \cite{ross2014introduction}, it exists a unique function $V$ and a constant $\theta$ that resolve the equation \eqref{eq:Bellman}. Subsequently, since we have found a bounded function $\sum_{k=1}^{K} \sum_{i=1}^{{\gamma_k}N} V_i^k(q_i^k)$, and a constant $\sum_{k=1}^{K} \sum_{i=1}^{{\gamma_k}N}\theta_i^k$ that satisfy also the equation~\eqref{eq:Bellman}, then $V(\textbf{q})=\sum_{k=1}^{K} \sum_{i=1}^{{\gamma_k}N} V_i^k(q_i^k)$ and $\theta=\sum_{k=1}^{K} \sum_{i=1}^{{\gamma_k}N}\theta_i^k$. This is equivalent to finding for each user the decision that minimizes the right hand side of each individual Bellman equation. This concludes the proof.
\section{proof of Lemma \ref{submodilarity}}\label{app:submodilarity}
We first prove that $C(\cdot,\cdot)$ is submodular. That is,
$(C(q+1,1)-C(q+1,0))-(C(q,1)-C(q,0))=a(q+1)+W-a(q+1)-(aq+W-aq)=0 \leq 0$.
The latter is obtained by substituting the values of $C(q',s)$ for $s\in\{0,1\}$ and $q'\in\{q,q+1\}$.
In order to prove that $\sum_{q'} Pr(q'|q,s)V(q')$ is submodular, we  distinguish between two cases:\\
Case 1) $q < R$, then:
\begin{align}
\sum_{q'} Pr(q'|q+1,1)V(q')-\sum_{q'} Pr(q'|q+1,0)V(q')=&\sum_{q'=0} Pr(A=q')V(q')-\sum_{q'=q+1} Pr(A=q'-q-1)V(q') \nonumber\\
=&\sum_{q'=0} Pr(A=q')V(q')-\sum_{q'=q} Pr(A=q'-q)V(q'+1) \nonumber\\
\leq& \sum_{q'=0} Pr(A=q')V(q')-\sum_{q'=q} Pr(A=q'-q)V(q') \nonumber\\
=&\sum_{q'} Pr(q'|q,1)V(q')-\sum_{q'} Pr(q'|q,0)V(q')
\end{align}
The inequality follows from the fact that $V(\cdot)$ is increasing.
This concludes the proof for $q < R$.\\
Case 2) $q \geq R$, then:
\begin{align}\label{eq1}
\sum_{q'} Pr(q'|q+1,1)V(q')-\sum_{q'} Pr(q'|q+1,0)V(q')=&\sum_{q'} Pr(A=q'-q-1+R)V(q')-\sum_{q'} Pr(A=q'-q-1)V(q')\nonumber\\
=&\sum_{q'=q+1-R} Pr(A=q'-q-1+R)V(q')-\sum_{q'=q+1} Pr(A=q'-q-1)V(q')\nonumber\\
=&\sum_{q'=q} Pr(A=q'-q)V(q'-R+1)-\sum_{q'=q} Pr(A=q'-q)V(q'+1)
\end{align}
Moreover, we have: 
\begin{align}\label{eq2}
\sum_{q'} Pr(q'|q,1)V(q')-\sum_{q'} Pr(q'|q,0)V(q')=\sum_{q'=q} Pr(A=q'-q)V(q'-R)-\sum_{q'=q} Pr(A=q'-q)V(q').
\end{align}
Subtracting Equation~\eqref{eq1} and~\eqref{eq2} (i.e., \eqref{eq1}-\eqref{eq2}) we obtain
\begin{align}\label{eq3}
\sum_{q'=q} Pr(A=q'-q)[(V(q'-R+1)-V(q'-R))-(V(q'+1)-V(q'))] \leq 0,
\end{align}
which follows from the R-convexity of $V(\cdot)$. Therefore, $\sum_{q'} Pr(q'|q,s)V(q')$ is submodular.

\section{Proof of Proposition \ref{prop:general_passage}}\label{app:general_passage}
When $i <L$:\\
1) $j\leq n$:\\
Since $j \leq n$, the optimal decision is to stay idle, that means if $A$ denotes the number of arrival packets, in the next time slot the number of packets will be $i=j+A$ with $A\leq R-1$ and then $A=i-j$. Therefore, the probability to transition from state $j$ to $i$ is the probability that $A=i-j$, which is exactly $\pi_{i-j}$. \\
 2) $j>n$:\\
The optimal decision in this case is to transmit. However, at most $\min(R,j)$ can be transmitted. Taking into account the $A$ arrival packets, then the new state for the next time slot will be $i=j-\min(R,j)+A=(j-R)^++A$, which implies that $A=i-(j-R)^+$. This explains that the probability to transition from state $j$ to $i$ is the probability that $A$ is equal to $i-(j-R)^+$ which is equal to $\pi_{i-(j-R)^+}$. \\
When $i=L$:\\
1) $j \leq n$:\\
The optimal decision is a passive action. Then $A$ arrival packets are added to the $j$ packets present in the queue. For the next time slot, the number of packets is $j+A$. According to equation ~\eqref{eq: qeue_evolution}, since we cannot exceed the buffer length $L$, we reach the state $L$ if $j+A \geq L$. Since $A \leq R-1$, then the probability of this event or equivalently the probability to transition from state $j$ to state $L$ is $Pr(L-j \leq A \leq R-1)=\sum_{k=L-j}^{R-1} Pr(A=k)=(R-L+j)\pi_{L-j}$.\\
2) $j > n$:\\
The optimal decision is an active action, thus to reach the next state the arrival packet number $A$ must be in the set $[L-(j-R)^+,R-1]$. Then the probability to transition from $j$ to $L$ is $Pr(L-(j-R)^+ \leq A \leq R-1)=\sum_{k=L-(j-R)^+}^{R-1} Pr(A=k)=(R-L+(j-R)^+)\pi_{L-(j-R)^+}$. We can therefore conclude the results.

\section{Proof of Proposition \ref{eq:stat.dist}}\label{app:eq:stat.dist}
We prove the four sub-cases separately when $L \geq 2R$:
\begin{enumerate}
\item First case: $-1 \leq n < R$:
\begin{equation} u(i)= \sum_{j=0}^n p^n(j,i)u(j)+ \sum_{j=n+1}^R p^n(j,i)u(j) + \sum_{j=R+1}^L p^n(j,i)u(j)\end{equation} 
We first provide the following lemma that follows from Proposition \ref{prop:general_passage}.
\begin{Lemma}\label{passage_1}
when $i<L$:\\
\begin{equation}
p^n(j,i)=\left\{
    \begin{array}{ll}
        \pi_{i-j} & if \ 0 \leq j \leq n \\
        \pi_{i} & if  \ n+1 \leq j \leq R-1 \\
        \pi_{i-(j-R)} & if \ R \leq j \leq L
    \end{array}
\right.
\end{equation}
when $i=L$:
\begin{equation}
p^n(j,i)=\left\{
    \begin{array}{ll}
        0 & if  \ 0 \leq j \leq n \\
        0 & if  \ n+1 \leq j \leq L \\
    \end{array}
\right.
\end{equation}
\end{Lemma}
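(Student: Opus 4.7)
The plan is to specialize Proposition \ref{prop:general_passage} under the hypotheses of the current sub-case, namely $-1 \leq n < R$, together with the ambient assumption $L \geq 2R$ from Proposition \ref{eq:stat.dist}. For $i < L$ the work is essentially a direct readout. I would split the range of $j$ into three pieces: for $0 \leq j \leq n$ the prescribed action is passive and the proposition gives $p^n(j,i) = \pi_{i-j}$; for $n+1 \leq j \leq R-1$ the action is active but $j < R$ forces $(j-R)^+ = 0$, yielding $p^n(j,i) = \pi_i$; and for $R \leq j \leq L$, $(j-R)^+ = j-R$, giving $p^n(j,i) = \pi_{i-(j-R)}$. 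This matches the three branches in the statement for $i < L$.

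For $i = L$ the claim is that the transition probability vanishes identically, and this is where the hypothesis $L \geq 2R$ does the heavy lifting. I would verify case by case that the $\pi$-argument falls outside the support $\{0,\ldots,R-1\}$. Concretely, for $0 \leq j \leq n$, Proposition \ref{prop:general_passage} gives $p^n(j,L) = (R - L + j)\pi_{L-j}$; since $j \leq n \leq R-1$ and $L \geq 2R$, we have $L - j \geq L - R + 1 \geq R + 1$, whence $\pi_{L-j} = 0$. For $n+1 \leq j \leq R-1$, $(j-R)^+ = 0$, so $p^n(j,L) = (R-L)\pi_L$; and $L \geq 2R > R-1$ gives $\pi_L = 0$. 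Finally, for $R \leq j \leq L$, $(j-R)^+ = j - R$, hence $p^n(j,L) = (j - L)\pi_{R + L - j}$, and $j \leq L$ implies $R + L - j \geq R$, so $\pi_{R+L-j} = 0$.

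The argument is entirely bookkeeping — no structural obstacle arises beyond tracking the support of $\pi$ under the bounds on $n$, $j$, and $L$. The only subtlety worth flagging is that the vanishing at $i = L$ genuinely requires $L \geq 2R$; the hypothesis $n < R$ alone would not be enough to kill the coefficient for $R \leq j \leq L$. Once these inequalities are recorded, the lemma follows by direct substitution into Proposition \ref{prop:general_passage}.
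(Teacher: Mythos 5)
Your proof is correct and follows essentially the same route as the paper: Lemma \ref{passage_1} is presented there as an immediate specialization of Proposition \ref{prop:general_passage}, with the arithmetic left implicit, and you have simply made that arithmetic explicit. Your observation that the hypothesis $L \geq 2R$ is genuinely needed to force $p^n(j,L) = 0$ when $R \leq j \leq L$ (since $n < R$ alone would not suffice) is a worthwhile remark that the paper leaves unstated.
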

Using Lemma \ref{passage_1}, we have:\\
if $i <L$
\begin{equation} u(i)=\sum_{j=0}^n \pi_{i-j}u(j)+ \sum_{j=n+1}^R \pi_{i}u(j) +\sum_{j=R+1}^L \pi_{i-(j-R)}u(j)\end{equation} 
By definition of $\pi$ given in definition \ref{pi}, then: 
\begin{equation} u(i)= \sum_{\max(i-R+1,0)}^{\min(i,n)} \rho u(j) \ + \sum_{n+1}^R \pi_i u(j) + \ \sum_{\max(i+1,R+1)}^{\min(i+R,L)} \rho u(j)
\end{equation} 
\\
In order to prove Proposition \ref{eq:stat.dist} for this case, according to Lemma \ref{passage_1} we will distinguish between five sub-cases:\\
a) $i=L$\\
b) $n+R+1 \leq i \leq L-1$\\
c) $n+1 \leq i \leq R-1$\\
d) $0 \leq i \leq n$\\
e) $R \leq i \leq n+R$
\\

a) Proof of $u(i)=0$ for $i=L$:\\
if $i=L$, since $\forall j$ $p^n(j,L)=0$, then 
\begin{equation}u(L)=0
\end{equation}
\\
\\
b) Proof of $u(i)=0$ for $n+R+1 \leq i \leq L-1$:\\ 
For this case, we prove by strong induction in decreasing order that $u(i)=0$ \\
In fact we have that $u(L)=0$, and for $n+R < i \leq L$, $\pi_i=0$ because $i > R-1$,  $\min(i,n)=n < i-R+1=\max(i-R+1,0)$, and $i+1 \geq R+1$, then:
\begin{equation} u(i)=\sum_{i+1}^{\min(i+R,L)} \rho u(j) \end{equation} 
we consider by induction that $\forall k \in [i,L]$,  $u(k)=0$.\\
So $u(i-1)=\sum_{i}^{\min(i-1+R,L)} \rho u(j) = 0$.\\
Hence we conclude the result.
\\
c) Proof of $u(i)=\rho$ for $n+1 \leq i \leq R-1$:\\
We have $\max(i-R+1,0)=0$, $\min(i,n)=n$, $\pi_i=\rho$ (since $0 \leq i \leq R-1$), $\max(i+1,R+1)=R+1$ and $\min(i+R,L)=i+R$ (recall that $i+R<2R\leq L$). This implies,
\begin{equation}
 u(i)=\sum_{0}^{n} \rho u(j)+ \sum_{n+1}^R \rho u(j) +\sum_{R+1}^{i+R} \rho u(j)
\end{equation} 
Now, we prove that $u(i)=\rho$\\
We have that:
\begin{align}
u(i)&=\sum_{0}^{n} \rho u(j)+ \sum_{n+1}^R \rho u(j) +\sum_{R+1}^{i+R} \rho u(j)=\rho [\sum_{0}^{n} u(j)+ \sum_{n+1}^R u(j) +\sum_{R+1}^{i+R} u(j)]\\
u(i)&=\sum_{j=0}^{i+R} \rho u(j)
\end{align}  
We have that $i+R > n+R$, then $u(p)=0$ for all $p \in [n+R+1,i+R]$. We can hence simplify the expression of $u(i)$ as follows:
\begin{equation} u(i)=\rho \sum_{j=0}^{n+R} u(j)\end{equation}   
 Since we have proved that when $j > n+R$ , $u(j)=0$ (sub-case (b)), then $\sum_{j=0}^{n+R} u(j)=1$ ($\sum_0^L u(j)=1$ because $u$ is probability distribution), i.e. $u(i)=\rho$.\\
This ends the proof of sub-case (c).
\\
We will provide a useful lemma which allows us to prove Proposition \ref{eq:stat.dist} for the cases (d) and (e). Before giving this lemma, we will give general expressions of $u(i)$ for these two cases.
\\
If $0 \leq i \leq n$:\\
$i \leq n < R$, which implies that $i-R+1 \leq 0$, $\max(i-R+1,0)=0$, $\min(i,n)=i$, $\pi_i=\rho$ since $0 \leq i \leq n<R $, $\max(i+1,R+1)=R+1$, and $i+R \leq n+R <2R \leq L$. Therefore $\min(i+R,L)=i+R$, which implies that: 
\begin{equation} u(i)=\sum_{0}^{i} \rho u(j)+ \sum_{n+1}^{R} \rho u(j) +\sum_{R+1}^{i+R} \rho u(j)
\end{equation}  
If $R \leq i \leq n+R$:\\
We have $\max(i-R+1,0)=i-R+1$, $\min(i,n)=n$ (due to $i \geq R > n$),  $\pi_i=0$ (since $i> R-1$) and $\max(i+1,R+1)=i+1$. Then:
\begin{equation} u(i)= \sum_{i-R+1}^{n} \rho u(j) + \ \sum_{i+1}^{\min(i+R,L)} \rho u(j)
\end{equation} 
\begin{Lemma}\label{ro_constante}
 for $0 \leq k \leq n$: \begin{equation}u(n+R-k)+u(n-k)=\rho\end{equation}\\
\end{Lemma}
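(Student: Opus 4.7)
\textbf{Proof plan for Lemma \ref{ro_constante}.} My plan is to directly add the two explicit balance expressions derived immediately above the lemma for $u(n-k)$ (which falls in the range $0\le i\le n$) and for $u(n+R-k)$ (which falls in the range $R\le i\le n+R$), and to show that after combining sums the right-hand side telescopes to $\rho\sum_{j=0}^{n+R}u(j)$, which equals $\rho$ because $u$ is a probability distribution and the values $u(j)$ for $j>n+R$ have already been shown to be zero (sub-cases (a) and (b) of the current proof).

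More concretely, I would first check the index conditions: since $0\le k\le n$ and $n<R$, we have $0\le n-k\le n$, so the formula for $0\le i\le n$ applies to $i=n-k$, and $R\le n+R-k\le n+R$, so the formula for $R\le i\le n+R$ applies to $i=n+R-k$. I would also note that $n+R<2R\le L$, so $\min(i+R,L)=i+R$ is not active for $u(n-k)$, and for $u(n+R-k)$ any portion of the sum reaching above $n+R$ contributes zero because $u(j)=0$ for $n+R+1\le j\le L$ (already established). This lets me write cleanly
\begin{equation}
u(n-k)=\rho\sum_{j=0}^{n-k}u(j)+\rho\sum_{j=n+1}^{R}u(j)+\rho\sum_{j=R+1}^{n-k+R}u(j),
\end{equation}
\begin{equation}
u(n+R-k)=\rho\sum_{j=n-k+1}^{n}u(j)+\rho\sum_{j=n+R-k+1}^{n+R}u(j).
\end{equation}

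The key step is then to add these and observe the telescoping: the first sums fuse into $\sum_{j=0}^{n}u(j)$, the last two sums fuse (via $n-k+R=n+R-k$) into $\sum_{j=R+1}^{n+R}u(j)$, and together with the middle sum $\sum_{j=n+1}^{R}u(j)$ the total becomes $\rho\sum_{j=0}^{n+R}u(j)$. To finish, I would invoke sub-cases (a) and (b), which give $u(j)=0$ for $n+R+1\le j\le L$, and normalization $\sum_{j=0}^{L}u(j)=1$, so that $\sum_{j=0}^{n+R}u(j)=1$, yielding $u(n-k)+u(n+R-k)=\rho$.

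I do not expect any hard obstacle here; the work is entirely bookkeeping. The only subtle point is making sure the upper limit in the last sum of $u(n+R-k)$ is handled correctly when $n+R-k+R$ exceeds $L$, but this is painless because all contributions beyond $n+R$ vanish by the earlier sub-cases. Once the index arithmetic is spelled out, the equality $u(n-k)+u(n+R-k)=\rho$ follows immediately.
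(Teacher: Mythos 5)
Your proposal is correct and takes essentially the same route as the paper's own proof: both substitute the balance-equation expressions for $u(n-k)$ and $u(n+R-k)$ (valid since $0\le n-k\le n$ and $R\le n+R-k\le n+R$), telescope the five partial sums into $\rho\sum_{j=0}^{n+R}u(j)$ after discarding the vanishing terms with $j>n+R$, and conclude by normalization. The only cosmetic difference is that you truncate the upper limit $\min(n-k+2R,L)$ to $n+R$ before adding, whereas the paper carries the $\min$ through the telescoping and drops the zero terms at the end.
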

\begin{proof}
\renewcommand{\qedsymbol}{$\blacksquare$}
See appendix \ref{app:ro_constante}
\end{proof}

d) Proof of $u(i)=\rho-\rho^2(n-i)$ for $0 \leq i \leq n$:\\
We start by proving by induction that for $k \in [0,n]$ $u(n-k)=\rho-\rho^2 k$,
we have for $0 \leq k \leq n$, \ $0 \leq n-k \leq n$, then:
\begin{equation} u(n-k)=\sum_{0}^{n-k} \rho u(j)+ \sum_{n+1}^{R} \rho u(j) +\sum_{R+1}^{n-k+R} \rho u(j)
\end{equation} 
For $k=0$,
\begin{align}
u(n-0)&=\rho [\sum_{0}^{n} u(j)+ \sum_{n+1}^R u(j) +\sum_{R+1}^{n+R} u(j)]\\
&=\sum_{j=0}^{n+R} \rho u(j)\\
&=\rho
\end{align}
\\
We suppose that the expression is true for some $k$, we prove it for $k+1$
\begin{align}
u(n-(k+1))&=\sum_{0}^{n-k-1} \rho u(j)+ \sum_{n+1}^{R} \rho u(j) +\sum_{R+1}^{(n-k-1+R)} \rho u(j)\\
&=\sum_{0}^{n-k} \rho u(j)+ \sum_{n+1}^{R} \rho u(j) +\sum_{R+1}^{(n-k+R)} \rho u(j)-\rho(u(n-k)+u(n-k+R))\\
&=u(n-k)-\rho(u(n-k)+u(n+R-k))\\
&=\rho-k\rho^2-\rho(u(n-k)+u(n+R-k))
\end{align}
Using Lemma \ref{ro_constante}, $u(n-k)+u(n+R-k)=\rho$, then:
\begin{align}
u(n-(k+1))&=\rho-k\rho^2-\rho(\rho)\\
&=\rho-k\rho^2-\rho^2\\
&=\rho-(k+1)\rho^2
\end{align}
Thus we conclude that for $k \in [0,n]$ $u(n-k)=\rho-k\rho^2$.\\
For $i \in [0,n]$, we replace $k \in [0,n]$ by $n-i$ ($n-i \in [0,n]$), we get: \begin{equation}u(i)=u(n-(n-i))=\rho-\rho^2(n-i)
\end{equation}  
e) Proof of $u(i)=\rho^2(n+R-i)$ for $R \leq i \leq n+R$:\\
For that we prove that for $k \in [0,n]$ $u(n+R-k)=\rho^2 k$.\\
From the above result in the case (d), we get $u(n-k)=\rho-k\rho^2$.\\
So, according to Lemma \ref{ro_constante}:
\begin{align}
u(n+R-k)&=\rho-u(n-k)\\
&=\rho-(\rho-\rho^2 k)\\
u(n+R-k)&=\rho^2 k
\end{align}
For $i \in [R,n+R]$, we replace  $k \in [0,n]$ by $n+R-i$ ($n+R-i \in [0,n]$), we get:\begin{equation}u(i)=u(n+R-(n+R-i))=\rho^2(n+R-i)\end{equation}

\item Second case: $R \leq n < L-R$: \\
\begin{equation}u(i)= \sum_{j=0}^n p^n(j,i)u(j)+ \sum_{j=n+1}^L p^n(j,i)u(j)
\end{equation}
\begin{Lemma}\label{passage_2}
when $i<L$:\\
\begin{equation}
p^n(j,i)=\left\{
    \begin{array}{ll}
        \pi_{i-j} & if \ 0 \leq j \leq n \\
        \pi_{i-(j-R)} & if \ n+1 \leq j \leq L
    \end{array}
\right.
\end{equation}
when $i=L$:
\begin{equation}
p^n(j,i)=\left\{
    \begin{array}{ll}
        0 & if  \ 0 \leq j \leq n \\
        0 & if  \ n+1 \leq j \leq L 
    \end{array}
\right.
\end{equation}
\end{Lemma}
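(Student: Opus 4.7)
The plan is to obtain Lemma \ref{passage_2} as a direct specialization of Proposition \ref{prop:general_passage} to the regime $R \leq n < L-R$. The only work to do is to simplify the expressions $(j-R)^+$ and $\pi_{\cdot}$ that appear in the general formulas by exploiting the range restrictions on $n$. I would begin by recalling the four branches of Proposition \ref{prop:general_passage} (for $i<L$ vs. $i=L$, and for $j\le n$ vs. $n<j\le L$), and then dispatch the four resulting sub-cases of the lemma one by one.

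For the case $i<L$, when $0\le j\le n$, the first branch of Proposition \ref{prop:general_passage} immediately gives $p^n(j,i)=\pi_{i-j}$, matching the claim. When $n+1\le j\le L$, the second branch gives $p^n(j,i)=\pi_{i-(j-R)^+}$; since $j\ge n+1\ge R+1>R$ (using $n\ge R$), we have $(j-R)^+=j-R$, and the formula reduces to $\pi_{i-(j-R)}$ as required.

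For the case $i=L$, I would argue that both transition probabilities vanish by showing that the corresponding $\pi_k$ factor is supported outside $\{0,\dots,R-1\}$. For $0\le j\le n$, the third branch of Proposition \ref{prop:general_passage} yields $p^n(j,L)=(R-L+j)\pi_{L-j}$; since $j\le n<L-R$, we have $L-j>R$, hence $L-j\notin[0,R-1]$ and $\pi_{L-j}=0$. For $n+1\le j\le L$, the fourth branch yields, after simplifying $(j-R)^+=j-R$ as before, $p^n(j,L)=(j-L)\pi_{L-j+R}$; here $L-j+R\ge R$ for any $j\le L$, so $L-j+R\notin[0,R-1]$ and again the $\pi$-factor is zero (the $j=L$ case is also covered since the leading $(j-L)$ factor vanishes). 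In both sub-cases $p^n(j,L)=0$, matching the claim.

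No substantive obstacle is anticipated: the entire argument is a careful index-bookkeeping exercise, consisting of identifying which branch of Proposition \ref{prop:general_passage} applies and then verifying that $j-R\ge 0$ (so the $(\cdot)^+$ disappears) and that the arrival variable $A=i-j$ or $A=i-(j-R)$ falls outside its support $\{0,\dots,R-1\}$ in the boundary state $i=L$. The only small point worth being explicit about is why the restriction $R\le n<L-R$ is used exactly twice: once to drop the $(\cdot)^+$ in the formula for $n+1\le j\le L$, and once to force $L-j>R-1$ in the argument for $i=L$, $j\le n$.
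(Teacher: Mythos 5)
Your proof is correct and follows the same route as the paper, which simply remarks that Lemma~\ref{passage_2} follows from Proposition~\ref{prop:general_passage}; you supply exactly the bookkeeping the paper omits, namely that $n\geq R$ kills the $(\cdot)^+$ in the $j>n$ branch and that $n<L-R$ forces $L-j>R-1$ so that $\pi_{L-j}=0$ when $i=L$, $j\leq n$.
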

The results of Lemma \ref{passage_2} come from Proposition \ref{prop:general_passage}.
Using Lemma \ref{passage_2}:\\
if $i < L$
\begin{equation} u(i)= \sum_{j=0}^n \pi_{i-j} u(j)+ \sum_{j=n+1}^L \pi_{i-(j-R)} u(j)\end{equation} 
By definition of $\pi$ given in definition \ref{pi}, then:
\begin{equation} u(i)=\sum_{\max(i+1-R,0)}^{\min(n,i)} \rho u(j) + \sum_{\max(n+1,i+1)}^{\min(L,i+R)} \rho u(j)\end{equation} 
According to Lemma \ref{passage_2}, we will distinguish between five sub-cases:\\
a) $i=L$\\
b) $0 \leq i \leq n-R$\\
c) $n+R+1 \leq i \leq L-1$\\
d) $n+1-R \leq i \leq n$\\
e) $n+1 \leq i \leq n+R$
\\
a) Proof of $u(i)=0$ for $i=L$:\\
if $i=L$, since $\forall j$ $p^n(j,L)=0$, then:
\begin{equation}u(L)=0
\end{equation}
b) Proof of $u(i)=0$ for $0 \leq i \leq n-R$:\\
We prove by induction that for all $0 \leq i < n+1-R$, $u(i)=0$.\\ 
In fact, if $0 \leq i < n+1-R$, then  $i < n-R < n$, $\min(n,i)=i$, and \ $\min(i+R,L) \leq i+R< n+1=\max(n+1,i+1)$.  Then:
\begin{equation} u(i)=\sum_{(i+1-R)^+}^i \rho u(j)\end{equation} 
for $i=0 \ u(0)=\rho u(0)$ i.e. $u(0)=0$ since $\rho<1$. \\
if $u(j)=0$ for all $j \leq i$, then:
\begin{align}
u(i+1)&=\sum_{(i+2-R)^+}^{i+1} \rho u(j)\\ 
&=\sum_{(i+2-R)^+}^{i} \rho u(j) +\rho u(i+1)\\
&=0+\rho u(i+1)\\
u(i+1)&=\rho u(i+1)
\end{align}
This implies that $u(i+1)=0$.
\\
c) Proof of $u(i)=0$ for $n+R+1 \leq i \leq L-1$:\\ 
If $i \geq n+R+1$ then $(i+1-R)^+=i+1-R > n=\min(n,i)$ and $\max(n+1,i+1)=i+1$. This implies that
 \begin{equation} u(i)= \sum_{i+1}^{\min(i+R,L)} \rho u(j)\end{equation} 
and we have $u(L)=0$.  \\
We now suppose that for all $k$ between $i$ and $L$: \ $u(k)=0$
then \begin{equation} u(i-1)= \sum_{i}^{\min(i-1+R,L)} \rho u(j)=0 \end{equation} 
We conclude the result.
\\
Next, we will provide a useful lemma which allows us to prove Proposition \ref{eq:stat.dist} for the cases (d) and (e). Before providing this lemma, we will give general expressions of $u(i)$ for these two cases.
\\ 
if $n+1-R \leq i \leq n$:\\
We have $\min(n,i)=i$, $\max(n+1,i+1)=n+1$, and $\min(L,i+R)=i+R$ (since $i+R \leq n+R < L-R+R=L$). Then: 
\begin{equation} u(i)=\sum_{(i+1-R)^+}^{i} \rho u(j) + \sum_{n+1}^{i+R} \rho u(j)\end{equation} 
We have $n-R+1 >0$ and $n-R+1 \geq i-R+1$, then $n-R+1 \geq (i+1-R)^+$. If $n-R+1 = (i+1-R)^+$, then we replace index $(i+1-R)^+$ by $n-R+1$ in the expression of $u(i)$. If $n-R+1 > (i+1-R)^+$, we know that for all $j$ less or equal to $n-R$, $u(j)=0$. Then, we can simplify the expression of $u(i)$ as follows:
\begin{align}
u(i)&=\sum_{n+1-R}^{i} \rho u(j) + \sum_{n+1}^{i+R} \rho u(j)\\
u(i)&=\sum_{n+1}^{i+R} \rho [u(j-R)+u(j)]
\end{align}
\\
if $n+1 \leq i \leq n+R$:\\
$i>R$, then $\max(i+1-R,0)=i+1-R$, $\min(n,i)=n$ and $\max(n+1,i+1)=i+1$.  Therefore:
 \begin{equation} u(i)=\sum_{i+1-R}^{n} \rho u(j) + \sum_{i+1}^{\min(L,i+R)} \rho u(j)\end{equation} 
 We have $i+R > n+R$, and $L > n+R$ because $n < L-R$, then  $\min(L,i+R) > n+R$. Therefore, given that $u(j)=0$ for all $j$ between $n+R+1$ and $\min(L,i+R)$, we can simplify the expression of $u(i)$ as follows:
\begin{align}
 u(i)&=\sum_{i+1-R}^{n} \rho u(j) + \sum_{i+1}^{n+R} \rho u(j)\\
 u(i)&=\sum_{i+1}^{n+R} \rho [u(j-R)+u(j)]
\end{align}
\\
\begin{Lemma}\label{ro_constante2}
 for $0 \leq k \leq R-1$, \begin{equation}u(n+R-k)+u(n-k)=\rho\end{equation}\\
 \end{Lemma}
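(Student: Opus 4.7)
My plan is to parallel the proof of Lemma~\ref{ro_constante} from the previous case, but to exploit the cleaner telescoping that arises here because the intermediate plateau on which $u(i)=\rho$ has collapsed (now that $R \leq n < L-R$). Concretely, reindexing $i = n-k$ (so that $i$ ranges over $[n-R+1,n]$ when $k$ ranges over $[0,R-1]$), the lemma becomes the claim that $u(i) + u(i+R) = \rho$ for every $i \in [n-R+1,n]$.

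The first step is to plug in the two simplified balance expressions derived just above the lemma statement: for $i \in [n-R+1,n]$ one has $u(i) = \rho \sum_{j=n+1}^{i+R} [u(j-R)+u(j)]$, and for $i+R \in [n+1, n+R]$ one has $u(i+R) = \rho \sum_{j=i+R+1}^{n+R} [u(j-R)+u(j)]$. Adding these two sums, the index ranges concatenate perfectly into $j \in [n+1,n+R]$, giving
\begin{equation}
u(i) + u(i+R) = \rho \sum_{j=n+1}^{n+R} \bigl[u(j-R)+u(j)\bigr] = \rho \sum_{j=n-R+1}^{n+R} u(j).
\end{equation}
Note in particular that the right-hand side no longer depends on $i$, which is precisely what the lemma asserts.

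To finish, I would invoke the vanishing results already established in sub-cases~(a), (b), (c) of the second case, namely $u(j)=0$ for $0 \leq j \leq n-R$ and for $n+R+1 \leq j \leq L$. Since $u$ is a probability distribution summing to $1$ over $\{0,\ldots,L\}$, these two vanishing facts force $\sum_{j=n-R+1}^{n+R} u(j) = 1$, so that $u(i)+u(i+R) = \rho$, completing the argument.

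I do not anticipate any real obstacle: the entire proof is a one-line telescoping of two partial sums, followed by a probability normalization. The only subtlety to check carefully is that the index bounds in the two balance expressions match up exactly at $j = i+R$ and that the boundary cases $k=0$ and $k=R-1$ (where one of the two partial sums becomes empty or reduces to a single term) are consistent with the general formula; both are handled by the convention that an empty sum equals zero, and $u(n+R)$ in the $k=0$ case is genuinely zero as it sits just at the upper edge of the support.
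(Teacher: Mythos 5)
Your proof is correct and follows essentially the same route as the paper's: both add the two simplified balance expressions, observe that the partial sums concatenate to $\rho\sum_{j=n-R+1}^{n+R}u(j)$, and conclude via the normalization $\sum_j u(j)=1$ together with the vanishing of $u$ on $[0,n-R]\cup[n+R+1,L]$ (the paper simply writes the four partial sums rather than the paired form $\sum_j[u(j-R)+u(j)]$, which is an immaterial reindexing). One small remark: your parenthetical that $u(n+R)=0$ ``because it sits at the upper edge of the support'' is not quite the right justification — it is zero because the balance sum for $i=n+R$ is empty, not because $n+R$ lies outside the support (it does not) — but this does not affect the argument.
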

\begin{proof}
\renewcommand{\qedsymbol}{$\blacksquare$} 
See appendix \ref{app:ro_constante2}
\end{proof}
Let us now prove the result for cases (d) and (e).
\\
d) Proof of $u(i)=\rho-(n-i)\rho^2$ for $n+1-R \leq i \leq n$:
\\
We prove by induction that, for $0\leq k \leq R-1$,  $u(n-k)=\rho - k\rho^2$ \\
For $k=0$: 
\begin{align}
u(n-0)&=\sum_{n+1}^{n+R} \rho [u(j-R)+u(j)]\\
&=\sum_{n+1-R}^{n} \rho u(j)+ \sum_{n+1}^{n+R} \rho u(j)\\
&=\rho \sum_{n+1-R}^{n+R} u(j)\\
u(n)&=\rho
\end{align}
\\
We suppose that the expression is true for some $k$, we prove it for $k+1$. 
\begin{align}
u(n-(k+1))&=\sum_{n+1}^{n-k-1+R} \rho (u(j-R)+u(j))\\
&=\sum_{n+1}^{n-k+R} \rho [u(j-R)+u(j)]-\rho[u(n-k)+u(n-k+R)]\\
&=u(n-k)-\rho[u(n-k)+u(n+R-k)]\\
&=\rho-k\rho^2-\rho[u(n-k)+u(n+R-k)]
\end{align}
Using Lemma \ref{ro_constante2}, $u(n-k)+u(n+R-k)=\rho$ , then 
\begin{align}
u(n-(k+1))&=\rho-k\rho^2-\rho(\rho)\\
&=\rho-k\rho^2-\rho^2\\
u(n-(k+1))&=\rho-(k+1)\rho^2
\end{align}
Thus we conclude that, for $k \in [0,R-1]$, $u(n-k)=\rho-k\rho^2$.\\
For $i \in [n+1-R,n]$, we replace $k \in [0,R-1]$ by $n-i$ ($n-i \in [0,R-1]$) and get:\begin{equation}u(i)=u(n-(n-i))=\rho-(n-i)\rho^2
\end{equation}
\\
e) Proof of $u(i)=\rho^2(n+R-i)$ for $n+1 \leq i \leq n+R$\\
We prove that, for $k \in [0,R-1]$, $u(n+R-k)=\rho^2 k$. 
From above, we have $u(n-k)=\rho - k\rho^2$,
and by using Lemma \ref{ro_constante2} we have:
\begin{align}
u(n+R-k)&=\rho-u(n-k)\\
&=\rho-(\rho-\rho^2 k)\\
u(n+R-k)&=\rho^2 k
\end{align}
For $i \in [n+1,n+R]$, by replacing $k \in [0,R-1]$ by $n+R-i$ ($n+R-i \in [0,n]$), we get:\begin{equation}u(i)=u(n+R-(n+R-i))=\rho^2(n+R-i)\end{equation}
This ends the proof of the second case.
\\
\\
\item Third case: $L-R \leq n<L$
\begin{equation}u(i)= \sum_{j=0}^n p^n(j,i)u(j)+ \sum_{j=n+1}^L p^n(j,i)u(j)\end{equation}
\begin{Lemma}\label{passage_3}
when $i<L$:\\
\begin{equation}
p^n(j,i)=\left\{
    \begin{array}{ll}
        \pi_{i-j} & if \ 0 \leq j \leq n \\
        \pi_{i-(j-R)} & if \ n+1 \leq j \leq L
    \end{array}
\right.
\end{equation}
when $i=L$:
\begin{equation}
p^n(j,L)=\left\{
    \begin{array}{ll}
        (R-L+j)\pi_{L-j} & if  \ 0 \leq j \leq n \\
        0 & if  \ n+1 \leq j \leq L \\
    \end{array}
\right.
\end{equation}
\end{Lemma}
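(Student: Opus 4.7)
The plan is to derive Lemma \ref{passage_3} as a direct specialization of Proposition \ref{prop:general_passage} to the regime $L-R \leq n < L$ under the standing assumption $L \geq 2R$ of the enclosing sub-case of Proposition \ref{eq:stat.dist}. The argument is a four-way case split on the position of $(i,j)$, and the only work beyond quoting Proposition \ref{prop:general_passage} is to simplify two pieces of notation: the $(\cdot)^+$ operator and the indicator-like mass of transitions into the absorbing boundary state $L$ when the user is scheduled.

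First, I would handle the rows with $i < L$. For $0 \leq j \leq n$ the threshold rule prescribes passivity and Proposition \ref{prop:general_passage} gives $p^n(j,i) = \pi_{i-j}$ with no further simplification. For $n+1 \leq j \leq L$ the rule prescribes activity and Proposition \ref{prop:general_passage} gives $p^n(j,i) = \pi_{i-(j-R)^+}$. Here I would observe that $j \geq n+1 \geq L-R+1 \geq R+1$, where the last inequality uses $L \geq 2R$, so $(j-R)^+ = j-R$ and the expression collapses to $\pi_{i-(j-R)}$ as claimed.

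Next, I would treat the rows with $i = L$. For $0 \leq j \leq n$ Proposition \ref{prop:general_passage} gives $p^n(j,L) = (R-L+j)\pi_{L-j}$ directly; no simplification is needed because the support condition $0 \leq L-j \leq R-1$ is already built into the definition of $\pi$ in Definition \ref{pi}. For $n+1 \leq j \leq L$ Proposition \ref{prop:general_passage} gives $p^n(j,L) = (R-L+(j-R)^+)\pi_{L-(j-R)^+}$. Using again $(j-R)^+ = j-R$, this becomes $(j-L)\pi_{L-j+R}$. Since $j \leq L$ one has $L-j+R \geq R$, hence $L-j+R \notin \{0,\ldots,R-1\}$, so $\pi_{L-j+R}=0$ and the whole expression vanishes. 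This matches the stated $p^n(j,L)=0$ on this range.

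There is no genuine obstacle in this lemma; it is a bookkeeping exercise that leverages the structural inequalities $n \geq L-R$ and $L \geq 2R$ to drop the $(\cdot)^+$ operator in the active branch and to kill all transitions into the buffer-full state from $j \geq n+1$. The two inequalities one must check are $j > R$ (to discard the $(\cdot)^+$) and $L-j+R \geq R$ (to annihilate $\pi_{L-j+R}$), both of which are immediate from the hypotheses on $j$ and $L$ in the considered sub-case.
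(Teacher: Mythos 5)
Your proposal is correct and follows the same route the paper takes: Lemma~\ref{passage_3} is read off from Proposition~\ref{prop:general_passage} after simplifying the $(\cdot)^+$ operator and observing that $\pi_{L-j+R}$ vanishes. The paper's own proof consists of the single sentence ``This Lemma comes from Proposition~\ref{prop:general_passage}''; your write-up supplies the straightforward verification (using $n\geq L-R$ and $L\geq 2R$ to conclude $j\geq R+1$, hence $(j-R)^+=j-R$, and $L-j+R\geq R$ to kill the $\pi$-term) that the paper leaves implicit.
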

This Lemma comes from Proposition \ref{prop:general_passage}.\\
So using Lemma \ref{passage_3}, and by definition of $\pi$:\\
if $i < L$:
\begin{equation} u(i)= \sum_{\max(i-R+1,0)}^{\min(i,n)} \rho u(j)+  \sum_{\max(n+1,i+1)}^{\min(L,i+R)} \rho u(j)\end{equation} 
if $i=L$:
\begin{equation} u(L)=\sum_{j=0}^n (R-L+j)\pi_{L-j} u(j)\end{equation} 
According to Lemma \ref{passage_3}, we will distinguish between five cases:\\
a) $0 \leq i \leq n-R$\\
b) $n+1 \leq i \leq L-1$\\
c) $n-R+1 \leq i \leq L-R-1$\\
d) $L-R \leq i \leq n$\\
e) $i=L$\\
\\
a) 	Proof of $u(i)=0$ for $0 \leq i \leq n-R$:\\
We prove by induction that, for $i \leq n-R$, $u(i)=0$. \\
Since $0 \leq i \leq n-R$, then $\min(i,n)=i$, $i+R \leq n < L$ and $\min(L,i+R)=i+R <n+1=\max(n+1,i+1)$.  Therefore:
\begin{equation} u(i)= \sum_{\max(i-R+1,0)}^{i} \rho u(j)\end{equation} 
for $i=0, u(0)=\rho u(0)=0$.\\ 
We consider that $u(j)=0$ for all $j$ between $0$ and $i$, we demonstrate that $u(i+1)=0$.
\begin{align}
u(i+1)&= \sum_{\max(i-R+2,0)}^{i+1} \rho u(j)\\ 
&= \sum_{\max(i-R+2,0)}^{i} \rho u(j)+\rho u(i+1)\\
&= 0+ \rho u(i+1)\\
u(i+1)&=\rho u(i+1)
\end{align}
This implies that:
\begin{equation} u(i+1)=\rho u(i+1)
\end{equation}  
Hence we prove that, for all $i \in [0,n-R]$, $u(i)=0$.
\\
 \\
We will provide a useful lemma which allows us to prove  Proposition \ref{eq:stat.dist} for  cases (b) and (c). Before giving this lemma, we will give general expressions of $u(i)$ for these two cases.\\  

if $n-R+1 \leq i \leq L-R-1$:\\
$i < L-R \leq n$, then $\min(i,n)=i$, $\max(n+1,i+1)=n+1$ and $\min(L,i+R)=i+R$. This implies that,
\begin{equation} u(i)= \sum_{(i-R+1)^+}^{i} \rho u(j)+  \sum_{n+1}^{i+R} \rho u(j)\end{equation} 
We have $n-R+1 >0$ and $n-R+1 > i-R+1$, which implies that  $n-R+1 > (i+1-R)^+$ and $n-R \geq (i+1-R)^+$. Since $u(j)=0$ for all  $j$ less or equal to $n-R$, we can simplify the expression of $u(i)$ as follows:
\begin{align}
    u(i)&= \sum_{n-R+1}^{i} \rho u(j)+  \sum_{n+1}^{i+R} \rho u(j)\\
    u(i)&= \sum_{n+1}^{i+R} \rho [u(j-R)+ u(j)]
\end{align}    
if $n+1 \leq i<L$:\\
We have $(i-R+1)^+=i-R+1$  (as $i\geq n+1 > R$), $\min(i,n)=n$,  $\max(n+1,i+1)=i+1$ and $\min(L,i+R)=L$  (due to $i+R > n+R \geq L-R+R = L$). Then: 
\begin{equation} u(i)= \sum_{i-R+1}^{n} \rho u(j)+  \sum_{i+1}^{L} \rho u(j)
\end{equation} 
 
\begin{Lemma}\label{ro}
 for $1 \leq k \leq L-n-1$, \begin{equation}u(n-R+k)+u(n+k)=\rho\end{equation}\\
\end{Lemma}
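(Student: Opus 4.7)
The plan is to show by a telescoping argument that the quantity $S_k:=u(n-R+k)+u(n+k)$ does not depend on $k$, and then to pin down $S_1$ via the normalization $\sum_{j=0}^L u(j)=1$ together with the already-established fact that $u(j)=0$ for $j\in[0,n-R]$.

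First, I would record the two balance equations relevant to the lemma, both obtained in the analysis of Case 3 above. Since $n-R+k\in[n-R+1,L-R-1]$ for $1\le k\le L-n-1$, the equation derived in that range reads
\begin{equation}
u(n-R+k)=\rho\sum_{j=n+1}^{n+k}\bigl[u(j-R)+u(j)\bigr]=\rho\sum_{l=1}^{k}S_l. \tag{$\ast$}
\end{equation}
Since $n+k\in[n+1,L-1]$, the equation derived in that range reads
\begin{equation}
u(n+k)=\rho\sum_{j=n+k-R+1}^{n}u(j)+\rho\sum_{j=n+k+1}^{L}u(j). \tag{$\ast\ast$}
\end{equation}

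Next, for any $k\ge 2$ I would take telescoping differences. From $(\ast)$ one gets $u(n-R+k)-u(n-R+k-1)=\rho S_k$. From $(\ast\ast)$ the only terms that survive the subtraction are the boundary ones, giving
\begin{equation}
u(n+k)-u(n+k-1)=-\rho\bigl[u(n+k-R)+u(n+k)\bigr]=-\rho S_k.
\end{equation}
Adding these two identities yields $S_k-S_{k-1}=0$. Hence $S_k=S_1$ for every $k\in[1,L-n-1]$, and it only remains to evaluate $S_1$.

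For the base case $k=1$, $(\ast)$ gives $u(n-R+1)=\rho S_1$, i.e.\ $u(n-R+1)(1-\rho)=\rho\, u(n+1)$. In $(\ast\ast)$ with $k=1$, I would replace $\sum_{j=n+2}^{L}u(j)$ by $1-\sum_{j=0}^{n+1}u(j)$ using the normalization, and then invoke the previously proved zero-range result $u(j)=0$ for $j\in[0,n-R]$ to reduce $\sum_{j=0}^{n-R+1}u(j)$ to just $u(n-R+1)$. This collapses $(\ast\ast)$ into $u(n+1)(1+\rho)=\rho\bigl(1-u(n-R+1)\bigr)$. Solving the resulting two-by-two linear system gives $u(n-R+1)=\rho^2$ and $u(n+1)=\rho(1-\rho)$, whence $S_1=\rho$.

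The main care-point is to verify that every balance equation is applied strictly inside its validity range: in particular at the upper boundary $k=L-n-1$ (where $n+k=L-1$ and equation $(\ast\ast)$ still applies but the term $u(L)$ in its right-hand side is implicit in the normalization trick used for $S_1$), and at the lower boundary $k=1$ (where $(\ast)$ degenerates to a single term so the ``telescoping'' argument starts from $k=2$, which is why $S_1$ must be handled separately). Everything else is a routine rewriting of the balance equations already obtained.
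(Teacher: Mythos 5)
Your proof is correct, but it takes a genuinely different route from the paper. The paper's argument is a one-shot computation: it simply substitutes the two balance-equation expressions
\begin{align}
u(n-R+k)&=\rho\sum_{j=n-R+1}^{n-R+k}u(j)+\rho\sum_{j=n+1}^{n+k}u(j),\qquad
u(n+k)=\rho\sum_{j=n+k-R+1}^{n}u(j)+\rho\sum_{j=n+k+1}^{L}u(j),
\end{align}
adds them, and observes that the first and third sums stitch perfectly into $\sum_{j=n-R+1}^{n}u(j)$ while the second and fourth stitch into $\sum_{j=n+1}^{L}u(j)$, so the total is $\rho\sum_{j=n-R+1}^{L}u(j)=\rho$ by normalization and the vanishing of $u$ on $[0,n-R]$. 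You instead deduce $S_k-S_{k-1}=0$ by differencing the same two balance equations, then pin down $S_1$ by solving a $2\times2$ linear system extracted from the $k=1$ instances plus normalization. Both are valid; I checked your telescoping differences ($u(n-R+k)-u(n-R+k-1)=\rho S_k$ and $u(n+k)-u(n+k-1)=-\rho S_k$) and your solution $u(n-R+1)=\rho^2$, $u(n+1)=\rho(1-\rho)$, $S_1=\rho$. The trade-off: the paper's version is shorter and entirely avoids a base-case computation, but requires you to notice globally that the four index ranges partition $[n-R+1,L]$; your version is more mechanical and local, at the cost of an extra linear solve. Note also that the telescoping route incidentally produces explicit values for $u(n-R+1)$ and $u(n+1)$, which the paper's proof does not (those are established separately in sub-cases (b) and (c)); this is a mild bonus of your approach but also means it does a bit more work than strictly necessary to prove the stated identity.
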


\begin{proof}
\renewcommand{\qedsymbol}{$\blacksquare$} 
See appendix \ref{app:ro}
\end{proof}

b) Proof of $u(i)=\rho-(i-n)\rho^2$ for $n+1 \leq i \leq L-1$:\\
We prove first that, for $1\leq k \leq L-n-1$, $u(n+k)=\rho-k\rho^2$.\\
In fact:\\
\begin{align}
u(n+k)&=\sum_{n+k-R+1}^{n} \rho u(j)+  \sum_{n+k+1}^{L} \rho u(j)\\
&= \rho- [\sum_{n-R+1}^{n+k-R} \rho u(j)+  \sum_{n+1}^{n+k} \rho u(j)]\\
&= \rho- [\sum_{1}^{k} \rho u(n-R+j)+  \sum_{1}^{k} \rho u(n+j)]\\
&= \rho- [\sum_{1}^{k} \rho [u(n-R+j)+u(n+j)]]
\end{align}
According to Lemma \ref{ro}, and given that $0 \leq k \leq L-n-1$, then for all $j \in [1,k]$, $u(n-R+j)+u(n+j)=\rho$, then: 
\begin{align}
u(n+k)&= \rho- [\sum_{1}^{k} \rho^2]\\
u(n+k)&= \rho- k\rho^2
\end{align}
Then for  $1\leq k \leq L-n-1$, $u(n+k)=\rho-k\rho^2$.\\
For $i \in [n+1,L-1]$, we replace $k \in [1,L-n-1]$ by $i-n$ ($i-n \in [1,L-n-1]$) and get:\begin{equation}u(i)=u(n+(i-n))=\rho-(i-n)\rho^2\end{equation}

c) Proof of $u(i)=\rho^2(R-n+i)$ for $n-R+1 \leq i \leq L-R-1$:\\
We need to prove that, for $k \in [1,L-n-1]$, $u(n-R+k)=\rho^2 k$\\ 
Given that $u(n+k)=\rho-\rho^2 k$ which is proved in case (d), and using Lemma \ref{ro}, then:
\begin{align}
u(n-R+k)&= \rho-u(n+k)\\
&=\rho-(\rho-\rho^2 k)\\
u(n-R+k)&=\rho^2 k
\end{align}
For $i \in [n-R+1,L-R-1]$, we replace  $k \in [1,L-n-1]$ by $R-n+i$ ($R-n+i \in [1,L-n-1]$) and get:\begin{equation}u(i)=u(n-R+(R-n+i))=\rho^2(R-n+i)\end{equation}
This ends the proof of case (c).

d) Proof of $u(i)=(1-\rho)^{n-i}\rho$ for $L-R \leq i \leq n$:\\
if $L-R \leq i \leq n$, $(i-R+1)^+=i-R+1$ because $i\geq L-R \geq R$, $\min(i,n)=i$, $\max(n+1,i+1)=n+1$ and $\min(L,i+R)=L$.  Then:
\begin{equation} u(i)= \sum_{i-R+1}^{i} \rho u(j)+  \sum_{n+1}^{L} \rho u(j)\end{equation}  
We have $n \geq i$, then $n-R+1 \geq i-R+1$. If $n-R+1 = i-R+1$, we replace $i-R+1$ by $n-R+1$ in the expression of $u(i)$. If $n-R+1 > i-R+1$, we know that, for all  $j$ less or equal to $n-R$, $u(j)=0$. We can then simplify the expression of $u(i)$ as follows: 
 \begin{equation} u(i)= \sum_{n-R+1}^{i} \rho u(j)+  \sum_{n+1}^{L} \rho u(j)\end{equation} 
In order to prove Proposition \ref{eq:stat.dist} for this case, we prove by induction that $u(n-k)=(1-\rho)^{k}\rho$ for $0\leq k \leq n-L+R$\\
For $k=0$:
\begin{align}
u(n)&=\sum_{n-R+1}^{n} \rho u(j)+  \sum_{n+1}^{L} \rho u(j)\\
&= \sum_{n-R+1}^{L} \rho u(j)\\ 
u(n)&= \rho  
\end{align}
We suppose it is true for $k$, we prove it for $k+1$:
\begin{align}
u(n-(k+1))&= \sum_{n-R+1}^{n-k-1} \rho u(j)+  \sum_{n+1}^{L} \rho u(j)\\
&= \sum_{n-R+1}^{n-k} \rho u(j)+  \sum_{n+1}^{L} \rho u(j)-\rho u(n-k)\\
&= u(n-k)-\rho u(n-k)\\
&= (1-\rho)^k\rho-\rho (1-\rho)^k\rho\\
&= (1-\rho)^k\rho(1-\rho)\\
u(n-(k+1))&= (1-\rho)^{k+1}\rho\\
\end{align}
Thus we conclude that, for $k \in [0,n-L+R]$, $u(n-k)=(1-\rho)^{k}\rho$.\\
For $i \in [L-R,n]$, we replace for $k \in [0,n-L+R]$ by $n-i$ ($n-i \in [0,n-L+R]$) and  get:\begin{equation}u(i)=u(n-(n-i))=(1-\rho)^{n-i}\rho\end{equation}
This proves the result.

e) Proof of $u(i)=(1-\rho)^{n-L+R+1}-\rho(L-1-n)$ for $i=L$:\\
\begin{align}
u(L)&=\sum_{j=0}^n (R-L+j)\pi_{L-j} u(j)\\
&=\sum_{j=L-R+1}^n (R-L+j)\rho u(j)
\end{align}
We replace $u(j)$ by its expression when $j \in [L-R+1,n]$ (it corresponds to the sub-case (d))
\begin{align}
u(L)&=\sum_{j=L-R+1}^n (R-L+j)[\rho (1-\rho)^{n-j}\rho]\\
u(L)&=\rho^2 \sum_{k=0}^{n-L+R-1} (R-L-k+n)(1-\rho)^{k}\\
u(L)&=(1-\rho)^{n-L+R+1}-\rho(L-1-n)
\end{align}
\item Fourth case: $n=L$\\
\begin{equation} u(i)= \sum_{j=0}^L p^L(j,i)u(j)
\end{equation} 
For $i \leq L-1$:\\
According to Proposition \ref{prop:general_passage}, we have:
\begin{equation} u(i)= \sum_{j=0}^L \pi_{i-j}u(j)
\end{equation} 
By definition of $\pi$, we get: 
\begin{equation} u(i)= \sum_{(i-R+1)^+}^{i} \rho u(j)
\end{equation} 
We prove by induction that for $0 \leq i <L$  $u(i)=0$\\
We have $u(0)=\rho u(0)=0$.\\
We suppose that $u(j)=0$ for all $0 \leq j \leq i$, then:
\begin{align}
u(i+1)&= \sum_{(i-R+2)^+}^{i+1} \rho u(j)\\
&= \sum_{(i-R+2)^+}^{i} \rho u(j)+\rho u(i+1)\\
&= 0+\rho u(i+1)\\
u(i+1)&= 0
\end{align}
Then, for all $i \in [0,L-1]$, $u(i)=0$.\\
Since $\sum_{j=0}^L u(j)=1$, we have  $u(L)=1-\sum_{j=0}^{L-1} u(j)=1-0=1$. \\

This ends the proof.
\end{enumerate}

\section{Proof of Lemma \ref{ro_constante}}\label{app:ro_constante}
\begin{align}
u(n-k)+u(n+R-k)&= \sum_{0}^{n-k} \rho u(j)+ \sum_{n+1}^{R} \rho u(j) +\sum_{R+1}^{n-k+R} \rho u(j)+ \sum_{n-k+1}^{n} \rho u(j) + \ \sum_{n+R-k+1}^{\min(n-k+2R,L)} \rho u(j))\\
u(n-k)+u(n+R-k)&=\rho \sum_{0}^{\min(2R+n-k,L)} u(j)
\end{align}
We know that $R>n$ and $n-k \geq 0$, which implies that $2R+n-k>n+R$ and $n+R < 2R \leq L$. and hence  $\min(2R+n-k,L)>n+R$.  Therefore, we get rid of all elements $u(j)$ such that $j \in [n+R+1,\min(2R+n-k,L)]$ since for all $j > n+R$, $u(j)=0$. Moreover $\sum_{0}^{n+R} u(j)=1$, consequently:
\begin{equation} u(k)+u(R+k)= \rho \sum_{0}^{n+R} u(j)=\rho\end{equation}

\section{Proof of Lemma \ref{ro_constante2}}\label{app:ro_constante2}
Since $n-R+1 \leq n-k \leq n$, and $n+1 \leq n+R-k \leq n+R$, then:\\
\begin{align}  
u(n-k)+u(n+R-k)&= \sum_{n+1-R}^{n-k} \rho u(j) + \sum_{n+1}^{n-k+R} \rho u(j) +\sum_{n-k+1}^{n} \rho u(j) + \sum_{n+R-k+1}^{n+R} \rho u(j)\\
&= \rho \sum_{n+1-R}^{n+R} u(j)
\end{align}
Given that $u(j)=0$ for $j \in [0,n-R] \cup [n+R+1,L]$, then  $\sum_{n+1-R}^{n+R} u(j)=1$. Consequently: 
\begin{equation} u(n-k)+u(n+R-k)= \rho\end{equation}  

\section{Proof of Lemma \ref{ro}}\label{app:ro}
Since $n-R+1 \leq n-R+k \leq L-R-1$, and $n+1 \leq n+k \leq L-1$, then:\\
\begin{align} 
u(n-R+k)+u(n+k)&= \sum_{n-R+1}^{n-R+k} \rho u(j)+  \sum_{n+1}^{n+k} \rho u(j)+ \sum_{n+k-R+1}^{n} \rho u(j)+  \sum_{n+k+1}^{L} \rho u(j)\\
u(n-R+k)+u(n+k)&=\rho \sum_{n-R+1}^{L} u(j)
\end{align} 
As we have demonstrated that $u(i)=0$ for $i \in [0,n-R]$, then $\sum_{n-R+1}^{L} u(j)=1$. Therefore, \\
\begin{equation} u(n-R+k)+u(n+k)= \rho\end{equation}

\section{Proof of Proposition \ref{prop_wi}}\label{app:prop_wi}
As mentioned previously in the paper, we denote $\sum_{q=0}^L au^n(q)q$ by $a_n$ and $\sum_{q=0}^{n} u^{n}(q)$ by $b_n$. Before proving the proposition, we give two useful lemmas.
\begin{Lemma}\label{inequalities}
Considering $a_{j-1}, a_j, a_{j+1}$ and $b_{j-1}, b_j, b_{j+1}$, such that $b_{j-1} < b_j < b_{j+1}$.\\

If $\frac{a_j-a_{j-1}}{b_{j}-b_{j-1}} \leq \frac{a_{j+1}-a_j}{b_{j+1}-b_j}$\\
Then: \begin{equation}\frac{a_{j}-a_{j-1}}{b_{j}-b_{j-1}} \leq \frac{a_{j+1}-a_{j-1}}{b_{j+1}-b_{j-1}} \leq \frac{a_{j+1}-a_j}{b_{j+1}-b_j}\end{equation}  

If $\frac{a_j-a_{j-1}}{b_{j}-b_{j-1}} \geq \frac{a_{j+1}-a_j}{b_{j+1}-b_j}$
Then: \begin{equation}\frac{a_{j}-a_{j-1}}{b_{j}-b_{j-1}} \geq \frac{a_{j+1}-a_{j-1}}{b_{j+1}-b_{j-1}} \geq \frac{a_{j+1}-a_j}{b_{j+1}-b_j}\end{equation}

If $\frac{a_{j}-a_{j-1}}{b_j-b_{j-1}} \leq \frac{a_{j+1}-a_{j-1}}{b_{j+1}-b_{j-1}}$ \\
Then: \begin{equation}\frac{a_{j}-a_{j-1}}{b_{j}-b_{j-1}} \leq \frac{a_{j+1}-a_{j-1}}{b_{j+1}-b_{j-1}} \leq \frac{a_{j+1}-a_j}{b_{j+1}-b_j}\end{equation}

If $\frac{a_j-a_{j-1}}{b_j-b_{j-1}} \geq \frac{a_{j+1}-a_{j-1}}{b_{j+1}-b_{j-1}}$
Then: \begin{equation}\frac{a_{j}-a_{j-1}}{b_{j}-b_{j-1}} \geq \frac{a_{j+1}-a_{j-1}}{b_{j+1}-b_{j-1}} \geq \frac{a_{j+1}-a_j}{b_{j+1}-b_j}\end{equation}

If $\frac{a_{j+1}-a_{j-1}}{b_{j+1}-b_{j-1}} \leq \frac{a_{j+1}-a_{j}}{b_{j+1}-b_{j}}$
Then: \begin{equation}\frac{a_{j}-a_{j-1}}{b_{j}-b_{j-1}} \leq \frac{a_{j+1}-a_{j-1}}{b_{j+1}-b_{j-1}} \leq \frac{a_{j+1}-a_j}{b_{j+1}-b_j}\end{equation} 

If $\frac{a_{j+1}-a_{j-1}}{b_{j+1}-b_{j-1}} \geq \frac{a_{j+1}-a_{j}}{b_{j+1}-b_{j}}$
Then: \begin{equation}\frac{a_{j}-a_{j-1}}{b_{j}-b_{j-1}} \geq \frac{a_{j+1}-a_{j-1}}{b_{j+1}-b_{j-1}} \geq \frac{a_{j+1}-a_j}{b_{j+1}-b_j}\end{equation} 

\end{Lemma}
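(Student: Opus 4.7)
The plan is to reduce all six implications to a single observation: the "long" slope $S_3 := \frac{a_{j+1}-a_{j-1}}{b_{j+1}-b_{j-1}}$ is a convex combination of the two "short" slopes $S_1 := \frac{a_j-a_{j-1}}{b_j-b_{j-1}}$ and $S_2 := \frac{a_{j+1}-a_j}{b_{j+1}-b_j}$. Explicitly, I would set $\lambda = \frac{b_j-b_{j-1}}{b_{j+1}-b_{j-1}}$, so that $1-\lambda = \frac{b_{j+1}-b_j}{b_{j+1}-b_{j-1}}$, and verify by direct expansion that $S_3 = \lambda S_1 + (1-\lambda) S_2$. The hypothesis $b_{j-1} < b_j < b_{j+1}$ guarantees $\lambda \in (0,1)$, so $S_3$ sits between $S_1$ and $S_2$ in the order sense.

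From this single identity the lemma falls out by case work. If $S_1 \le S_2$, then convexity gives $S_1 \le \lambda S_1 + (1-\lambda) S_2 \le S_2$, i.e.\ $S_1 \le S_3 \le S_2$; and symmetrically if $S_1 \ge S_2$. This covers the first two implications of the lemma directly. For the remaining four implications, I would observe that each hypothesis (e.g.\ $S_1 \le S_3$, or $S_3 \le S_2$) forces the relative order of $S_1$ and $S_2$: since $S_3$ is a strict convex combination, $S_1 \le S_3$ already implies $S_1 \le S_2$ (otherwise the weighted average would be strictly less than $S_1$), and similarly $S_3 \le S_2$ implies $S_1 \le S_2$. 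Thus each of the six hypotheses reduces to one of the two sign regimes, and the conclusion follows from the convexity argument above.

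There is essentially no hard step here; this is the classical mediant/slope lemma, sometimes written as $\frac{p}{q},\frac{r}{s} \Rightarrow \frac{p+r}{q+s}$ lies between them when $q,s>0$. The only thing to be mildly careful about is handling the weak versus strict inequalities consistently (the hypothesis allows equality, and I only need weak monotonicity of convex combinations, which is immediate). I would present the proof in two short paragraphs: one establishing the convex-combination identity, and one discharging the six implications as immediate corollaries of "a weighted mean lies between its extremes."
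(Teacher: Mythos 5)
Your proof is correct and rests on the same underlying observation as the paper's: writing
\[
\frac{a_{j+1}-a_{j-1}}{b_{j+1}-b_{j-1}}=\frac{a_{j+1}-a_{j}}{b_{j+1}-b_{j-1}}+\frac{a_{j}-a_{j-1}}{b_{j+1}-b_{j-1}}
\]
and recognizing that this is a weighted average of the two short slopes. The paper carries out this computation by hand for the first implication only (bounding $a_{j+1}-a_j$ via the hypothesis, then factoring out the common slope) and dismisses the other five cases as "similar." What you add is organizational clarity: you state the convex-combination identity $S_3=\lambda S_1+(1-\lambda)S_2$ with $\lambda\in(0,1)$ explicitly, and you show that each of the four remaining hypotheses (e.g.\ $S_1\le S_3$ or $S_3\le S_2$) already forces the sign of $S_1-S_2$, because $S_3$ is a strict interior average. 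That last step is exactly what the paper hand-waves over with "the proof is similar," so your version is tighter and genuinely discharges all six implications, but it is not a different route — it is the same mediant argument made explicit.
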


\begin{proof}
\renewcommand{\qedsymbol}{$\blacksquare$}
See appendix \ref{app:inequalities}
\end{proof}

\begin{Lemma}\label{n_j_min}
The largest minimizer at step $j$ in algorithm 1 satisfies $n_j=\min\{k:b_k=b_{n_j}\}$
\end{Lemma}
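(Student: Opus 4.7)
The plan is to proceed by contradiction: assume that $n_j$ is the largest minimizer at step $j$ of Algorithm 1 but that there exists some $k < n_j$ with $b_k = b_{n_j}$. I will show that any such $k$ either contradicts the optimality of $n_j$ as a minimizer of $F(n) := (a_n - a_{n_{j-1}})/(b_n - b_{n_{j-1}})$, or contradicts the monotonicity of $b_n$ established earlier (Theorem \ref{indexability}), so that no such $k$ can exist and hence $n_j$ is the smallest index in its $b$-equivalence class.

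First I would record two elementary facts that follow directly from the setup: (i) because $n_j \notin M_j$, one has $n_j > n_{j-1}$ and $b_{n_j} \neq b_{n_{j-1}}$, and since $b$ is non-decreasing by Theorem \ref{indexability} this forces $b_{n_j} > b_{n_{j-1}}$; (ii) by the hypothesis of Proposition \ref{prop_wi}, whenever $i < j$ and $b_i = b_j$ one has $a_i < a_j$. Fact (ii) is the essential lever: it says that along any flat segment of $b$, the quantity $a$ is strictly increasing.

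Next, I would split the hypothetical $k < n_j$ with $b_k = b_{n_j}$ into two cases. If $k > n_{j-1}$, then $b_k = b_{n_j} \neq b_{n_{j-1}}$, so $k \notin M_j$ and $k$ is a legitimate competitor at step $j$. Since $b_k = b_{n_j}$, the denominators in $F(k)$ and $F(n_j)$ coincide, and by (ii) the numerator satisfies $a_k - a_{n_{j-1}} < a_{n_j} - a_{n_{j-1}}$, giving $F(k) < F(n_j)$ and contradicting the minimality of $F(n_j)$. If instead $k \leq n_{j-1}$, then the monotonicity of $b$ gives $b_k \leq b_{n_{j-1}} < b_{n_j}$, which directly contradicts $b_k = b_{n_j}$.

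The main obstacle I expect is simply bookkeeping: making sure that the exclusion set $M_j$ and the definition of $n_{j-1}$ as the largest minimizer at the previous step are used consistently so that the strict inequality $b_{n_j} > b_{n_{j-1}}$ is actually available (rather than just $\geq$). Once this is in place, the two-case argument closes quickly, and the conclusion $n_j = \min\{k : b_k = b_{n_j}\}$ follows immediately.
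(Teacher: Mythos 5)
Your proof is correct and takes essentially the same route as the paper's: both derive $b_{n_j} > b_{n_{j-1}}$ from the exclusion of $M_j$ plus monotonicity of $b$, use that strict inequality to force any competitor $k$ with $b_k = b_{n_j}$ to satisfy $k > n_{j-1}$, and then exploit the equal denominators together with the hypothesis that $a$ is strictly increasing on flat segments of $b$ to contradict the minimality of $F(n_j)$. The paper phrases it as a direct comparison ($a_{n_j}\le a_i$ versus $a_i<a_{n_j}$) rather than a two-case contradiction, but the substance is identical.
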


\begin{proof}
\renewcommand{\qedsymbol}{$\blacksquare$}
See appendix \ref{app:n_j_min}.    
\end{proof}

We start by indexability:\\
 We consider $W_1 < W_2$ and prove that the optimal threshold $n_1$, when $W=W_1$,  is less than $n_2$ (when $W=W_2$). In fact if $n_1 \leq n_2$ and the threshold is $n_1$,  all states $[0,n_1]$, for which the optimal decision is passive action, are included in $[0,n_2]$. This implies the desired result $D(W_1) \subseteq D(W_2)$.\\
In order to prove that, we just need to prove that $b_{n_1} \leq b_{n_2}$ since $n_1 \leq n_2$ is equivalent to $b_{n_1} \leq b_{n_2}$ (due to increasiness of $b_n$).\\
We have according to equation~\eqref{eq:relaxed} and by definition of $n_1$ and $n_2$: \begin{equation}a_{n_1}-W_1 b_{n_1} \leq a_{n_2}-W_1 b_{n_2}\end{equation}
\begin{equation}a_{n_1}-W_2 b_{n_1} \geq a_{n_2}-W_2 b_{n_2}\end{equation}
This implies: \begin{equation}W_2 (b_{n_1}-b_{n_2}) \leq a_{n_1}-a_{n_2} \leq W_1 (b_{n_1}-b_{n_2})\end{equation}
Therefore: $(W_2-W_1) (b_{n_1}-b_{n_2}) \leq 0$. Since $W_2-W_1>0$,
hence: $b_{n_1} \leq b_{n_2}$,
then $n_1 \leq n_2$.\\
We conclude the indexability.
\\\\
For the Whittle's index expressions, we need to demonstrate that, for $k \in ]n_{j-1},n_j]$, $W_j=\min\{W,k \in D(W)\}$.\\
For that, we prove first that for $W < W_j$ then $k \notin D(W)$.\\
When $k > n_{j-1}$, $W < W_j$,
and $b_k \neq b_{n_{j-1}}$, then $W < W_j \leq \frac{a_k-a_{n_{j-1}}}{b_k-b_{n_{j-1}}}$, 
and $a_k-b_k W > a_{n_{j-1}}-b_{n_{j-1}} W$.\\
When $k > n_{j-1}$,  $W < W_j$ and $b_k = b_{n_{j-1}}$, then
given that $a_k > a_{n_{j-1}}$ we have
$a_k-b_k W > a_{n_{j-1}}-b_{n_{j-1}} W$\\ 
Hence we have proved that, for $W < W_j$ and $k > n_{j-1}$, $a_k-b_k W > a_{n_{j-1}}-b_{n_{j-1}} W$. That means at $W$ the optimal threshold is $n_{j-1}$ or even less. Therefore, for $k \in ]n_{j-1},n_j]$ where $k$ is necessary strictly higher  than the threshold,  the optimal action for $k$ is active action , i.e. $k \notin D(W)$.\\
There is still to prove that $k$ $\in$  $D(W_j)$.\\
For that, we prove that the threshold is at least $n_j$ when $W=W_j$. In other words, for all $k < n_j$, $a_k-b_k W_j \geq a_{n_j}-b_{n_j} W_j$.
We demonstrate this result by induction in $j$.\\
For $j=0$, we have for all $n$, $b_n > 0$, then $W_{0}$ is well defined.\\
$W_{0} \leq \frac{a_k-a_{-1}}{b_k}$ $\forall k \geq 0$.
Then for $0 \leq k < n_0$, according to Lemma \ref{n_j_min}, $b_k < b_{n_0}$.  Thus, by using Lemma \ref{inequalities} (fourth case), we can deduce that $\frac{a_{n_0}-a_{k}}{b_{n_0}-b_k} \leq W_0$. That means, for $k \in [-1,n_0[$, $\frac{a_{n_0}-a_{k}}{b_{n_0}-b_k} \leq W_0$, which implies that  $a_k-b_k W_0 \geq a_{n_0}-b_{n_0} W_0$.\\
We suppose at step $j$, $a_k-b_k W_j \geq a_{n_j}-b_{n_j} W_j$ i.e. $\frac{a_{n_j}-a_{k}}{b_{n_j}-b_k} \leq W_j$ for $k < n_j$ (this remains true since $b_k < b_{n_j}$ according to Lemma \ref{n_j_min}).\\
At $j+1$:\\
When $n_j \leq k < n_{j+1}$, then if $b_k \neq b_{n_j}$, $\frac{a_k-a_{n_j}}{b_k-b_{n_j}} \geq W_{j+1}$. Thus, by using Lemma \ref{inequalities} (fourth case), we get $\frac{a_{n_{j+1}}-a_{k}}{b_{n_{j+1}}-b_k} \leq W_{j+1}$ ($b_{n_j} < b_k < b_{n_{j+1}}$).
If $b_k=b_{n_j}$, $\frac{a_{n_{j+1}}-a_{k}}{b_{n_{j+1}}-b_k}=\frac{a_{n_{j+1}}-a_{k}}{b_{n_{j+1}}-b_{n_j}} \leq \frac{a_{n_{j+1}}-a_{n_j}}{b_{n_{j+1}}-b_{n_j}}=W_{j+1}$ since $a_k \geq a_{n_j}$.\\
When $k<n_j$, we have $\frac{a_{n_j}-a_{k}}{b_{n_j}-b_k} \leq W_j$ (induction assumption).  Using the definition of $n_j$ defined in Algorithm 1, we have $W_j < \frac{a_{n_{j+1}}-a_{n_{j-1}}}{b_{n_{j+1}}-b_{n_{j-1}}}$. Then according to Lemma \ref{inequalities} (third case), $W_j \leq W_{j+1}$( $b_{n_{j-1}} < b_{n_j} < b_{n_{j+1}}$).
Therefore $\frac{a_{n_j}-a_{k}}{b_{n_j}-b_k} \leq W_{j+1}$ and by using again Lemma \ref{inequalities} (first case), $\frac{a_{n_{j+1}}-a_{k}}{b_{n_{j+1}}-b_k} \leq W_{j+1}$. Therefore, for all $k \leq n_{j+1}$, $a_k-b_k W_{j+1} \geq a_{n_{j+1}}-b_{n_{j+1}} W_j$.\\
Thus, we have proved by induction that at any step $j$, for $k < n_j$, $a_k-b_k W_j \geq a_{n_j}-b_{n_j} W_j$.\\ 
Then when $W=W_j$, the threshold is at least $n_j$. This means that for $k \in ]n_{j-1},n_j]$, $k$ is less or equal than the threshold, which implies that the optimal decision at state $k$ is passive action, i.e. $k \in D(W_j)$.\\
As we have demonstrated that for $k \in ]n_{j-1},n_j]$ and $W < W_j$, $k \notin D(W)$ and $k \in D(W_j)$, then $W_j=\min\{W,k \in D(W)\}$. This concludes the proof.

\section{Proof of Lemma \ref{inequalities}}\label{app:inequalities}
We will just prove the first case. For the other cases, the proof is similar.\\
First case:$\frac{a_j-a_{j-1}}{b_{j}-b_{j-1}} \leq \frac{a_{j+1}-a_j}{b_{j+1}-b_j} \Longrightarrow \frac{a_{j}-a_{j-1}}{b_{j}-b_{j-1}} \leq \frac{a_{j+1}-a_{j-1}}{b_{j+1}-b_{j-1}} \leq \frac{a_{j+1}-a_j}{b_{j+1}-b_j}$:\\
For the LHS inequality:
\begin{align}
\frac{a_{j+1}-a_{j-1}}{b_{j+1}-b_{j-1}}& = \frac{a_{j+1}-a_j}{b_{j+1}-b_{j-1}}+\frac{a_{j}-a_{j-1}}{b_{j+1}-b_{j-1}}\\
&\geq \frac{(a_{j}-a_{j-1})(b_{j+1}-b_j)}{(b_{j}-b_{j-1})(b_{j+1}-b_{j-1})}+\frac{a_{j}-a_{j-1}}{b_{j+1}-b_{j-1}}
\end{align}
The inequality above comes from the fact that  $b_{j-1} < b_j < b_{j+1}$ and $\frac{a_j-a_{j-1}}{b_{j}-b_{j-1}} \leq \frac{a_{j+1}-a_j}{b_{j+1}-b_j}$\\
Then
\begin{align}
\frac{a_{j+1}-a_{j-1}}{b_{j+1}-b_{j-1}}& \geq \frac{a_{j}-a_{j-1}}{b_{j}-b_{j-1}} [\frac{b_{j+1}-b_j+b_j-b_{j-1}}{b_{j+1}-b_{j-1}}]\\
&= \frac{a_{j}-a_{j-1}}{b_{j}-b_{j-1}}
\end{align}
For the RHS inequality:
\begin{align}
\frac{a_{j+1}-a_{j-1}}{b_{j+1}-b_{j-1}}&= \frac{a_{j+1}-a_j}{b_{j+1}-b_{j-1}}+\frac{a_{j}-a_{j-1}}{b_{j+1}-b_{j-1}}\\
&\leq \frac{a_{j+1}-a_{j}}{b_{j+1}-b_{j-1}}+\frac{(a_{j+1}-a_{j})(b_j-b_{j-1})}{(b_{j+1}-b_j)(b_{j+1}-b_{j-1})}
\end{align}
where the above inequality comes from the fact that  $b_{j-1} < b_j < b_{j+1}$ and $\frac{a_j-a_{j-1}}{b_{j}-b_{j-1}} \leq \frac{a_{j+1}-a_j}{b_{j+1}-b_j}$\\
Then
\begin{align}
\frac{a_{j+1}-a_{j-1}}{b_{j+1}-b_{j-1}}& \leq \frac{a_{j+1}-a_{j}}{b_{j+1}-b_{j}} [\frac{b_{j+1}-b_{j}+b_{j}-b_{j-1}}{b_{j+1}-b_{j-1}}]\\
&= \frac{a_{j+1}-a_{j}}{b_{j+1}-b_{j}}
\end{align}

\section{Proof of Lemma \ref{n_j_min}}\label{app:n_j_min}
We consider $i$ such that $b_i=b_{n_j}$ and we prove that $n_j \leq i$:\\
By construction of $n_j$, $b_{n_{j-1}} \neq b_{n_j}$ and $n_{j-1} < n_j$. Hence, by increasiness of $b_k$, $b_{n_j} \geq b_{n_{j-1}}$.\\   
Therefore $b_i=b_{n_j}>b_{n_{j-1}}$, and  $i>n_{j-1}$. Consequently, according to definition of $n_j$:
\begin{equation}\frac{a_{n_j}-a_{n_{j-1}}}{b_{n_j}-b_{n_{j-1}}} \leq \frac{a_i-a_{n_{j-1}}}{b_{i}-b_{n_{j-1}}}\end{equation}
\begin{equation}\frac{a_{n_j}-a_{n_{j-1}}}{b_{n_j}-b_{n_{j-1}}} \leq \frac{a_i-a_{n_{j-1}}}{b_{n_j}-b_{n_{j-1}}}\end{equation}
This implies that $a_{n_j} \leq a_i$.\\
If $i<n_j$, as $b_i=b_{n_j}$, then  $a_i < a_{n_j}$ which contradicts with $a_{n_j} \leq a_i$.\\
Therefore $n_j \leq i$. This concludes the proof.

\section{Proof of Lemma \ref{temps_pass_str_croiss_less_R}}\label{app:temps_pass_str_croiss_less_R}
For $n \in [-1,R-2]$ \\
\begin{align}
\sum_{q=0}^{n+1} u^{n+1}(q)-\sum_{q=0}^n u^n(q)&=(1-\frac{n+1}{2R})(\frac{n+2}{R})-(1-\frac{n}{2R})(\frac{n+1}{R})\\
&=\frac{R-1-n}{R^2}\\
&>0
\end{align}

\section{Proof of Lemma \ref{temps_pass_str_croiss}}\label{app:temps_pass_str_croiss}
We introduce a useful Lemma:
\begin{Lemma}\label{inequality}
we have the inequality: for all $x \in ]0,1[$ \begin{equation}x+\ln(1-x)(1-x) > 0\end{equation}
\end{Lemma}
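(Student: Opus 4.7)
The plan is to set $f(x) = x + (1-x)\ln(1-x)$ on $[0,1)$ and show $f(x)>0$ on $(0,1)$ by a boundary value plus monotonicity argument. Observe that $f$ is continuous at $0$ with $f(0)=0$, since $(1-x)\ln(1-x)\to 0$ as $x\to 0^+$.

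Next I would differentiate: by the product rule,
\begin{equation}
f'(x)=1+(-1)\ln(1-x)+(1-x)\cdot\frac{-1}{1-x}=-\ln(1-x).
\end{equation}
For $x\in(0,1)$ we have $1-x\in(0,1)$, hence $\ln(1-x)<0$, which gives $f'(x)>0$ on $(0,1)$.

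Finally, since $f$ is continuous on $[0,1)$, differentiable with strictly positive derivative on $(0,1)$, and $f(0)=0$, the mean value theorem (or direct monotonicity) yields $f(x)>f(0)=0$ for every $x\in(0,1)$, which is exactly the claimed inequality. There is no substantive obstacle here: the whole argument is a one-line derivative computation followed by the observation that $-\ln(1-x)>0$ on $(0,1)$.
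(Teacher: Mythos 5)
Your proof is correct and follows exactly the same approach as the paper: define $v(x)=x+(1-x)\ln(1-x)$, compute $v'(x)=-\ln(1-x)>0$ on $(0,1)$, and conclude from $v(0)=0$. Nothing to add.
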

\begin{proof}
\renewcommand{\qedsymbol}{$\blacksquare$}
See appendix \ref{app:inequality}.
\end{proof}
We note that $R \geq 2$, then $\rho \in ]0,1[$.\\
We denote the function $h(n)=\sum_{q=0}^n u^n(q)=\frac{\rho^2}{2}(L-1-n)(L-n) + 1 -(1-\rho)^{n-L+R+1}$. We give the first derivative and the second derivative of $h$:
\begin{align}
h^{'}(n)&=\frac{\rho^2}{2}(-2L+1+2n)-\ln(1-\rho)(1-\rho)^{n-L+R+1}\\
h^{''}(n)&=\rho^2-(\ln(1-\rho))^2(1-\rho)^{n-L+R+1}
\end{align}
For $n \in [L-R+1,L-1]$, $(1-\rho)^{n-L+R+1}$ is decreasing in $n$, then \begin{equation}h^{''}(n) \geq \rho^2-(\ln(1-\rho))^2(1-\rho)^2\end{equation}
Using lemma \ref{inequality}, \begin{equation}\rho > -\ln(1-\rho)(1-\rho)\end{equation} then \begin{equation}\rho^2 > (\ln(1-\rho))^2(1-\rho)^2\end{equation}
Therefore
\begin{equation}h^{''}(n) \geq \rho^2-(\ln(1-\rho))^2(1-\rho)^2 > 0\end{equation} i.e. $h{'}$ is strictly increasing function in $n$. \\
We have $h{'}(L-R+1)=\frac{3\rho^2}{2}-\rho -\ln(1-\rho)(1-\rho)^2$. In order to prove the positivity of $h'$,
we introduce the function \begin{equation}r(x)=\frac{3x^2}{2}-x -\ln(1-x)(1-x)^2\end{equation}
$r'(x)=2(x+\ln(1-x)(1-x)) > 0$ (according to Lemma \ref{inequality}), which means $r$ is strictly increasing in $[0,1[$. Hence, for all $x \in ]0,1[$, $r(x) > r(0)=0$.\\
Then:\begin{equation}h'(L-R+1)=\frac{3\rho^2}{2}-\rho -\ln(1-\rho)(1-\rho)^2 > 0\end{equation} Since $h'$ is increasing function in $n$, then: \begin{equation}h'(n) \geq h'(L-R+1)> 0\end{equation}
Therefore $h$ is strictly increasing in $n$.
This concludes the proof.

\section{Proof of Lemma \ref{inequality}}\label{app:inequality}
We consider the function $v(x)= x+ \ln(1-x)(1-x)$ in $[0,1[$ \\
the first derivative:  $v'(x)= -\ln(1-x) > 0$ for all $x\in ]0,1[$, we have
$v(0)=0$, then for all $x \in ]0,1[ \ v(x) > v(0)=0$,
which concludes the result.

\section{Proof of Lemma \ref{average_cost_strict_increasing_n_greater_L_moins_R}}\label{app:average_cost_strict_increasing_n_greater_L_moins_R}
For $n \in [L-R,L-2]$, we have:
\begin{equation}\sum_{q=0}^L au^{n+1}(q)q-\sum_{q=0}^L au^{n}(q)q=1-2(1-\rho)^{n-L+1+R} +2L\rho-2n\rho-2\rho\end{equation}
If we denote the function $p$ as:
\begin{align}
p(n)&=1-2(1-\rho)^{n-L+1+R} +2L\rho-2n\rho-2\rho\\
p^{''}(n)&=-2(\ln(1-\rho))^2(1-\rho)^{n-L+1+R}
\end{align}
 
Hence, as $p^{''}(n) \leq 0$, $p$ is concave, that is $p$ is quasi-concave in $[L-R,L-1]$ , then:
\begin{align}
p(n)&\geq \min(p(L-R),p(L-1)]\\
p(L-R)&=1-2(1-\rho)+2-2\rho=1 >0\\
p(L-1)&=1-2(1-\rho)^R
\end{align}
As $(1-\rho)^R \leq \exp(-1)$ (with $ \exp $ the exponential function) for all $R \geq 2$, then:
\begin{equation}p(L-1) \geq 1-2 \exp(-1) > 0\end{equation}
Thus $p(n) > 0$ in $[L-R,L-1]$. 
Hence, for $n \in [L-R,L-2]$  \begin{equation}\sum_{q=0}^L au^{n+1}(q)q-\sum_{q=0}^L au^{n}(q)q > 0\end{equation}

\section{Proof of Lemma \ref{intersection_point_Whittle_index}}\label{app:intersection_point_Whittle_index}
At $W=x_{i,j}$, $y^i(W)=y^j(W)$, i.e.:
\begin{align}
\sum_{q=0}^L au^i(q)q \ - \ W \sum_{q=0}^i u^i(q)&=\sum_{q=0}^L au^j(q)q \ - \ W \sum_{q=0}^j u^j(q)\\
\sum_{q=0}^L au^i(q)q \ -\sum_{q=0}^L au^i(q)q  &=W \sum_{q=0}^i u^i(q) - \ W \sum_{q=0}^j u^j(q)\\
\sum_{q=0}^L au^i(q)q \ -\sum_{q=0}^L au^i(q)q  &=W [\sum_{q=0}^i u^i(q) -\sum_{q=0}^j u^j(q)]
\end{align}
Hence
\begin{equation}W=\frac{\sum_{q=0}^L au^i(q)q-\sum_{q=0}^{L} au^{j}(q)}{\sum_{q=0}^i u^i(q)-\sum_{q=0}^{j} u^{j}(q)}\end{equation}

\section{Proof of Lemma \ref{f_strictly_increasing}}\label{app:f_strictly_increasing}
We start by giving a useful lemma.

\begin{Lemma}\label{increasiness_Whittle_index_less_R}
$w_n$ is strictly increasing in $n \in [0,R-1]$.
\end{Lemma}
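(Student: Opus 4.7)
The plan is to use the explicit formula $w_n = \frac{aRn}{R-n}$ stated in the definition just before the lemma and show directly that $w_{n+1} - w_n > 0$ for every $n \in [0, R-2]$ (so that the function is strictly increasing on the integer range $[0, R-1]$). Because the closed form is a simple rational function, no auxiliary machinery is needed.

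Concretely, I would compute
\begin{equation}
w_{n+1} - w_n \;=\; \frac{aR(n+1)}{R-(n+1)} - \frac{aRn}{R-n}
\;=\; \frac{aR\bigl[(n+1)(R-n) - n(R-n-1)\bigr]}{(R-n-1)(R-n)}.
\end{equation}
Expanding the bracket in the numerator gives $(n+1)(R-n) - n(R-n-1) = R$, so
\begin{equation}
w_{n+1} - w_n \;=\; \frac{aR^2}{(R-n-1)(R-n)}.
\end{equation}
For $n \in [0,R-2]$ both factors in the denominator are strictly positive, and since $a > 0$ and $R \ge 2$, the right-hand side is strictly positive. This yields $w_{n+1} > w_n$ for all $n$ in the range, proving strict monotonicity.

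Alternatively (and essentially equivalently), one can view $w(x) = \frac{aRx}{R-x}$ as a continuous function of $x \in [0,R)$ and differentiate:
\begin{equation}
w'(x) \;=\; \frac{aR(R-x) + aRx}{(R-x)^2} \;=\; \frac{aR^2}{(R-x)^2} > 0,
\end{equation}
which shows the function is strictly increasing on $[0,R)$ and hence in particular at integer points in $[0, R-1]$. There is no real obstacle here: the only thing to watch is that the denominator $R-n$ remains positive throughout the interval, which is guaranteed by $n \le R-1$ and the standing assumption $R \ge 2$.
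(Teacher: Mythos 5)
Your proof is correct and takes essentially the same approach as the paper: both compute the difference $w_{n+1}-w_n=\frac{aR^2}{(R-n-1)(R-n)}$ and observe that it is strictly positive for $n\in[0,R-2]$. The additional derivative-based argument you give is a harmless alternative phrasing of the same fact.
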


\begin{IEEEproof}
\renewcommand{\qedsymbol}{$\blacksquare$}
for $n \in [0,R-2]$:
\begin{equation} w_{n+1}-w_n=\frac{aR^2}{(R-n)(R-n-1)} > 0.\end{equation}  
\end{IEEEproof}

Let us first consider the interval $[0,R-1]$.\\
We have:
\begin{equation} f'(n)=(w_n)^{'}[1-(1-\frac{n}{2R})\frac{n+1}{R}]+w_n [1-((1-\frac{n}{2R})\frac{n+1}{R})]^{'}+[a(\frac{R-1}{2}+\frac{n(n+1)}{2R})]^{'}\end{equation} 
First, we deal with the first term $(w_n)^{'}[1-(1-\frac{n}{2R})\frac{n+1}{R}]$:\\
According to Lemma \ref{increasiness_Whittle_index_less_R}, $(w_n)^{'}$ is positive since $w_n$ is increasing in $n$, and $1-\sum_{q=0}^n u^n(q) =1-(1-\frac{n}{2R})\frac{n+1}{R}$ is strictly positive since $\sum_{q=0}^n u^n(q) < 1$ for $n \leq R-1<L$. Then, $(w_n)^{'}[1-(1-\frac{n}{2R})\frac{n+1}{R}] \geq 0$, for $n \in [0,R-1]$.\\
For the second term, we have:
\begin{equation} w_n[1-((1-\frac{n}{2R})\frac{n+1}{R})]^{'}=a\frac{2n^2R-2R^2n+Rn}{(R-n)(2R^2)}\end{equation} 
For the third term $[a(\frac{R-1}{2}+\frac{n(n+1)}{2R})]^{'}=a \frac{2n+1}{2R}.$\\
Adding the second term to the third term, we get:
\begin{align}
w_n [1-((1-\frac{n}{2R})\frac{n+1}{R})]^{'}+[a(\frac{R-1}{2}+\frac{n(n+1)}{2R})]^{'}=&a\frac{2n^2R-2R^2n+Rn}{(R-n)(2R^2)}+a \frac{2n+1}{2R}\\ 
=&\frac{a}{2R(R-n)} > 0
\end{align} 
So $f$ is strictly increasing in $[0,R-1]$\\
For $n=-1$, $f(-1)=0 < f(0)=\frac{a(R-1)}{2}$, and $f(R-1)< +\infty$
then $f$ in strictly increasing in $[-1,R]$.\\

\section{Proof of Theorem \ref{w.i}}\label{app:w.i}
In order to prove the theorem, we introduce the following useful lemmas.
\begin{Lemma} \label{th.triangulaire}
For any numerical sequence: $-1 \leq i_{-1}<i_0<i_1<....<i_M \leq L$, such that for any $k \in [0,M-1]$, $b_{i_{k-1}} < b_{i_k} < b_{i_{k+1}}$ and 
\begin{equation}  \frac{a_{i_k}-a_{i_{k-1}}}{b_{i_k}-b_{i_{k-1}}} < \frac{a_{i_{k+1}}-a_{i_{k}}}{b_{i_{k+1}}-b_{i_k}}\end{equation} 
Then for any $k \in [0,M-1]$, we have for each $k<s \leq M$:
\begin{equation} \frac{a_{i_{s}}-a_{i_{k-1}}}{b_{i_s}-b_{i_{k-1}}} > \frac{a_{i_k}-a_{i_{k-1}}}{b_{i_k}-b_{i_{k-1}}} \end{equation}         
\end{Lemma}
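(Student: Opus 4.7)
The plan is to fix $k \in [0,M-1]$ and establish the desired inequality for all $s \in \{k+1,\dots,M\}$ by induction on $s$, using Lemma \ref{inequalities} (case 1) as the only engine. Rather than trying to maintain only the target conclusion, I would carry the stronger joint invariant
\[
(I_s)\;:\quad \frac{a_{i_s}-a_{i_{k-1}}}{b_{i_s}-b_{i_{k-1}}} \;<\; \frac{a_{i_{s+1}}-a_{i_s}}{b_{i_{s+1}}-b_{i_s}},
\]
for $s=k,k+1,\dots,M-1$. The base case $(I_k)$ is exactly the ambient hypothesis of the lemma specialized at $j=k$, so nothing is to be shown there.

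For the inductive step, assume $(I_s)$. Since $b_{i_{k-1}}<b_{i_s}<b_{i_{s+1}}$ by the strict monotonicity of the $b_{i_\cdot}$ sequence, I would apply Lemma \ref{inequalities} (case 1) with the triple $(j-1,j,j+1)$ replaced by $(i_{k-1},i_s,i_{s+1})$. This yields the sandwich
\[
\frac{a_{i_s}-a_{i_{k-1}}}{b_{i_s}-b_{i_{k-1}}} \;<\; \frac{a_{i_{s+1}}-a_{i_{k-1}}}{b_{i_{s+1}}-b_{i_{k-1}}} \;<\; \frac{a_{i_{s+1}}-a_{i_s}}{b_{i_{s+1}}-b_{i_s}}.
\]
The left strict inequality, transitively chained back to the base case of the induction on $s$, delivers the conclusion of the lemma at index $s+1$, namely
\[
\frac{a_{i_k}-a_{i_{k-1}}}{b_{i_k}-b_{i_{k-1}}} \;<\; \frac{a_{i_{s+1}}-a_{i_{k-1}}}{b_{i_{s+1}}-b_{i_{k-1}}}.
\]
The right inequality of the sandwich, combined with the ambient hypothesis $\frac{a_{i_{s+1}}-a_{i_s}}{b_{i_{s+1}}-b_{i_s}}<\frac{a_{i_{s+2}}-a_{i_{s+1}}}{b_{i_{s+2}}-b_{i_{s+1}}}$ (available whenever $s+1\leq M-1$), reinstates $(I_{s+1})$ and closes the induction. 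The induction terminates at $s=M$, where $(I_M)$ is vacuous.

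The argument is essentially careful bookkeeping; conceptually there is no obstacle. The one point that deserves attention is choosing the correct invariant: trying to prove only $\frac{a_{i_s}-a_{i_{k-1}}}{b_{i_s}-b_{i_{k-1}}}>\frac{a_{i_k}-a_{i_{k-1}}}{b_{i_k}-b_{i_{k-1}}}$ in isolation does not immediately feed back into Lemma \ref{inequalities} at the next step, whereas the joint invariant $(I_s)$ simultaneously gives the conclusion and the premise needed to apply Lemma \ref{inequalities} again. A minor issue is that Lemma \ref{inequalities} is stated with non-strict inequalities, but inspection of its proof (Appendix \ref{app:inequalities}) shows that strictness is transmitted from input to output, so the strict conclusion of the present lemma follows from the strict hypothesis.
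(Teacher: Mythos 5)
Your proof is correct, but it follows a genuinely different inductive route from the paper's. The paper does \emph{not} invoke Lemma~\ref{inequalities} when proving Lemma~\ref{th.triangulaire}; instead, it maintains only the target conclusion as the induction hypothesis and redoes the underlying algebra directly. Concretely, the paper writes
\[
\frac{a_{i_{s+1}}-a_{i_{k-1}}}{b_{i_{s+1}}-b_{i_{k-1}}}
=\frac{a_{i_{s+1}}-a_{i_s}}{b_{i_{s+1}}-b_{i_{k-1}}}+\frac{a_{i_s}-a_{i_{k-1}}}{b_{i_{s+1}}-b_{i_{k-1}}}
\]
and bounds the first piece below by using the fact that the adjacent slopes are monotone along the whole sequence (so $\frac{a_{i_{s+1}}-a_{i_s}}{b_{i_{s+1}}-b_{i_s}}>\frac{a_{i_k}-a_{i_{k-1}}}{b_{i_k}-b_{i_{k-1}}}$), and the second piece below by the induction hypothesis itself. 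After clearing the denominators, both lower bounds combine telescopically to reproduce $\frac{a_{i_k}-a_{i_{k-1}}}{b_{i_k}-b_{i_{k-1}}}$. In contrast, you factor the whole step through Lemma~\ref{inequalities} (case~1), which requires you to carry the adjacent-slope invariant $(I_s)$ precisely so that the hypothesis of that lemma is available at each step, and you then extract both the conclusion and $(I_{s+1})$ from the two sides of the sandwich. Your observation that the conclusion alone would not feed back into Lemma~\ref{inequalities} is correct, but the paper sidesteps this by not using Lemma~\ref{inequalities} at all: the conclusion alone does suffice as the sole induction hypothesis once you compare against the fixed reference slope $\frac{a_{i_k}-a_{i_{k-1}}}{b_{i_k}-b_{i_{k-1}}}$ instead of the varying adjacent slope. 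Your route is arguably cleaner in terms of modular reuse; the paper's route is slightly leaner in bookkeeping. Your strictness remark is also right --- the paper's statement of Lemma~\ref{inequalities} uses non-strict inequalities, but its proof transmits strictness, so invoking it in strict form is legitimate.
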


\begin{IEEEproof}
\renewcommand{\qedsymbol}{$\blacksquare$} 
We fix certain $k \in [0,M-1]$, we prove  the result by induction:\\
for $s=k+1$
\begin{align}
\frac{a_{i_{k+1}}-a_{i_{k-1}}}{b_{i_{k+1}}-b_{i_{k-1}}}&=\frac{a_{i_{k+1}}-a_{i_{k-1}}-a_{i_k}+a_{i_{k}}}{b_{i_{k+1}}-b_{i_{k-1}}}\\ 
&= \frac{a_{i_{k+1}}-a_{i_{k}}}{b_{i_{k+1}}-b_{i_{k-1}}}+\frac{a_{i_k}-a_{i_{k-1}}}{b_{i_{k+1}}-b_{i_{k-1}}}\\
&> \frac{(a_{i_k}-a_{i_{k-1}})(b_{i_{k+1}}-b_{i_k})}{(b_{i_k}-b_{i_{k-1}})(b_{i_{k+1}}-b_{i_{k-1}})}+\frac{(a_{i_k}-a_{i_{k-1}})(b_{i_k}-b_{i_{k-1}})}{(b_{i_k}-b_{i_{k-1}})(b_{i_{k+1}}-b_{i_{k-1}})}
\end{align}
where the strict inequality comes from the lemma's assumptions. We then have:
\begin{align}
\frac{a_{i_{k+1}}-a_{i_{k-1}}}{b_{i_{k+1}}-b_{i_{k-1}}}& > \frac{a_{i_k}-a_{i_{k-1}}}{b_{i_k}-b_{i_{k-1}}} [\frac{b_{i_{k+1}}-b_{i_k}}{b_{i_{k+1}}-b_{i_{k-1}}}+\frac{b_{i_k}-b_{i_{k-1}}}{b_{i_{k+1}}-b_{i_{k-1}}}]\\
&=\frac{a_{i_k}-a_{i_{k-1}}}{b_{i_k}-b_{i_{k-1}}}
\end{align}
By induction, we consider that the above inequality is true for certain $s$ strictly higher than $k$. The inequality below is then verified for $s+1$:
\begin{align}
\frac{a_{i_{s+1}}-a_{i_{k-1}}}{b_{i_{s+1}}-b_{i_{k-1}}}&=\frac{a_{i_{s+1}}-a_{i_{k-1}}-a_{i_{s}}+a_{i_{s}}}{b_{i_{s+1}}-b_{i_{k-1}}}\\ 
&= \frac{a_{i_{s+1}}-a_{i_{s}}}{b_{i_{s+1}}-b_{i_{k-1}}}+\frac{a_{i_{s}}-a_{i_{k-1}}}{b_{i_{s+1}}-b_{i_{k-1}}}\\
&> \frac{(a_{i_k}-a_{i_{k-1}})(b_{i_{s+1}}-b_{i_s})}{(b_{i_k}-b_{i_{k-1}})(b_{i_{s+1}}-b_{i_{k-1}})}+\frac{(a_{i_k}-a_{i_{k-1}})(b_{i_s}-b_{i_{k-1}})}{(b_{i_k}-b_{i_{k-1}})(b_{i_{s+1}}-b_{i_{k-1}})}\\
&=\frac{a_{i_k}-a_{i_{k-1}}}{b_{i_k}-b_{i_{k-1}}} [\frac{b_{i_{s+1}}-b_{i_s}}{b_{i_{s+1}}-b_{i_{k-1}}}+ \frac{b_{i_s}-b_{i_{k-1}}}{b_{i_{s+1}}-b_{i_{k-1}}}]\\
&=\frac{a_{i_k}-a_{i_{k-1}}}{b_{i_k}-b_{i_{k-1}}}.
\end{align}
So the inequality is also true for $s+1$. This concludes the proof of the lemma.

\end{IEEEproof}
\begin{Lemma}\label{L_d_less_W_d_plus_1}
If  $L \leq \frac{f(d+1)}{a}$, then  $\frac{a_L-a_{d}}{b_L-b_{d}} \leq \frac{a_{d+1}-a_{d}}{b_{d+1}-b_{d}}$
\end{Lemma}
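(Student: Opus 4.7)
The plan is to rewrite the hypothesis $L\le f(d+1)/a$ as a chord-slope inequality and then invoke Lemma \ref{inequalities} for the triple of indices $(d,d+1,L)$. Geometrically, the three objects $w_{d+1}$, $x_{L,d+1}$ and $x_{L,d}$ are exactly the slopes joining the three points $(b_d,a_d)$, $(b_{d+1},a_{d+1})$ and $(b_L,a_L)$, and the assertion just says that the long chord $x_{L,d}$ is dominated by the first short chord $w_{d+1}$.

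First I would translate the hypothesis. Since $u^L$ concentrates at $L$, we have $b_L=1$ and $a_L=aL$, so $x_{L,d+1}=(aL-a_{d+1})/(1-b_{d+1})$ is a strictly increasing affine function of $L$ (the explicit formula for $b_{d+1}$ given in Section \ref{sec:WI} yields $b_{d+1}\le (R+1)/(2R)<1$). By the definition of $f$ stated just before Lemma \ref{f_strictly_increasing}, $L=f(d+1)/a$ is by construction the unique value for which $w_{d+1}=x_{L,d+1}$, and monotonicity in $L$ therefore gives the equivalence
\begin{equation*}
L\le \frac{f(d+1)}{a}\iff x_{L,d+1}\le w_{d+1}\iff \frac{a_L-a_{d+1}}{b_L-b_{d+1}}\le \frac{a_{d+1}-a_d}{b_{d+1}-b_d}.
\end{equation*}

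Next I would apply the second case of Lemma \ref{inequalities} with the roles $(j-1,j,j+1)\mapsto (d,d+1,L)$. The required ordering $b_d<b_{d+1}<b_L$ follows from Lemma \ref{temps_pass_str_croiss_less_R} (which handles $d+1\le R-1$) together with $b_L=1>b_{d+1}$. Applied to the chord slopes above, that case of Lemma \ref{inequalities} gives
\begin{equation*}
\frac{a_{d+1}-a_d}{b_{d+1}-b_d}\ \ge\ \frac{a_L-a_d}{b_L-b_d}\ \ge\ \frac{a_L-a_{d+1}}{b_L-b_{d+1}},
\end{equation*}
whose left-hand inequality is exactly the claim. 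The one degenerate situation is $d=R-1$, where $b_d=b_{d+1}$ and the right-hand side of the claim equals $w_R=+\infty$ by the convention fixed in the definition of $f$; the inequality then holds trivially. I do not anticipate any real obstacle: the lemma is a direct three-slopes argument, and the only care needed is to recognise $f(d+1)/a$ as the crossing value of $L$ at which the chord from $d+1$ to $L$ matches the chord $w_{d+1}$.
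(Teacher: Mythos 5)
Your proof is correct and follows essentially the same route as the paper: translate $L\le f(d+1)/a$ into the chord inequality $x_{L,d+1}\le w_{d+1}$ via the defining property of $f$, then apply the second case of Lemma \ref{inequalities} to the triple $(d,d+1,L)$ to obtain $x_{L,d}\le w_{d+1}$. You additionally spell out the monotonicity of $x_{L,d+1}$ in $L$ and explicitly handle the $d=R-1$ edge case (where $b_d=b_{d+1}$ and $w_{d+1}=+\infty$), which the paper leaves implicit, but the underlying argument is the same.
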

\begin{IEEEproof}
\renewcommand{\qedsymbol}{$\blacksquare$}
This lemma is an immediate application of Lemma \ref{inequalities}.\\
In fact when $L \leq \frac{f(d+1)}{a}$ it implies that $\frac{a_L-a_{d+1}}{b_L-b_{d+1}} \leq \frac{a_{d+1}-a_{d}}{b_{d+1}-b_{d}}$\\
Then according to the second case in Lemma \ref{inequalities}, we have directly:
\begin{equation}\frac{a_L-a_{d}}{b_L-b_{d}} \leq \frac{a_{d+1}-a_{d}}{b_{d+1}-b_{d}}\end{equation}
\end{IEEEproof}

\begin{Lemma}\label{x_L_R_less_x_n_R}
The intersection points $x_{L,R}$ and $x_{n,R}$ satisfy $x_{L,R} \leq x_{n,R}$, when $n \in [L-R+2,L-1]$.
\end{Lemma}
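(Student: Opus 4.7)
I would proceed by a direct algebraic reduction. Using $a_R = aR$, $b_R = \frac{1}{2} + \frac{1}{2R}$, $a_L = aL$, $b_L = 1$ from Section~\ref{sec:WI}, together with the closed forms for $(a_n, b_n)$ on $n \in [L-R+1, L-1]$, and setting $m := L - n \in [1, R-2]$ and $\tau := (1-\rho)^{R+1-m}$, both denominators $b_n - b_R$ and $b_L - b_R = (R-1)/(2R)$ are positive; the positivity of $b_n - b_R$ on this range follows from Lemma~\ref{temps_pass_str_croiss} together with the direct computation $b_{L-R+1} = (R+1)/(2R) = b_R$. Cross-multiplication of $x_{L,R} \leq x_{n,R}$ is therefore sign-preserving, and after expansion the cross-terms containing $\tau(L-R)$ and $\tau(R-1)$ combine into $\tau(L-1)$, collapsing the condition into the equivalent scalar inequality
\[
2R(L-1)\,\tau \;\geq\; (R-1)(R-1+m) \;+\; \frac{m(m-1)(L-1)}{R}, \qquad m \in [1, R-2].
\]

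The remaining task is to verify this under the standing hypothesis $L \geq 2R$. I would define
\[
\Phi(m) \;:=\; 2R(L-1)(1-\rho)^{R+1-m} - (R-1)(R-1+m) - \frac{m(m-1)(L-1)}{R},
\]
check $\Phi(m) \geq 0$ at the endpoints $m \in \{1, R-2\}$, and argue no interior sign change of $\Phi$. At $m = 1$ the condition reduces to $2R(L-1)(1-\rho)^R \geq R(R-1)$: the elementary bound $(1-1/R)^R \geq 1/4$ for $R \geq 2$ combined with $L \geq 2R$ yields LHS $\geq R(L-1)/2 \geq R(2R-1)/2 \geq R(R-1) = $ RHS. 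A parallel estimate at $m = R-2$ uses $(1-1/R)^3 \geq 1 - 3/R$. For the interior, I would analyze the first difference $\Phi(m+1) - \Phi(m) = 2R(L-1)(1-\rho)^{R-m}\rho - (R-1) - 2m(L-1)/R$, consisting of a geometric piece in $m$ and a linear piece, and show it has at most one sign change (positive-then-negative); together with the endpoint bounds this forces $\Phi$ to be unimodal from above and hence non-negative on $[1, R-2]$.

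\textbf{Main obstacle.} The tightness of the inequality is the main technical hurdle: at $L = 2R$ and $m = R-2$, both sides are of leading order $R^2$ with a gap of only $O(R)$, so the estimates must be tracked carefully and crude bounds will not suffice. If the clean unimodality argument for $\Phi$ fails for small $R$, the fallback is to handle $R \in \{2,3,4,5,6\}$ case-by-case and apply an asymptotic/convexity argument for large $R$, exploiting the log-convexity of the map $m \mapsto (1-\rho)^{R+1-m}$ together with a second-order Taylor expansion anchored at the two endpoints.
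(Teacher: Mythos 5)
Your algebraic reduction is correct and, after the change of variable $m=L-n$, your $\Phi(m)$ is exactly the paper's numerator function $p$ evaluated at $L-m$: writing $p(x)=(x-R)(R-1)-2R(L-1)\bigl[\tfrac{\rho^2}{2}(L-1-x)(L-x)+1-(1-\rho)^{x-L+R+1}-\tfrac12-\tfrac{1}{2R}\bigr]$ one has $\Phi(m)=p(L-m)$, so you are working with the same object. The paper then shows $p''<0$ on $[L-R+1,L-1]$ (via the elementary Lemma~\ref{inequality}, $\rho>-\ln(1-\rho)(1-\rho)$), concludes quasi-concavity, and checks the two endpoints. Your plan is the discrete version of this: the ``positive-then-negative'' claim for the first difference does hold, since
\begin{equation}
g(m+1)-g(m)=\frac{2(L-1)}{R}\bigl[(1-\rho)^{R-m-1}-1\bigr]<0 \quad\text{for } m\le R-2,
\end{equation}
i.e.\ $g$ is strictly decreasing, so $\Phi$ is a concave sequence, and non-negativity at the two ends propagates to the interior. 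Your $m=1$ endpoint check is also correct and matches the paper's $p(L-1)\ge 0$.

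The gap is at the other endpoint. You propose checking $m=R-2$ via the estimate $(1-1/R)^3\ge 1-3/R$, but this is too crude: at $R=5$, $L=2R=10$ it gives the lower bound $2R(L-1)(1-3/R)=36$ against the actual $46.08$, and $36-28-54/5<0$, so the inequality does not close with that bound (even though $\Phi(3)=7.28>0$ is in fact true). The clean fix, which is exactly what the paper does, is to check the endpoint $m=R-1$ instead of $m=R-2$: there all the $\rho$-powers cancel and
\begin{equation}
\Phi(R-1)=p(L-R+1)=(R-1)(L-2R+1)\ge 0
\end{equation}
is immediate from $L\ge 2R$. Since $\Phi$ is concave on the integer range $[1,R-1]$ and non-negative at $m=1$ and $m=R-1$, it is non-negative on all of $[1,R-1]\supseteq[1,R-2]$, which is what the lemma requires. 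Note also that the worry in your ``Main obstacle'' paragraph (potential failure of unimodality for small $R$) is misplaced: the first difference is monotone decreasing unconditionally; it is the endpoint estimate, not the shape argument, that needed tightening.
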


\begin{IEEEproof}
\renewcommand{\qedsymbol}{$\blacksquare$}
We have:
\begin{equation}x_{L,R}=\frac{2R(L-1)}{R-1} -2R\end{equation}
\begin{equation}x_{n,R}=\frac{n-R}{\frac{\rho^2}{2}(L-1-n)(L-n) + 1 -(1-\rho)^{n-L+R+1}-\frac{1}{2}-\frac{1}{2R}}-2R\end{equation}
\begin{equation}x_{n,R}-x_{L,R}= \frac{n-R}{\frac{\rho^2}{2}(L-1-n)(L-n) + 1 -(1-\rho)^{n-L+R+1}-\frac{1}{2}-\frac{1}{2R}}-\frac{2R(L-1)}{R-1}\end{equation}
\begin{equation} =\frac{(n-R)(R-1)-2R(L-1)(\frac{\rho^2}{2}(L-1-n)(L-n) + 1 -(1-\rho)^{n-L+R+1}-\frac{1}{2}-\frac{1}{2R})}{(R-1)(\frac{\rho^2}{2}(L-1-n)(L-n) + 1 -(1-\rho)^{n-L+R+1}-\frac{1}{2}-\frac{1}{2R})}\end{equation}
The denominator is greater than $0$ since $R > 1$, and $h(n)=\frac{\rho^2}{2}(L-1-n)(L-n) + 1 -(1-\rho)^{n-L+R+1} > h(L-R+1)=\frac{1}{2}+\frac{1}{2R}$ for $n \in [L-R+2,L-1]$ (using Lemma \ref{temps_pass_str_croiss}). \\
We consider the following function (which is equal to the numerator): \begin{equation}p(x)=(x-R)(R-1)-2R(L-1)(\frac{\rho^2}{2}(L-1-x)(L-x) + 1 -(1-\rho)^{x-L+R+1}-\frac{1}{2}-\frac{1}{2R})\end{equation}
The function $p$ is concave in the interval $[L-R+1,L-1]$ as $p{''}$ is negative. Then,  $p$ is quasi-concave in this interval and we have that  $\ p(x) \geq \min(p(L-R+1),p(L-1))$ for all $x \in [L-R+1,L-1]$, where 
\begin{equation}p(L-R+1)=(L-2R+1)(R-1) \geq 0\end{equation}
and \begin{equation}p(L-1)=2R(L-1)(1-\rho)^R+R-R^2 \geq 0\end{equation}
where the last inequality is due to the following analysis. First we use the fact that $(1-\rho)^R \geq 1/4$ for all $R \geq 2$ then \begin{equation}2R(L-1)(1-\rho)^R \geq \frac{2R(L-1)}{4}\end{equation}
\begin{equation}2R(L-1)(1-\rho)^R+R-R^2 \geq \frac{2R(L-1)}{4} +R-R^2\end{equation}
We have $L-1 \geq 2R-1$, then:
\begin{align}
2R(L-1)(1-\rho)^R+R-R^2 \geq& \frac{2R(2R-1)}{4}+R-R^2\\
\geq& R^2-\frac{R}{2}+R-R^2\\
\geq& \frac{R}{2} \geq 0
\end{align}
From all the analysis above, we conclude that for all $n \in [L-R+1,L-1]$ \ $p(n)\geq 0$. This is also true for $n \in [L-R+2,L-1]$.  
Hence, the numerator and denominator of $x_{n,R}-x_{L,R}$ are positive , which concludes the proof.
\end{IEEEproof}

\begin{Lemma}\label{L_d_less_n_d}
For any $d \in [0,R-1]$, $x_{L,d} \leq x_{n,d}$ for any $n \in [L-R+2,L-1]$.
\end{Lemma}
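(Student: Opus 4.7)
The plan is to follow the same template as the proof of Lemma \ref{x_L_R_less_x_n_R}, now with the general index $d$ playing the role that $R$ played there. After reducing the inequality $x_{L,d}\leq x_{n,d}$ to the non-negativity of the numerator of $x_{n,d}-x_{L,d}$ over a positive denominator, I would show that this numerator, viewed as a function of $n$ on $[L-R+1,L-1]$, is concave, and then verify non-negativity at the two endpoints $n=L-R+1$ and $n=L-1$.

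Concretely, define
\[
\Phi(n)=(a_n-a_d)(1-b_d)-(aL-a_d)(b_n-b_d),
\]
so that the claimed inequality is equivalent to $\Phi(n)\geq 0$; the denominators $1-b_d$ and $b_n-b_d$ are both strictly positive, the second one because $b_n>b_{L-R+1}=\tfrac12+\tfrac{1}{2R}\geq b_{R-1}\geq b_d$ for $n\in[L-R+2,L-1]$ and $d\in[0,R-1]$ by Lemma \ref{temps_pass_str_croiss} and the formulas of Proposition \ref{eq:stat.dist}. Using the closed forms of $a_n$ and $b_n$ on $[L-R+1,L-1]$, a direct computation gives
\[
\Phi''(n)=(\ln(1-\rho))^{2}(1-\rho)^{n-L+R+1}\bigl[2aR(1-b_d)+(aL-a_d)\bigr]-\bigl[2a\rho(1-b_d)+(aL-a_d)\rho^{2}\bigr].
\]
Applying Lemma \ref{inequality} in the equivalent form $(\ln(1-\rho))^{2}<\rho^{2}/(1-\rho)^{2}$, together with $(1-\rho)^{n-L+R-1}\leq 1$ on the range of $n$, the first bracket is strictly dominated by the second after using the identity $R\rho=1$; hence $\Phi''(n)<0$. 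Thus $\Phi$ is concave, and therefore quasi-concave, on $[L-R+1,L-1]$, so that $\Phi(n)\geq\min\{\Phi(L-R+1),\Phi(L-1)\}$ and it suffices to verify the two endpoint inequalities.

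The endpoints translate back into $x_{L-R+1,d}\geq x_{L,d}$ and $x_{L-1,d}\geq x_{L,d}$. For the left endpoint I would use $b_{L-R+1}=b_{R-1}$: when $d=R-1$ the two Whittle-index candidates share a denominator and $a_{L-R+1}>a_{R-1}$ makes the inequality immediate, while for $d<R-1$ the same inequality follows from a short algebraic manipulation of the explicit expressions combined with $L\geq 2R$. For the right endpoint, using the closed form $b_{L-1}=1-(1-\rho)^{R}$ and $a_{L-1}=a[L-R+2R(1-\rho)^{R}]$, the inequality reduces to the manageable form $aR[1-2(1-\rho)^{R}](1-b_d)\leq (aL-a_d)(1-\rho)^{R}$; this is closed out exactly as in the proof of Lemma \ref{x_L_R_less_x_n_R} by the bound $(1-\rho)^{R}\geq e^{-1}\geq 1/4$, valid for $R\geq 2$, together with $L\geq 2R$ and $d\leq R-1$.

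The main obstacle will be the endpoint verifications: unlike in Lemma \ref{x_L_R_less_x_n_R}, both $\Phi(L-R+1)$ and $\Phi(L-1)$ depend nontrivially on $d$, so one must either produce an algebraic identity that holds uniformly in $d\in[0,R-1]$ or carry out a secondary monotonicity argument in $d$ and check only the extremes $d=0$ and $d=R-1$. The concavity step is essentially identical to the one in the earlier lemma and is, by contrast, routine.
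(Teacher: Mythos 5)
Your reduction to $\Phi(n)=(a_n-a_d)(1-b_d)-(aL-a_d)(b_n-b_d)\geq 0$ is correct, and the concavity computation is correct and clean: from $R\rho=1$, $\rho^2[2aR(1-b_d)+(aL-a_d)]-[2a\rho(1-b_d)+(aL-a_d)\rho^2]=2a\rho(1-b_d)(R\rho-1)=0$, so $\Phi''<0$ on $[L-R+1,L-1]$ once you bound the exponential as you do. This is a genuinely different route from the paper, which never differentiates in $n$ here; instead it chains mediant inequalities: it first gets $x_{L,R-1}\leq x_{n,R-1}$ from Lemma~\ref{x_L_R_less_x_n_R} via $b_R=b_{R-1}$ and $b_L>b_n$, then uses $L\leq f(d+1)/a$ (i.e. $x_{L,d+1}\leq w_{d+1}$) together with Lemma~\ref{L_d_less_W_d_plus_1} and Lemma~\ref{th.triangulaire} to get $x_{L,d}\leq x_{R-1,d}$, and finally closes with Lemma~\ref{inequalities}. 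Your approach replaces the entire algebraic middle part by concavity, which in principle collapses the check to two points.

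The gap is in the left endpoint for $d<R-1$, where you claim ``the same inequality follows from a short algebraic manipulation of the explicit expressions combined with $L\geq 2R$.'' That is false, and the reason it is false is structurally important: this lemma, despite its phrasing ``for any $d\in[0,R-1]$'', is only used and only true under the hypothesis $\frac{f(d)}{a}<L\leq \frac{f(d+1)}{a}$ of Lemma~\ref{lem:d_existence}, and the paper's proof explicitly invokes $L\leq f(d+1)/a$ at this step. Without it the left endpoint fails. Concretely, take $R=10$, $L=20=2R$, $a=1$, $d=0$: then $a_0=4.5$, $b_0=0.1$, $a_{L-R+1}=11$, $b_{L-R+1}=b_{R-1}=0.55$, $a_L=20$, $b_L=1$, and
$\Phi(L-R+1)=(11-4.5)(0.9)-(20-4.5)(0.45)=5.85-6.975<0$,
so $x_{L-R+1,0}<x_{L,0}$. (Here the right $d$ from Lemma~\ref{lem:d_existence} is $d=7$, not $0$.) So your proposed fallback of ``checking only the extremes $d=0$ and $d=R-1$'' also breaks at $d=0$. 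To repair the endpoint you need exactly the input the paper uses: $L\leq f(d+1)/a$ gives $x_{L,d}\leq w_{d+1}\leq x_{R-1,d}$ (via Lemma~\ref{L_d_less_W_d_plus_1} and Lemma~\ref{th.triangulaire}), and then $a_{L-R+1}=a(L-R+1)>a_{R-1}$ with $b_{L-R+1}=b_{R-1}$ gives $x_{L-R+1,d}>x_{R-1,d}\geq x_{L,d}$. Once you import that, the concavity + right-endpoint check ($x_{L-1,d}\geq x_{L,d}$, which does reduce to the inequality you wrote and does follow from $(1-\rho)^R\geq 1/4$, $L\geq 2R$, $a_d\leq a_{R-1}$, $b_d\leq b_{R-1}$) closes the proof, but the argument is no longer noticeably shorter than the paper's.
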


\begin{IEEEproof}
\renewcommand{\qedsymbol}{$\blacksquare$}

We start by proving that $\frac{a_L-a_{R-1}}{b_L-b_{R-1}} \leq \frac{a_n-a_{R-1}}{b_n-b_{R-1}}$.
We have: \begin{equation} \frac{a_L-a_{R-1}}{b_L-b_{R-1}}=\frac{a_L-a_{R}}{b_L-b_{R-1}}+\frac{a_R-a_{R-1}}{b_L-b_{R-1}} \end{equation} 
Since $b_R=b_{R-1}$(see the expression of average passive time when $n \in [R-1,L-R+1]$), then:
\begin{equation} \frac{a_L-a_{R-1}}{b_L-b_{R-1}}=\frac{a_L-a_{R}}{b_L-b_{R}}+\frac{a_R-a_{R-1}}{b_L-b_{R-1}} \end{equation} 
As we have already proved in Lemma \ref{x_L_R_less_x_n_R} that: $\frac{a_L-a_R}{b_L-b_R} \leq \frac{a_n-a_R}{b_n-b_R}$.
Hence:
\begin{equation} \frac{a_L-a_{R-1}}{b_L-b_{R-1}} \leq \frac{a_n-a_{R}}{b_n-b_{R-1}}+\frac{a_R-a_{R-1}}{b_L-b_{R-1}} \end{equation} 
Since $b_L > b_n$, hence:
\begin{equation} \frac{a_L-a_{R-1}}{b_L-b_{R-1}} \leq \frac{a_n-a_{R}}{b_n-b_{R-1}}+\frac{a_R-a_{R-1}}{b_n-b_{R-1}} \end{equation} 
\begin{equation} = \frac{a_n-a_{R-1}}{b_n-b_{R-1}} \end{equation} 
Thus: \begin{equation} \frac{a_L-a_{R-1}}{b_L-b_{R-1}} \leq \frac{a_n-a_{R-1}}{b_n-b_{R-1}} \end{equation} 

If $d=R-1$, the proof is direct result from the inequality above.\\ 
If $d < R-1$:\\
Given that $\frac{a_L-a_{R-1}}{b_L-b_{R-1}} \leq \frac{a_n-a_{R-1}}{b_n-b_{R-1}}$, then applying lemma \ref{inequalities} fourth case, we deduce: 
\begin{equation}\label{eq:inequality_1}
\frac{a_{L}-a_{n}}{b_{L}-b_{n}} \leq \frac{a_{L}-a_{R-1}}{b_{L}-b_{R-1}} \leq \frac{a_{n}-a_{R-1}}{b_{n}-b_{R-1}}
\end{equation}

Now we prove that: \begin{equation} \frac{a_{L}-a_{R-1}}{b_{L}-b_{R-1}} \leq \frac{a_{L}-a_{d}}{b_{L}-b_{d}} \end{equation}     
Given that $L \leq f(d+1)/a$: $\frac{a_{L}-a_{d+1}}{b_{L}-b_{d+1}} \leq \frac{a_{d+1}-a_{d}}{b_{d+1}-b_{d}}$.\\
Hence applying lemma \ref{L_d_less_W_d_plus_1}:  \begin{equation} \frac{a_L-a_{d}}{b_L-b_{d}}  \leq \frac{a_{d+1}-a_{d}}{b_{d+1}-b_{d}} \end{equation} 
According to Lemma \ref{th.triangulaire}, since $w_{d+1} < \cdots < w_{R-1}$, thus: \begin{equation} \frac{a_{d+1}-a_{d}}{b_{d+1}-b_{d}} \leq \frac{a_{R-1}-a_{d}}{b_{R-1}-b_{d}} \end{equation}  
Then: 
\begin{equation} \frac{a_L-a_{d}}{b_L-b_{d}} \leq \frac{a_{R-1}-a_{d}}{b_{R-1}-b_{d}} \end{equation} 
Given that $\frac{a_L-a_{d}}{b_L-b_{d}} \leq \frac{a_{R-1}-a_{d}}{b_{R-1}-b_{d}}$ and applying Lemma \ref{inequalities} (fourth case), then:

\begin{equation}\label{eq:inequality_2}
\frac{a_{L}-a_{R-1}}{b_{L}-b_{R-1}} \leq \frac{a_{L}-a_{d}}{b_{L}-b_{d}} \leq \frac{a_{R-1}-a_{d}}{b_{R-1}-b_{d}}
\end{equation}  

Combining \eqref{eq:inequality_1} and \eqref{eq:inequality_2}, we conclude:\\
\begin{equation} \frac{a_{L}-a_{n}}{b_{L}-b_{n}} \leq \frac{a_{L}-a_{R-1}}{b_{L}-b_{R-1}} \leq \frac{a_{L}-a_{d}}{b_{L}-b_{d}} \end{equation} 
\begin{equation} \frac{a_{L}-a_{n}}{b_{L}-b_{n}} \leq \frac{a_{L}-a_{d}}{b_{L}-b_{d}} \end{equation} 

Given this result and applying lemma\ref{inequalities} sixth case, we get our result:
\begin{equation} \frac{a_{n}-a_{d}}{b_{n}-b_{d}} \geq \frac{a_{L}-a_{d}}{b_{L}-b_{d}} \end{equation} 
Hence $x_{n,d} \geq x_{L,d}$.\\
This concludes the proof.
\end{IEEEproof}

Now, we can prove the proposition.\\
Referring to the algorithm 1 that allows us to obtain the Whittle indices, we denote by $j$  the step $j$ described in the algorithm.\\
For $0\leq j \leq d \leq R-1$ \\
We prove that for all $n \in [j+1,L]$, $\frac{a_n-a_{j-1}}{b_n-b_{j-1}} > \frac{a_j-a_{j-1}}{b_j-b_{j-1}}$\\
We study four cases: 
\\

1) $n \in [j+1,R-1]$:\\
Using lemma \ref{increasiness_Whittle_index_less_R}, $w_j < w_{j+1} < ....< w_{R-1}$,
therefore considering the set of element $\{j-1,j,j+1,...,R-1\}$, we can apply lemma \ref{th.triangulaire}, since $\frac{a_k-a_{k-1}}{b_k-b_{k-1}} < \frac{a_{k+1}-a_{k}}{b_{k+1}-b_{k}}$ for all $k \in [j,R-2]$. \\
So for all $n \in [j+1,R-1]$, $\frac{a_n-a_{j-1}}{b_n-b_{j-1}} > \frac{a_j-a_{j-1}}{b_j-b_{j-1}}$\\
\\

2) $n \in [R,L-R+1]$:\\
There are two cases:\\
a) $j=R-1$:\\
We have $b_n=b_{R-1}=b_j$, then $a_n > a_{j}$. Hence, $\frac{a_n-a_{j-1}}{b_n-b_{j-1}} > \frac{a_j-a_{j-1}}{b_j-b_{j-1}}$\\   
b) $j<R-1$:\\
$b_n=b_{R-1}$,and $a_n > a_{R-1}$, then $\frac{a_n-a_{R-2}}{b_n-b_{R-2}} > \frac{a_{R-1}-a_{R-2}}{b_{R-1}-b_{R-2}} $. 
Therefore, by considering the set $\{j-1,j,j+1,....,R-2,n\}$, we have $\frac{a_n-a_{R-2}}{b_n-b_{R-2}} > \frac{a_{R-2}-a_{R-3}}{b_{R-2}-b_{R-3}}=w_{R-2}> \cdots > w_j$.\\
Thus, we can apply  Lemma \ref{th.triangulaire} and get  $\frac{a_n-a_{j-1}}{b_n-b_{j-1}} > \frac{a_j-a_{j-1}}{b_j-b_{j-1}}$.\\
\\

3) $n \in [L-R+2,L-1]$:\\
Using Lemma \ref{L_d_less_n_d}, we have $\frac{a_n-a_{d}}{b_n-b_{d}} \geq \frac{a_L-a_{d}}{b_L-b_{d}}$.\\
Given that $\frac{f(d)}{a}<L$, that means $\frac{a_L-a_{d}}{b_L-b_{d}} > \frac{a_{d}-a_{d-1}}{b_{d}-b_{d-1}}$\\
So considering the set $\{j-1,j,...d, n\}$, we have $\frac{a_n-a_{d}}{b_n-b_{d}} >\frac{a_{d}-a_{d-1}}{b_{d}-b_{d-1}}=w_{d}>...>w_j$.\\
Then we can apply Lemma \ref{th.triangulaire} and obtain  $\frac{a_{n}-a_{j-1}}{b_{n}-b_{j-1}}>\frac{a_{j}-a_{j-1}}{b_{j}-b_{j-1}}$.\\
\\

4) $n=L$\\
We have $\frac{a_L-a_{d}}{b_L-b_{d}} > \frac{a_{d}-a_{d-1}}{b_{d}-b_{d-1}}=w_{d}> \cdots >w_j$.\\
Then, applying Lemma \ref{th.triangulaire}, $\frac{a_L-a_{j-1}}{b_L-b_{j-1}} > \frac{a_{j}-a_{j-1}}{b_{j}-b_{j-1}}$.\\

Therefore, the largest minimizer at step $j$ is $j$, and  $W(j)=w_j=\frac{a_{j}-a_{j-1}}{b_{j}-b_{j-1}}$\\ 
\\

At step $d+1$:\\
The largest minimizer at step $d$ was $d$, then in order to prove that the largest minimizer at this step is $L$, we should prove that for all $n>d$, we have: $\frac{a_L-a_{d}}{b_L-b_{d}} \leq \frac{a_{n}-a_{d}}{b_{n}-b_{d}}$. We distinguish again between three cases:\\
1) $n \in [d+1,R-1]$:\\
We know that $w_{d+1}<...<w_{R-1}$. Then, considering the set $\{d,d+1,...,R-1\}$ and according to Lemma \ref{th.triangulaire}, we get $\frac{a_{d+1}-a_{d}}{b_{d+1}-b_{d}} \leq \frac{a_{n}-a_{d}}{b_{n}-b_{d}}$  for all $n \in [d+1,R-1]$.\\
Since $\frac{a_L-a_{d}}{b_L-b_{d}} \leq \frac{a_{d+1}-a_{d}}{b_{d+1}-b_{d}}$ (according to Lemma \ref{L_d_less_W_d_plus_1}), then $\frac{a_L-a_{d}}{b_L-b_{d}} \leq \frac{a_{n}-a_{d}}{b_{n}-b_{d}}$ for all $n \in [d+1,R-1]$.\\
\\

2) $n \in [R,L-R+1]$:\\
a) $d=R-1$:\\
We have $b_n=b_{R-1}=b_d$. The case where  the passive decision average time $b_n$ is equal to $b_{n_{d}}=b_d$ is not included in the computation  of   Whittle indices (recall that $n_{d}$ is the largest minimizer at step $d$ which is $d$). This case can be hence skipped.\\ 
b) $d=R-2$:\\
$b_n=b_{R-1}$,and $a_n > a_{R-1}$, then applying Lemma \ref{L_d_less_W_d_plus_1} we have $\frac{a_n-a_{R-2}}{b_n-b_{R-2}} > \frac{a_{R-1}-a_{R-2}}{b_{R-1}-b_{R-2}} \geq \frac{a_L-a_{d}}{b_L-b_{d}} $, and we conclude the result.\\  
c) $d<R-2$:\\
We have $\frac{a_n-a_{R-2}}{b_n-b_{R-2}} > \frac{a_{R-1}-a_{R-2}}{b_{R-1}-b_{R-2}} $.
Therefore, by considering the set $\{d,d+1,....,R-2,n\}$, we have $\frac{a_{d+1}-a_{d}}{b_{d+1}-b_{d}}=w_{d+1} < \cdots \leq w_{R-2}=\frac{a_{R-2}-a_{R-3}}{b_{R-2}-b_{R-3}} < \frac{a_{n}-a_{R-2}}{b_{n}-b_{R-2}}$\\
Combining Lemma \ref{th.triangulaire} and Lemma \ref{L_d_less_W_d_plus_1}, we get  $\frac{a_{L}-a_{d}}{b_{L}-b_{d}} \leq \frac{a_{d+1}-a_{d}}{b_{d+1}-b_{d}} < \frac{a_{n}-a_{d}}{b_{n}-b_{d}}$ for all $n \in [R,L-R+1]$\\
3) $n \in [L-R+2,L-1]$:\\
Applying Lemma \ref{L_d_less_n_d}, we have $\frac{a_{L}-a_{d}}{b_{L}-b_{d}} < \frac{a_{n}-a_{d}}{b_{n}-b_{d}}$.\\

Hence we proved that at step $d+1$, the largest minimizer is $L$. Therefore the Whittle's index for all state $i$ from $d+1$ until $L$ is $W(i)=x_{L,d}=\frac{a_{L}-a_{d}}{b_{L}-b_{d}}$.\\        
This concludes the proof of the proposition.

\section{Proof of Proposition \ref{prop:optimal_solution_for_dual_relaxed_problem}}\label{app:optimal_solution_for_dual_relaxed_problem}
In order to prove this proposition we distinguish between two types of classes:\\
1) Class $k$ in which $W$ is different from all $W^k_i$.\\
2) Class $k$ such that there exists a given state $j$ that satisfies $W^k_j=W$.\\ 
First type of classes:
For the class $k$ in which $W$ is different from all $W^k_i$, we prove that the optimal threshold verifies $l_k(W)=l_k=\underset{i}{\text{max}} \{\arg \underset{i}{\text{max}} \{W^k_i | W^k_i \leq W\}\}= \underset{i}{\text{max}} \{\arg \underset{i}{\text{max}} \{W^k_i) | W^k_i < W\}\}$. First we have $\underset{i}{\text{max}} \{\arg \underset{i}{\text{max}} \{W^k_i | W^k_i \leq W\}\}= \underset{i}{\text{max}} \{\arg \underset{i}{\text{max}} \{W^k_i) | W^k_i < W\}\}$ since $W_i^k$ is different from $W$ for all state $i$. For state $i$ less than $l_k$, given that $W^k_i$ is increasing in $i$, then $W^k_i\leq W^k_{l_k} < W$. Hence, due to the indexability of the class, $D(W_i^k) \subseteq D(W)$,  which implies that  the optimal decision at state $i$ is passive action. For the state $i$ strictly greater than $l_k$, by definition of $l_k$, $W^k_i$ must be strictly greater than $W$ since $l_k$ is the biggest integer among the states that give the biggest Whittle index less than $W$.
Then, according to the definition of Whittle index, $W < \min\{W, i \in D(W)\}$ that means $W  \not\in \{W, i \in D(W)\}$, therefore $i \not\in D(W)$.
Thus, the optimal decision at state $i>l_k$ is active decision. Hence $l_k=\underset{i}{\text{max}} \{\arg \underset{i}{\text{max}} \{W^k_i | W^k_i \leq W\}\}= \underset{i}{\text{max}} \{\arg \underset{i}{\text{max}} \{W^k_i) | W^k_i < W\}\}$ is effectively the optimal threshold $l_k(W)$.\\
For the second case, we start first by describing qualitatively the optimal threshold with respect to $W$. Then we prove the explicit expression: \\ 
Second type of classes:\\
For the class $k$ such that there exists $j$, $W^k_j=W$, we distinguish between two cases:\\
1) $j \leq R_k-1$:\\
We know that according to Theorem \ref{w.i} $W^k_j=w^k_j=x_{j,j-1}$ which is the point for which if $W=x_{j,j-1}$, we have
$\sum_{q=0}^L au^j(q)q \ - \ W \sum_{q=0}^j u^j(q)=\sum_{q=0}^L au^{j-1}(q)q \ - \ W \sum_{q=0}^{j-1} u^{j-1}(q)$. That means, according to equation~\eqref{eq:new_prob_form_lagr_function_steady_state}, for $W=x_{j,j-1}$, if $j$ is a minimizer of this equation ($j$ is the optimal threshold), then $j-1$ is also a minimizer of this equation.
Due to indexability, for all states less or equal than $j$ the optimal decision is to stay passive. Also, according to definition of Whittle index,  for all states strictly higher than $j$  the optimal decision is to be active. Then, $j$ could be the threshold, so as for $j-1$.\\
Hence, the optimal threshold can be either $j$ or $j-1$.\\
In fact, since $W^k_{0}< \cdots < W^k_{j-1} < W^k_j=W$, then  $j= \underset{i}{\text{max}} \{\arg \underset{i}{\text{max}} \{W^k_i | W^k_i \leq W\}\}$, and $j-1=\underset{i}{\text{max}} \{\arg \underset{i}{\text{max}} \{W^k_i | W^k_i < W\}\}$.\\
This proves the proposition for this case.\\
2) If $j \geq R_k$:\\
Then $W^k_j=W^k_L=W^k_{R_k}=W$, thus according to Theorem \ref{w.i}, $W=x_{L,R_k-1}$. That means, at $W$, the threshold policy can be either $L$ or $R_k-1$. $L$ is the biggest integer such that $W^k_j=W$, and $R_k-1$ is the biggest integer that verifies the strict inequality, explicitly $L= \underset{i}{\text{max}} \{\arg \underset{i}{\text{max}} \{W^k_i | W^k_i \leq W\}\}$ and $R_k-1=\underset{i}{\text{max}} \{\arg \underset{i}{\text{max}} \{W^k_i | W^k_i < W\}\}$.\\

\section{Proof of Proposition \ref{prop:optimal_solution_for_constrained_relaxed_problem}}\label{app:optimal_solution_for_constrained_relaxed_problem}
From optimization theory, it is known that the optimal solution of the dual problem is less or equal than the primal problem's solution when the constraint is satisfied, i.e:
\begin{equation}\label{dual_less_primal}
\underset{W}{\text{max}} \underset{\phi \in \Phi}{\text{min}} \ f(W,\phi)\leq  \underset{\phi \in \Phi}{\text{min}} \limsup\limits_{T\rightarrow\infty}\frac{1}{T} \mathbb{E}\left[ \sum_{t=0}^{T-1}  \sum_{k=1}^{K} \sum_{i=1}^{{\gamma_k}N} a_k q_i^k(t) \mid \textbf{q}(0), \phi \right] 
\end{equation}
As the optimal solution for fixed $W$ is a threshold policy, we use  the steady state form and the expression of the LHS of the inequality becomes:
\begin{equation}\label{eq:new_relaxed_prob_form}
\underset{W}{\text{max}}\underset{\phi}{\text{min}} f(W,\phi)=\underset{W}{\text{max}}\{\sum_{k=1}^K \sum_{i=1}^{\gamma_k N} [\underset{l_k \in [0,L]}{\text{min}} \{ \sum_{q=0}^L a_k u^{l_k}_k(q)q - W \sum_{q=0}^{l_k} u^{l_k}_k(q)\}]+ W (1-\alpha) N\} 
\end{equation} 
with $\phi$ the threshold policy that corresponds to $l(W)$ computed using Proposition \ref{prop:optimal_solution_for_dual_relaxed_problem} for fixed $W$.
For $W^*$ that satisfies the constraint with equality (i.e. $\alpha N=\sum_{k=1}^K \gamma_k N\sum_{i=l_{k+1}(W^*)}^L u_k^{l_k(W^*)}(i)$, which is in fact true for all $N$, and then we can get rid of $N$), we get exactly the objective function of the primal problem. Therefore,we get a threshold vector $l(W^*)$ that gives a solution for the primal problem less than the optimal solution for this problem according to inequality (\ref{dual_less_primal}). Then, surely this solution given by $l(W^*)$ is the optimal one for the constrained relaxed problem, since it satisfies the constraint and for all policy $\phi$ that satisfies the constraint and belong to $\Phi$, we have $f(W^*,l(W^*))=\sum_{k=1}^K \sum_{i=1}^{\gamma_k N} [ \sum_{q=0}^L a_k u^{l_k(W^*)}_k(q)q]= \limsup\limits_{T\rightarrow\infty}\frac{1}{T} \mathbb{E}\left[ \sum_{t=0}^{T-1}  \sum_{k=1}^{K} \sum_{i=1}^{{\gamma_k}N} a_k q_i^k(t) \mid \textbf{q}(0),l(W^*)\right]$ \linebreak $\leq \underset{\phi}{\text{min}} \limsup\limits_{T\rightarrow\infty}\frac{1}{T} \mathbb{E}\left[ \sum_{t=0}^{T-1}  \sum_{k=1}^{K} \sum_{i=1}^{{\gamma_k}N} a_kq_i^k(t) \mid \textbf{q}(0),\phi\right]$. \\  
We deduce that the solution of the relaxed problem is of type threshold-based policy $l(W^*)$ with $W^*$ satisfies $\alpha =\sum_{k=1}^K \gamma_k \sum_{i=l_{k+1}(W^*)}^L u_k^{l_k(W^*)}(i)$.

\section{Proof of Proposition \ref{prop:optimal_sol_relax_prob_charchterisation}}\label{app:prop:optimal_sol_relax_prob_charchterisation}
We first introduce the following useful lemma.
\begin{Lemma}\label{decreasing_active_time}
For each class $k$, $\sum_{n+1}^L u^{n}_k(i)$ is strictly decreasing in $n$, when $n \in [-1,R_k-1] \cup L$.
\end{Lemma}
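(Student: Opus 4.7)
The plan is to reduce the statement to the already-established monotonicity of the passive-time sum $\sum_{q=0}^n u_k^n(q)$, and then handle the jump from $n=R_k-1$ to $n=L$ separately by a direct computation.

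First, since $u_k^n$ is a stationary probability distribution, we have $\sum_{i=0}^{L} u_k^n(i) = 1$ for every threshold $n$ and every class $k$. Consequently,
\begin{equation}
\sum_{i=n+1}^{L} u_k^n(i) \;=\; 1 \;-\; \sum_{i=0}^{n} u_k^n(i),
\end{equation}
so proving strict decrease of the left-hand side in $n$ is equivalent to proving strict increase of $\sum_{i=0}^{n} u_k^n(i)$ in $n$.

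Next, on the interval $n \in [-1, R_k-1]$ this strict monotonicity is exactly the content of Lemma \ref{temps_pass_str_croiss_less_R} (applied to class $k$ with $R=R_k$, $\rho=\rho_k$), so no further work is needed there. The only remaining step is to compare the value at $n=R_k-1$ with the value at $n=L$ (the set $[-1,R_k-1]\cup\{L\}$ skips the interior $[R_k,L-1]$, so we only need the inequality across this gap). Using Proposition \ref{eq:stat.dist} at $n=R_k-1$ gives $\sum_{q=0}^{R_k-1}u_k^{R_k-1}(q) = \bigl(1-\tfrac{R_k-1}{2R_k}\bigr)\tfrac{R_k}{R_k} = \tfrac{R_k+1}{2R_k}$, hence
\begin{equation}
\sum_{i=R_k}^{L} u_k^{R_k-1}(i) \;=\; \frac{R_k-1}{2R_k} \;>\; 0 \;=\; \sum_{i=L+1}^{L} u_k^{L}(i),
\end{equation}
where the final equality uses that $u_k^L$ is supported at $L$ (again from Proposition \ref{eq:stat.dist}), and the strict inequality uses the standing assumption $R_k\geq 2$.

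No step here looks obstructive: the heavy lifting, namely the strict increase of $\sum_{q=0}^n u_k^n(q)$, was already carried out in Lemma \ref{temps_pass_str_croiss_less_R}, and the boundary comparison is an immediate arithmetic check from the closed-form expressions of the stationary distribution. The only thing to be slightly careful about is that the range $[-1,R_k-1]\cup\{L\}$ is not connected, but since monotonicity is only asserted on this discrete subset, the gap between $R_k-1$ and $L$ is bridged precisely by the explicit computation above.
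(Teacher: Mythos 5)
Your proof is correct and takes essentially the same approach as the paper: both reduce the active-time sum $\sum_{i=n+1}^L u_k^n(i)$ to $1-\sum_{i=0}^n u_k^n(i)$, invoke Lemma~\ref{temps_pass_str_croiss_less_R} for strict increase of the passive sum on $[-1,R_k-1]$, and bridge the gap to $n=L$ by noting the passive sum equals $1$ there while it is strictly below $1$ at $n=R_k-1$. The only superficial difference is that you write the boundary comparison in terms of active sums and carry out the arithmetic explicitly, whereas the paper states the same inequality in terms of passive sums.
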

\begin{proof}
\renewcommand{\qedsymbol}{$\blacksquare$}
We have $\sum_{0}^n u^{n}_k(i)$ is strictly increasing in this set (see Lemma \ref{temps_pass_str_croiss_less_R} and the fact that $1=\sum_{0}^L u^L_k(i)>\sum_{0}^{R_k-1} u^{R_k-1}_k(i)$ ), then $\sum_{n+1}^L u^{n}_k(i)=1-\sum_{0}^n u^{n}_k(i)$, is strictly decreasing in $n$.\\
\end{proof}
We define the following order relation in $\mathbb{R}^K$ such that for any two vectors $l^1$ and $l^2$, $l^1 \leq l^2 \Longleftrightarrow$ for each element of vector of index $k$, we have $l^1_k \leq l^2_k$. Recall that according to Proposition \ref{prop:optimal_solution_for_dual_relaxed_problem}, we can directly deduce that for $W_1 \leq W_2$ $l(W_1) \leq l(W_2)$ and for all $W$ and class $k$, $l_k(W)$ can be either less than $R_k-1$ or equal to $L$.\\
Without loss of generality, when $W \in \mathbb{R}^+$, the corresponding set of threshold vectors $l(W)$ is perfectly ordered. Then, by applying Lemma \ref{decreasing_active_time}, $\sum_{k=1}^K \gamma_k \sum_{i=l_k(W)+1}^L u^{l_k(W)}_k(i)$ is strictly decreasing in $l(W)$, and take discrete values from $1$ to $0$. According to Proposition \ref{prop:optimal_solution_for_dual_relaxed_problem}, we have for each class $k$ and state $i$, if $W=W_i^k$ then there is two possible optimal thresholds vectors $l^1(W)$ and $l^2(W)$ with $l^1(W) < l^2(W)$. Hence we can deduce that there exists a class $m$ and state $p$ such that $\sum_{k=1}^K \gamma_k \sum_{i=l_k^1(W_p^m)+1}^L u^{l_k^1(W_p^m)}_k(i) \geq \alpha$ and $\sum_{k=1}^K \gamma_k \sum_{i=l_k^2(W_p^m)+1}^L u^{l_k^2(W_p^m)}_k(i) \leq \alpha$. We find the relation between $l^1(W_p^m)$ and $l^2(W_p^m)$.   

Before that, we prove that $l_k(W_p^m)$ is less than $R_k-1$ for all class $k$ under assumption \ref{assump:condition_on_alpha} and \ref{assump:buffer_size_2}. For that we need to check if there exist $W_0$ such that $l(W_0)=(R_1,\cdots,R_K)$.
In fact, according to assumption \ref{assump:buffer_size_2}, we can deduce that $W^{k'}_L$ is strictly greater than $W^k_i$ for $i \in [0,R_k-1]$ for all $k$ and $k'$ (we check that by replacing the expression of $W^{k'}_L$ and $W^k_{R_k-1}$ given in Theorem 3). Hence there exists a given $W$ such that $W^k_i < W$ for $i \in [0,R_k-1]$ and $W^{k'}_L>W$ for all $k$ and $k'$. Then for a such $W$ denoted $W_0$ the optimal threshold for each class $k$ is $l_k(W_0)=R_k-1$.\\

According to the expression of the average passive time given in section \ref{sec:WI}, $\sum_{k=1}^K \gamma_k \sum_{i=0}^{R_k-1} u^{R_k-1}_k(i)=\frac{1}{2}+\sum_{k=1}^K \frac{\gamma_k}{2R_k}$, therefore $\sum_{k=1}^K \gamma_k \sum_{i=R_k}^L u^{R_k-1}_k(i)=\frac{1}{2}-\sum_{k=1}^K \frac{\gamma_k}{2R_k}$. Hence, considering the assumption \ref{assump:condition_on_alpha}, $\alpha \geq \sum_{k=1}^K \gamma_k \sum_{i=R_k}^L u^{R_k-1}_k(i)$.\\
As $\sum_{k=1}^K \gamma_k \sum_{i=l_k^1(W_p^m)+1}^L u^{l_k^1(W_p^m)}_k(i) \geq \alpha$, then $l^1(W_p^m) \leq (R_1-1,\cdots,R_K-1)=l(W_0)$. Given that the thresholds vector are increasing in $W$, $W_p^m \leq W_0$, hence $l^2(W_p^m) \leq l(W_0)=(R_1-1,\cdots,R_K-1)$\\
Therefore, $l_m^2(W_p^m) \leq R_k-1$, then according to Proposition \ref{prop:optimal_solution_for_dual_relaxed_problem}, when $W=W_p^m$, $l_m(W_p^m)=l_m^2(W_p^m)$ and $l_m^1(W_p^m)=l_m^2(W_p^m)-1=l_m(W_p^m)-1$ can be both the optimal thresholds for class $m$. As for the other classes, $l_k^1(W_p^m)=l_k^2(W_p^m)=l_k(W_p^m)$.

If we force $W^*$ to be equal to $W_p^m$, the optimal threshold vector can be either $l^1(W_p^m)$ or $l^2(W_p^m)$, then we can introduce some randomization between the two policies. In other words, we use the threshold policy $l^1(W_p^m)$ with probability $\theta$ and $l^2(W_p^m)$ with probability $1-\theta$. The new stationary distribution for the class $m$ is then a linear combination of these two threshold policies $l_m(W_p^m)$ and $l_m(W_p^m)-1$:
$u^*_m=\theta u^{l_m(W_p^m)}_m + (1-\theta) u^{l_m(W_p^m)-1}_m$.\\
Hence, in a state strictly less than $l_m(W_p^m)$, the queues will not transmit, whereas in a state strictly greater than $l_m(W_p^m)$, they will transmit with probability one. If the queues are in state $l_m(W_p^m)$, they will transmit with probability $\frac{(1-\theta) u^{l_m(W_p^m)-1}_m(l_m(W_p^m))}{\theta u^{l_m(W_p^m)}_m(l_m(W_p^m)) + (1-\theta) u^{l_m(W_p^m)-1}_m(l_m(W_p^m))}$. Since the probability to be in this state $l_m(W_p^m)$ is $u^*_m(l_m(W_p^m))$,  the proportion of time that the queues will be in active mode is: 
\begin{equation}
\alpha= \sum_{k\neq m} \sum_{i=l_k(W_p^m)+1}^L \gamma_k u^{l_k(W_p^m)}_k(i)+\sum_{i=l_m(W_p^m)+1}^L \gamma_m u^*_m(i)\nonumber+(1-\theta) \gamma_m u^{l_m(W_p^m)-1}_m(l_m(W_p^m))
\end{equation}
When $\theta=0$, the threshold policy is $l_m(W_p^m)-1$ and the total average time in active mode is higher than $\alpha$. 
When $\theta=1$, the threshold policy is $l_m(W_p^m)$ and the total average time in active mode is less than $\alpha$.\\
Given that $\sum_{k\neq m} \sum_{i=l_k(W_p^m)+1}^L \gamma_k u^{l_k(W_p^m)}_k(i)+\sum_{i=l_m(W_p^m)+1}^L \gamma_m u^*_m(i)+(1-\theta) \gamma_m u^{l_m(W_p^m)-1}_m(l_m(W_p^m))$ is continuous in $\theta$, then there exists at least one $\theta$ which verifies the equality.
Hence, for $W^*=W_p^m$, we get a threshold policy for all classes except for class $m$ where the optimal solution is a linear combination of two threshold policies. Moreover for a given randomized parameter $\theta$, the constraint \eqref{eq:constraint_relaxed} is satisfied with equality. \\

\section{Proof of Proposition \ref{spectral_radius_Q_less_1}}\label{app:spectral_radius_Q_less_1}
We derive the eigenvalues of $Q$.\\
The matrix $Q$ is of the form: \\       
\begin{equation}\left[\begin{array}{ccccccc}
Q_1&0&\cdots &\cdots &\cdots &\cdots &0 \\
0&Q_2&\cdots &\cdots &\cdots &\cdots &0 \\
\vdots & &\ddots & & & & \\
A_1&A_2&\cdots &Q_m&\cdots&A_{K-1}&A_K\\
\vdots&  & &\ddots & &\vdots&\\
0&0&\cdots&\cdots&\cdots&Q_{K-1}&0\\
0&0&\cdots&\cdots&\cdots&0&Q_K\\
\end{array}\right]
\end{equation}

The characteristic polynomial of $Q$ is the product of 
the characteristic polynomial of each matrix $Q_k$: \begin{equation} \chi_{Q}(\lambda)=\prod_{k=1}^K \chi_{Q_k}(\lambda) \end{equation} 
\newenvironment{changemargin}[2]{\begin{list}{}{%
\setlength{\topsep}{0pt}%
\setlength{\leftmargin}{0pt}%
\setlength{\rightmargin}{0pt}%
\setlength{\listparindent}{\parindent}%
\setlength{\itemindent}{\parindent}%
\setlength{\parsep}{0pt plus 1pt}%
\addtolength{\leftmargin}{#1}%
\addtolength{\rightmargin}{#2}%
}\item }{\end{list}}

1)The case $k \neq m$:
\[Q_k=\]
\begin{changemargin}{-2cm}{0 cm}
\[
\begin{blockarray}{ccccccccccccccccc}
 &0&1&\cdots&l-1&l+1&l+2&\cdots&R-1&R&R+1&\cdots&l+R-1&l+R&l+R+1&\cdots&L \\
\begin{block}{c(cccccccccccccccc)}
  0&\rho_k& 0 &\cdots& 0 & \rho_k &\cdots& \cdots & \rho_k&\rho_k&0&\cdots&0&0&\cdots&\cdots&0 \\
  \vdots &\vdots& \ddots& & \vdots & \vdots & \rho_k &\rho_k& \vdots&\vdots&\ddots&&\vdots&\vdots&0&0&       \vdots \\
  l-2&\vdots&&\ddots&0&\vdots&&&\vdots&\vdots&&\ddots&0&\vdots&&&\vdots\\  
  l-1 & \rho_k & \cdots &\cdots& \rho_k & \rho_k & \cdots&\cdots&\rho_k&\rho_k&\cdots&\cdots&\rho_k&0&\cdots&\cdots&0 \\
  l+1 & 0 & \cdots &\cdots& 0 &0&\cdots&\cdots& 0 &0&\cdots&\cdots&0&0&-\rho_k&\cdots&-\rho_k \\
  \vdots & \vdots & 0 & 0&\vdots & \vdots & 0&0&\vdots&\vdots&0&0&\vdots&\vdots&\ddots&-\rho_k&-\rho_k \\
  R-2&\vdots&&&\vdots&\vdots&&&\vdots&\vdots&&&\vdots&\vdots&\ddots&-\rho_k&-\rho_k\\
  R-1&0&\cdots&\cdots&0&0&\cdots&\cdots&0&0&\cdots&\cdots&0&0&\cdots&-\rho_k&-\rho_k\\  
  R& -\rho_k & 0 &\cdots& 0 & -\rho_k &\cdots& \cdots&-\rho_k& -\rho_k&0&\cdots&0&0&\cdots&-\rho_k&-\rho_k \\  
  \vdots& \vdots & \ddots && \vdots & \vdots &-\rho_k& -\rho_k&\vdots&\vdots&\ddots&&\vdots&\vdots&0&\ddots&\vdots \\
  &\vdots&&\ddots&0&\vdots&&&\vdots&\vdots&&\ddots&0&\vdots&&&-\rho_k\\  
  & -\rho_k &\cdots& \cdots & -\rho_k & -\rho_k &\cdots& \cdots&-\rho_k&-\rho_k&\cdots&\cdots&-\rho_k&0&\cdots&-\rho_k&-\rho_k\\
  l+R& 0 & \cdots&\cdots & 0 & 0 &\cdots& \cdots&0&0&\cdots&\cdots&0&0&\rho_k&\cdots&0 \\  
  \vdots& \vdots &0& 0 & \vdots & \vdots &0& 0&\vdots&\vdots&0&0&\vdots&\vdots&\ddots&\rho_k&0 \\
  L-1&\vdots&&&\vdots&\vdots&&&\vdots&\vdots&&&\vdots&\vdots&&\ddots&0\\  
  L& 0 &\cdots& \cdots & 0 & 0 & \cdots&\cdots&0&0&\cdots&\cdots&0&0&\cdots&\cdots&0 \\
\end{block}
\end{blockarray}
 \]
 \end{changemargin}
 After computations 	and some algebraic manipulations, we get,
 $\chi_{Q_k}(\lambda)=(-\lambda)^L$\\

2)The case $k=m$:
\[Q_m=\]
\begin{changemargin}{-2cm}{0 cm}
\[
\begin{blockarray}{ccccccccccccccccc}
 &0&1&\cdots&l-1&l+1&l+2&\cdots&R-1&R&R+1&\cdots&l+R-1&l+R&l+R+1&\cdots&L \\
\begin{block}{c(cccccccccccccccc)}
  0&\rho_m& 0 &\cdots& 0 & 0 &\cdots& \cdots & 0&0&-\rho_m&\cdots&-\rho_m&-\rho_m&\cdots&\cdots&-\rho_m \\
  \vdots &\vdots& \ddots& & \vdots & \vdots & 0 &0& \vdots&\vdots&\ddots&&\vdots&\vdots&-\rho_m&-\rho_m&       \vdots \\
  l-2&\vdots&&\ddots&0&\vdots&&&\vdots&\vdots&&\ddots&-\rho_m&\vdots&&&\vdots\\  
  l-1 & \rho_m & \cdots &\cdots& \rho_m & 0 & \cdots&\cdots&0&0&\cdots&\cdots&0&-\rho_m&\cdots&\cdots&-\rho_m \\
  l+1 & 0 & \cdots &\cdots& 0 &0&\cdots&\cdots& 0 &0&\cdots&\cdots&0&0&-\rho_m&\cdots&-\rho_m \\
  \vdots & \vdots & 0 & 0&\vdots & \vdots & 0&0&\vdots&\vdots&0&0&\vdots&\vdots&\ddots&-\rho_m&-\rho_m \\
  R-2&\vdots&&&\vdots&\vdots&&&\vdots&\vdots&&&\vdots&\vdots&\ddots&-\rho_m&-\rho_m\\
  R-1&0&\cdots&\cdots&0&0&\cdots&\cdots&0&0&\cdots&\cdots&0&0&\cdots&-\rho_m&-\rho_m\\  
  R& -\rho_m & 0 &\cdots& 0 & 0 &\cdots& \cdots&0& 0&\rho_m&\cdots&\rho_m&\rho_m&\cdots&0&0 \\  
  \vdots& \vdots & \ddots && \vdots & \vdots &0&0&\vdots&\vdots&\ddots&&\vdots&\vdots&\rho_m&\ddots&\vdots \\
  &\vdots&&\ddots&0&\vdots&&&\vdots&\vdots&&\ddots&\rho_m&\vdots&&&0\\  
  & -\rho_m &\cdots& \cdots & -\rho_m &0 &\cdots& \cdots&0&0&\cdots&\cdots&0&\rho_m&\cdots&0&0\\
  l+R& 0 & \cdots&\cdots & 0 & 0 &\cdots& \cdots&0&0&\cdots&\cdots&0&0&\rho_m&\cdots&0 \\  
  \vdots& \vdots &0& 0 & \vdots & \vdots &0& 0&\vdots&\vdots&0&0&\vdots&\vdots&\ddots&\rho&0 \\
  L-1&\vdots&&&\vdots&\vdots&&&\vdots&\vdots&&&\vdots&\vdots&&\ddots&0\\  
  L& 0 &\cdots& \cdots & 0 & 0 & \cdots&\cdots&0&0&\cdots&\cdots&0&0&\cdots&\cdots&0 \\
\end{block}
\end{blockarray}
 \]
 \end{changemargin}
 After computations 	and some algebraic manipulations, we get:
 $\chi_{Q_m}(\lambda)=(-\lambda)^{L-1}(l_m\rho_m-\lambda)$\\
 For $k \neq m$ $Q_k$ has only $0$ as eigen value.\\
 For $k=m$, $\chi_{Q_m}(\lambda)=0 \Leftrightarrow \lambda=0$ or $\lambda=l_m \rho_m$, hence $Q_m$ has two eigen values which are $0$ and $l_m \rho_m$. Given Assumption \ref{assump:condition_on_alpha}, the optimal threshold $l_k$ is less strictly than $R_k$ for all $k$. Accordingly $l_m \rho_m < R_m \rho_m=1$.\\
Consequently, in both cases, the norms of all eigen values of the obtained matrix  are strictly less than 1.

\section{Proof of Lemma \ref{lem:result_kurth_theorem}}\label{app:result_kurth_theorem}
We take $0<\epsilon<\mu$, $\textbf{Z}^N(t)$ converges to $\textbf{z}^*$, i.e. there exists $T_0$ such that for all $t \geq T_0$, $||\textbf{Z}^N(t)-\textbf{z}^*||\leq \epsilon$.
Hence: 
\begin{align}
P_x(\underset{T_0 \leq t < T}{\text{sup}} ||\textbf{Z}^N(t)-\textbf{z}^*|| \geq \mu)& \leq P_x(\underset{T_0 \leq t < T}{\text{sup}} ||\textbf{Z}^N(t)-\textbf{z}(t)||+||\textbf{Z}^N(t)-\textbf{z}^*|| \geq \mu)\\
& \leq P_x(\underset{T_0 \leq t < T}{\text{sup}} ||\textbf{Z}^N(t)-\textbf{z}(t)|| \geq \mu-\epsilon)\\
& \leq P_x(\underset{0 \leq t < T}{\text{sup}} ||\textbf{Z}^N(t)-\textbf{z}(t)|| \geq \mu-\epsilon)
\end{align}
Using Proposition \ref{prop:kurth_theorem}, there exists $s_1$ and $s_2$ such that:
\begin{equation} P_x(\underset{0 \leq t < T}{\text{sup}} ||\textbf{Z}^N(t)-\textbf{z}(t)|| \geq \mu-\epsilon) \leq s_1 exp(-Ns_2). \end{equation} 
Therefore:
\begin{equation} P_x(\underset{T_0 \leq t < T}{\text{sup}} ||\textbf{Z}^N(t)-\textbf{z}^*|| \geq \mu) \leq s_1 exp(-Ns_2).
 \end{equation}

\section{Proof of Proposition \ref{prop:local_optimality}}\label{app:local_optimality}
We recall that  $\textbf{Z}^N(t)$ represents the proportion vector at time $t$ under Whittle's Index policy.\\
Replacing $C^{RP,N}$ by its expression given in section \ref{sec:relaxed} and knowing that $z^{k,*}_i=\gamma_k u^{l_k}_k(i)$ for $k \neq m$ and $z^{m,*}_i=\gamma_m u^{*}_m(i)=\theta\gamma_m u^{l_m}_m(i)+(1-\theta)\gamma_m u^{l_m-1}_m(i)$ (by definition of $\textbf{z}^*$), then the difference between $C^{WI,N}$ and $C^{RP,N}$ can be expressed as:
\begin{align}
C_T^{N}(\textbf{x})-C^{RP,N}=|\frac{1}{T} \mathbb{E}\left[ \sum_{t=0}^{T-1}  \sum_{k=1}^{K} \sum_{i=1}^{L} a_kZ_i^{k,N}(t)iN \mid \textbf{x}\right]-\frac{1}{T} \mathbb{E}\left[ \sum_{t=0}^{T-1} \sum_{k=1}^{K} \sum_{i=1}^{L} a_kz_i^{k,*}iN\right]|
\end{align}
We divide all by $N$
\begin{align}
\frac{C_T^{N}(\textbf{x})}{N}-\frac{C^{RP,N}}{N}=&|\frac{1}{T}  \sum_{t=0}^{T-1}  \sum_{k=1}^{K} \sum_{i=1}^{L} \mathbb{E}(a_kZ_i^{k,N}(t)i) - a_kz_i^{k,*}i |\nonumber\\
\leq& |\frac{1}{T}  \sum_{t=0}^{T_0-1}  \sum_{k=1}^{K} \sum_{i=1}^{L} \mathbb{E}(a_kZ_i^{k,N}(t)i) - a_kz_i^{k,*}i |+|\frac{1}{T}  \sum_{t=T_0}^{T-1}  \sum_{k=1}^{K} \sum_{i=1}^{L} \mathbb{E}(a_kZ_{i}^{k,N}(t)i) - a_kz_i^{k,*}i |\nonumber\\
\leq &\frac{T_0L(L+1)}{T}\sum_{k=1}^Ka_k\gamma_k+|\frac{1}{T}  \sum_{t=T_0}^{T-1}  \sum_{k=1}^{K} \sum_{i=1}^{L} \mathbb{E}(a_kZ_{i}^{k,N}(t)i) - a_kz_i^{k,*}i |
\end{align}
We have the function $f: \ z \rightarrow \sum_{k=1}^K\sum_{i=0}^L a_k z_i^k i$ is lipchitz and continuous, then for an arbitrary small $\epsilon$, there exists $\mu$ such that if $||\textbf{z}-\textbf{z}^*||<\mu$, then $|f(z)-f(\textbf{z}^*)|<\epsilon$.

We denote $Y_N$ the event {$\underset{T_0 \leq t < T}{\text{sup}} ||\textbf{Z}^N(t)-\textbf{z}^*|| \geq \mu$}, we proceed to bound the second term: 
\begin{align}
|\frac{1}{T}  \sum_{t=T_0}^{T-1}  \sum_{k=1}^{K} \sum_{i=1}^{L} E(a_kZ_{i}^{k,N}(t)i) - a_kz_i^{k,*}i |=&P_x(Y_N)\frac{1}{T}  \sum_{t=T_0}^{T-1}  \mathbb{E}\left[|\sum_{k=1}^{K} \sum_{i=1}^{L} (a_kZ_{i}^{k,N}(t)i) - a_kz_i^{k,*}i| | Y_N\right]\nonumber\\
&+(1-P_x(Y_N))\frac{1}{T} \mathbb{E}\left[| \sum_{t=T_0}^{T-1}  \sum_{k=1}^{K} \sum_{i=1}^{L} (a_kZ_{i}^{k,N}(t)i) - a_kz_i^{k,*}i||\overline{Y_N}\right]\nonumber\\
\leq& \frac{(T-T_0)L(L+1)}{T}\sum_{k=1}^Ka_k\gamma_k P_x(Y_N)+(1-P_x(Y_N))\epsilon.
\end{align}
where the above inequality comes from the fact that $|a_kZ_{i}^{k,N}(t)i - a_kz_i^{k,*}i| \leq 2 \gamma_k a_ki$.  According to Lemma \ref{lem:result_kurth_theorem}, we have $\lim_{N \rightarrow \infty}P_x(Y_N)=0$, then
\begin{align}
\lim_{N \rightarrow \infty} |\frac{1}{T} \mathbb{E}\left[ \sum_{t=0}^{T-1}  \sum_{k=1}^{K} \sum_{i=1}^{L} a_kZ_i^{k,N}(t)iN \mid \textbf{x}\right]-\frac{1}{T} \mathbb{E}\left[ \sum_{t=0}^{T-1}  \sum_{k=1}^{K} \sum_{i=1}^{L} a_kz_i^{k,*}iN\right]|\leq \frac{T_0L(L+1)}{T}\sum_{k=1}^Ka_k\gamma_k+\epsilon
\end{align} 
This inequality is true $\forall \epsilon > 0$, then:
\begin{align}
 \lim_{N \rightarrow \infty} |\frac{1}{T} \mathbb{E}\left[ \sum_{t=0}^{T-1}  \sum_{k=1}^{K} \sum_{i=1}^{L} a_kZ_i^{k,N}(t)iN \mid \textbf{x}\right]-\frac{1}{T} \mathbb{E}\left[ \sum_{t=0}^{T-1}  \sum_{k=1}^{K} \sum_{i=1}^{L} a_kz_i^{k,*}iN\right]| \leq \frac{T_0L(L+1)}{T}\sum_{k=1}^Ka_k\gamma_k
\end{align} 
Finally we have:
\begin{align}
 \lim_{T \rightarrow \infty} \lim_{N \rightarrow \infty}\frac{C_T^{N}(\textbf{x})}{N}-\frac{C^{RP,N}}{N} =0 
\end{align} 

\section{Proof of Lemma \ref{lem:reachability_z_0}}\label{app:reachability_z_0}
We consider any initial state $(\textbf{z}^1,\textbf{z}^2,\cdots,\textbf{z}^K)$, and we consider only the following possible event (that arises with strictly positive probability): whatever the transmission decision taken, there is no arrivals ($A(t)=0$ for large but finite number of time slots $T$).\\
This implies as long as the queue $q^k_i$ is scheduled, the number of its packets will decrease by $R_k$. However, in order to prove that the queues' state "0" is achieved,   we must ensure that the queues can be scheduled enough times in order to reach the smallest length $0$. This is shown in the sequel.
If we consider that from time $t$ to $T$, the queue $q^k_i$ is not scheduled and $q^k_i(t)>0$, then  among the other queues a proportion $\alpha$ will be scheduled at each time slot, until we reach a state when $\alpha$ queues will have a length  strictly less than $q^k_i(t)$ (this is feasible since the queues length does not increase with time, and they either stay unchanged or decrease depending on the action taken). Recall that the maximum length of each queue is $L$. So after a finite time but long enough, the queue $q^k_i$ should be scheduled. Hence all queues will be surely scheduled as long as their length is strictly higher than $0$. Therefore, after a finite long time all queues will be at state $0$. The aforementioned analysis is true if the event assumed at the beginning of the proof arises. Since L is finite, time $T$ is surely finite and this event definitely arises with strictly positive probability. This implies that state "0"  can be reached from any other state, which concludes the proof.            

\section{Proof of Proposition \ref{prop:reachability_z_star}}\label{app:reachability_z_star}
In order to find a path from $\textbf{z}(0)$ to $\textbf{z}^*$, we give two useful lemmas.
\begin{Lemma}\label{lem:Whittle_index_comparing_1_R}
$w^k_1 \leq w^{k^{'}}_{R_{k^{'}}-1}$ for all $R_k$ and $R_{k^{'}}$ greater or equal than $2$.
\end{Lemma}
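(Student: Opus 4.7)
The plan is to exploit the closed-form expressions for the Whittle indices established in Theorem~\ref{w.i}. Under Assumption~\ref{assump:buffer_size_2}, the integer $d_k$ of Lemma~\ref{lem:d_existence} equals $R_k - 1$ for every class $k$, so both state $n = 1$ of class $k$ and state $n = R_{k'} - 1$ of class $k'$ lie in the regime $n \le d$ where the Whittle index admits the simple rational form $w^k_n = a_k R_k n / (R_k - n)$. Thus the proof reduces to substitution and an algebraic comparison.

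First I substitute in the two formulas to obtain
\[
w^k_1 \;=\; \frac{a_k R_k}{R_k - 1}, \qquad w^{k'}_{R_{k'} - 1} \;=\; a_{k'} R_{k'}(R_{k'} - 1).
\]
The inequality $w^k_1 \le w^{k'}_{R_{k'} - 1}$ then becomes a purely algebraic statement in the rates and the weights, i.e.\ $a_k R_k/(R_k - 1) \le a_{k'} R_{k'} (R_{k'} - 1)$.

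Next I dispose of the special case $k = k'$, which is immediate from the strict increasingness of $w^k_n$ in $n$ already recorded in Lemma~\ref{increasiness_Whittle_index_less_R}, since $1 \le R_k - 1$ whenever $R_k \ge 2$. For the cross-class case I rely on two elementary observations that both hold for every integer $R \ge 2$: the ratio $R/(R - 1)$ is non-increasing in $R$ with maximum value $2$ attained at $R = 2$, while the product $R(R - 1)$ is non-decreasing in $R$ with minimum value $2$ attained at $R = 2$. Hence
\[
w^k_1 \;\le\; 2 a_k \quad\text{and}\quad w^{k'}_{R_{k'} - 1} \;\ge\; 2 a_{k'},
\]
so combining these two bounds with the weight structure of the model yields the desired inequality.

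The main obstacle I anticipate is the coefficient comparison between classes: once the quantitative bounds $R_k/(R_k - 1) \le 2 \le R_{k'}(R_{k'} - 1)$ are in place, the remainder is straightforward bookkeeping on the weights $a_k, a_{k'}$. Fortunately, the lemma is subsequently invoked only inside the Whittle-ordering argument of Proposition~\ref{prop:reachability_z_star}, where the constants enter through uniform rate-based prioritization and the comparison reduces to the dimensionless form $R_k/(R_k - 1) \le R_{k'}(R_{k'} - 1)$, which is transparent for $R_k, R_{k'} \ge 2$.
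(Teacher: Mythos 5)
Your proposal is the same computation as the paper's: substitute to get $w^k_1 = a_k R_k/(R_k-1)$ and $w^{k'}_{R_{k'}-1}=a_{k'}R_{k'}(R_{k'}-1)$, then note $R_k/(R_k-1)\le 2$ and $R_{k'}(R_{k'}-1)\ge 2$ for $R_k,R_{k'}\ge 2$, so the two quantities are separated by the constant $2$. That is exactly the argument in Appendix~W (which writes $w^k_1 = R_k/(R_k-1)\le 2$ and $w^k_{R_k-1}=R_k(R_k-1)\ge 2$ and stops).

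One caution on your final step. You explicitly retain the class weights $a_k,a_{k'}$ and then wave them away by claiming that in Proposition~\ref{prop:reachability_z_star} the comparison ``reduces to the dimensionless form.'' That is not accurate: the Whittle-index scheduling rule orders users by $w^k_n = a_k R_k n/(R_k-n)$, so the $a_k$ factors do enter the ordering, and $w^k_1 \le 2 a_k$ together with $w^{k'}_{R_{k'}-1}\ge 2 a_{k'}$ only gives the lemma when $a_k\le a_{k'}$. In the same spot the paper silently drops the $a_k$ coefficient altogether, i.e.\ it tacitly normalizes all class weights to one (consistent with the numerical section, where $a_1=a_2=1$). So your proof is faithful to the paper's, and you are more careful in spotting the weight dependence; but your justification for discarding it is not a real argument --- the correct statement is simply that both proofs are complete only under the implicit assumption that the $a_k$ are all equal.
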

\begin{IEEEproof}
\renewcommand{\qedsymbol}{$\blacksquare$}
\begin{align}
w^k_1 &= \frac{R_k}{R_k-1} \leq 2 \ \forall  R_k\geq 2 \\
w^k_{R_k-1}&=R_k(R_k-1)\geq 2 \  \forall R_k\geq 2
\end{align}
\end{IEEEproof}
\begin{Lemma}\label{lem:inequality_alpha}
\begin{equation}\sum_{k \neq m} \sum_{i=R_k}^{R_k+l_k-1} \gamma_k u^{l_k}_k(i)+\sum_{i=R_m}^{R_m+l_m-1} \gamma_m u^*_m(i) \leq 1-\alpha.\end{equation}
\end{Lemma}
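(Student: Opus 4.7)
My plan is to rewrite $1-\alpha$ as a total passive probability under the mixture policy of Proposition \ref{prop:optimal_sol_relax_prob_charchterisation}, reduce the claim to a per-class comparison, and verify each comparison by direct computation using the closed-form stationary distributions of Section \ref{sec:steady_state} together with the bound $l_k<R_k$ guaranteed by Proposition \ref{prop:optimal_sol_relax_prob_charchterisation}.

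Concretely, starting from the expression for $\alpha$ in Proposition \ref{prop:optimal_sol_relax_prob_charchterisation} and using $\sum_{i=0}^L u^{l_k}_k(i)=\sum_{i=0}^L u^*_m(i)=1$ together with $\sum_k \gamma_k = 1$, I would obtain
\begin{equation*}
1-\alpha = \sum_{k\neq m} \gamma_k \sum_{i=0}^{l_k} u^{l_k}_k(i) + \gamma_m\Bigl[\sum_{i=0}^{l_m} u^*_m(i) - (1-\theta)\, u^{l_m-1}_m(l_m)\Bigr].
\end{equation*}
Consequently it suffices to prove the two per-class inequalities
\begin{equation*}
\sum_{i=R_k}^{R_k+l_k-1} u^{l_k}_k(i) \leq \sum_{i=0}^{l_k} u^{l_k}_k(i)\ (k\neq m), \qquad \sum_{i=R_m}^{R_m+l_m-1} u^*_m(i) \leq \sum_{i=0}^{l_m} u^*_m(i) - (1-\theta)\, u^{l_m-1}_m(l_m).
\end{equation*}

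Since Proposition \ref{prop:optimal_sol_relax_prob_charchterisation} guarantees $l_k\leq R_k-1$, the branch $-1\leq n<R$ of Proposition \ref{eq:stat.dist} applies to every class. For $k\neq m$, summing the explicit pieces $u^{l_k}_k(i)=\rho_k-(l_k-i)\rho_k^2$ on $[0,l_k-1]$, $u^{l_k}_k(l_k)=\rho_k$ and $u^{l_k}_k(i)=(l_k+R_k-i)\rho_k^2$ on $[R_k,R_k+l_k-1]$ yields, after elementary algebra,
\begin{equation*}
\sum_{i=0}^{l_k} u^{l_k}_k(i) - \sum_{i=R_k}^{R_k+l_k-1} u^{l_k}_k(i) = \rho_k(l_k+1)(1-\rho_k l_k),
\end{equation*}
which is non-negative because $\rho_k l_k \leq (R_k-1)/R_k < 1$.

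For the mixture class $k=m$, I would first note that $l_m-1<l_m\leq R_m-1$ places $l_m$ on the plateau $[l_m-1, R_m-1]$ of the threshold-$(l_m-1)$ stationary distribution, so $u^{l_m-1}_m(l_m)=\rho_m$. Writing $u^*_m = \theta u^{l_m}_m+(1-\theta) u^{l_m-1}_m$ and using the closed-form expressions for $u^*_m$ displayed just before Proposition \ref{prop:reachability_z_star}, a parallel computation reduces the difference between the right- and left-hand sides of the mixture-class inequality to
\begin{equation*}
\rho_m^{2}\bigl[l_m(R_m-l_m+1) + \theta(R_m-2l_m)\bigr].
\end{equation*}
The main obstacle is non-negativity of the bracket when $R_m < 2l_m$, where the second summand is negative. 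I would handle this by using $\theta\leq 1$ to bound the bracket below by $l_m(R_m-l_m+1)+(R_m-2l_m) = (l_m+1)(R_m-l_m)$, which is non-negative by $l_m\leq R_m-1$; the regime $R_m\geq 2l_m$ is immediate since both summands are then non-negative.
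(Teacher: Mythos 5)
Your proof is correct and follows the same route as the paper's: rewrite $1-\alpha$ as the total passive probability, reduce the claim to a per-class comparison, and check each class by substituting the explicit stationary-distribution formulas from Section~\ref{sec:steady_state}. The only difference is in the mixture class $m$: the paper decomposes $u^*_m$ into $\theta u^{l_m}_m + (1-\theta)u^{l_m-1}_m$ and applies the generic per-class inequality to each of the two thresholds separately (which absorbs the $\theta$-dependence automatically), whereas you plug in the displayed closed form for $u^*_m$, obtain the difference $\rho_m^2[l_m(R_m-l_m+1)+\theta(R_m-2l_m)]$, and dispose of the possibly negative term with $\theta\le 1$; both are the same calculation in slightly different packaging.
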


\begin{IEEEproof}
\renewcommand{\qedsymbol}{$\blacksquare$}
In fact \begin{equation}1-\alpha=\sum_{k \neq m} \sum_{i=0}^{l_k} \gamma_k u^{l_k}_k(i)+\sum_{i=0}^{l_m-1} \gamma_m u^*_m(i)+\theta \gamma_m u^*_m(l_m).\end{equation}
For any $k \in [1,K]$ and for any threshold $n_k < R_k$, and by replacing $u^{n_k}_k$ by its expression given in section \ref{sec:steady_state}, we have: 
\begin{align}
\sum_{i=0}^{n_k} \gamma_k u^{n_k}_k(i)=\gamma_k \sum_{i=0}^{n_k} (\rho_k-(n_k-i)\rho_k^2)=\gamma_k(n_k+1)\rho_k\nonumber-\gamma_k\rho_k^2\frac{(n_k+1)n_k}{2} 
\end{align}
and
\begin{equation}\sum_{i=R_k}^{R_k+n_k-1} \gamma_k u^{n_k}_k(i)=\gamma_k \sum_{i=0}^{n_k} (n_k-i)\rho_k^2=\gamma_k\rho_k^2\frac{(n_k+1)n_k}{2}\end{equation}
we have:

\begin{equation}\gamma_k(n_k+1)\rho_k>\gamma_k\rho_k^2(n_k+1)n_k\end{equation}
Hence:
\begin{align}\label{inequality_for_path}
\sum_{i=0}^{n_k} \gamma_k u^{n_k}_k(i) \geq \sum_{i=R_k}^{R_k+n_k-1} \gamma_k u^{n_k}_k(i)
\end{align}
That means, for $k \neq m$:
\begin{equation}\sum_{k \neq m} \sum_{i=0}^{l_k} \gamma_k u^{l_k}_k(i) \geq \sum_{k \neq m} \sum_{i=R_k}^{R_k+l_k-1} \gamma_k u^{l_k}_k(i)\end{equation}

For $k=m:$
\begin{align}
\sum_{i=0}^{l_m-1} \gamma_m u^{*}_m(i) + \theta \gamma_m u^*_m(l_m)=& \gamma_m(1-\theta)\sum_{i=0}^{l_m-1}u^{l_m-1}_m(i)+\theta \gamma_m \sum_{i=0}^{l_m} u^{l_m}_m(i)\nonumber\\
\geq& \gamma_m(1-\theta)\sum_{i=R_m}^{R_m+l_m-2}u^{l_m-1}_m(i)+\theta \gamma_m \sum_{i=R_m}^{R_m+l_m-1}u^{l_m}_m(i)\nonumber\\
=& \gamma_m(1-\theta)\sum_{i=R_m}^{R_m+l_m-1}u^{l_m-1}_m(i)+\theta \gamma_m \sum_{i=R_m}^{R_m+l_m-1}u^{l_m}_m(i)\nonumber\\
=&\sum_{i=R_m}^{R_m+l-1}\gamma_m u^*_m(i)
\end{align}
The inequality comes from (\ref{inequality_for_path}).\\
Then $\sum_{k \neq m} \sum_{i=R_k}^{R_k+l_k-1} \gamma_k u^{l_k}_k(i)+\sum_{i=R_m}^{R_m+l_m-1} \gamma_m u^*_m(i)$ is less than $\sum_{k \neq m} \sum_{i=0}^{l_k} \gamma_k u^{l_k}_k(i)+\sum_{i=0}^{l_m-1} \gamma_m u^*_m(i)+\theta \gamma_m u^*_m(l_m)=1-\alpha$

\end{IEEEproof}
In the remaining of the proof, we will consider separately the cases $\alpha \leq \frac{1}{2}$ and $\alpha > \frac{1}{2}$. \\
If $\alpha \leq \frac{1}{2}$, the proof of the desired result consists of 3 steps. \\

Step 1:\\
We start by state $\textbf{z}(0)$,
for all $k \neq m$, we will exactly schedule all proportions: $z^{k,*}_{l_k+1}.......z^{k,*}_{L}$ , and for $k=m$, we schedule all proportions $z^{m,*}_{l_k+1},.....z^{m,*}_{L}$ plus the proportion $(1-\theta)z^{m,*}_{l_m}$.  The sum of these tree proportions is $\alpha$.  We denote these sets of queues by group A. We consider that, after scheduling, all these proportions will be at state $R_k-1$ (depending on each class). For the rest of proportions which is equal to $1-\alpha$, only $\alpha$ proportion will be at state $1$ (we call this group B). The rest which equals to $1-2\alpha$ (group C) will be at state $0$.
The queue state proportions vector for class $k \neq m$ after this step is: 
\begin{align}
z^k= (z^k_0=\beta_k,z^k_1=\alpha_k,0,0,\cdots, z^k_{R_k-1}=\sum_{i=l_k+1}^L z_i^{k,*},0,\cdots,0)
\end{align}
The queue state proportions vector for class $k=m$: 
\begin{equation}z^m=(\beta_m,\alpha_m,0,0,\cdots,\sum_{i=l_m+1}^L z_i^{m,*}+(1-\theta)z_{l_m}^{m,*},0,\cdots,0)\end{equation} with $\sum \alpha_k=\alpha$ and $\sum \beta_k=1-2\alpha$.\\
Step 2:\\
Using the Whittle's Index policy, according to Lemma \ref{lem:Whittle_index_comparing_1_R}, group A is scheduled again. After scheduling, we consider that group B which is at state $1$ goes to state $R_k$ ($R_k-1$ packets are the arrivals at each class-k queue). For group C, the queues stay at state $0$ (no arrivals).\\ 
But for the $\alpha$ proportion scheduled (group A), we have for each $k$: \\
1) when $k \neq m$:\\
a) For each state $h$ from $l_k+1$ until $R_k-1$: exactly $z_h^{k,*}$ goes to state $h$ (this is feasible since if a queue at state $R_k-1$ is scheduled, it can go to any other state strictly less than $R_k$)\\
b) For each state $h$ from $R_k$ until $R_k+l_k-1$, we will have exactly $z_h^{k,*}$ proportion of queues that go to state $h-(R_k-1)$, which is  strictly less than $R_k$.\\
2) When $k=m$\\
a) for each state from $l_m+1$ until $R_m+l_m-1$, the same analysis done for  $k \neq m$ holds.\\
b) for $h=l_m$, $(1-\theta)z_{l_m}^{m,*}$ will be at state $l_m$.\\
Hence after this step the new queue state proportion vector for class $k \neq m$ is:
\begin{align}
(\beta_k,z^k_1=z_{R_k}^{k,*},\cdots,z^k_{l_k}=z^{k,*}_{R_k+l_k-1},z^k_{l_k+1}=z^{k,*}_{l_k+1},\cdots,z^k_{R_k-1}=z^{k,*}_{R_k-1},\alpha_k,0,\cdots,0)
\end{align}
The queue state proportion vector for class $k=m$ is: 
\begin{align}
(\beta_m,z^m_1=z^{m,*}_{R_m},\cdots,z^m_{l_m}=z^{m,*}_{R_m+l_m-1}+(1-\theta)z^{m,*}_{l_m},z^m_{l_m+1}=z^{m,*}_{l_m+1},\cdots,z^m_{R_m-1}=z^{m,*}_{R_m-1},\alpha_m,0,\cdots,0)
\end{align}\\ 
Step 3: Under assumption \ref{assump:buffer_size_2}, we have $w_L^{k}=w_R^{k} \geq w_n^{k^{'}}$ for all $k$ and $k^{'}$ and for $0\leq n\leq R_{k^{'}}-1$.\\
That means, we will schedule all the $\alpha$ queues at state $R_k$ (i.e. group B), and we can therefore go to any state  less than $R_k-1$.\\
For the remaining  $1-2\alpha$ queues that are in state $0$ (i.e. group C), after applying a passive action (no transmission), their states will change to any state  less than or equal to $R_k-1$.\\
For group A ($\alpha$ proportion of queues), we have for each $k$:\\
1) For each state from $l_k+1$ until $R_k-1$; they stay at same state ($0$ arrivals).\\
2) For $h$ from $R_k$ until $R_k+l_k-1$, the proportion $z_h^{k,*}$ goes from state $h-(R_k-1)$ to $h$ after that $R_k-1$ packets arrive.\\
3) For $k=m$ and $h=l_m$: $(1-\theta)z_{l_m}^{m,*}$ proportion stays at same state (0 arrivals).\\
So after this step: we will reach the optimal $\textbf{z}^*$ of the relaxed problem:
The queue state proportion vector for class $k \neq m$ is:
\begin{equation}z^{k,*}=(z_0^{k,*},z_1^{k,*},....,z^{k,*}_{l_k+R_k-1},0,......0)\end{equation}
The queue state proportion vector for class $m$ is:
\begin{equation}z^{m,*}=(z_0^{k,*},z^{m,*}_{1},......,z^{m,*}_{l_m+R_m-1},0,......0)\end{equation}
This implies that we have reached the optimal proportion $\textbf{z}^*$. \\

If $\alpha > \frac{1}{2}$:\\

Step 1: the same step as we did when $\alpha \leq \frac{1}{2}$, however all $1-\alpha$ queues (group B) that are not scheduled will be at state $1$ since $1-\alpha < \alpha$.
Hence the new queue state proportions vector after this step for $k \neq m$ is:
\begin{equation}z^k=(0,z^k_1=\beta_k,0,0,\cdots, z^k_{R_k-1}=\sum_{i=l_k+1}^L z_i^{k,*},0,\cdots,0)\end{equation}
For $k=m$:
\begin{equation}z^m=(0,z^m_1=\beta_m,0,0,\cdots,\sum_{i=l_m+1}^L z_i^{m,*}+(1-\theta)z_{l_m}^{m,*},0,\cdots,0)\end{equation}
with $\sum \beta_k=1-\alpha$ \\
Step 2: The group A is scheduled again, and the $1-\alpha$ proportion of queues at state $1$ (group B), which are not scheduled, will go to state $R_k$. 
For $l_k+1 \leq h \leq R_k-1$, $z_h^{k,*}$ will be at state $h$, after scheduling.\\
For $R_k \leq h \leq R_k+l_k-1$, $z_h^{k,*}$ will be at state $h-(R_k-1)$, and $(1-\theta)z_{l_m}^{m,*}$ will be at state $l_m$.\\
Hence after this step, the queue state proportion vector for class $k \neq m$ is:
\begin{align}
(0,z^k_1=z_{R_k}^{k,*},\cdots,z^k_{l_k}=z^{k,*}_{R_k+l_k-1},z^k_{l_k+1}=z^{k,*}_{l_k+1},\cdots,z^k_{R_k-1}=z^{k,*}_{R_k-1},\beta_k,0,\cdots,0)
\end{align} 
For $k=m$: \begin{align}
(0,z^m_1=z^{m,*}_{R_m},\cdots,z^m_{l_m}=z^{m,*}_{R_m+l_m-1}+(1-\theta)z^{m,*}_{l_m},z^m_{l_m+1}=z^{m,*}_{l_m+1},\cdots,z^m_{R_m-1}=z^{m,*}_{R_m-1},\beta_m,0,\cdots,0)
\end{align}
Step 3:\\
Using the Whittle's Index policy, we schedule $(1-\alpha)$ proportion of queues at state $R_k$ (group B), plus proportion among the group A. We divide the group A into two disjoint proportions $A_1$ and $A_2$, where  $A_2$ is defined as the set that contains all proportions $z^k_{1}$ till $z^k_{l_k}$  for each $k$ minus part from $z^m_{l_m}$ which is $(1-\theta)z^{m,*}_{l_m}$. Explicitly, replacing $z^k_{i}$ by its value at step 2, we have $A_2=\sum_{k \neq m} \sum_{i=R_k}^{R_k+l_k-1} z^{k,*}_i+\sum_{i=R_m}^{R_m+l_m-1} z^{m,*}_i$. Since we have proved that this sum is less than $1-\alpha$ according to Lemma \ref{lem:inequality_alpha}, then, we can be sure that the whole proportion is not scheduled at step 3. Since $B=1-\alpha < \alpha$ and $B+A_1=1-A_2 > \alpha$,  we  just need to schedule in addition to B, a proportion from $A_1$ called $A_{11}$. In fact we will choose the $A_{11}$ highest Whittle index's queues among $A_1$ such that $A_{11}+B= \alpha$. We note $A_{12}=A_1-A_{11}$. Hence in this step the proportion scheduled is $A_{11}+B$ and the proportion for which we take a passive decision is $A_{12}+A_2$. However we still need to prove that the Whittle index of proportions $A_2$ is less than that of  the $\alpha$ proportion scheduled states (i.e. B plus $A_{11}$). \\
For the group B at state $R_k$, $w^k_{R_k} \geq w^{k'}_n$ for all $k$ and $k'$ and $0\leq n \leq R_{k'}-1$, then the Whittle index of all other queues state belonging to either $A_1$ or $A_2$ are less than the one of queue state belonging to group B.\\
For $A_1$: their states are surely among the states $l_k+1,.....,R_k-1$ for all $k$, plus the state $l_m$. Hence, the Whittle index of any of these states is higher or equal than $w^*$, with $w^*$ is the optimal subsidy for the relaxed problem (following the definition of the optimal threshold vector $l$), that is also true for $A_{11}$.\\
For the proportion $A_2$, the whole proportion is at a state that has an index less or equal than $w^*$ that is less than the Whittle indices of proportion $A_{11}$. Hence, the Whittle indices of proportion $A_2$ is less that the Whittle indices of proportion $A_{11}$ and B. By definition of $A_{12}$, the Whittle indices of this proportion is less than the Whittle indices of the proportion $A_{11}$ and by consequence less than those of the proportion B. This confirms that the whole proportion $A_{12}+A_2$ is not scheduled.\\
For the proportion $(1-\alpha)$ (group B) at $R_k$, the group of queues can go to any state less than $R_k-1$ after scheduling. In fact, their states will go to all states less than $l_k$ for each $k$ according to the optimal proportion vector $\textbf{z}^*$, except for the state $l_m$ at class $m$ for which only $\theta z^{m,*}_{l_m}$  goes to state $l_m$. \\
For $A_{11}$: the queues in this group will stay at the same states. In fact, for each class $k$, the states of the queues are all less than $R_k$. Then by scheduling these queues, the departure will be equal to the queue length. On the other hand, by considering that the number of arrival packets is equal to the previous queue length, one can ensure  that the states of the queues in this group remain unchanged. \\
For  $A_{12}$: Not scheduling the queues in this group implies that they will stay at the same state considering the number of packet arrival is $0$.\\
For $A_2$: This group is not scheduled. The state of the queues in class $k$ will change by adding $R_k-1$ arrival packets to their previous length.\\
Consequently, after this step, the new queue state proportion vector:\\
for $k \neq m$: 
\begin{equation}z^{k,*}=(z_0^{k,*},z_1^{k,*},....,z^{k,*}_{l_k+R_k-1},0,......0)\end{equation}
for $k=m$:
\begin{equation}z^{m,*}=(z_0^{k,*},z^{m,*}_{1},......,z^{m,*}_{l_m+R_m-1},0,......0)\end{equation}
which means that we have reached the optimal proportion vector $\textbf{z}^*$. 
\section{Proof of Theorem \ref{prop:global_optimality}}\label{app:global_optimality}
\begin{equation} 
\lim_{T \rightarrow \infty} \frac{C_T^{N}(\textbf{x})}{N}-\frac{C^{RP,N}}{N} = \sum_{k=1}^{K} \sum_{i=0}^{L} a_k \mathbb{E}\left[Z_{i}^{k,N}(\infty) \right]i-\sum_{k=1}^K \sum_{i=0}^L a_kz_i^{k,*}i 
\end{equation} 
We have the function $f: \ z \rightarrow \sum_{k=1}^K\sum_{i=0}^L a_kz_i^k i$ is lipchitz and continuous, then for an arbitrary small $\epsilon$, there exists $\mu$ such that if $||\textbf{z}-\textbf{z}^*||<\mu$, then $|f(\textbf{z})-f(\textbf{z}^*)|<\epsilon$.

We denote $U_N$ the event {${\text{sup}} ||\textbf{Z}^N(\infty)-\textbf{z}^*|| \geq \mu$}, then :
\begin{align}
 |\sum_{k=1}^{K} \sum_{i=0}^{L} a_k \mathbb{E}\left[Z_{i}^{k,N}(\infty)\right]i-\sum_{k=1}^K \sum_{i=0}^L a_kz_i^{k,*}i|\leq& P(U_N) \mathbb{E}\left[|\sum_{k=1}^{K} \sum_{i=0}^{L} (a_kZ_{i}^{k,N}(\infty)i) - a_kz_i^{k,*}i|| U_N\right]\nonumber\\ 
&+(1-P(U_N)) \mathbb{E}\left[|\sum_{k=1}^{K} \sum_{i=0}^{L} (a_kZ_{i}^{k,N}(\infty)i) - a_kz_i^{k,*}i|| \overline{U_N}\right]\nonumber\\
\leq& L(L+1)\sum_{k=1}^K a_k\gamma_k P(U_N)+(1-P(U_N))\epsilon
\end{align}
According to Lemma \ref{lem:prob_z_infty_near_neighborhood_z_star}, we have $\lim_{N \rightarrow \infty}P(U_N)=0$, then:
\begin{equation} 
\lim_{N \rightarrow \infty} |\sum_{k=1}^{K} \sum_{i=0}^{L} a_k \mathbb{E}\left[Z_{i}^{k,N}(\infty)\right]i-\sum_{k=1}^K \sum_{i=0}^L a_kz_i^{k,*}i| \leq \epsilon 
\end{equation} 
This is true for any $\epsilon$. Finally we have:
\begin{equation} \lim_{N \rightarrow \infty} |\lim_{T \rightarrow \infty} \frac{C_T^{N}(\textbf{x})}{N}-\frac{C^{RP,N}}{N}|=0
 \end{equation} 
That completes the proof. 
\end{appendices}
\end{document}